\normalfont\fontsize{16}{16}\bfseries}{\thesection}{1em}{}
\normalfont\fontsize{14}{16}\bfseries}{\thesubsection}{1em}{}
\newcommand{\indep}{\raisebox{0.05em}{\rotatebox[origin=c]{90}{$\models$}}}
\DeclareRobustCommand{\VAN}[2]{#1}  
\newcommand{\PreserveBackslash}[1]{\let\temp=\\#1\let\\=\temp}
\newcolumntype{C}[1]{>{\PreserveBackslash\centering}p{#1}}
\newcolumntype{R}[1]{>{\PreserveBackslash\raggedleft}p{#1}}
\newcolumntype{L}[1]{>{\PreserveBackslash\raggedright}p{#1}}
\DeclareMathOperator{\1}{\mathbbm 1}
\DeclareMathOperator*{\argmin}{\arg\!\min}
\DeclareMathOperator*{\argmax}{\arg\!\max}
\theoremstyle{definition}
\newtheorem{theorem}{Theorem}[section]
\newtheorem{proposition}{Proposition}[section]
\newtheorem{definition}{Definition}[section]
\newtheorem{lemma}{Lemma}[section]
\newtheorem{algo}{Algorithm}[section]
\newtheorem{corollary}{Corollary}[section]
\newtheorem{assumption}{Assumption}[section]
\newtheorem{remark}{Remark}[section]
\let\originalleft\left
\let\originalright\right
\renewcommand{\left}{\mathopen{}\mathclose\bgroup\originalleft}
\renewcommand{\right}{\aftergroup\egroup\originalright}
\DeclareRobustCommand{\VAN}[2]{#1}  
\begin{document}

\def\spacingset#1{\renewcommand{\baselinestretch}%
{#1}\small\normalsize} \spacingset{1}

  \title{Sequential kernel embedding for mediated and time-varying dose response curves}

\author{
  Rahul Singh\thanks{Email: \url{rahul_singh@fas.harvard.edu}. Address: Littauer Center 123, 1805 Cambridge Street, Cambridge, MA 02138. The authors would like to thank Neil Shephard for helpful comments. William Liu, Miriam Nelson, and Ikenna Ogbogu provided excellent research assistance. The first author was supported by the Hausman Dissertation Fellowship and Simons-Berkeley Research Fellowship. 
The second and third authors were supported by the Gatsby
Charitable Foundation, GAT 3850 and GAT 3528.
  } \\
  Harvard University
  \and
  Liyuan Xu \\ University College London
  \and
  Arthur Gretton \\ University College London
}
\date{Original draft: October 2020. This draft: March 2025.}
\maketitle

\begin{abstract}
We propose simple nonparametric estimators for mediated and time-varying dose response curves based on kernel ridge regression. 
By embedding Pearl's mediation formula and Robins' g-formula with kernels, we allow treatments, mediators, and covariates to be continuous in general spaces, and also allow for nonlinear treatment-confounder feedback. 
Our key innovation is a reproducing kernel Hilbert space technique called sequential kernel embedding, which we use to construct simple estimators that account for complex feedback.
Our estimators preserve the generality of classic identification while also achieving nonasymptotic uniform rates.
In nonlinear simulations with many covariates, we demonstrate strong performance. 
We estimate mediated and time-varying dose response curves of the US Job Corps, and clean data that may serve as a benchmark in future work.
We extend our results to mediated and time-varying treatment effects and counterfactual distributions, verifying semiparametric efficiency and weak convergence.
\end{abstract}

\noindent%
{\it Keywords:}   Continuous treatment, reproducing kernel Hilbert space, treatment-confounder feedback
\vfill


\newpage

\spacingset{1.65} 

\section{Introduction}\label{sec:intro}

We study mediation analysis and time-varying treatment effects with possibly continuous treatments. 
Mediation analysis asks, how much of the total effect of the treatment $D$ on the outcome $Y$ is mediated by a particular mechanism $M$ that takes place between the treatment and outcome? Time-varying analysis asks, what would be the effect of a sequence of treatments $D_{1:T}$ on the outcome $Y$, even when that sequence may not have been implemented? We consider nonparametric causal functions of continuous treatments. For example, the time-varying dose response curve of two continuous treatments is the function $\theta_0^{GF}(d_1,d_2):=E\{Y^{(d_1,d_2)}\}$, which may refer to medical dosages, lifestyle habits, occupational exposures, or training durations.

The time-varying dose response curve arises when evaluating social programs from several rounds of surveys. For example, the National Job Corps Study randomized access to a large scale job training program in the US and collected several rounds of surveys \cite{schochet2008does}. Individuals could decide whether to participate and for how many hours, possibly over multiple years. A natural question is: how much would the average individual benefit from a certain number of class hours in year one and a possibly different number of class hours in year two? This quantity is an example of a time-varying dose response curve, where the number of class hours is the time-varying dose, and the expected benefit is the response. 

The difficulty in estimating time-varying dose response curves is the complex feedback loop resulting from the initial dose. Formally, we model class hours in different years as a sequence of continuous treatments subject to treatment-confounder feedback. In other words, we consider the possibility that class hours in one year may affect health behaviors such as drug use in a subsequent year, which may then affect subsequent class hours. Though several rounds of Job Corps surveys were collected, economists typically study only the initial survey due to a concern for treatment-confounder feedback, and a lack of simple yet flexible estimators that can adjust for it, while allowing treatments to be continuous. We propose such an estimator using kernels, establish its properties, share a cleaned data set that includes the additional surveys, and carry out empirical analysis on the cleaned data as well as on simulated data. Previous methods using kernels for continuous treatments \cite{singh2020kernel} do not handle treatment-confounder feedback, and therefore cannot analyze the additional surveys. This paper's technical innovation is a sequential kernel embedding to do so.

While mediated and time-varying causal functions are identified in theory, they are challenging to estimate in practice. For example, under standard assumptions on time-varying covariates $X_{1:T}$, the time-varying dose response is identified as the g-formula, i.e. the sequential integral $
\theta_0^{GF}(d_1,d_2)=\int \gamma_0(d_1,d_2,x_1,x_2) \mathrm{d}P(x_2|d_1,x_1)\mathrm{d}P(x_1) 
$, where $\gamma_0(d_1,d_2,x_1,x_2)=E(Y|D_1=d_1,D_2=d_2,X_1=x_1,X_2=x_2)$ \cite{robins1986new}. When treatments are continuous, the functional 
$ \gamma $ $ \mapsto $ $ \int $ $ \gamma(d_1,d_2,x_1,x_2) $ $\mathrm{d}P(x_2|d_1,x_1)$ $\mathrm{d}P(x_1) 
$ is generally not bounded \cite{van1991differentiable,newey1994asymptotic} or pathwise differentiable \cite[ch. 3 and 5]{bickel1993efficient}. Popular estimators restrict attention to a binary treatment, parametric models, Markov simplifications, or constrained effect modification for tractability, and may even redefine the estimand. See \cite{vansteelandt2014structural} for a review. Each of these restrictions simplifies the sequential integral in order to simplify estimation. Our research question is: can we devise simple machine learning estimators for causal functions that preserve the richness of the sequential integral, and therefore the generality of treatment-confounder feedback in classic identification, while also achieving nonasymptotic uniform rates?

In this paper, we match the generality of mediated and time-varying identification with the flexibility and simplicity of kernel ridge regression estimation. We propose a new family of nonparametric estimators for causal inference over short horizons. Our algorithms combine kernel ridge regressions, so they inherit the practical and theoretical virtues that make kernel ridge regression widely used. 
Crucially, we preserve the nonlinearity, dependence, and effect modification of identification theory with time-varying confounders. Our contribution has three aspects.

First, we introduce an algorithmic technique that appears to be an innovation in the reproducing kernel Hilbert space (RKHS) literature: sequential kernel embeddings, i.e. RKHS representations of mediator and covariate conditional distributions given a hypothetical treatment sequence, which account for treatment-confounder feedback. For example, we introduce the sequential embedding $\mu_{x_1,x_2}(d_1)$ such that the inner product $\langle f,\mu_{x_1,x_2}(d_1) \rangle$ in an appropriately defined Hilbert space equals the g-formula's sequential integral $\int f(x_1,x_2) \mathrm{d}P(x_2|d_1,x_1)\mathrm{d}P(x_1)$. We prove that the sequential kernel embedding exists because the RKHS restores boundedness of the g-formula's functional $\gamma\mapsto \int \gamma(d_1,d_2,x_1,x_2)$ $ \mathrm{d}P(x_2|d_1,x_1) \mathrm{d}P(x_1) 
$, even when the treatments are continuous. 

Second, we use our new technique to derive estimators with simple closed forms that combine kernel ridge regressions, extending the regression product \cite{baron1986moderator} and recursive regression \cite{bang2005doubly} insights to machine learning. We use sequential embeddings to propose uniformly consistent machine learning estimators of time-varying dose response curves without restrictive linearity, Markov, or no-effect-modification assumptions, which to our knowledge is new. As extensions, we propose what may be the first unrestricted incremental response curves and counterfactual distributions for time-varying treatments, relaxing the restrictions of the structural nested distribution model \cite{robins1992estimation}. In Section~9 of \cite{rahul2024supplement}, for discrete treatments, we use sequential embeddings to propose simpler nuisance parameter estimators for known inferential procedures. In particular, we avoid multiple levels of sample splitting and iterative fitting.

Third, we prove that our simple estimators based on sequential embedding achieve nonasymptotic uniform rates for causal functions. Specifically, for the continuous treatment case, we prove uniform consistency with finite sample rates that combine minimax rates for kernel ridge regression \cite{caponnetto2007optimal,fischer2017sobolev,li2022optimal}. The rates do not directly depend on the data dimension, but rather smoothness and effective dimension, generalizing Sobolev rates. We extend these results to incremental response curves and counterfactual outcome distributions. We relate our results to semiparametric analysis in Section~9 of \cite{rahul2024supplement}. In particular, for the discrete treatment case, we verify $n^{-1/2}$ Gaussian approximation and semiparametric efficiency, articulating a double spectral robustness whereby some kernels may have higher effective dimensions as long as others have sufficiently low effective dimensions. 

We illustrate the practicality of our approach by conducting comparative simulations and estimating the mediated and time-varying response curves of the Jobs Corps. In nonlinear simulations over short horizons, the algorithms reliably outperform some state of the art alternatives. Under standard identifying assumptions, our direct and indirect dose response curve estimates suggest that job training reduces arrests via social mechanisms besides employment. By allowing for continuous treatments and treatment-confounder feedback, our time-varying dose response curve estimates suggest that relatively few class hours in the first and second year confer most of the benefit to counterfactual employment. Of independent interest, we clean and share a version of the Job Corps data that may serve as a benchmark for new approaches to time-varying estimation.

\section{Related work}\label{sec:related}

In seminal works, \cite{robins1992identifiability,pearl2001direct,imai2010identification} and \cite{robins1986new} rigorously develop identification theory for mediation analysis and time-varying treatment effects, respectively. A rich class of mediated and time-varying causal functions are estimable in principle if the analyst has a sufficiently rich set of covariates over time $X_{1:T}$. Treatments may be continuous; relationships among the outcome, treatments, and covariates may be nonlinear; and dependences may include treatment-confounder feedback and effect modification by time-varying confounders \cite{gill2001causal,vanderweele2009conceptual}.

For continuous treatments, nonparametric estimators for mediated response curves of \cite{huber2018direct,ghassami2021multiply} use density estimation, which can be challenging as dimension increases. Machine learning estimators for time-varying dose response curves of \cite{lewis2020double} rely on restrictive linearity, Markov, and no-effect-modification assumptions, which imply additive effects of time-varying treatments. 

For binary treatments, a rich literature provides abstract conditions for $n^{-1/2}$ semiparametric estimation, to which we relate our results in Section~9 of \cite{rahul2024supplement}; see e.g.
\cite{scharfstein1999adjusting,van2006targeted,zheng2011cross,van2012targeted,tchetgen2012semiparametric,petersen2014targeted,molina2017multiple,luedtke2017sequential,rotnitzky2017multiply,chernozhukov2018original,farbmacher2020causal,bodory2021evaluating,singh2021finite} and references therein. Still, estimators that preserve the full generality of identification for binary treatments are not widely used in empirical research \cite{vansteelandt2014structural}, perhaps due to the complexity of nuisance parameter estimation.

Unlike previous work that incorporates the RKHS into causal inference, we provide a framework for mediated and time-varying estimands. Previous work incorporates the RKHS into time-fixed causal inference.  \cite{nie2017quasi,foster2019orthogonal,kennedy2020optimal} propose methods based on orthogonal loss minimization for heterogeneous treatment effects, and \cite{wong2018kernel,zhao2019covariate,kallus2020generalized,hirshberg2019minimax,singh2021debiased} propose methods based on balancing weights for average treatment effects. \cite{muandet2020counterfactual} propose counterfactual distributions for a binary treatment, while \cite{singh2020kernel} propose dose responses and counterfactual distributions for a continuous treatment. Whereas previous work studies the time-fixed setting, we study sequential settings and prove strong results despite the additional challenges of treatment-confounder feedback and effect modification by time-varying confounders.

This paper subsumes our earlier draft \cite{singh2021kernel}, which subsumed
\cite[Sections B and C]{singh2020workshop}.
\section{RKHS assumptions}\label{sec:rkhs_main}



We summarize RKHS notation, interpretation, and assumptions that we use in this paper. Let $k:\mathcal{W}\times\mathcal{W}\rightarrow \mathbb{R}$ be a function that is continuous, symmetric, and positive definite. We call $k$ the kernel, and we call $\phi:w\mapsto k(w,\cdot)$ the feature map. The kernel is the inner product of features $k(w,w')=\langle \phi(w),\phi(w')\rangle_{\mathcal{H}}$, and we formally define the inner product below. The RKHS is the closure of the span of the features $\{\phi(w)\}_{w\in\mathcal{W}}$. As such, the features are interpretable as the dictionary of basis functions for the RKHS: for $f\in\mathcal{H}$, we have that $f(w)=\langle f,\phi(w) \rangle_{\mathcal{H}}$.



Kernel ridge regression uses the RKHS $\mathcal{H}$ as the hypothesis space in an infinite dimensional optimization problem with a ridge penalty, and it has a well known closed form solution:
\begin{equation}\label{eq:loss}
    \hat{f}:=\argmin_{f\in\mathcal{H}} \frac{1}{n}\sum_{i=1}^n\{Y_i-f(W_i)\}^2 +\lambda \|f\|^2_{\mathcal{H}};\quad \hat{f}(w)=Y^{\top}(K_{WW}+n\lambda I)^{-1}K_{Ww},
\end{equation}
where $K_{WW}\in\mathbb{R}^{n\times n}$ is the kernel matrix with $(i,j)$th entry $k(W_i,W_j)$ and $K_{Ww}\in\mathbb{R}^n$ is the kernel vector with $i$th entry $k(W_i,w)$. To tune the ridge penalty hyperparameter $\lambda$, both generalized cross validation and leave one out cross validation have closed form solutions, and the former is asymptotically optimal \cite{li1986asymptotic}. To analyze the bias and variance of kernel ridge regression, the statistical learning literature places assumptions on the smoothness of $f_0$ and effective dimension of $\mathcal{H}$. Both assumptions describe the kernel's spectrum, which we now define.

Denote by $\mathbb{L}^2_{\nu}(\mathcal{W})$ the space of square integrable functions with respect to the measure $\nu$. Consider the convolution operator 
$
L:\mathbb{L}_{\nu}^2(\mathcal{W})\rightarrow \mathbb{L}_{\nu}^2(\mathcal{W}),\; f\mapsto \int k(\cdot,w)f(w)\mathrm{d}\nu(w)
$. By the spectral theorem, we define its spectrum, 
where $(\eta_j)$ are weakly decreasing eigenvalues and $(\varphi_j)$ are orthonormal eigenfunctions that form a basis of $\mathbb{L}_{\nu}^2(\mathcal{W})$. As such,  $
Lf=\sum_{j=1}^{\infty} \eta_j\langle \varphi_j,f \rangle_{\mathbb{L}^2_{\nu}(\mathcal{W})}\cdot  \varphi_j
$.

\begin{remark}[RKHS versus $\mathbb{L}^2$ inner product]
    To interpret how the RKHS $\mathcal{H}$ compares to $\mathbb{L}^2_{\nu}(\mathcal{W})$, we express both function spaces in terms of the orthonormal basis $(\varphi_j)$. In other words, we present the spectral view of the RKHS. For any $f,g\in \mathbb{L}_{\nu}^2(\mathcal{W})$, write $
f=\sum_{j=1}^{\infty}f_j\varphi_j
$ and $
g=\sum_{j=1}^{\infty}g_j\varphi_j
$. 
Then
\begin{align*}
 \mathbb{L}^2_{\nu}(\mathcal{W})&=\left(f=\sum_{j=1}^{\infty}f_j\varphi_j:\; \sum_{j=1}^{\infty}f_j^2<\infty\right),\quad \langle f,g \rangle_{\mathbb{L}^2_{\nu}(\mathcal{W})}=\sum_{j=1}^{\infty} f_jg_j; \\
\mathcal{H}&=\left(f=\sum_{j=1}^{\infty}f_j\varphi_j:\;\sum_{j=1}^{\infty} \frac{f_j^2}{\eta_j}<\infty\right),\quad \langle f,g \rangle_{\mathcal{H}}=\sum_{j=1}^{\infty} \frac{f_jg_j}{\eta_j}.
\end{align*}
\end{remark}

The space $\mathcal{H}$ is the subset of $\mathbb{L}^2_{\nu}(\mathcal{W})$ for which higher order terms in $(\varphi_j)$ have a smaller contribution, subject to $\nu$ satisfying the conditions of Mercer's theorem \cite{steinwart2012mercer}. Under those conditions, $k(w,w')=\sum_{j=1}^{\infty}\eta_j \varphi_j(w)\varphi_j(w')$; $(\eta_j)$ and $(\varphi_j)$ describe the kernel's spectrum.

An analyst can place smoothness and effective dimension assumptions on the spectral properties of a statistical target $f_0$ estimated in the RKHS $\mathcal{H}$ \cite{caponnetto2007optimal,fischer2017sobolev}. These assumptions are formalized by parameters $(b,c)$:
\begin{equation}\label{eq:prior}
    f_0\in \mathcal{H}^c:=\left(f=\sum_{j=1}^{\infty}f_j\varphi_j:\;\sum_{j=1}^{\infty} \frac{f_j^2}{\eta^c_j}<\infty\right)\subset \mathcal{H},\quad c\in(1,2];\quad  \eta_j \leq C j^{-b},\quad b\geq 1.
\end{equation}
The value $c$ quantifies how well the leading terms in $(\varphi_j)$ approximate $f_0$; a larger value of $c$ corresponds to a smoother target $f_0$. A larger value of $b$ corresponds to a faster rate of spectral decay and therefore a lower effective dimension. Both $(b,c)$ are joint assumptions on the kernel and data distribution. Correct specification implies $c\geq 1$ and a bounded kernel implies $b\geq 1$ \cite[Lemma 10]{fischer2017sobolev}. Minimax optimal rates for regression are governed by $(b,c)$ \cite{caponnetto2007optimal,fischer2017sobolev}, with faster rates corresponding to higher values. In our analysis of causal estimands, we obtain nonparametric rates and semiparametric rate conditions that combine regression rates in terms of $(b,c)$.


Spectral assumptions are easy to interpret in Sobolev spaces \cite{fischer2017sobolev}. Let $\mathcal{W}\subset \mathbb{R}^p$. Denote by $\mathbb{H}_2^s$ the Sobolev space with $s>p/2$ square integrable derivatives, which can be generated by the Mat\`ern kernel. Suppose $\mathcal{H}=\mathbb{H}_2^s$ is chosen as the RKHS for estimation. If $f_0\in \mathbb{H}_2^{s_0}$, then $c=s_0/s$; $c$ quantifies the additional smoothness of $f_0$ relative to $\mathcal{H}$. In this Sobolev space, $b=2s/p>1$. The effective dimension is increasing in the original dimension $p$ and decreasing in the degree of smoothness $s$. The minimax optimal regression rates are
\begin{equation}\label{eq:optimal}
    n^{-\frac{1}{2}\frac{c}{c+1/b}}=n^{-\frac{s_0}{2s_0+p}} \text{ in $\mathbb{L}^2$ norm}, \quad n^{-\frac{1}{2}\frac{c-1}{c+1/b}}=n^{-\frac{s_0-s}{2s_0+p}}\text{ in Sobolev norm},
\end{equation}
and both are achieved by kernel ridge regression with 
$\lambda=n^{-1/(c+1/b)}=n^{-2s/(2s_0+p)}$.


We place five types of assumptions in this paper, generalizing the standard RKHS learning theory assumptions from kernel ridge regression to mediated and time-varying causal inference: identification, RKHS regularity, original space regularity, smoothness, and effective dimension.
%
%
We formally instantiate these assumptions for mediated responses in Section~\ref{sec:mediation}, time-varying responses in Section~\ref{sec:dynamic}, and counterfactual distributions in Section~10 of \cite{rahul2024supplement}. We uncover a double spectral robustness in related semiparametric inferential theory in Section~9 of \cite{rahul2024supplement}: some kernels may have higher effective dimensions, as long as other kernels have lower effective dimensions.
\section{Mediated response curves}\label{sec:mediation}

\subsection{Pearl's mediation formula}

Mediation analysis decomposes the total effect of a treatment $D$ on an outcome $Y$ into the direct effect versus the indirect effect mediated via the mechanism $M$. The problem is sequential since $D$ causes $M$ and $Y$, then $M$ also causes $Y$. We denote the counterfactual mediator $M^{(d)}$ given a hypothetical intervention on the treatment $D=d$. We denote the counterfactual outcome $Y^{(d,m)}$ given a hypothetical intervention on the treatment $D=d$ and the mediator $M=m$. 

\begin{definition}[Pure mediated response curves \cite{robins1992identifiability}]\label{def:mediation}
 
Suppose the treatment $D$ is continuous.
\begin{enumerate}
    \item 
The total response $\theta_0^{TE}(d,d')=E[Y^{\{d',M^{(d')}\}}-Y^{\{d,M^{(d)}\}}]$ is the total effect of a new treatment value $d'$ compared to an old value $d$.
\item 
The indirect response $\theta_0^{IE}(d,d')=E[Y^{\{d',M^{(d')}\}}-Y^{\{d',M^{(d)}\}}]$ is the component of the total response mediated by $M$.
 \item 
The direct response $\theta_0^{DE}(d,d')=E[Y^{\{d',M^{(d)}\}}-Y^{\{d,M^{(d)}\}}]$ is the component of the total response that is not mediated by $M$.
 \item 
The mediated response $\theta_0^{ME}(d,d')=E[Y^{\{d',M^{(d)}\}}]
$ is the counterfactual mean outcome in the thought experiment that the treatment is set at a new value $D=d'$ but the mediator $M$ follows the distribution it would have followed if the treatment were set at its old value $D=d$.
\end{enumerate}
Likewise we define incremental response curves, e.g. $\theta_0^{ME,\nabla}(d,d')=E[\nabla_{d'} Y^{\{d',M^{(d)}\}}]$. 
\end{definition}

\begin{remark}[Interventional mediated response curves]\label{remark:controversy}
    Definition~\ref{def:mediation} considers cross world counterfactuals that involve different treatment values for the potential outcomes and potential mediators \cite{robins1992identifiability,pearl2001direct,imai2010identification}. 
 An alternative paradigm instead considers interventional counterfactuals; see e.g. \cite{robins2010alternative,richardson2013single,robins2022interventionist} and references therein. The alternative view avoids defining potential mediators and instead supposes that the treatment can be decomposed into multiple separable components. Though these two paradigms define different mediated response curves, when they are identified, their identifying formulae coincide as Pearl's mediation formula, which we quote in Lemma~\ref{lemma:id_mediation} as the starting point for our analysis. As such, our estimation results apply to both important paradigms for mediation analysis. 
\end{remark}

$\theta_0^{TE}(d,d')$ generalizes average total effects. The average total effect of a binary treatment is $E[Y^{\{1,M^{(1)}\}}-Y^{\{0,M^{(0)}\}}]$. For a continuous treatment, the function $d\mapsto E[Y^{\{d,M^{(d)}\}}]$ may be infinite dimensional, which makes this problem fully nonparametric.


An analyst may wish to measure how much of the total effect is indirect: how much of the total effect would be achieved by simply intervening on the distribution of the mediator $M$? For example, in Section~\ref{sec:experiments}, we investigate the extent to which employment mediates the effect of job training on arrests.  With a binary treatment, the indirect effect is $E[Y^{\{1,M^{(1)}\}}-Y^{\{1,M^{(0)}\}}]$. In the former term, the mediator follows the counterfactual distribution under the intervention $D=1$, and in the latter, it follows the counterfactual distribution under the intervention $D=0$.

The remaining component of the total effect is the direct effect: if the mediator were held at the original distribution corresponding to $D=d$, what would be the impact of the treatment $D=d'$? For example, in Section~\ref{sec:experiments}, we investigate the effect of job training on arrests holding employment at the original distribution. With a binary treatment, the direct effect is $E[Y^{\{1,M^{(0)}\}}-Y^{\{0,M^{(0)}\}}]$.

The final target parameter is $\theta_0^{ME}(d,d')$. It is useful because $\theta_0^{TE}(d,d')$, $\theta_0^{IE}(d,d')$, and  $\theta_0^{DE}(d,d')$ can be expressed in terms of $\theta_0^{ME}(d,d')$. With a binary treatment, this quantity is a matrix in $\mathbb{R}^{2\times 2}$. With a continuous treatment, it is a surface over $\mathcal{D}\times\mathcal{D}$. 

\begin{proposition}[Convenient expressions]\label{prop:mediation}
Mediated response curves can be expressed in terms of $\theta_0^{ME}(d,d')$:
\begin{enumerate}
    \item $\theta_0^{TE}(d,d')=\theta_0^{DE}(d,d')+\theta_0^{IE}(d,d')$;
    \item $\theta_0^{IE}(d,d')=\theta_0^{ME}(d',d')-\theta_0^{ME}(d,d')$; 
    \item $\theta_0^{DE}(d,d')=\theta_0^{ME}(d,d')-\theta_0^{ME}(d,d)$.
\end{enumerate}

\end{proposition}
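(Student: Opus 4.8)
The plan is to prove all three identities by directly substituting the definitions from Definition~\ref{def:mediation} and invoking linearity of the expectation operator $E[\cdot]$; no probabilistic machinery beyond this is needed, since the proposition is a purely algebraic relationship among the four counterfactual means. The one point demanding care is the bookkeeping of the cross world notation: in $\theta_0^{ME}(d,d')=E[Y^{\{d',M^{(d)}\}}]$ the first argument $d$ governs the counterfactual mediator $M^{(d)}$, while the second argument $d'$ governs the direct treatment channel through $Y^{\{d',\cdot\}}$. Keeping these two roles distinct is what makes the substitutions align correctly.

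First I would establish (ii) and (iii), which are immediate. For (ii), I would write $\theta_0^{ME}(d',d')-\theta_0^{ME}(d,d')=E[Y^{\{d',M^{(d')}\}}]-E[Y^{\{d',M^{(d)}\}}]$ and combine the two expectations into $E[Y^{\{d',M^{(d')}\}}-Y^{\{d',M^{(d)}\}}]$, which is exactly $\theta_0^{IE}(d,d')$; here the direct channel is frozen at $d'$ while only the mediator distribution varies from $M^{(d)}$ to $M^{(d')}$. For (iii), I would write $\theta_0^{ME}(d,d')-\theta_0^{ME}(d,d)=E[Y^{\{d',M^{(d)}\}}]-E[Y^{\{d,M^{(d)}\}}]=E[Y^{\{d',M^{(d)}\}}-Y^{\{d,M^{(d)}\}}]$, matching the definition of $\theta_0^{DE}(d,d')$; here the mediator is frozen at the old distribution $M^{(d)}$ while the direct channel varies from $d$ to $d'$.

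Finally I would derive (i) by adding (iii) and (ii). Summing $\theta_0^{DE}(d,d')=E[Y^{\{d',M^{(d)}\}}-Y^{\{d,M^{(d)}\}}]$ and $\theta_0^{IE}(d,d')=E[Y^{\{d',M^{(d')}\}}-Y^{\{d',M^{(d)}\}}]$ and invoking linearity, the intermediate hybrid term $Y^{\{d',M^{(d)}\}}$ appears with opposite signs and cancels, leaving $E[Y^{\{d',M^{(d')}\}}-Y^{\{d,M^{(d)}\}}]=\theta_0^{TE}(d,d')$. This telescoping cancellation of the single cross world term is the entire substance of the decomposition. I do not expect a genuine obstacle here: the result holds without any identification, smoothness, or effective dimension assumptions, reflecting that it is a definitional identity among estimands rather than a statement about their estimability. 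The only place to be vigilant is ensuring the mediator and direct arguments are not transposed, since swapping them would silently exchange the direct and indirect effects.
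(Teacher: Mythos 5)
Your proof is correct and is exactly the argument the paper intends: the proposition is a definitional identity, and your substitution of Definition~\ref{def:mediation} into $\theta_0^{ME}(d,d')=E[Y^{\{d',M^{(d)}\}}]$, with linearity of expectation and the telescoping cancellation of the cross world term $Y^{\{d',M^{(d)}\}}$, is the whole content. You also correctly handle the one genuine pitfall, namely that the first argument of $\theta_0^{ME}$ indexes the mediator world and the second indexes the direct treatment channel.
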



In seminal works, \cite{robins1992identifiability,pearl2001direct,imai2010identification} state sufficient conditions under which the mediated response curves can be measured from the outcome $Y$, treatment $D$, mediator $M$, and covariates $X$, which we call selection on observables for mediation. We modestly extend the classic identification result from $\theta_0^{ME}(d,d')$ to its  incremental version $\theta_0^{ME,\nabla}(d,d')$. Let
$
\gamma_0(d,m,x)=E(Y|D=d,M=m,X=x)
$.

\begin{lemma}[Pearl's mediation formula]\label{lemma:id_mediation}
Under selection on observables for mediation, 
\begin{enumerate}
    \item  $
\theta_0^{ME}(d,d')=\int \gamma_0(d',m,x) \mathrm{d}P(m|d,x)\mathrm{d}P(x)
$ \cite{robins1992identifiability,pearl2001direct,imai2010identification} and 
    \item $\theta_0^{ME,\nabla}(d,d')=\int \nabla_{d'} \gamma_0(d',m,x) \mathrm{d}P(m|d,x)\mathrm{d}P(x)$.
\end{enumerate}
\end{lemma}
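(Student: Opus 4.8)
The plan is to treat part (i) as the classical mediation-formula identification and part (ii) as its differentiated analogue, exploiting the structural observation that the new treatment value $d'$ enters $\theta_0^{ME}(d,d')$ only through the integrand, never through the integrating measure.

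For part (i), I would follow the standard argument under selection on observables for mediation. Writing the target with iterated expectations, first condition on the pre-treatment covariates,
\[
\theta_0^{ME}(d,d')=E\big[Y^{\{d',M^{(d)}\}}\big]=\int E\big[Y^{\{d',M^{(d)}\}}\mid X=x\big]\,\mathrm{d}P(x),
\]
then decompose the inner expectation over the counterfactual mediator $M^{(d)}$,
\[
E\big[Y^{\{d',M^{(d)}\}}\mid X=x\big]=\int E\big[Y^{(d',m)}\mid M^{(d)}=m,X=x\big]\,\mathrm{d}P(M^{(d)}=m\mid X=x).
\]
The cross-world independence component of selection on observables, $Y^{(d',m)}\indep M^{(d)}\mid X$, removes the conditioning on $M^{(d)}=m$ in the integrand; the mediator-unconfoundedness component $M^{(d)}\indep D\mid X$ together with consistency identifies the mediator law as $\mathrm{d}P(M^{(d)}=m\mid X=x)=\mathrm{d}P(m\mid D=d,X=x)$; and treatment-mediator unconfoundedness with consistency identifies $E[Y^{(d',m)}\mid X=x]=E(Y\mid D=d',M=m,X=x)=\gamma_0(d',m,x)$. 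Overlap guarantees these conditionals are well defined on the relevant support. Substituting yields the stated formula.

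For part (ii), the observation is that the measure $\mathrm{d}P(m\mid d,x)\,\mathrm{d}P(x)$ in the identification of $\theta_0^{ME}(d,d')$ depends only on the old value $d$, so differentiating in the new value $d'$ commutes with the integral. Concretely I would argue
\[
\theta_0^{ME,\nabla}(d,d')=E\big[\nabla_{d'}Y^{\{d',M^{(d)}\}}\big]=\nabla_{d'}E\big[Y^{\{d',M^{(d)}\}}\big]=\nabla_{d'}\int \gamma_0(d',m,x)\,\mathrm{d}P(m\mid d,x)\,\mathrm{d}P(x),
\]
where the first interchange passes $\nabla_{d'}$ through the outer expectation and the last interchange passes it under the double integral, producing $\int \nabla_{d'}\gamma_0(d',m,x)\,\mathrm{d}P(m\mid d,x)\,\mathrm{d}P(x)$.

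The main obstacle is rigorously justifying these two interchanges of differentiation with integration. Both follow from a dominated-convergence / Leibniz argument once one has a $P$-integrable envelope dominating the relevant difference quotients---equivalently, integrable pathwise derivatives $d'\mapsto\nabla_{d'}Y^{\{d',M^{(d)}\}}$ and $d'\mapsto\nabla_{d'}\gamma_0(d',m,x)$. These regularity conditions are supplied by the paper's smoothness and original-space assumptions on $\gamma_0$; crucially, no new identifying assumption is needed beyond part (i), since the mediator and covariate laws are inert in $d'$. This is precisely why the extension is \emph{modest}: it is differentiation under an integral whose measure does not involve the differentiation variable.
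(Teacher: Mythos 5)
Your proposal is correct in substance, but it reaches part (ii) by a genuinely different route than the paper. For part (i) you give the standard iterated-expectations argument over potential outcomes, which matches the cited classical derivations; the paper instead works in a nonseparable structural representation $Y^{(d',m)}=Y(d',m,\eta)$, $M^{(d)}=M(d,\eta)$ with unobserved heterogeneity $\eta$, and the two are equivalent in content. The real difference is part (ii): you differentiate the already-identified formula from part (i), passing $\nabla_{d'}$ through the outer expectation and then under the integral $\int\cdot\,\mathrm{d}P(m|d,x)\mathrm{d}P(x)$. The paper deliberately does \emph{not} differentiate the part (i) identity; it first uses the conditional independences to rewrite $\gamma_0(d',m,x)=\int Y(d',m,\eta)\,\mathrm{d}P(\eta|x)$ --- so that the conditioning measure is free of $d'$ --- differentiates that representation, and then re-runs the identification argument for $\int\nabla_{d'}\gamma_0\,\mathrm{d}P(m|d,x)\mathrm{d}P(x)$ from scratch, collapsing $P\{\eta|M(d,\eta),x\}\,\mathrm{d}P\{M(d,\eta)|x\}\,\mathrm{d}P(x)$ back to $\mathrm{d}P(\eta)$. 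This restructuring is exactly what the authors mean by avoiding chain-rule terms: had they differentiated $\int Y(d',m,\eta)\,\mathrm{d}P(\eta|d',m,x)$ directly, the measure would contribute extra factors. Your route exploits the same structural fact (the integrating measure depends on $d$ but not $d'$) and is arguably more economical, at the cost of needing two differentiation--integration interchanges rather than one; both routes leave the required domination condition informal. One small correction: you say the needed integrable envelope is supplied by the paper's smoothness assumptions on $\gamma_0$, but the lemma is stated under selection on observables alone --- the RKHS smoothness conditions (Assumptions 2--5) are not hypotheses of this identification result, and the paper's own proof likewise asserts the interchange $\nabla_{d'}\gamma_0(d',m,x)=\int\nabla_{d'}Y(d',m,\eta)\,\mathrm{d}P(\eta|x)$ without an explicit envelope.
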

See Section~12 of \cite{rahul2024supplement} for the identifying assumptions and proof of our extension. Proposition~\ref{prop:mediation} identifies the other quantities in Definition~\ref{def:mediation}. For subsequent analysis, it helps to define $\omega_0(d,d';x)=\int \gamma_0(d',m,x) \mathrm{d}P(m|d,x)$, so that $\theta_0^{ME}(d,d')=\int \omega_0(d,d';x) \mathrm{d}P(x)$.

\begin{remark}[Pearl's mediation formula is unbounded over $\mathbb{L}^2$ when the treatment is continuous]\label{remark:unbounded}
Define the functional $F:\mathbb{L}^2\rightarrow \mathbb{R}$, $\gamma\mapsto \int \gamma(d',m,x) \mathrm{d}P(m|d,x)\mathrm{d}P(x)$. When the treatment is continuous, $F$ is generally unbounded, i.e. there does not exists some $C<\infty$ such that $F(\gamma)\leq C \|\gamma\|_{\mathbb{L}^2}$ for all $\gamma\in\mathbb{L}^2$ \cite{van1991differentiable,newey1994asymptotic}. This technical challenge is well documented in the causal inference literature; see \cite{van2018cv} for references.
\end{remark}

\begin{remark}[Mediational g-formula]
    A rich literature defines mediated response curves in the context of time-varying treatments; see e.g. \cite{vanderweele2017mediation,malinsky2019potential} and references therein. The mediational g-formula synthesizes Pearl's mediation formula in Lemma~\ref{lemma:id_mediation} and Robins' g-formula in Lemma~\ref{lemma:id_planning}. Our framework generalizes to these more complex causal functions, using the techniques in Section~11 of \cite{rahul2024supplement}.
\end{remark}

\subsection{Sequential kernel embedding}

Lemma~\ref{lemma:id_mediation} makes precise how each mediated response curve is identified as a sequential integral of the form $\int \gamma_0(d',m,x)\mathrm{d}Q$ for the distribution $Q=P(m|d,x)P(x)$. Since $x$ appears in $\gamma_0(d',m,x)$, $P(m|d,x)$, and $P(x)$, the sequential integral is coupled and therefore challenging to estimate. We prove that, with the appropriate RKHS construction, the components $\gamma_0(d',m,x)$, $P(m|d,x)$, and $P(x)$ can be decoupled. Moreover, the sequential distribution $Q$ can be encoded by a sequential kernel embedding, which is our key innovation. We use these techniques to reduce sequential causal inference into the combination of kernel ridge regressions, which then allows us to propose simple estimators with closed form solutions.

To begin, we construct the appropriate RKHS for $\gamma_0$. In our construction, we define RKHSs for the treatment $D$, mediator $M$, and covariates $X$, then assume that the regression is an element of a certain composite space. To lighten notation, we will suppress subscripts when arguments are provided. We assume the regression $\gamma_0$ is an element of the RKHS $\mathcal{H}$ with the kernel $k(d,m,x;d',m',x') = k_{\mathcal{D}}(d,d') k_{\mathcal{M}}(m,m') k_{\mathcal{X}}(x,x')$. Formally, this choice of kernel corresponds to the tensor product: $\gamma_0\in\mathcal{H}=\mathcal{H}_{\mathcal{D}}\otimes \mathcal{H}_{\mathcal{M}} \otimes \mathcal{H}_{\mathcal{X}}$, with the tensor product dictionary of basis functions $\phi(d)\otimes \phi(m)\otimes \phi(x)$. As such, $\gamma_0(d,m,x)=\langle \gamma_0, \phi(d)\otimes \phi(m)\otimes \phi(x)\rangle_{\mathcal{H}} $ and $\|\phi(d)\otimes \phi(m)\otimes \phi(x)\|_{\mathcal{H}}=\|\phi(d)\|_{\mathcal{H}_{\mathcal{D}}}  \|\phi(m)\|_{\mathcal{H}_{\mathcal{M}}}  \|\phi(x)\|_{\mathcal{H}_{\mathcal{X}}}$. We place regularity conditions on this RKHS construction in order to prove our decoupling result. Anticipating Section~10 of \cite{rahul2024supplement}, we include conditions for an outcome RKHS in parentheses.

\begin{assumption}[RKHS regularity conditions]\label{assumption:RKHS}
Assume 
\begin{enumerate}
    \item $k_{\mathcal{D}}$, $k_{\mathcal{M}}$, $k_{\mathcal{X}}$ (and $k_{\mathcal{Y}}$) are continuous and bounded, i.e.
    $
    \sup_{d\in\mathcal{D}}\|\phi(d)\|_{\mathcal{H}_{\mathcal{D}}}\leq \kappa_d$, $ \sup_{m\in\mathcal{M}}$ $\|\phi(m)\|_{\mathcal{H}_{\mathcal{M}}}\leq$ $ \kappa_m$, $ \sup_{x\in\mathcal{X}}\|\phi(x)\|_{\mathcal{H}_{\mathcal{X}}}\leq \kappa_x
    $ \{and $ \sup_{y\in\mathcal{Y}}\|\phi(y)\|_{\mathcal{H}_{\mathcal{Y}}}\leq \kappa_y$\}; 
    \item $\phi(d)$, $\phi(m)$, $\phi(x)$ \{and $\phi(y)$\} are measurable;
    \item $k_{\mathcal{M}}$, $k_{\mathcal{X}}$ (and $k_{\mathcal{Y}}$) are characteristic \cite{sriperumbudur2010relation}.
\end{enumerate}

For incremental responses, further assume $\mathcal{D}\subset \mathbb{R}$ is an open set and $\nabla_d  \nabla_{d'} k_{\mathcal{D}}(d,d')$ exists and is continuous, hence $\sup_{d\in\mathcal{D}}\|\nabla_d\phi(d)\|_{\mathcal{H}}\leq \kappa_d'$. 
\end{assumption}
Commonly used kernels are continuous and bounded. Measurability is a similarly weak condition. The characteristic property means that different distributions will have different embeddings in the RKHS. For example, the indicator kernel is characteristic over a discrete domain, while the exponentiated quadratic kernel is characteristic over a continuous domain.

\begin{theorem}[Decoupling via sequential kernel embeddings]\label{theorem:representation_mediation}
Suppose the conditions of Lemma~\ref{lemma:id_mediation} hold. Further suppose Assumption~\ref{assumption:RKHS} holds and $\gamma_0\in\mathcal{H}$. Then
\begin{enumerate}
    \item $\omega_0(d,d';x)=\langle \gamma_0,\phi(d')\otimes \mu_m(d,x) \otimes \phi(x)  \rangle_{\mathcal{H}} $,
    \item $
\theta_0^{ME}(d,d')=\langle \gamma_0, \phi(d')\otimes \mu_{m,x}(d) \rangle_{\mathcal{H}}
$, and 
    \item $
\theta_0^{ME,\nabla}(d,d')=\langle \gamma_0, \nabla_{d'}\phi(d')\otimes \mu_{m,x}(d) \rangle_{\mathcal{H}},
$ 
\end{enumerate}
where $\mu_{m}(d,x)=\int \phi(m)\mathrm{d}P(m|d,x)$ and $\mu_{m,x}(d)=\int \{\mu_m(d,x) \otimes \phi(x)\} \mathrm{d}P(x)$. 
\end{theorem}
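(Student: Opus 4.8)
The plan is to work directly from the identified integral expressions in Lemma~\ref{lemma:id_mediation}, replace each pointwise evaluation of $\gamma_0$ by its reproducing-property inner product under the tensor-product structure of $\mathcal{H}$, and then interchange each integral with the inner product so that the integration collapses onto the feature maps, producing the sequential embeddings. The same three-move template (reproducing property, interchange, factor out integration-independent terms) handles all three parts, applied at progressively deeper levels of the sequential integral.

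For part (i), I would begin by writing the integrand as $\gamma_0(d',m,x)=\langle \gamma_0,\phi(d')\otimes\phi(m)\otimes\phi(x)\rangle_{\mathcal{H}}$, which holds because $\mathcal{H}=\mathcal{H}_{\mathcal{D}}\otimes\mathcal{H}_{\mathcal{M}}\otimes\mathcal{H}_{\mathcal{X}}$ with the product kernel. The crux is then to move $\int\cdot\,\mathrm{d}P(m|d,x)$ inside the inner product. I would justify this via Bochner integration: the map $m\mapsto\phi(d')\otimes\phi(m)\otimes\phi(x)$ is strongly measurable (measurability of the feature maps, Assumption~\ref{assumption:RKHS}(ii), combined with separability of the RKHSs) and has norm bounded by $\kappa_d\kappa_m\kappa_x<\infty$ (Assumption~\ref{assumption:RKHS}(i)), hence is Bochner integrable against the probability measure $P(m|d,x)$. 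Since the Bochner integral commutes with the bounded linear functional $\langle\gamma_0,\cdot\rangle$, we obtain $\omega_0(d,d';x)=\langle\gamma_0,\int\phi(d')\otimes\phi(m)\otimes\phi(x)\,\mathrm{d}P(m|d,x)\rangle_{\mathcal{H}}$. As $\phi(d')$ and $\phi(x)$ do not depend on $m$, they factor out, leaving $\phi(d')\otimes\big(\int\phi(m)\,\mathrm{d}P(m|d,x)\big)\otimes\phi(x)=\phi(d')\otimes\mu_m(d,x)\otimes\phi(x)$, which is part (i).

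For part (ii), I would apply the identical argument one level up: start from $\theta_0^{ME}(d,d')=\int\omega_0(d,d';x)\,\mathrm{d}P(x)$, substitute part (i), and interchange $\int\cdot\,\mathrm{d}P(x)$ with the inner product. The integrand $x\mapsto\phi(d')\otimes\mu_m(d,x)\otimes\phi(x)$ again has bounded norm, using $\|\mu_m(d,x)\|_{\mathcal{H}_{\mathcal{M}}}\le\int\|\phi(m)\|\,\mathrm{d}P(m|d,x)\le\kappa_m$ by Jensen's inequality, so Bochner integrability carries over. Factoring out the $x$-independent $\phi(d')$ yields $\phi(d')\otimes\int\{\mu_m(d,x)\otimes\phi(x)\}\,\mathrm{d}P(x)=\phi(d')\otimes\mu_{m,x}(d)$. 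Part (iii) follows the same route once we first commute $\nabla_{d'}$ with the inner product, which is licensed by the extra smoothness condition on $k_{\mathcal{D}}$: when $\nabla_d\nabla_{d'}k_{\mathcal{D}}(d,d')$ exists and is continuous, $\nabla_{d'}\phi(d')$ exists as an element of $\mathcal{H}_{\mathcal{D}}$ with $\sup_{d'}\|\nabla_{d'}\phi(d')\|\le\kappa_d'$, so that $\nabla_{d'}\gamma_0(d',m,x)=\langle\gamma_0,\nabla_{d'}\phi(d')\otimes\phi(m)\otimes\phi(x)\rangle_{\mathcal{H}}$; the subsequent double integral is then collapsed exactly as in parts (i) and (ii), with the bound $\kappa_d'\kappa_m\kappa_x$ replacing $\kappa_d\kappa_m\kappa_x$.

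The main obstacle is the rigorous justification of the integral--inner-product interchange, i.e.\ establishing Bochner integrability at each stage. Two subtleties warrant care. First, I would verify strong measurability of the Hilbert-space-valued integrands by invoking Pettis's theorem, using separability of the RKHSs together with the assumed measurability of the feature maps to upgrade weak to strong measurability. Second, the outer integral in part (ii) requires that $x\mapsto\mu_m(d,x)$ be itself measurable; this holds because the conditional mean embedding is a measurable selection, but it should be stated explicitly rather than assumed. Everything downstream---factoring out integration-independent tensor factors and recognizing the resulting integrals as $\mu_m(d,x)$ and $\mu_{m,x}(d)$---is routine once the interchange is secured. It is worth noting that the decoupling itself relies only on boundedness and measurability; the characteristic property in Assumption~\ref{assumption:RKHS}(iii) is reserved for the identification and uniqueness arguments elsewhere, not for these representation identities.
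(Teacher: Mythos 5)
Your proposal is correct and follows essentially the same route as the paper's proof: express $\gamma_0(d',m,x)$ via the reproducing property in the tensor-product RKHS, invoke Bochner integrability (guaranteed by the bounded, measurable feature maps of Assumption~\ref{assumption:RKHS}) to interchange each integral with the inner product, collapse the integration onto the feature map to obtain $\mu_m(d,x)$ and then $\mu_{m,x}(d)$, and handle the incremental case via the derivative feature map $\nabla_{d'}\phi(d')$ whose existence and boundedness follow from the smoothness condition on $k_{\mathcal{D}}$. Your added care about strong measurability via Pettis's theorem and the measurability of $x\mapsto\mu_m(d,x)$ only makes explicit what the paper leaves implicit, and your observation that the characteristic property is not needed for these identities is accurate.
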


\begin{proof}[Proof sketch]
Consider $\omega_0(d,d';x)=\int \gamma_0(d',m,x) \mathrm{d}P(m|d,x)$. We show
$$
\omega_0(d,d';x)
=\int \langle \gamma_0, \phi(d')\otimes \phi(m) \otimes \phi(x) \rangle_{\mathcal{H}}  \mathrm{d}P(m|d,x) 
= \langle \gamma_0, \phi(d')\otimes \mu_m(d,x) \otimes \phi(x) \rangle_{\mathcal{H}}.
$$
Sequentially repeating this technique for $\theta_0^{ME}(d,d')=\int \omega_0(d,d';x) \mathrm{d}P(x)$,
$$
\theta_0^{ME}(d,d')
=\int \langle \gamma_0, \phi(d')\otimes \mu_m(d,x) \otimes \phi(x) \rangle_{\mathcal{H}} \mathrm{d}P(x)
=\langle \gamma_0, \phi(d')\otimes\mu_{m,x}(d) \rangle_{\mathcal{H}}. \qedhere
$$
\end{proof}

See Section~13 of \cite{rahul2024supplement} for the full proof. The quantity $\mu_{m}(d,x)=\int \phi(m)\mathrm{d}P(m|d,x)$ embeds the conditional distribution $P(m|d,x)$ as an element of the RKHS $\mathcal{H}_{\mathcal{M}}$, which is a popular technique in the RKHS literature. It satisfies, for $f\in\mathcal{H}_{\mathcal{M}}$, $\langle f, \mu_{m}(d,x) \rangle_{\mathcal{H}_{\mathcal{M}}}=\int f(m) \mathrm{d}P(m|d,x)$. 

\begin{remark}[Key innovation]
    The quantity $\mu_{m,x}(d)$ is a sequential kernel embedding that encodes the counterfactual distribution of the mediator $M$ and covariates $X$ when the counterfactual treatment value is $D=d$. It is our key innovation. It has the property that, for $f\in\mathcal{H}_{\mathcal{M}} \otimes \mathcal{H}_{\mathcal{X}}$, $\langle f, \mu_{m,x}(d) \rangle_{\mathcal{H}_{\mathcal{M}}\otimes \mathcal{H}_{\mathcal{X}}}=\int f(m,x) \mathrm{d}P(m|d,x)\mathrm{d}P(x)$, implementing Pearl's mediation formula.
\end{remark}


\begin{remark}[Pearl's mediation formula is bounded over $\mathcal{H}$ when the treatment is continuous]\label{remark:bounded}
Define the functional $F:\mathcal{H} \rightarrow \mathbb{R}$, $\gamma\mapsto \int \gamma(d',m,x) \mathrm{d}P(m|d,x)\mathrm{d}P(x)$. Under Assumption~\ref{assumption:RKHS}, we show that $F$ is bounded even when the treatment is continuous, i.e. there exists some $C<\infty$ such that $F(\gamma)\leq C \|\gamma\|_{\mathcal{H}}$ for all $\gamma\in\mathcal{H}$. This observation generalizes the observation of \cite{singh2020kernel} for time-fixed dose response curves. It follows immediately from the definition of the RKHS as the space of functions for which the functional $\gamma\mapsto \gamma(d',m,x)$ is bounded \cite{berlinet2004reproducing}. Boundedness of the functional is what guarantees the existence of the sequential kernel embeddings in
 Theorem~\ref{theorem:representation_mediation}.
\end{remark}

Theorem~\ref{theorem:representation_mediation} decouples $\gamma_0(d',m,x)$, $P(m|d,x)$, and $P(x)$, providing a blueprint for estimation that avoids density estimation and sequential integration. Our estimator will be $\hat{\theta}^{ME}(d,d')=\langle \hat{\gamma}, \phi(d')\otimes \hat{\mu}_{m,x}(d)  \rangle_{\mathcal{H}}$ where $
\hat{\mu}_{m,x}(d)  =n^{-1}\sum_{i=1}^n\{\hat{\mu}_m(d,X_i)\otimes \phi(X_i)\}$. The estimator $\hat{\gamma}$ is a standard kernel ridge regression. The estimator $\hat{\mu}_m(d,x)$ is an appropriately generalized kernel ridge regression. We combine them by averaging and taking the product.

\begin{algo}[Nonparametric estimation of mediated response curves]\label{algorithm:mediation}
Denote the kernel matrices by
$
K_{DD},$ $K_{MM},$ $K_{XX}\in\mathbb{R}^{n\times n}
$. Let $\odot$ be the elementwise product. Mediated response curves have closed form solutions:
\begin{enumerate}
    \item $\hat{\omega}(d,d';x)= Y^{\top}(K_{DD}\odot K_{MM}\odot K_{XX}+n\lambda I)^{-1}
    [K_{Dd'}\odot  \{K_{MM}(K_{DD}\odot K_{XX}+n\lambda_1 I)^{-1}(K_{Dd}\odot K_{Xx})\}\odot K_{Xx}]$ and
    \item  $\hat{\theta}^{ME}(d,d') 
    =n^{-1}\sum_{i=1}^n \hat{\omega}(d,d';X_i)$,
\end{enumerate}
where $(\lambda,\lambda_1)$ are ridge regression penalty parameters.
For mediated incremental response curve estimators, we replace $K_{Dd'}$ with $\nabla_{d'}K_{Dd'}$ where $(\nabla_{d'}  K_{D{d'}})_i=\nabla_{d'} k(D_i,d')$. 
\end{algo}
\begin{proof}[Derivation sketch]
Consider $\omega_0(d,d';x)$. Analogously to~\eqref{eq:loss}, the kernel ridge regression estimators of the regression $\gamma_0$ and the conditional kernel embedding $\mu_m(d,x)$ are 
\begin{align*}
    \hat{\gamma}&=\argmin_{\gamma \in\mathcal{H}} \frac{1}{n}\sum_{i=1}^n [Y_i-\langle\gamma, \phi(D_i)\otimes \phi(M_i) \otimes\phi (X_i)\rangle_{\mathcal{H}}]^2 + \lambda \|\gamma\|^2_{\mathcal{H}}, \\
    \hat{E}&=\argmin_{E\in\mathcal{L}_2(\mathcal{H}_{\mathcal{M}},\mathcal{H}_{\mathcal{D}}\otimes \mathcal{H}_{\mathcal{X}})} \frac{1}{n}\sum_{i=1}^n [\phi(M_i)-E^*\{\phi(D_i)\otimes \phi(X_i)\}]^2 + \lambda_1 \|E\|^2_{\mathcal{L}_2(\mathcal{H}_{\mathcal{M}},\mathcal{H}_{\mathcal{D}}\otimes \mathcal{H}_{\mathcal{X}})},
\end{align*}
where $\hat{\mu}_m(d,x)=\hat{E}^*\{\phi(d)\otimes \phi(x)\}$ and $E^*$ is the adjoint of $E$. The closed forms are
\begin{align*}
    \hat{\gamma}(d',\cdot,x)&=Y^{\top}(K_{DD}\odot K_{MM}\odot K_{XX}+n\lambda I)^{-1}(K_{Dd'}\odot K_{M\cdot }\odot K_{Xx}),\\
    [\hat{\mu}_m(d,x)](\cdot)&=K_{\cdot M}(K_{DD}\odot K_{XX}+n\lambda_1 I)^{-1}(K_{Dd}\odot K_{Xx}).
\end{align*}
To arrive at the main result, match the empty arguments $(\cdot)$ of the kernel ridge regressions.
\end{proof}
We derive this algorithm in Section~13 of \cite{rahul2024supplement}. 
We give theoretical values for $(\lambda,\lambda_1)$ that optimally balance bias and variance in Theorem~\ref{theorem:consistency_mediation} below. Section~16 gives practical tuning procedures with closed form solutions to empirically balance bias and variance, one of which is asymptotically optimal. We formally define the operator space $\mathcal{L}_2(\mathcal{H}_{\mathcal{M}},\mathcal{H}_{\mathcal{D}}\otimes \mathcal{H}_{\mathcal{X}})$ below. 

\subsection{Uniform consistency with finite sample rates}

Towards a guarantee of uniform consistency, we place regularity conditions on the original spaces. Anticipating Section~10 of \cite{rahul2024supplement}, we include conditions for the outcome in parentheses.
\begin{assumption}[Original space regularity conditions]\label{assumption:original}
Assume (i) $\mathcal{D}$, $\mathcal{M}$, $\mathcal{X}$ (and $\mathcal{Y}$) are Polish spaces; 
    (ii) $\mathcal{Y}\subset \mathbb{R}$ and $|Y|\leq C$ almost surely.
\end{assumption}
A Polish space is a separable and completely metrizable topological space. Random variables supported in a Polish space may be discrete or continuous and even texts, graphs, or images. Boundedness of the outcome $Y$ can be relaxed.
Next, we place assumptions on the smoothness of the regression $\gamma_0$ and the effective dimension of $\mathcal{H}$ the sense of~\eqref{eq:prior}.
\begin{assumption}[Smoothness and effective dimension of the regression]\label{assumption:smooth_gamma}
Assume $\gamma_0\in\mathcal{H}^c$ with $c\in(1,2]$, and $\eta_j(\mathcal{H}) \leq C j^{-b}$ with $b\geq 1$.
\end{assumption}
See Section~14 of \cite{rahul2024supplement} for alternative ways of writing and interpreting Assumption~\ref{assumption:smooth_gamma} in the tensor product space $\mathcal{H}$.  We place similar conditions on the conditional kernel embedding $\mu_m(d,x)$,  which is a generalized regression. We articulate this assumption abstractly for the conditional kernel embedding $\mu_{a}(b)=\int \phi(a)\mathrm{d}P(a|b)$ where $a\in\mathcal{A}_{\ell}$ and $b\in\mathcal{B}_{\ell}$. As such, all one has to do is specify $\mathcal{A}_{\ell}$ and $\mathcal{B}_{\ell}$ to specialize the assumption. For $\mu_m(d,x)$, $\mathcal{A}_1=\mathcal{M}$ and $\mathcal{B}_1=\mathcal{D}\times \mathcal{X}$. We parametrize the effective dimension and smoothness of $\mu_{a}(b)$ by $(b_{\ell},c_{\ell})$.

Formally, define the conditional expectation operator $E_{\ell}:\mathcal{H}_{\mathcal{A}_{\ell}}\rightarrow\mathcal{H}_{\mathcal{B}_{\ell}}$, $f(\cdot)\mapsto E\{f(A_{\ell})|B_{\ell}=\cdot\}$. By construction, $E_{\ell}$ encodes the same information as $\mu_{a}(b)$ since
$$
\mu_{a}(b)=\int \phi(a)\mathrm{d}P(a|b) =[E_{\ell}\{\phi(\cdot)\}](b) =[E_{\ell}^* \{\phi(b)\}](\cdot),\quad a\in\mathcal{A}_{\ell},\quad  b\in\mathcal{B}_{\ell},
$$
where $E_{\ell}^*$ is the adjoint of $E_{\ell}$. We denote the space of Hilbert--Schmidt operators between $\mathcal{H}_{\mathcal{A}_{\ell}}$ and $\mathcal{H}_{\mathcal{B}_{\ell}}$ by $\mathcal{L}_2(\mathcal{H}_{\mathcal{A}_{\ell}},\mathcal{H}_{\mathcal{B}_{\ell}})$. Here, $\mathcal{L}_2(\mathcal{H}_{\mathcal{A}_{\ell}},\mathcal{H}_{\mathcal{B}_{\ell}})$ is an RKHS in its own right, for which we place smoothness and effective dimension assumptions in the sense of~\eqref{eq:prior}. See \cite[Appendix B]{grunewalder2013smooth} and \cite[Appendix A.3]{singh2019kernel} for details.

\begin{assumption}[Smoothness and effective dimension of a conditional kernel embedding]\label{assumption:smooth_op}
Assume $E_{\ell}\in \mathcal{L}_2(\mathcal{H}_{\mathcal{A}_{\ell}},\mathcal{H}^{c_{\ell}}_{\mathcal{B}_{\ell}})$ with $c_{\ell}\in(1,2]$, and $\eta_j(\mathcal{H}_{\mathcal{B}_{\ell}})\leq C j^{-b_{\ell}}$ with $b_{\ell}\geq 1$.
\end{assumption}
Just as we place approximation assumptions for $\gamma_0$ in terms of $\mathcal{H}$, which provides the features onto which we project $Y$, we place approximation assumptions for $E_{\ell}$ in terms of $\mathcal{H}_{\mathcal{B}_{\ell}}$, which provides the features $\phi(B_{\ell})$ onto which we project $\phi(A_{\ell})$. Under these conditions, we arrive at our first main theoretical guarantee.

\begin{theorem}[Uniform consistency of mediated response curves]\label{theorem:consistency_mediation}
Suppose the conditions of Theorem~\ref{theorem:representation_mediation} hold, as well as Assumptions~\ref{assumption:original},~\ref{assumption:smooth_gamma}, and~\ref{assumption:smooth_op} with $\mathcal{A}_1=\mathcal{M}$ and $\mathcal{B}_1=\mathcal{D}\times \mathcal{X}$. Set $(\lambda,\lambda_1)=\{n^{-1/(c+1/b)},n^{-1/(c_1+1/b_1)}\}$, which is rate optimal regularization. Then
\begin{enumerate}
    \item $
\|\hat{\theta}^{ME}-\theta_0^{ME}\|_{\infty}=O_p\left[n^{-(c-1)/\{2(c+1/b)\}}+n^{-(c_1-1)/\{2(c_1+1/b_1)\}}\right]
$ and
    \item  $
\|\hat{\theta}^{ME,\nabla}-\theta_0^{ME,\nabla}\|_{\infty}=O_p\big[n^{-(c-1)/\{2(c+1/b)\}}+n^{-(c_1-1)/\{2(c_1+1/b_1)\}}\big].
$
\end{enumerate}
\end{theorem}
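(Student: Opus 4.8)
The plan is to combine the decoupling representations of Theorem~\ref{theorem:representation_mediation} with the minimax learning rates for kernel ridge regression and conditional mean embeddings, using the boundedness in Assumption~\ref{assumption:RKHS} to pass from pointwise to uniform control. Writing $\hat\theta^{ME}(d,d')=\langle\hat\gamma,\phi(d')\otimes\hat\mu_{m,x}(d)\rangle_{\mathcal H}$ and $\theta_0^{ME}(d,d')=\langle\gamma_0,\phi(d')\otimes\mu_{m,x}(d)\rangle_{\mathcal H}$, I would insert the intermediate term $\langle\gamma_0,\phi(d')\otimes\hat\mu_{m,x}(d)\rangle_{\mathcal H}$ to obtain, for every $(d,d')$,
\[
\hat\theta^{ME}(d,d')-\theta_0^{ME}(d,d')=\langle\hat\gamma-\gamma_0,\phi(d')\otimes\hat\mu_{m,x}(d)\rangle_{\mathcal H}+\langle\gamma_0,\phi(d')\otimes\{\hat\mu_{m,x}(d)-\mu_{m,x}(d)\}\rangle_{\mathcal H}.
\]
By Cauchy--Schwarz, the tensor factorization of the norm, and $\sup_{d'}\|\phi(d')\|_{\mathcal H_{\mathcal D}}\le\kappa_d$, the uniform error is controlled by $\|\hat\gamma-\gamma_0\|_{\mathcal H}$ times a uniform bound on $\|\hat\mu_{m,x}(d)\|$, plus $\|\gamma_0\|_{\mathcal H}$ times $\sup_d\|\hat\mu_{m,x}(d)-\mu_{m,x}(d)\|$. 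This is exactly where boundedness of the feature maps (the content of Remark~\ref{remark:bounded}) converts the unbounded-functional difficulty into tractable Hilbert-space norms.

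For the first (regression) term, I would invoke the kernel ridge regression rate in the strong RKHS norm under Assumption~\ref{assumption:smooth_gamma} with $\lambda=n^{-1/(c+1/b)}$, namely $\|\hat\gamma-\gamma_0\|_{\mathcal H}=O_p(n^{-(c-1)/\{2(c+1/b)\}})$, which is the Sobolev-norm rate in~\eqref{eq:optimal} and follows from \cite{fischer2017sobolev,li2022optimal}. The accompanying factor satisfies $\sup_d\|\hat\mu_{m,x}(d)\|\le\kappa_x\sup_{d,x}\|\hat\mu_m(d,x)\|$, which is $O_p(1)$: since $\hat\mu_m(d,x)=\hat E^*\{\phi(d)\otimes\phi(x)\}$ we get $\sup_{d,x}\|\hat\mu_m(d,x)\|\le\kappa_d\kappa_x\|\hat E\|_{op}$, and $\|\hat E\|_{op}\le\|\hat E\|_{\mathcal L_2}$ converges to $\|E\|$. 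Hence this term is $O_p(n^{-(c-1)/\{2(c+1/b)\}})$.

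For the second (embedding) term, I would further split $\hat\mu_{m,x}(d)-\mu_{m,x}(d)$ into a conditional-embedding estimation error $n^{-1}\sum_i\{\hat\mu_m(d,X_i)-\mu_m(d,X_i)\}\otimes\phi(X_i)$ and a Monte Carlo error $n^{-1}\sum_i\mu_m(d,X_i)\otimes\phi(X_i)-\int\mu_m(d,x)\otimes\phi(x)\mathrm{d}P(x)$. The first is bounded by $\kappa_x\sup_{d,x}\|\hat\mu_m(d,x)-\mu_m(d,x)\|\le\kappa_d\kappa_x^2\|\hat E-E\|_{op}$; under Assumption~\ref{assumption:smooth_op} with $\lambda_1=n^{-1/(c_1+1/b_1)}$ the conditional mean embedding operator converges in its strong norm at rate $O_p(n^{-(c_1-1)/\{2(c_1+1/b_1)\}})$, and since $\|\cdot\|_{op}\le\|\cdot\|_{\mathcal L_2}$ this rate transfers uniformly in $(d,x)$. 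The Monte Carlo error is an average of i.i.d.\ Hilbert-space-valued elements; treating $d\mapsto\mu_m(d,X_i)\otimes\phi(X_i)=\{E^*(\phi(d)\otimes\phi(X_i))\}\otimes\phi(X_i)$ as an element of the vector-valued RKHS indexed by $d$, the uniform-in-$d$ deviation is bounded by $\kappa_d$ times a single Hilbert-space average, which concentrates at $O_p(n^{-1/2})$ by a Bernstein inequality. Since $c,c_1\in(1,2]$ force both nonparametric exponents below $1/2$, the Monte Carlo term is dominated, yielding the stated rate.

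Finally, the incremental statement is identical: the representation in Theorem~\ref{theorem:representation_mediation}(iii) merely replaces $\phi(d')$ by $\nabla_{d'}\phi(d')$, and the extra clause of Assumption~\ref{assumption:RKHS} supplies $\sup_{d'}\|\nabla_{d'}\phi(d')\|\le\kappa_d'$, so the same three-term bound goes through verbatim with $\kappa_d$ replaced by $\kappa_d'$. I expect the main obstacle to be establishing the two strong-norm (RKHS and operator) rates with the correct bias--variance trade-off---particularly the uniform-in-conditioning-value rate for the conditional mean embedding $\hat E$, which requires adapting the interpolation-space analysis of \cite{fischer2017sobolev} to the operator-valued regression defining $\hat E$---together with the RKHS reindexing argument that makes the uniform-in-$d$ Monte Carlo term a single Hilbert-space concentration.
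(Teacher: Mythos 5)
Your proposal is correct and follows essentially the same route as the paper's proof: the same bilinear decomposition via the decoupled representation (your two-term grouping is the paper's three-term grouping with the empirical embedding recombined), Cauchy--Schwarz with the tensor-norm factorization, the RKHS-norm kernel ridge regression rate for $\hat\gamma$, the Hilbert--Schmidt-norm rate for the conditional embedding operator transferred uniformly over $(d,x)$ via $\kappa_d\kappa_x$, and a Bennett-type Hilbert-space concentration for the Monte Carlo average, with the incremental case handled by swapping $\kappa_d$ for $\kappa_d'$. The only cosmetic difference is that you bound $\sup_d\|\hat\mu_{m,x}(d)\|=O_p(1)$ via $\|\hat E\|_{op}$, whereas the paper isolates the deterministically bounded population embedding $\int\{\mu_m(d,x)\otimes\phi(x)\}\,\mathrm{d}P(x)$ instead; both yield the stated rate.
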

See Section~14 of \cite{rahul2024supplement} for the proof. By Proposition~\ref{prop:mediation}, the quantities in Definition~\ref{def:mediation} are uniformly consistent with the same rate, which combines optimal rates for standard \cite[Theorem 2]{fischer2017sobolev} and generalized \cite[Theorem 3]{li2022optimal} kernel ridge regression in RKHS norm. Section~14 gives the exact finite sample rate and the explicit specialization of Assumption~\ref{assumption:smooth_op}. The rate is at best $n^{-1/4}$ when $(c,c_1)=2$ and $(b,b_1)\rightarrow \infty$, i.e. when $(\gamma_0,\mu_m)$ are very smooth with finite effective dimensions. The rates reflect the challenge of a $\sup$ norm guarantee, which is much stronger than an $\mathbb{L}^2$ norm guarantee and is useful for policymakers who may be concerned about each treatment value. See~\eqref{eq:optimal} to specialize these rates for Sobolev spaces.

\begin{remark}[Technical contribution]
The technical contribution underlying our theoretical guarantee is an RKHS norm rate for the sequential kernel embedding. 
In particular, Proposition~14.1 in Section~14 of \cite{rahul2024supplement} derives a nonasymptotic bound on $\sup_{d \in\mathcal{D}}\|\hat{\mu}_{m,x}(d)-\mu_{m,x}(d)\|_{\mathcal{H}_{\mathcal{M}}\otimes \mathcal{H}_{\mathcal{X}}}$, as a stepping stone to Theorem~\ref{theorem:consistency_mediation}.  
This intermediate result is not contained in \cite{singh2020kernel} nor in other previous works on kernel methods for causal inference.
\end{remark}

\begin{remark}[Rate improvements]\label{remark:oracle}
    For the time-fixed dose response curve, \cite{kennedy2017nonparametric} prove pointwise rates, assuming smoothness of the dose response as well as finite uniform entropy integrals for the regression estimator and for the density estimator of treatment given covariates. Later works use sample splitting to relax the entropy conditions to rate conditions, e.g. \cite{semenova2017estimation,colangelo2020double}.

    If the regression and density have sufficiently fast rates, then pointwise rate improvements are possible, reflecting the smoothness and lower dimension of the dose response. These are called oracle rates with second order dependence on the regression and density. See e.g. \cite{nie2017quasi,foster2019orthogonal,kennedy2020optimal} for time-fixed heterogeneous treatment effects.

 For our RKHS estimator of the mediated response curve, we prove uniform rates, under smoothness assumptions on the regression function and a conditional expectation operator.
    To achieve $\sup$ norm rate improvements for our RKHS estimator, future work may
place an additional smoothness assumption on the mediated response curve and rate conditions on the regression, conditional expectation operator, and appropriate conditional densities.
\end{remark}
\section{Time-varying response curves}\label{sec:dynamic}

\subsection{Robins' g-formula}

So far we have considered the effect of a single treatment $D$. Next, we consider the effect of a sequence of time-varying treatments $D_{1:T}=d_{1:T}$ on the counterfactual outcome $Y^{(d_{1:T})}$. If the sequence of treatment values $d_{1:T}$ is observed in the data, this problem may be called on-policy planning; if not, it may be called off-policy planning. 

\begin{definition}[Time-varying dose response curves \cite{robins1986new}]
Suppose the treatments $D_{1:T}$ are continuous. 
\begin{enumerate}
    \item The time-varying response $\theta_0^{GF}(d_{1:T})=E\{Y^{(d_{1:T})}\}$ is the counterfactual mean outcome given the interventions $D_{1:T}=d_{1:T}$ for the entire population.
    \item With distribution shift, $ \theta_0^{DS}(d_{1:T},\tilde{P})=E_{\tilde{P}}\{Y^{(d_{1:T})}\}$ is the counterfactual mean outcome given the interventions $D_{1:T}=d_{1:T}$ for an alternative population with the data distribution $\tilde{P}$.
\end{enumerate}
Likewise we define incremental response curves, e.g. $\theta_0^{GF,\nabla}(d_{1:T})=E\{\nabla_{d_T} Y^{(d_{1:T})}\}$. 
\end{definition}

\begin{remark}[Randomized dynamic policies]\label{remark:dynamic_policy_informal}
For clarity, we focus on the deterministic, static counterfactual policy $d_{1:T}$. It is deterministic in that it is nonrandom. It is static in that it does not depend on the observed sequence of covariates $X_{1:T}$. The time-varying treatment effect literature extends to policies that may be randomized and dynamic \cite{robins1986new}. Our approach extends to randomized and dynamic policies with additional notation; see Remark~\ref{remark:dynamic_policy_formal}.
\end{remark}

Whereas much of the semiparametric literature restricts $d_t$ to be discrete, we allow $d_t$ to be continuous and consider a nonparametric approach to time-varying response curves \cite{gill2001causal}. For example, in Section~\ref{sec:experiments}, we estimate the effect of $d_1$ class hours in year one, and $d_2$ class hours in year two, on counterfactual employment, with treatment-confounder feedback. In the spirit of off-policy planning, we consider a distribution shift from $P$ to $\tilde{P}$. 

In seminal work, \cite{robins1986new} states sufficient conditions under which time-varying responses can be measured from the outcome $Y$, treatments $D_{1:T}$, and covariates $X_{1:T}$. We refer to this collection of conditions as sequential selection on observables. We modestly extend the classic identification result by considering incremental responses.

\begin{lemma}[Robins' g-formula]\label{lemma:id_planning}
Under sequential selection on observables and a distribution shift condition, 
\begin{enumerate}
    \item $\theta_0^{GF}(d_{1:T})=\int \gamma_0(d_{1:T},x_{1:T}) \mathrm{d}P(x_1) \prod_{t=2}^{T} \mathrm{d}P\{x_t|d_{1:(t-1)},x_{1:(t-1)}\}$ \cite{robins1986new} and  
    \item 
 $
\theta_0^{DS}(d_{1:T},\tilde{P})=\int \gamma_0(d_{1:T},x_{1:T}) \mathrm{d}\tilde{P}(x_1)$ $\prod_{t=2}^{T} \mathrm{d}\tilde{P}\{x_t|d_{1:(t-1)},x_{1:(t-1)}\}
$ \cite{pearl2014external}. 
    \item For incremental response curves, we replace $\gamma_0(d_{1:T},x_{1:T})$ with $\nabla_{d_T} \gamma_0(d_{1:T},x_{1:T})$.
\end{enumerate} 
\end{lemma}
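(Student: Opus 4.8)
The plan is to establish the two identities separately, handling the non-incremental g-formula (the classical part, due to \cite{robins1986new}) by the standard sequential substitution argument, and then obtaining the incremental version and the distribution-shift version as consequences of that same derivation. Throughout I would use the three ingredients that make up sequential selection on observables: consistency (the observed outcome and covariates coincide with their counterfactual versions along the realized treatment path), sequential conditional exchangeability, and positivity.

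First, for $\theta_0^{GF}(d_{1:T})$, I would prove the g-formula by induction on the horizon, peeling off one period at a time (the telescoping g-computation identity). Writing the counterfactual mean through the law of iterated expectations conditional on $X_1$, I would invoke the time-$1$ exchangeability statement $Y^{(d_{1:T})}\indep D_1\mid X_1$ to insert the conditioning event $D_1=d_1$ (legitimate because positivity guarantees this event has positive density), and then use consistency to identify the conditional law of $X_2$ given $(D_1=d_1,X_1)$ with the observable transition kernel $P\{x_2\mid d_1,x_1\}$. Repeating this at each stage $t=2,\dots,T$ — iterated expectation over $X_t$, the exchangeability relation $Y^{(d_{1:T})}\indep D_t\mid D_{1:(t-1)}=d_{1:(t-1)},X_{1:t}$, insertion of $D_t=d_t$, then consistency — collapses the nested counterfactual indices. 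After $T$ steps the innermost object is the observable regression $\gamma_0(d_{1:T},x_{1:T})=E(Y\mid D_{1:T}=d_{1:T},X_{1:T}=x_{1:T})$, integrated against $\mathrm{d}P(x_1)\prod_{t=2}^T\mathrm{d}P\{x_t\mid d_{1:(t-1)},x_{1:(t-1)}\}$, which is the claimed expression. This is the sequential analogue of the argument behind Lemma~\ref{lemma:id_mediation}.

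The distribution-shift identity follows from the same telescoping once the distribution shift condition is invoked. That condition is the transportability assumption of \cite{pearl2014external}: the outcome mechanism is invariant across populations, so the same $\gamma_0$ governs $\tilde P$, while the covariate marginal and transition laws are those of $\tilde P$. Carrying the induction through with $\tilde P$ replacing $P$ at each covariate-integration step, and leaving $\gamma_0$ unchanged by invariance, yields the stated $\theta_0^{DS}$ formula. The incremental (novel) extension is then obtained by differentiating under the integral sign. The key observation is that in $\prod_{t=2}^T\mathrm{d}P\{x_t\mid d_{1:(t-1)},x_{1:(t-1)}\}$ every transition kernel conditions only on treatments strictly preceding its own time index, the latest being $d_{T-1}$; hence none of the covariate laws depend on $d_T$, and the sole dependence of $\theta_0^{GF}(d_{1:T})$ on $d_T$ is through $\gamma_0$. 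Under a mild regularity condition making $\gamma_0$ differentiable in $d_T$ with a dominating integrable derivative — which the kernel smoothness condition of Assumption~\ref{assumption:RKHS} (existence and continuity of $\nabla_d\nabla_{d'}k_{\mathcal{D}}$, giving $\sup_d\|\nabla_d\phi(d)\|_{\mathcal{H}}\leq\kappa_d'$) later supplies — dominated convergence justifies exchanging $\nabla_{d_T}$ with the integral, giving the $\nabla_{d_T}\gamma_0$ form, and identically for the distribution-shift case.

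I expect the main obstacle to be the bookkeeping in the inductive step of the base g-formula: one must verify at each stage that the exchangeability statement is applied to the correctly history-conditioned event, that positivity makes each inserted conditioning event well-defined, and that consistency collapses the nested counterfactual $Y^{(d_{1:T})}$ exactly after each treatment coordinate is pinned down. By contrast, the extension to incremental curves is comparatively routine; its only subtlety is producing the integrable dominating bound needed to move the derivative inside, which Assumption~\ref{assumption:RKHS}'s second-derivative condition delivers.
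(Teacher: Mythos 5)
Your proposal is correct, but it reaches the incremental claim by a genuinely different route than the paper. The paper (Supplement~\ref{sec:id}) does not re-derive the classical g-formula at all --- it cites \cite{robins1986new,gill2001causal} for that --- and proves only the incremental extension, working \emph{backwards} from the identifying integral in nonseparable structural notation $Y^{(d_1,d_2)}=Y(d_1,d_2,\eta)$. Its key move is to show that $P(\eta\,|\,d_1,d_2,x_1,x_2)=P(\eta\,|\,d_1,x_1,x_2)$ by exchangeability, so that $\nabla_{d_2}\gamma_0=\int\nabla_{d_2}Y(d_1,d_2,\eta)\,\mathrm{d}P(\eta\,|\,d_1,x_1,x_2)$ with the derivative never touching any measure; the measures are then collapsed via $\int P(\eta\,|\,d_1,x_1,x_2)\,\mathrm{d}P(x_2\,|\,d_1,x_1)\,\mathrm{d}P(x_1)=\mathrm{d}P(\eta)$, landing directly on $\int\nabla_{d_2}Y(d_1,d_2,\eta)\,\mathrm{d}P(\eta)=E\{\nabla_{d_2}Y^{(d_1,d_2)}\}$, which \emph{is} the definition of $\theta_0^{GF,\nabla}$. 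You instead prove the level formula forwards by telescoping and then differentiate the whole identity in $d_T$, using the (correct) observation that no transition kernel depends on $d_T$. This buys a cleaner separation of the classical and novel parts, but it costs you one extra interchange that the paper's restructuring is specifically designed to avoid: $\theta_0^{GF,\nabla}$ is defined as $E\{\nabla_{d_T}Y^{(d_{1:T})}\}$, the mean of the unit-level derivative, not as $\nabla_{d_T}E\{Y^{(d_{1:T})}\}$, so your route additionally needs $E\{\nabla_{d_T}Y^{(d_{1:T})}\}=\nabla_{d_T}E\{Y^{(d_{1:T})}\}$, which you leave implicit. It holds under the same dominated-differentiability regularity you already invoke, so this is not a fatal gap, but it should be stated; the paper's point in saying it ``restructures the argument so that the derivative does not lead to additional factors'' is precisely that its version needs only the one interchange inside the regression, justified by the $d_T$-invariance of $P(\eta\,|\,\cdot)$. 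Your treatment of the distribution-shift case (same telescoping under $\tilde P$, with $\gamma_0$ invariant by Assumption~\ref{assumption:covariate}) matches the paper, which simply notes the argument is identical.
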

See Section~12 of \cite{rahul2024supplement} for the identifying assumptions and proof of our extension. 
We consider a fully nonparametric g-formula with possibly continuous treatments that allows for distribution shift. Lemma~\ref{lemma:id_planning} handles auxiliary Markov restrictions as special cases, e.g. if covariates follow a Markov process, then $\theta_0^{GF}$ simplifies by setting $P\{x_t|d_{1:(t-1)},x_{1:(t-1)}\}=P(x_t|d_{t-1},x_{t-1})$. 

\begin{remark}[Robins' g-formula is unbounded over $\mathbb{L}^2$ when the treatments are continuous]\label{remark:unbounded2}
Define the functional $F:\mathbb{L}^2\rightarrow \mathbb{R}$, $\gamma\mapsto \int \gamma(d_{1:T},x_{1:T}) \mathrm{d}P(x_1) \prod_{t=2}^{T} \mathrm{d}P\{x_t|d_{1:(t-1)},x_{1:(t-1)}\}$. When the treatments are continuous, $F$ is generally unbounded in the sense of Remark~\ref{remark:unbounded} \cite{van1991differentiable,newey1994asymptotic,van2018cv}.
\end{remark}

\begin{remark}[Randomized dynamic policies]\label{remark:dynamic_policy_formal}
    To accommodate randomized dynamic policies, the product in Lemma~\ref{lemma:id_planning} will include factors for the conditional distributions of treatments. Specifically, the integral will replace $d_t$ with $g_t:\{d_{1:(t-1)}, x_{1:t}\}\mapsto G\{d_t|d_{1:(t-1)}, x_{1:t}\}$ where $G$ is the distribution induced by the randomized dynamic policy $(g_t)$.
\end{remark}

\subsection{Sequential kernel embedding}

Similar to Lemma~\ref{lemma:id_mediation}, Lemma~\ref{lemma:id_planning} identifies each time-varying response as a sequential integral of the form $\int \gamma_0(d_{1:T},x_{1:T})\mathrm{d}Q$ for the distribution $Q=P(x_1) \prod_{t=2}^{T} P\{x_t|d_{1:(t-1)},x_{1:(t-1)}\}$ or $Q=\tilde{P}(x_1) \prod_{t=2}^{T} \tilde{P}\{x_t|d_{1:(t-1)},x_{1:(t-1)}\}$. As before, components of the sequential integral are coupled and therefore challenging to estimate, since $x_1$ appears in $\gamma_0(d_{1:T},x_{1:T})$, $P\{x_t|d_{1:(t-1)},x_{1:(t-1)}\}$, and $P(x_1)$. As in Section~\ref{sec:mediation}, we construct an appropriate RKHS to decouple these components, then to encode $Q$ by a sequential kernel embedding. With these techniques, we again reduce sequential causal inference into the combination of kernel ridge regressions. For clarity, we present the algorithm with $T=2$, and we define $\omega_0(d_1,d_2;x_1)=\int \gamma_0(d_1,d_2,x_1,x_2)\mathrm{d}P(x_2|d_1,x_1)$ so that $\theta_0^{GF}(d_1,d_2)=\int \omega_0(d_1,d_2;x_1)\mathrm{d}P(x_1)$. We consider $T>2$ in Section~11 of \cite{rahul2024supplement}, which also showcases the role of Markov assumptions.

To construct the RKHS for $\gamma_0$, we define RKHSs for each treatment $D_t$ and each covariate $X_t$. Using identical notation as Section~\ref{sec:mediation}, we assume the regression $\gamma_0$ is an element of the RKHS $\mathcal{H}$ with the kernel $k(d_1, d_2, x_1, x_2; d_1', d_2', x_1', x_2') = k_{\mathcal{D}}(d_1, d_1') k_{\mathcal{D}}(d_2, d_2') k_{\mathcal{X}}(x_1, x_1') k_{\mathcal{X}}(x_2, x_2')$, i.e. $\gamma_0\in\mathcal{H}=\mathcal{H}_{\mathcal{D}}\otimes\mathcal{H}_{\mathcal{D}}\otimes  \mathcal{H}_{\mathcal{X}}  \otimes \mathcal{H}_{\mathcal{X}}$. As such, $\gamma_0(d_1,d_2,x_1,x_2)=\langle \gamma_0, \phi(d_1)\otimes\phi(d_2) \otimes \phi(x_1)\otimes \phi(x_2)\rangle_{\mathcal{H}} $ and $\|\phi(d_1)\otimes \phi(d_2)\otimes \phi(x_1)\otimes \phi(x_2)\|_{\mathcal{H}}=\|\phi(d_1)\|_{\mathcal{H}_{\mathcal{D}}}  \|\phi(d_2)\|_{\mathcal{H}_{\mathcal{D}}}  \|\phi(x_1)\|_{\mathcal{H}_{\mathcal{X}}}  \|\phi(x_2)\|_{\mathcal{H}_{\mathcal{X}}}$. Under regularity conditions on this RKHS construction, we prove an analogous decoupling result.
\begin{theorem}[Decoupling via sequential kernel embeddings]\label{theorem:representation_planning}
Suppose the conditions of Lemma~\ref{lemma:id_planning} hold. Further suppose Assumption~\ref{assumption:RKHS} holds and $\gamma_0\in\mathcal{H}$. 
Then
\begin{enumerate}
    \item $\omega_0(d_1,d_2;x_1)=\langle \gamma_0,\phi(d_1)\otimes\phi(d_2) \otimes \phi(x_1)\otimes \mu_{x_2}(d_1,x_1) \rangle_{\mathcal{H}}$;
    \item $\theta_0^{GF}(d_1,d_2)=\langle \gamma_0,  \phi(d_1)\otimes\phi(d_2) \otimes 
    \mu_{x_1,x_2}(d_1)\rangle_{\mathcal{H}}$ where $\mu_{x_2}(d_1,x_1)=\int \phi(x_2)\mathrm{d}P(x_2|d_1,x_1)$ and $\mu_{x_1,x_2}(d_1)=\int \{\phi(x_1)\otimes \mu_{x_2}(d_1,x_1)\}\mathrm{d} P(x_1)$; 
    \item $\theta_0^{DS}(d_1,d_2;\tilde{P})=\langle \gamma_0,  \phi(d_1)\otimes\phi(d_2) \otimes \nu_{x_1,x_2}(d_1) \rangle_{\mathcal{H}} $ where $\nu_{x_2}(d_1,x_1)=\int \phi(x_2)\mathrm{d}\tilde{P}(x_2|d_1,x_1)$ and $\nu_{x_1,x_2}(d_1)=\int \{\phi(x_1)\otimes \nu_{x_2}(d_1,x_1)\}\mathrm{d} \tilde{P}(x_1)$. 
\end{enumerate}
For incremental responses, we replace $\phi(d_2)$ with $\nabla_{d_2} \phi(d_2)$. 
\end{theorem}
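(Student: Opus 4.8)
The plan is to mirror the proof sketch of Theorem~\ref{theorem:representation_mediation}, peeling off the sequential integral in Lemma~\ref{lemma:id_planning} one conditional distribution at a time and converting each integration into a conditional mean embedding inside the tensor-product inner product. For part (i), I would start from the identified form $\omega_0(d_1,d_2;x_1)=\int \gamma_0(d_1,d_2,x_1,x_2)\,\mathrm{d}P(x_2|d_1,x_1)$ and substitute the reproducing property $\gamma_0(d_1,d_2,x_1,x_2)=\langle \gamma_0,\phi(d_1)\otimes\phi(d_2)\otimes\phi(x_1)\otimes\phi(x_2)\rangle_{\mathcal{H}}$. Holding $(d_1,d_2,x_1)$ fixed, only $\phi(x_2)$ varies under the integral, so I would pull the integral through the inner product to obtain $\langle \gamma_0,\phi(d_1)\otimes\phi(d_2)\otimes\phi(x_1)\otimes\int\phi(x_2)\,\mathrm{d}P(x_2|d_1,x_1)\rangle_{\mathcal{H}}$, recognizing the innermost Bochner integral as exactly $\mu_{x_2}(d_1,x_1)$.

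For part (ii), I would iterate: integrate $\omega_0(d_1,d_2;x_1)$ against $P(x_1)$, again commute the integral with the inner product, and recognize the resulting outer Bochner integral $\int\{\phi(x_1)\otimes\mu_{x_2}(d_1,x_1)\}\,\mathrm{d}P(x_1)$ as $\mu_{x_1,x_2}(d_1)$. Part (iii) is identical with $\tilde{P}$ replacing $P$ at each stage, since nothing in the argument uses which measure generates the covariates; the embeddings are simply renamed $\nu_{x_2}(d_1,x_1)$ and $\nu_{x_1,x_2}(d_1)$. For the incremental versions, I would differentiate under the inner product, replacing $\phi(d_2)$ by $\nabla_{d_2}\phi(d_2)$, with the dominated-convergence justification supplied by the kernel differentiability and boundedness clause of Assumption~\ref{assumption:RKHS}.

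The hard part will be rigorously justifying the two interchanges of integration and inner product, i.e. verifying Bochner integrability at each stage and invoking the fact that bounded linear functionals commute with Bochner integrals. At the inner stage this is straightforward: $\|\phi(d_1)\otimes\phi(d_2)\otimes\phi(x_1)\otimes\phi(x_2)\|_{\mathcal{H}}\le\kappa_d^2\kappa_x^2$ is uniformly bounded and, by Assumption~\ref{assumption:RKHS}(i)--(ii), measurable, so the integrand is Bochner integrable and the functional $\langle\gamma_0,\cdot\rangle$ passes through. The genuine subtlety is the nested outer integral: I must first establish that $x_1\mapsto\phi(x_1)\otimes\mu_{x_2}(d_1,x_1)$ is itself measurable and norm-bounded before the second interchange is licensed. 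The bound $\|\mu_{x_2}(d_1,x_1)\|_{\mathcal{H}_{\mathcal{X}}}\le\kappa_x$ follows from the triangle (Jensen) inequality applied to the inner Bochner integral, and measurability is inherited from the measurability of the inner embedding in its conditioning arguments.

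Once these integrability facts are in place, the whole calculation is the same decoupling identity as in Theorem~\ref{theorem:representation_mediation}, extended by one additional sequential layer. Boundedness of the underlying functional over $\mathcal{H}$, as in the principle of Remark~\ref{remark:bounded}, is what guarantees that each sequential embedding exists as a bona fide element of $\mathcal{H}$ even though the treatments are continuous, so the final inner products are well defined.
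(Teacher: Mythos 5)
Your proposal is correct and follows essentially the same route as the paper's proof: substitute the reproducing property for $\gamma_0$, interchange each integral with the inner product via Bochner integrability (guaranteed by the bounded, measurable feature maps of Assumption~\ref{assumption:RKHS}), and peel off the two layers sequentially to recognize $\mu_{x_2}(d_1,x_1)$ and then $\mu_{x_1,x_2}(d_1)$, with the $\tilde{P}$ and incremental cases handled identically. Your additional care in checking measurability and the bound $\|\mu_{x_2}(d_1,x_1)\|_{\mathcal{H}_{\mathcal{X}}}\leq\kappa_x$ for the outer interchange is a slightly more explicit version of what the paper states in one line, and is entirely consistent with it.
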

See Section~13 of \cite{rahul2024supplement} for the proof. In $\theta_0^{GF}$, the conditional kernel embedding of $P(x_2|d_1,x_1)$ is $\mu_{x_2}(d_1,x_1)=\int \phi(x_2)\mathrm{d}P(x_2|d_1,x_1)$, and it satisfies $\langle f,\mu_{x_2}(d_1,x_1) \rangle_{\mathcal{H}_{\mathcal{X}}}= \int f(x_2)\mathrm{d}P(x_2|d_1,x_1)$. 

\begin{remark}[Key innovation]
    Here, $\mu_{x_1,x_2}(d_1)$ is a sequential kernel embedding that encodes the counterfactual distribution of the covariates $(X_1,X_2)$ when the initial, counterfactual treatment value is $D_1=d_1$. It satisfies $\langle f,\mu_{x_1,x_2}(d_1) \rangle_{\mathcal{H}_{\mathcal{X}}\otimes \mathcal{H}_{\mathcal{X}}}=\int f(x_1,x_2) \mathrm{d}P(x_2|d_1,x_1) \mathrm{d}P(x_1)$, implementing Robins' g-formula. 
    It is our key innovation, and it accounts for treatment-confounder feedback in this setting.
\end{remark}

\begin{remark}[Robins' g-formula is bounded over $\mathcal{H}$ when the treatments are continuous]\label{remark:bounded2}
Define the functional $F:\mathcal{H} \rightarrow \mathbb{R}$, $\gamma\mapsto \int \gamma(d_{1:T},x_{1:T}) \mathrm{d}P(x_1) \prod_{t=2}^{T} \mathrm{d}P\{x_t|d_{1:(t-1)},x_{1:(t-1)}\}$. Similar to Remark~\ref{remark:bounded}, when the treatments are continuous and when Assumption~\ref{assumption:RKHS} holds, we show that $F$ is bounded by appealing to the definition of the RKHS. Boundedness of $F$ guarantees the existence of the sequential kernel embeddings in
 Theorem~\ref{theorem:representation_planning}.
\end{remark}

As before, this decoupling is a blueprint for estimation. For example, our estimator will be $\hat{\theta}^{GF}(d_1,d_2)=\langle \hat{\gamma}, \phi(d_1)\otimes\phi(d_2)\otimes \hat{\mu}_{x_1,x_2}(d_1) \rangle_{\mathcal{H}}$ where $
\hat{\mu}_{x_1,x_2}(d_1)=n^{-1}\sum_{i=1}^n\{\phi(X_{1i})\otimes \hat{\mu}_{x_2}(d_1,X_{1i})
$. Here, $\hat{\gamma}$ is a kernel ridge regression,  $\hat{\mu}_{x_2}(d_1,x_1)$ is a generalized kernel ridge regression, and we combine them by averaging and taking the product.

\begin{algo}[Nonparametric estimation of time-varying response curves]\label{algorithm:planning}
Denote the kernel matrices
$
K_{D_1D_1}$, $K_{D_2D_2}$, $K_{X_1X_1}$, $K_{X_2X_2}\in\mathbb{R}^{n\times n}
$
calculated from the population $P$.  Denote the kernel matrices
$
K_{\tilde{D}_1\tilde{D}_1}$, $K_{\tilde{D}_2\tilde{D}_2}$, $K_{\tilde{X}_1\tilde{X}_1}$, $K_{\tilde{X}_2\tilde{X}_2}\in\mathbb{R}^{\tilde{n}\times \tilde{n}}
$
calculated from the population $\tilde{P}$. Let $\odot$ be the elementwise product. Time-varying dose response curves have closed form solutions:
\begin{enumerate}
\item $\hat{\omega}(d_1,d_2;x_1)=Y^{\top}(K_{D_1D_1}\odot K_{D_2D_2}\odot K_{X_1X_1}\odot K_{X_2X_2}+n\lambda I)^{-1}$ \\
        $[K_{D_1d_1}\odot K_{D_2d_2}\odot  K_{X_1x_{1}}\odot \{K_{X_2 X_2}(K_{D_1D_1}\odot K_{X_1X_1}+n\lambda_4 I)^{-1}(K_{D_1d_1}\odot K_{X_1x_{1}})\}]$;
    \item $\hat{\theta}^{GF}(d_1,d_2)=n^{-1}\sum_{i=1}^n\hat{\omega}(d_1,d_2;X_{1i})$;
    \item $\hat{\theta}^{DS}(d_1,d_2;\tilde{P})=\tilde{n}^{-1}\sum_{i=1}^{\tilde{n}} Y^{\top}(K_{D_1D_1}\odot K_{D_2D_2}\odot K_{X_1X_1}\odot K_{X_2X_2}+n\lambda I)^{-1}$ \\
        $[K_{D_1d_1}\odot K_{D_2d_2}\odot  K_{X_1\tilde{x}_{1i}}\odot \{K_{X_2 \tilde{X}_2}(K_{\tilde{D}_1\tilde{D}_1}\odot K_{\tilde{X}_1\tilde{X}_1}+\tilde{n}\lambda_5 I)^{-1}(K_{\tilde{D}_1d_1}\odot K_{\tilde{X}_1\tilde{x}_{1i}})\}]  $,
\end{enumerate}
where $(\lambda,\lambda_4,\lambda_5)$ are ridge regression penalty parameters. For incremental responses, we replace $K_{D_2d_2}$ with $\nabla_{d_2}K_{D_2d_2}$ where $(\nabla_{d_2}  K_{D_2{d_2}})_i=\nabla_{d_2} k(D_{2i},d_2)$. 
\end{algo}

We derive these algorithms in Section~13 of \cite{rahul2024supplement}. We give theoretical values for $(\lambda,\lambda_4,\lambda_5)$ that optimally balance bias and variance in Theorem~\ref{theorem:consistency_planning} below. Section~16 gives practical tuning procedures with closed form solutions to empirically balance bias and variance, one of which is asymptotically optimal. Note that $\hat{\theta}^{DS}$ requires observations of the treatments and covariates from the alternative population $\tilde{P}$.

\subsection{Uniform consistency with finite sample rates}

Towards a guarantee of uniform consistency, we place regularity conditions on the RKHSs and original spaces via Assumptions~\ref{assumption:RKHS} and~\ref{assumption:original}. We also assume the regression $\gamma_0$ is smooth and quantify the effective dimension of $\mathcal{H}$ via Assumption~\ref{assumption:smooth_gamma}. For the conditional kernel embeddings $\mu_{x_2}(d_1,x_1)$ and $\nu_{x_2}(d_1,x_1)$, we place further smoothness and effective dimension conditions via Assumption~\ref{assumption:smooth_op}. With these assumptions, we arrive at our next main result.
\begin{theorem}[Uniform consistency of time-varying response curves]\label{theorem:consistency_planning}
Suppose the conditions of Theorem~\ref{theorem:representation_planning} hold, as well as Assumptions~\ref{assumption:original} and~\ref{assumption:smooth_gamma}. Set $(\lambda,\lambda_4,\lambda_5)=\{n^{-1/(c+1/b)},n^{-1/(c_4+1/b_4)},$ $\tilde{n}^{-1/(c_5+1/b_5)}\}$, which is rate optimal regularization.
\begin{enumerate}
    \item If in addition Assumption~\ref{assumption:smooth_op} holds with $\mathcal{A}_4=\mathcal{X}$ and $\mathcal{B}_4=\mathcal{D}\times \mathcal{X}$, then
$
\|\hat{\theta}^{GF}-\theta_0^{GF}\|_{\infty}=O_p\left[n^{-(c-1)/\{2(c+1/b)\}}+n^{-(c_4-1)/\{2(c_4+1/b_4)\}}\right]
$.
    \item If in addition Assumption~\ref{assumption:smooth_op} holds with $\mathcal{A}_5=\mathcal{X}$ and $\mathcal{B}_5=\mathcal{D}\times \mathcal{X}$, then
    $
\|\hat{\theta}^{DS}-\theta_0^{DS}\|_{\infty}=O_p\left[n^{-(c-1)/\{2(c+1/b)\}}+\tilde{n}^{-(c_5-1)/\{2(c_5+1/b_5)\}}\right]
$.
\end{enumerate}
Likewise for the incremental responses. For example,  $
    \|\hat{\theta}^{GF,\nabla}-\theta_0^{GF,\nabla}\|_{\infty}=O_p\big[n^{-(c-1)/\{2(c+1/b)\}} $ $ + $ $ n^{-(c_4-1)/\{2(c_4+1/b_4)\}}\big]
    $. 
\end{theorem}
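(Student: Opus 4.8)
The plan is to leverage the decoupling identity of Theorem~\ref{theorem:representation_planning}, which represents both target and estimator as RKHS inner products. Writing $v(d_1,d_2)=\phi(d_1)\otimes\phi(d_2)\otimes\mu_{x_1,x_2}(d_1)$ and its plug-in analogue $\hat{v}(d_1,d_2)=\phi(d_1)\otimes\phi(d_2)\otimes\hat{\mu}_{x_1,x_2}(d_1)$, with $\hat{\mu}_{x_1,x_2}(d_1)=n^{-1}\sum_{i=1}^n\phi(X_{1i})\otimes\hat{\mu}_{x_2}(d_1,X_{1i})$, the whole error is a difference of two inner products. First I would add and subtract $\langle\gamma_0,\hat{v}\rangle_{\mathcal{H}}$ to obtain the clean split
\begin{equation*}
\hat{\theta}^{GF}-\theta_0^{GF}=\langle \hat{\gamma}-\gamma_0,\hat{v}\rangle_{\mathcal{H}}+\langle \gamma_0,\hat{v}-v\rangle_{\mathcal{H}},
\end{equation*}
which is chosen precisely so that no leading term contains a product of two estimation errors: the regression-error term carries the full $\hat{v}$ (only required to be $O_p(1)$), while the design-error term carries the deterministic $\gamma_0$.

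For the regression-error term, Cauchy--Schwarz gives $\sup_{d_1,d_2}|\langle \hat{\gamma}-\gamma_0,\hat{v}\rangle_{\mathcal{H}}|\le\|\hat{\gamma}-\gamma_0\|_{\mathcal{H}}\,\sup_{d_1,d_2}\|\hat{v}(d_1,d_2)\|_{\mathcal{H}}$; here $\|\hat{\gamma}-\gamma_0\|_{\mathcal{H}}=O_p(n^{-(c-1)/\{2(c+1/b)\}})$ by the standard kernel ridge regression bound under Assumption~\ref{assumption:smooth_gamma} (quoting \cite[Theorem 2]{fischer2017sobolev}), and $\sup_{d_1,d_2}\|\hat{v}\|_{\mathcal{H}}=O_p(1)$ because the feature maps are bounded by $\kappa_d,\kappa_x$ (Assumption~\ref{assumption:RKHS}) and $\|\hat{\mu}_{x_2}\|$ is controlled with high probability. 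For the design-error term, since $\hat{v}-v=\phi(d_1)\otimes\phi(d_2)\otimes\{\hat{\mu}_{x_1,x_2}(d_1)-\mu_{x_1,x_2}(d_1)\}$, Cauchy--Schwarz and boundedness yield the bound $\|\gamma_0\|_{\mathcal{H}}\,\kappa_d^2\,\sup_{d_1}\|\hat{\mu}_{x_1,x_2}(d_1)-\mu_{x_1,x_2}(d_1)\|_{\mathcal{H}_{\mathcal{X}}\otimes\mathcal{H}_{\mathcal{X}}}$.

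The remaining work is to control this sequential-embedding error uniformly in $d_1$, which I would split as
\begin{equation*}
\hat{\mu}_{x_1,x_2}(d_1)-\mu_{x_1,x_2}(d_1)=\frac{1}{n}\sum_{i=1}^n\phi(X_{1i})\otimes\{\hat{\mu}_{x_2}(d_1,X_{1i})-\mu_{x_2}(d_1,X_{1i})\}+\Delta(d_1),
\end{equation*}
where $\Delta(d_1)=n^{-1}\sum_{i=1}^n\phi(X_{1i})\otimes\mu_{x_2}(d_1,X_{1i})-\int\phi(x_1)\otimes\mu_{x_2}(d_1,x_1)\,\mathrm{d}P(x_1)$ is centered. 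By the triangle inequality and $\kappa_x$, the first piece is at most $\kappa_x\sup_{d_1,x_1}\|\hat{\mu}_{x_2}(d_1,x_1)-\mu_{x_2}(d_1,x_1)\|_{\mathcal{H}_{\mathcal{X}}}$, which is $O_p(n^{-(c_4-1)/\{2(c_4+1/b_4)\}})$ by the generalized (conditional mean embedding) kernel ridge regression bound under Assumption~\ref{assumption:smooth_op} with $\mathcal{A}_4=\mathcal{X}$, $\mathcal{B}_4=\mathcal{D}\times\mathcal{X}$ (quoting \cite[Theorem 3]{li2022optimal}). The piece $\Delta(d_1)$ is $O_p(n^{-1/2})$ uniformly in $d_1$ by a Hilbert-space Bernstein inequality; the uniformity is the subtle point, and I would secure it by noting that $\mu_{x_2}(d_1,x_1)=[E_4^*\{\phi(d_1)\otimes\phi(x_1)\}]$ factors the $d_1$-dependence linearly through the bounded feature map $\phi(d_1)$ and the bounded operator $E_4$, so that $\sup_{d_1}\|\Delta(d_1)\|$ collapses to the Hilbert--Schmidt norm of a single centered empirical sum. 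Since $n^{-1/2}$ is dominated by the two slower rates (each no faster than $n^{-1/4}$), collecting the three bounds gives the stated rate for part (i).

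For parts (ii) and (iii) and the incremental curves the same decomposition applies with minor changes: the distribution-shift estimator uses $\tilde{n}$ draws from $\tilde{P}$ for the sequential embedding $\hat{\nu}_{x_1,x_2}$, so its design-error rate becomes $\tilde{n}^{-(c_5-1)/\{2(c_5+1/b_5)\}}$; the incremental versions replace $\phi(d_2)$ by $\nabla_{d_2}\phi(d_2)$, whose uniform bound $\kappa_d'$ is guaranteed by the gradient conditions in Assumption~\ref{assumption:RKHS}, leaving every other estimate unchanged. The main obstacle throughout is upgrading the RKHS-norm convergence of the generalized regression $\hat{\mu}_{x_2}$ to a $\sup$-norm statement over the continuous treatment $d_1$, together with the matching uniform control of $\Delta(d_1)$; both ultimately hinge on the fact that the $d_1$-dependence enters only through the bounded feature map and the bounded conditional-expectation operator, which converts each supremum over $d_1$ into a single operator-level bound.
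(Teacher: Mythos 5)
Your proposal is correct and follows essentially the same route as the paper's proof: the same splitting of the sequential-embedding error into a conditional-mean-embedding estimation piece (controlled uniformly via the operator-norm bound on $\hat{E}_4-E_4$) plus a centered empirical term that is $O_p(n^{-1/2})$ uniformly in $d_1$ because the $d_1$-dependence factors through the bounded feature map and $E_4^*$, combined with the RKHS-norm kernel ridge regression rate for $\hat{\gamma}$ and Cauchy--Schwarz with bounded feature maps. The only cosmetic difference is that you use a two-term decomposition $\langle\hat{\gamma}-\gamma_0,\hat{v}\rangle+\langle\gamma_0,\hat{v}-v\rangle$ where the paper expands the first term further into $\langle\hat{\gamma}-\gamma_0,\hat{v}-v\rangle+\langle\hat{\gamma}-\gamma_0,v\rangle$; the resulting bounds are identical.
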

See Section~14 of \cite{rahul2024supplement} for the proof, exact finite sample rates, and explicit specializations of Assumption~\ref{assumption:smooth_op}. As before, these rates are at best $n^{-1/4}$ when $(c,c_4,c_5)=2$ and $(b,b_4,b_5)\rightarrow \infty$. See~\eqref{eq:optimal} for the Sobolev special case.

\begin{remark}[Technical contribution]
As before, the technical contribution underlying this theoretical guarantee is an RKHS norm rate for the sequential kernel embedding. 
In particular, Proposition~14.2 in Section~14 of \cite{rahul2024supplement} derives nonasymptotic bounds on $\sup_{d_1 \in\mathcal{D}}\|\hat{\mu}_{x_1,x_2}(d_1)-\mu_{x_1,x_2}(d_1)\|_{\mathcal{H}_{\mathcal{X}}\otimes \mathcal{H}_{\mathcal{X}}}$ and $\sup_{d_1 \in\mathcal{D}}\|\hat{\nu}_{x_1,x_2}(d_1)-\nu_{x_1,x_2}(d_1)\|_{\mathcal{H}_{\mathcal{X}}\otimes \mathcal{H}_{\mathcal{X}}}$, as a stepping stone to Theorem~\ref{theorem:consistency_planning}.  
These intermediate results appear to be new.
\end{remark}

\begin{remark}[No effect]\label{remark:no}
Consider the scenario when there is no effect of the second dose, i.e. $E\{Y^{(d_1,d_2)}\}=E\{Y^{(d_1)}\}$. 
   As argued in Section~12 of \cite{rahul2024supplement}, under this additional restriction, $\gamma_0(d_1,d_2,x_1,x_2)=E(Y|D=d_1,X_1=x_1,X_2=x_2)$. In the proof technique of Section~14 of \cite{rahul2024supplement}, if the kernel ridge regression estimator $\hat{\gamma}$ is consistent for a function $\gamma_0$ that is constant in $d_2$, our rates remain valid.

In particular, our RKHS estimator for the time-varying dose response remains uniformly consistent as long as $\mathcal{H}_{\mathcal{D}}$ contains constant functions. While the RKHS with exponentiated quadratic kernel does not satisfy this property \cite{steinwart2006explicit}, other RKHSs do. Another option is to augment an RKHS that does not contain constant functions with constant functions.

Interestingly, in this scenario Robins' g-formula simplifies to $\theta_0^{GF}(d_1)=\int E(Y|D=d_1,X=x_1)\mathrm{d}P(x_1)$ by the law of iterated expectations. When neither dose has any effect, a similar argument yields $\theta_0^{GF}=E(Y)$. Simplifying the g-formula from a surface to a curve to a scalar suggests that rate improvements may be possible. Remark~\ref{remark:oracle} discusses possible directions for future work.
\end{remark}
\section{Simulations and application}\label{sec:experiments}

\subsection{Simulations}

We evaluate 
our estimators on various nonparametric designs. 
For each nonparametric design and sample size, we implement 100 simulations and calculate mean square error (MSE) with respect to the true causal function. Figure~\ref{fig:me} visualizes results, where a lower MSE is desirable. See Section~17 of \cite{rahul2024supplement} for the data generating processes and implementation details.

\begin{figure}[ht]
\begin{centering}
     \begin{subfigure}[b]{0.49\textwidth}
         \centering
         \includegraphics[width=0.7\textwidth]{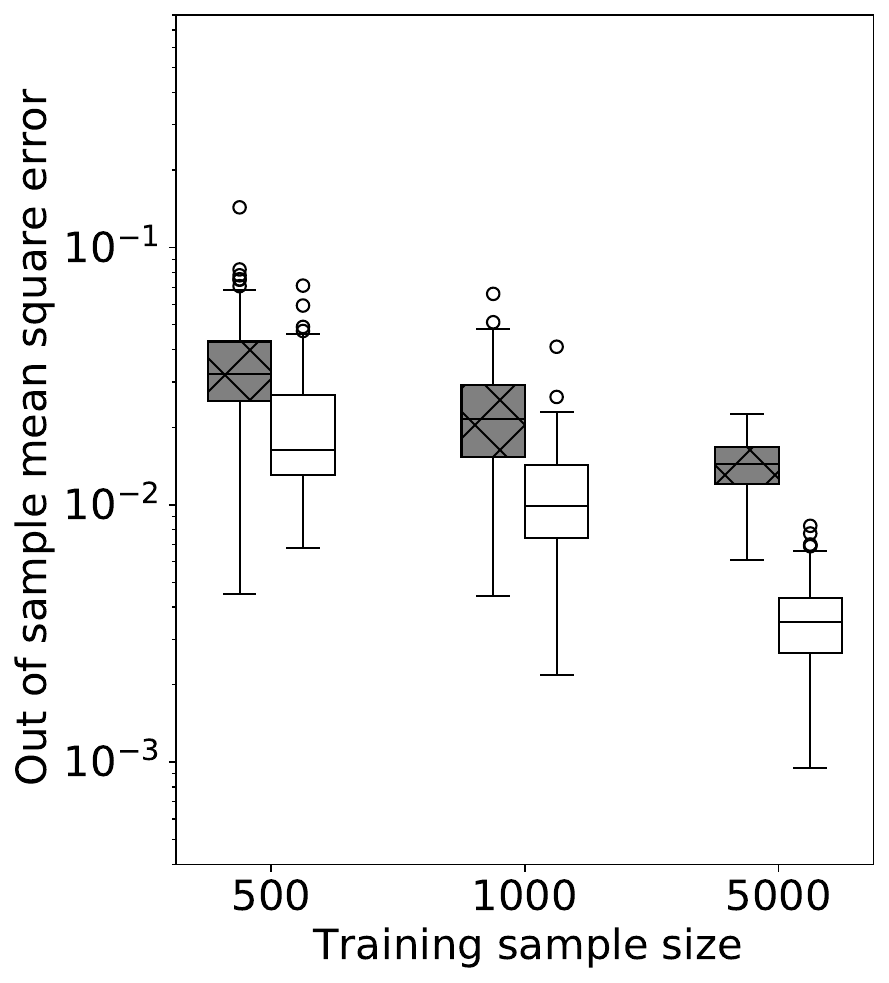}
         \caption{Mediated response.}
     \end{subfigure}
     \hfill
     \begin{subfigure}[b]{0.49\textwidth}
         \centering
         \includegraphics[width=0.7\textwidth]{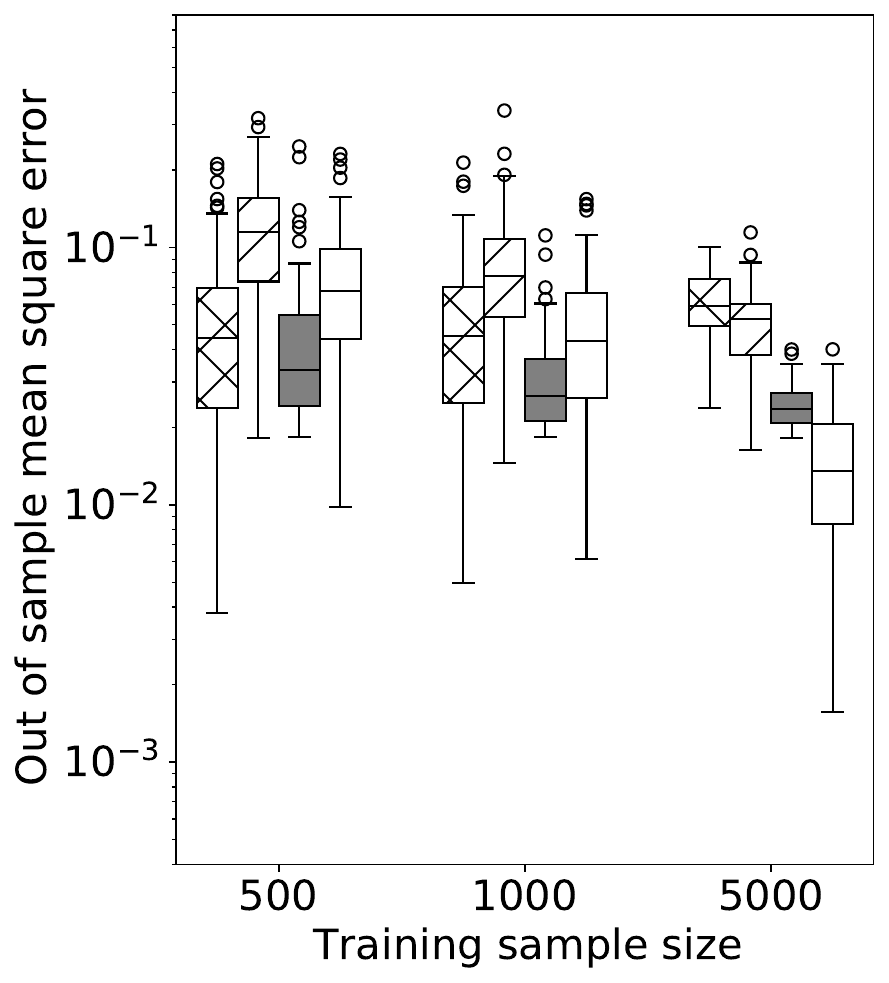}
         \caption{Time-varying dose response.}
     \end{subfigure}
\par
\caption{\label{fig:me}
Nonparametric response simulations. For the mediated response, we implement two estimators:
         \cite{huber2018direct} (\texttt{IPW}, checkered gray) 
         and our own (\texttt{RKHS}, white). For the time-varying dose response, we implement four estimators. 
         From left to right, these are 
         \cite{singh2020kernel} \{\texttt{RKHS(ATE)}\!, checkered white\}, 
         \cite{singh2020kernel} \{\texttt{RKHS(CATE)}\!, lined white\},
         \cite{lewis2020double} (\texttt{SNMM}, gray), 
         and our own \{\texttt{RKHS(GF)}\!, white\}.}
\end{centering}
\end{figure}

The mediated response design \cite{huber2018direct} involves learning the nonlinear causal function $\theta_0^{ME}(d,d')=0.3d'+0.09d+0.15 dd'+0.25  (d')^3$.  A single observation is a tuple $(Y,D,M,X)$ for the outcome, treatment, mediator, and covariates where $Y,D,M,X\in\mathbb{R}$. In addition to our estimator (\texttt{RKHS}, white), we implement the estimator of \cite{huber2018direct} (\texttt{IPW}, checkered gray), which involves Nadaraya--Watson density estimation en route to generalized inverse propensity weighting. By the Wilcoxon rank sum test, \texttt{RKHS} significantly outperforms \texttt{IPW} at all sample sizes, with p values below $10^{-3}$.

Next, we consider a time-varying dose response design, extending a time-fixed design \cite{colangelo2020double}. The nonlinear causal function is $\theta_0^{GF}(d_1,d_2)=0.6d_1+0.5d_1^2+1.2d_2+d_2^2$. A single observation is a tuple $(Y,D_{1:2},X_{1:2})$ for the outcome, treatments, and covariates where $Y,D_t\in\mathbb{R}$ and $X_t\in\mathbb{R}^{100}$. See Section~17 of \cite{rahul2024supplement} for low and moderate dimensional settings. Our machine learning approach for time-varying response curves is uniformly consistent and allows for nonlinearity, dependence over time, and effect modification, which appears to be new. 

To illustrate why treatment-confounder feedback and effect modification matter, we compare $\hat{\theta}^{GF}(d_1,d_2)$ \{\texttt{RKHS(GF)}\!, white\} with estimators that ignore these complexities to various degrees. Using the dose response estimator of \cite{singh2020kernel} \{\texttt{RKHS(ATE)}\!, checkered white\}, we take $D_2$ to be the treatment and misclassify $D_1$ as a covariate. Using the heterogeneous response estimator of \cite{singh2020kernel} \{\texttt{RKHS(CATE)}\!, lined white\}, we take $D_2$ to be the treatment and misclassify $D_1$ as the subcovariate with meaningful heterogeneity. 
We also implement the estimator of \cite{lewis2020double} (\texttt{SNMM}, gray), which is a machine learning approach with linearity, Markov, and no-effect-modification assumptions that do not hold in this setting. By the the Wilcoxon rank sum test, \texttt{RKHS(GF)} significantly outperforms the alternatives at $n=5000$ with p value below $10^{-3}$. The ability of \texttt{RKHS(GF)} to capture treatment-confounder feedback, and effect modification by time-varying confounders, helps when the sample size is large enough.

\subsection{Application: US Job Corps}

We estimate the mediated and time-varying responses of the Job Corps, the largest job training program for disadvantaged youth in the US. The Job Corps serves about 50,000 participants annually, and it is free for individuals who meet low income requirements. Access to the program was randomized from November 1994 to February 1996; see \cite{schochet2008does} for details. 
Though access to the program was randomized, individuals could decide whether to participate and for how many hours over multiple years. We assume that, conditional on the observed covariates, those decisions were as good as random in the sense formalized in Section~12 of \cite{rahul2024supplement}.

First, we consider employment to be a possible mechanism through which class hours affect arrests, under the identifying assumptions of  \cite{flores2012estimating,huber2018direct}. The covariates $X\in\mathbb{R}^{40}$ are measured at baseline; the treatment $D\in\mathbb{R}$ is total hours spent in academic or vocational classes in the first year after randomization; the mediator $M\in\mathbb{R}$ is the proportion of weeks employed in the second year after randomization; and the outcome $Y\in\mathbb{R}$ is the number of times an individual is arrested by police in the fourth year after randomization. We use the same covariates $X\in\mathbb{R}^{40}$ and sample as \cite{colangelo2020double}, with $n=2,913$ observations. 
In Figure~\ref{fig:JC2}, $\theta_0^{TE}(d,d')$, $\theta_0^{DE}(d,d')$, and $\theta_0^{IE}(d,d')$ are the total, direct, and indirect responses, respectively, of $d'$ class hours relative to $d$ class hours on arrests. 
In particular, $\theta_0^{IE}(d,d')$ estimates the extent to which the response is mediated by the mechanism of employment. 

\begin{figure}[ht]
\begin{centering}
     \begin{subfigure}[b]{0.3\textwidth}
         \centering
         \includegraphics[width=\textwidth]{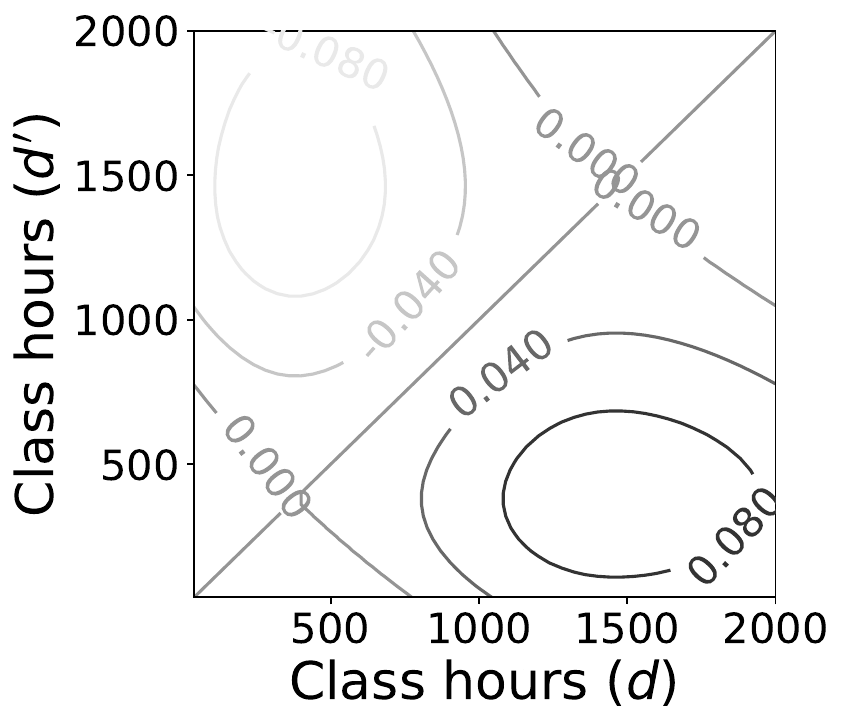}
         \caption{Total response.}
     \end{subfigure}
     \hfill
     \begin{subfigure}[b]{0.3\textwidth}
         \centering
         \includegraphics[width=\textwidth]{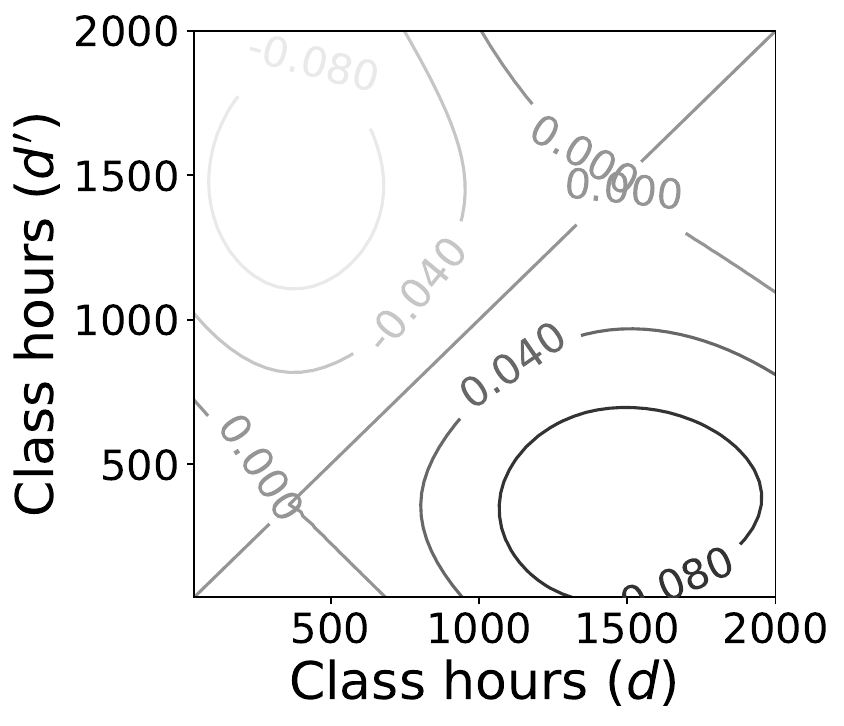}
         \caption{Direct response.}
     \end{subfigure}
     \hfill 
     \begin{subfigure}[b]{0.3\textwidth}
         \centering
         \includegraphics[width=\textwidth]{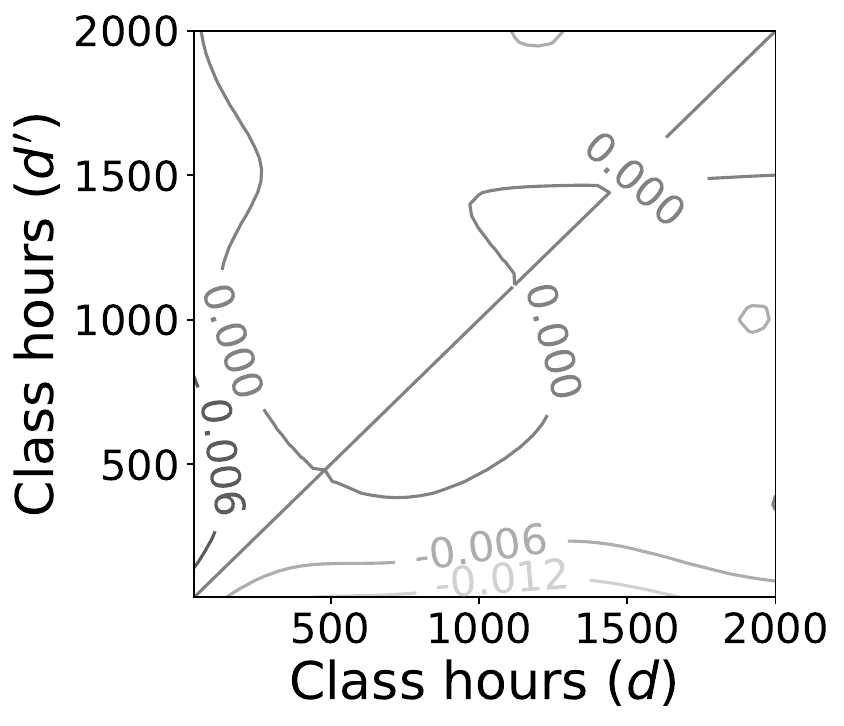}
         \caption{Indirect response.}
     \end{subfigure}
\par
\caption{\label{fig:JC2}
Effect of job training on arrests. We implement our estimators for total, direct, and indirect response curves (\texttt{RKHS}, solid).
%
}
\end{centering}
\end{figure}

At best, the total response of receiving 1,600 class hours (40 weeks) versus 480 class hours (12 weeks) may be a reduction of about 0.1 arrests. The direct response estimate, of class hours on arrests, mirrors the total response estimate. Our indirect response estimate of class hours on arrests, as mediated through employment, is essentially zero. Our results extend the findings of \cite{huber2018direct}, allowing both $(d,d')$ to vary. It appears that the effect of class hours on arrests is direct; there may be benefits of the training program that are not explained by employment alone. These benefits, however, may require many class hours. 

Next, we evaluate the time-varying response of job training on employment. Here, $X_1\in\mathbb{R}^{65}$ are covariates at baseline; $D_1\in\mathbb{R}$ is the total class hours in the first year; $X_2\in\mathbb{R}^{30}$ are covariates observed at the end of the first year; $D_2\in \mathbb{R}$ is the total class hours in the second year; and $Y\in\mathbb{R}$ is the proportion of weeks employed in the fourth year.
The covariates and the sample of $n=3,141$  observations we use are similar to \cite{colangelo2020double}. The time-varying response $\theta^{GF}_0(d_1,d_2)$ is the counterfactual mean employment given $d_1$ class hours in year one and $d_2$ class hours in year two; $\theta^{GF,\nabla}_0(d_1,d_2)$ is the increment of counterfactual mean employment given $d_1$ class hours in year one and incrementally more than $d_2$ class hours in year two. Figure~\ref{fig:JC3} visualizes the time-varying response estimate and its derivative with respect to the second dose.

\begin{figure}[ht]
\begin{centering}
     \begin{subfigure}[b]{0.49\textwidth}
         \centering
         \includegraphics[width=0.9\textwidth]{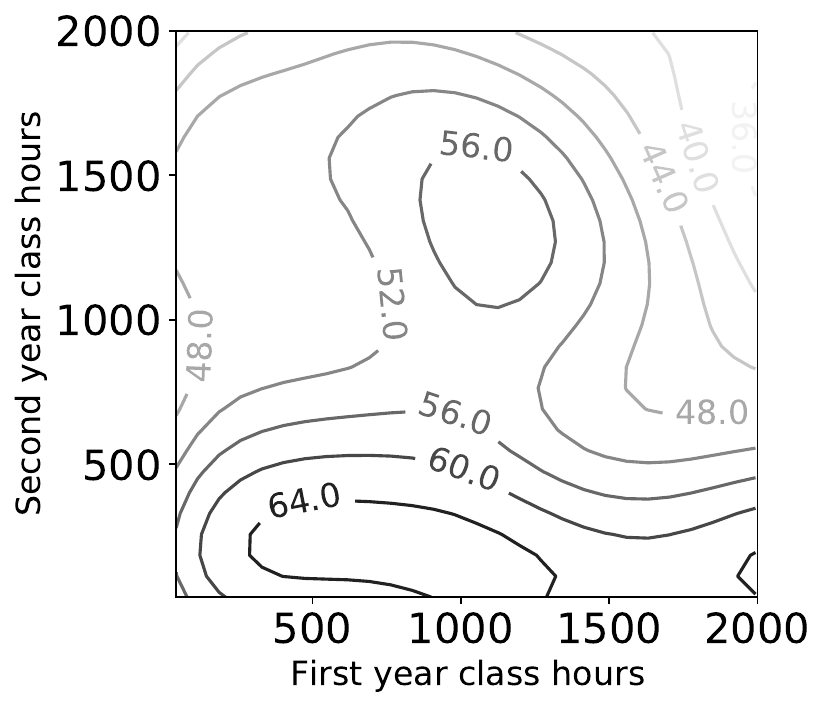}
         \caption{Time-varying dose response.}
     \end{subfigure}
     \hfill
     \begin{subfigure}[b]{0.49\textwidth}
         \centering
         \includegraphics[width=0.9\textwidth]{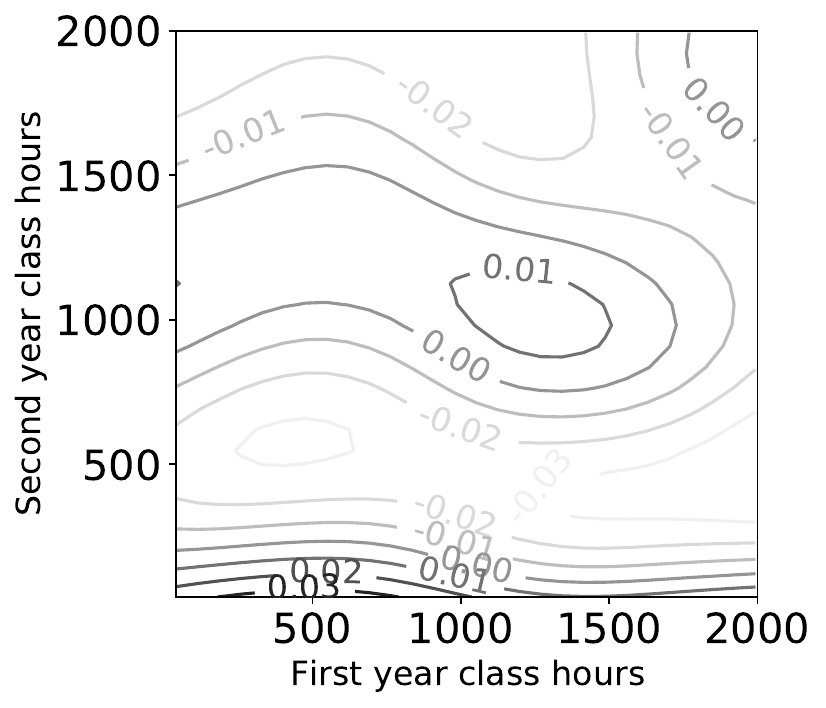}
         \caption{Time-varying incremental response.}
     \end{subfigure}
\par
\caption{
Effect of job training on employment. We implement our estimators for time-varying dose and incremental response curves \{\texttt{RKHS(GF)}\!, solid\}.\label{fig:JC3}
%
}

\end{centering}
\end{figure}

The effect of training on employment appears to be positive when the duration of training is relatively brief. At best, the response to receiving job training appears to be 64\% employment, compared to receiving no class hours at all which gives 56\% employment. The maximum response is achieved by 480--1280 class hours (12--32 weeks) in year one and 0--480 (0--12 weeks) in year two. There is another local maximum of counterfactual employment achieved by 1200 class hours (30 weeks) in both years. Class hours in year one and year two may be complementary at low levels, as visualized by the incremental response. The large plateau in counterfactual employment suggests that a successful yet cost effective policy may be 480 class hours (12 weeks) in the first year and an optional, brief follow up in the second year.  

In summary, under standard identifying assumptions, we find that the US Job Corps may provide two distinct benefits: reducing arrests and increasing employment, under different durations of class hours. Many class hours in the first year may directly decrease arrests in the fourth year, while
few class hours in the first and second years may significantly increase employment in the fourth year.
Section~18 of \cite{rahul2024supplement} provides implementation details and verifies that our results are robust to the sample choice.

\section{Discussion}\label{section:conclusion}

Previous methods using kernels for continuous treatments \cite{singh2020kernel} do not handle treatment-confounder feedback, and therefore cannot analyze the later rounds of Job Corps surveys. 
We propose the sequential kernel embedding to do so.
Whereas survey data for mediation analysis were previously available \cite{huber2018direct}, we clean additional survey data for time-varying analysis from raw files \cite[Section III.A]{schochet2008does}. 
By providing clean data, we enable empirical analysis of the later rounds of Job Corps surveys, where class hours in different years may be viewed as a sequence of time-varying continuous treatments. 
%
%
Future work may apply the sequential embedding in dynamic programming for optimal policy estimation.

\spacingset{1}

\bibliographystyle{apalike}
\DeclareRobustCommand{\VAN}[2]{#2}  
\DeclareRobustCommand{\VAN}[2]{#1}  


\newpage 

\spacingset{1.65}



\appendix
\section{Glossary of results}\label{sec:overview}
\vspace{-25pt}
\begin{table}[H]
\center
\setlength{\tabcolsep}{4pt}
\renewcommand{\arraystretch}{0}
\begin{tabular}{L{6.4cm}cccc}
\hline
Parameter & Symbol & Guarantee &  Best Rate & Section \\
\hline
\\[-3pt]  
 Total response & TE & uniform consistency & $n^{-\frac{1}{4}}$ & \ref{sec:mediation}\\
    Direct response & DE & uniform consistency & $n^{-\frac{1}{4}}$ & \ref{sec:mediation}\\
        Indirect response & IE & uniform consistency & $n^{-\frac{1}{4}}$ & \ref{sec:mediation}\\
     Total incremental response & TE,$\nabla$ & uniform consistency & $n^{-\frac{1}{4}}$ & \ref{sec:mediation}\\
    Direct incremental response & DE,$\nabla$ & uniform consistency & $n^{-\frac{1}{4}}$ & \ref{sec:mediation}\\
        Indirect incremental response & IE,$\nabla$ & uniform consistency & $n^{-\frac{1}{4}}$ & \ref{sec:mediation}\\
     Total effect & TE & Gaussian approx. & $n^{-\frac{1}{2}}$ & \ref{sec:semi}\\
    Direct effect & DE & Gaussian approx. & $n^{-\frac{1}{2}}$ & \ref{sec:semi}\\
        Indirect effect & IE & Gaussian approx. & $n^{-\frac{1}{2}}$ & \ref{sec:semi}\\
    Total counterfactual dist. & DTE & convergence in dist. & $n^{-\frac{1}{4}}$ & \ref{sec:dist}\\
    Direct counterfactual dist. & DDE & convergence in dist. & $n^{-\frac{1}{4}}$ & \ref{sec:dist}\\
        Indirect counterfactual dist. & DIE & convergence in dist. & $n^{-\frac{1}{4}}$ & \ref{sec:dist}\\
\hline
\end{tabular}
\caption{New estimators and guarantees for mediation analysis
}

\begin{tabular}{L{6.4cm}cccc}
\hline
Parameter & Symbol & Guarantee &  Best Rate & Section \\
\hline
\\[-3pt]  
  Time-varying dose response & GF & uniform consistency & $n^{-\frac{1}{4}}$ & \ref{sec:dynamic}\\
    Time-varying dose response with dist. shift & DS & uniform consistency & $n^{-\frac{1}{4}}$ & \ref{sec:dynamic}\\
     Time-varying incremental response & GF,$\nabla$ & uniform consistency & $n^{-\frac{1}{4}}$ & \ref{sec:dynamic}\\
    Time-varying incremental response with dist. shift & DS,$\nabla$ & uniform consistency & $n^{-\frac{1}{4}}$ & \ref{sec:dynamic}\\
     Time-varying treatment effect & GF & Gaussian approx. & $n^{-\frac{1}{2}}$ & \ref{sec:semi}\\
    Time-varying treatment effect with dist. shift & DS & Gaussian approx. & $n^{-\frac{1}{2}}$ & \ref{sec:semi}\\
    Time-varying counterfactual dist. & DGF & convergence in dist. & $n^{-\frac{1}{4}}$ & \ref{sec:dist}\\
    Time-varying counterfactual dist. with dist. shift & DDS & convergence in dist. & $n^{-\frac{1}{4}}$ & \ref{sec:dist}\\
\hline
\end{tabular}
\caption{New estimators and guarantees for time-varying analysis}
\end{table}

\section{Semiparametric inference}\label{sec:semi}

In this supplement, we turn to semiparametric estimation and inference. We use sequential embeddings to propose simpler nuisance parameter estimators for known inferential procedures. In particular, we avoid multiple levels of sample splitting and iterative fitting.

\subsection{Mediated treatment effects}

We quote the doubly robust moment function for mediation analysis parametrized to avoid density estimation. 
\begin{lemma}[Doubly robust moment \cite{tchetgen2012semiparametric}]\label{theorem:dr_mediation} Suppose the treatment $D$ is binary, so that $d,d'\in\{0,1\}$. If Assumption~\ref{assumption:selection_mediation} holds then $\theta^{ME}_0(d,d')=$ $E\{$ $ \psi^{ME}(d,d';\gamma_0,\omega_0,\pi_0,\rho_0;W)\}$ where $W=(Y,D,M,X)$ and 
\begin{align*}
&\psi^{ME}(d,d';\gamma,\omega,\pi,\rho;W)
=\omega(d,d';X) 
 + \frac{\1_{D=d'}}{\rho(d';M,X)}\frac{\rho(d;M,X)}{\pi(d;X)}\{Y-\gamma(d',M,X)\}\\
  &\quad + \frac{\1_{D=d}}{\pi(d;X)}\left\{\gamma(d',M,X)-\omega(d,d';X)\right\}.
\end{align*}
The function $\gamma_0$ is a regression and $
 \omega_0(d,d';X)=\int \gamma_0(d',m,X) \mathrm{d}P(m|d,X)$ is its partial mean, while $\pi_0(d;X)=\text{\normalfont pr}(d|X)$ and $\rho_0(d;M,X)=\text{\normalfont pr}(d|M,X)
$ are propensity scores.
\end{lemma}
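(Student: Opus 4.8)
The plan is to decompose the moment function $\psi^{ME}$ into the plug-in term $\omega_0(d,d';X)$ plus two residual corrections, and then to show that the plug-in term already integrates to the target while each correction is mean zero when evaluated at the true nuisances. By Lemma~\ref{lemma:id_mediation}, $\theta_0^{ME}(d,d')=\int \gamma_0(d',m,x)\,\mathrm{d}P(m|d,x)\mathrm{d}P(x)=E\{\omega_0(d,d';X)\}$, so it remains only to verify $E\{B\}=E\{C\}=0$, where $B$ and $C$ denote the second and third summands of $\psi^{ME}$. The causal content therefore lives entirely in the plug-in identification; the two corrections are pure distributional identities about the definitions of the nuisances.

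For the third term $C=\frac{\1_{D=d}}{\pi_0(d;X)}\{\gamma_0(d',M,X)-\omega_0(d,d';X)\}$, I would condition on $X$ and apply the tower property. Since $\pi_0(d;X)=\mathrm{pr}(D=d|X)$ and $E\{\1_{D=d}\,g(M,X)\mid X\}=\pi_0(d;X)\,E\{g(M,X)\mid D=d,X\}$, the inner expectation reduces to $E\{\gamma_0(d',M,X)\mid D=d,X\}-\omega_0(d,d';X)$. By the definition $\omega_0(d,d';X)=\int \gamma_0(d',m,X)\,\mathrm{d}P(m|d,X)=E\{\gamma_0(d',M,X)\mid D=d,X\}$, this difference vanishes, so $E\{C\mid X\}=0$ and hence $E\{C\}=0$.

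For the second term $B=\frac{\1_{D=d'}}{\rho_0(d';M,X)}\frac{\rho_0(d;M,X)}{\pi_0(d;X)}\{Y-\gamma_0(d',M,X)\}$, I would instead condition on $(M,X)$. The weight $\frac{\rho_0(d;M,X)}{\rho_0(d';M,X)\pi_0(d;X)}$ and $\gamma_0(d',M,X)$ are all measurable with respect to $(M,X)$, so the only randomness remaining inside the conditional expectation is carried by $\1_{D=d'}$ and $Y$, and the weight factors out, leaving $E\{\1_{D=d'}(Y-\gamma_0(d',M,X))\mid M,X\}$. Writing this as $\rho_0(d';M,X)\,E\{Y-\gamma_0(d',M,X)\mid D=d',M,X\}$ and using $\gamma_0(d',M,X)=E\{Y\mid D=d',M,X\}$, the residual has conditional mean zero, so $E\{B\mid M,X\}=0$ and hence $E\{B\}=0$. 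Combining the three pieces yields $E\{\psi^{ME}\}=\theta_0^{ME}(d,d')$.

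The main obstacle is the treatment of $B$: the cross-world reweighting $\rho_0(d;M,X)/\rho_0(d';M,X)$ must be handled by conditioning at the level of $(M,X)$ rather than $(D,M,X)$, so that the weight is constant inside the conditional expectation while the indicator $\1_{D=d'}$ and the residual $Y-\gamma_0(d',M,X)$ supply the cancellation. This step also implicitly relies on the overlap (positivity) portion of Assumption~\ref{assumption:selection_mediation}, namely that $\pi_0(d;X)$ and $\rho_0(d';M,X)$ are bounded away from zero on the relevant support, so that the inverse-weighting denominators are well defined and the integrals are finite. The remaining verification is routine bookkeeping with iterated expectations.
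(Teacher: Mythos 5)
Your proof is correct and follows essentially the same route as the paper: the paper quotes this lemma from Tchetgen Tchetgen and Shpitser, but the identical computations appear in its verification of Neyman orthogonality for Theorem~\ref{theorem:inference_mediation} in Supplement~\ref{sec:inference_proof}, where the correction terms are shown to be mean zero by conditioning on $(M,X)$ for the inverse-propensity term and on $X$ for the partial-mean term, exactly as you do. Combining those two tower-property identities with $E\{\omega_0(d,d';X)\}=\theta_0^{ME}(d,d')$ from Lemma~\ref{lemma:id_mediation} is precisely the intended argument.
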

The equation $\theta^{ME}_0(d,d')=E\{\psi^{ME}(d,d';\gamma_0,\omega_0,\pi_0,\rho_0;W)\}$ is doubly robust with respect to the nonparametric quantities $(\gamma_0,\omega_0,\pi_0,\rho_0)$ in the sense that it continues to hold if $(\gamma_0,\omega_0)$ or $(\pi_0,\rho_0)$ are misspecified.
As a consequence, consistency continues to hold if the former pair or latter pair are not actually elements of an RKHS. This parametrization, widely used in the targeted machine learning literature, suits our approach since the technique of sequential kernel embeddings allows us to estimate $\omega_0$ without explicitly estimating the conditional density $f(m|d,x)$.

The semiparametric procedure uses our new nonparametric estimator from Algorithm~\ref{algorithm:mediation} as a nuisance estimator within a meta procedure that combines the doubly robust moment function from Lemma~\ref{theorem:dr_mediation} and sample splitting. The meta algorithm of using the doubly robust moment function \cite{robins1995semiparametric} and sample splitting \cite{klaassen1987consistent}, without specifying the nonparametric nuisance estimators, is a variant of targeted \cite{zheng2011cross} and debiased \cite{chernozhukov2018original} machine learning for mediation analysis.

\begin{algo}[Semiparametric inference \cite{zheng2011cross,chernozhukov2018original}]\label{algorithm:mediation_dml}
Partition the sample into the folds $(I_{\ell})$ for $(\ell=1,...,L)$.
\begin{enumerate}
 \item For each fold $\ell$, estimate $(\hat{\gamma}_{\ell},\hat{\omega}_{\ell},\hat{\pi}_{\ell},\hat{\rho}_{\ell})$ from observations in $I_{\ell}^c$, i.e. the complement of $I_{\ell}$.
 \item Estimate $\hat{\theta}^{ME}(d,d')=n^{-1}\sum_{\ell=1}^L\sum_{i\in I_{\ell}} \psi^{ME}(d,d';\hat{\gamma}_{\ell},\hat{\omega}_{\ell},\hat{\pi}_{\ell},\hat{\rho}_{\ell};W_i)$.
 \item Estimate its $(1-a) 100$\% confidence interval as $\hat{\theta}^{ME}(d,d')\pm \varsigma_a\hat{\sigma}(d,d')n^{-1/2}$ where $\hat{\sigma}^2(d,d')
  =n^{-1}\sum_{\ell=1}^L\sum_{i\in I_{\ell}} \{\psi^{ME}(d,d';\hat{\gamma}_{\ell},\hat{\omega}_{\ell},\hat{\pi}_{\ell},\hat{\rho}_{\ell};W_i)-\hat{\theta}^{ME}(d,d')\}^2$ and $\varsigma_a$ is the $1-a/2$ quantile of the standard Gaussian. 
\end{enumerate}
\end{algo}

Unlike \cite[Algorithm 2]{farbmacher2020causal}, our procedure does not require multiple levels of sample splitting. Our algorithmic innovation is to propose and analyze a new nuisance estimator $\hat{\omega}$ in Algorithm~\ref{algorithm:mediation}, based on the idea of sequential kernel embedding, that does not require sample splitting of its own. Note that $\hat{\omega}$ is computed according to Algorithm~\ref{algorithm:mediation} with regularization parameters $(\lambda,\lambda_1)$. Here, $(\hat{\gamma},\hat{\pi},\hat{\rho})$ are standard kernel ridge regressions with regularization parameters $(\lambda,\lambda_2,\lambda_3)$. See Supplement~\ref{sec:alg_deriv} for explicit computations. 

We give theoretical values for regularization parameters that optimally balance bias and variance of the nuisances in Theorem~\ref{theorem:inference_mediation} below. Supplement~\ref{sec:tuning} gives practical tuning procedures with closed form solutions to empirically balance bias and variance, one of which is asymptotically optimal for the nuisances. Optimal regularization for the nuisances may not be optimal for the mediated treatment effect, yet it delivers the rate  $n^{-1/2}$ for the mediated treatment effect in our analysis below.

Proposition~\ref{prop:mediation} and the delta method imply confidence intervals for the other quantities in Definition~\ref{def:mediation} when the treatment is binary. 


When $D$ is binary, $\theta_0^{ME}(d,d')$ is a matrix in $\mathbb{R}^{2\times 2}$. Theorem~\ref{theorem:consistency_mediation} simplifies to a guarantee on the maximum element of the matrix of differences $|\hat{\theta}^{ME}(d,d')-\theta_0^{ME}(d,d')|$. For this case, we improve the rate from $n^{-1/4}$ to $n^{-1/2}$ by using Algorithm~\ref{algorithm:mediation_dml} and imposing additional assumptions. In particular, we place additional assumptions on the propensity scores. 

\begin{assumption}[Bounded propensities]\label{assumption:bounded_propensity}
Propensity scores $\pi_0(d;X)$ and $\rho_0(d;M,X)$, and estimators $\hat{\pi}(d;X)$ and $\hat{\rho}(d;M,X)$, are bounded away from zero and one almost surely.
\end{assumption}
The first part of Assumption~\ref{assumption:bounded_propensity} is a mild strengthening of the overlap condition in Assumption~\ref{assumption:selection_mediation}. The second part can be imposed by trimming the estimators $(\hat{\pi},\hat{\rho})$, which only improves prediction quality since $(\pi_0,\rho_0)$ are bounded. Assumption~\ref{assumption:bounded_propensity} is not specific to the RKHS setting. We now place smoothness and spectral decay assumptions that are. We assume the propensity scores are smooth and belong to RKHSs with quantifiable effective dimension in the sense of~\eqref{eq:prior}.

\begin{assumption}[Smoothness and effective dimension of propensities]\label{assumption:smooth_propensity}
Assume 
\begin{enumerate}
 \item $\pi_0\in \mathcal{H}_{\mathcal{X}}^{c_2}$ with $c_2\in(1,2]$, and $\eta_j(\mathcal{H}_{\mathcal{X}})\leq C j^{-b_2}$ with $b_2\geq 1$;
 \item $\rho_0\in (\mathcal{H}_{\mathcal{M}}\otimes \mathcal{H}_{\mathcal{X}})^{c_3}$  with $c_3\in(1,2]$, and $\eta_j(\mathcal{H}_{\mathcal{M}}\otimes \mathcal{H}_{\mathcal{X}})\leq C j^{-b_3}$  with $b_3\geq 1$.
\end{enumerate}
\end{assumption}



Finally, we prove our main result for semiparametric mediated effects.

\begin{theorem}[Semiparametric efficiency of mediated effects]\label{theorem:inference_mediation}
Suppose the conditions of Theorem~\ref{theorem:consistency_mediation} hold, as well as Assumptions~\ref{assumption:bounded_propensity} and~\ref{assumption:smooth_propensity}. Set $(\lambda,\lambda_1,\lambda_2,\lambda_3)=\big\{n^{-1/(c+1/b)},n^{-1/(c_1+1/b_1)},$ $n^{-1/(c_2+1/b_2)},n^{-1/(c_3+1/b_3)}\big\}$, which is rate optimal regularization for the nuisances. Then for any $c,c_1,c_2\in(1,2]$ and $b,b_1,b_2\geq 1$ satisfying
\begin{align*}
 &\min \left(\frac{c-1}{c+1/b},\frac{c_1-1}{c_1+1/b_1}\right)+\frac{c_2}{c_2+1/b_2}>1,
\end{align*}
we have that
$
\hat{\theta}^{ME}(d,d')=\theta^{ME}_0(d,d')+o_p(1)$,
$n^{1/2} \sigma(d,d')^{-1}\{\hat{\theta}^{ME}(d,d')-\theta^{ME}_0(d,d')\}\leadsto $ $\mathcal{N} $ $(0,1)$, and 
$
\text{\normalfont pr} \left[\theta^{ME}_0(d,d') \in  \left\{\hat{\theta}^{ME}(d,d')\pm \varsigma_a\hat{\sigma}(d,d') n^{-1/2} \right\}\right]\rightarrow 1-a.
$
Moreover, the estimator is semiparametrically efficient. 
\end{theorem}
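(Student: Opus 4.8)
The plan is to invoke the general double/debiased machine learning master theorem for cross-fitted estimators built on a Neyman-orthogonal moment (e.g. \cite{chernozhukov2018original}), and to supply the three ingredients it requires: Neyman orthogonality of $\psi^{ME}$, nuisance convergence at sufficient rates, and a second-order remainder that is $o_p(n^{-1/2})$. The cross-fitting in Algorithm~\ref{algorithm:mediation_dml} plays its usual role of eliminating the empirical-process (entropy) term, so that the generic RKHS nuisance estimators need not satisfy any Donsker condition; I only need rates.

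First I would establish Neyman orthogonality. The double robustness of Lemma~\ref{theorem:dr_mediation}---that $E\{\psi^{ME}(d,d';\gamma,\omega,\pi,\rho;W)\}=\theta_0^{ME}(d,d')$ whenever either $(\gamma,\omega)$ or $(\pi,\rho)$ equals the truth---is precisely what forces the Gateaux derivative of the moment with respect to each nuisance to vanish at $(\gamma_0,\omega_0,\pi_0,\rho_0)$, so the moment is first-order insensitive to nuisance perturbations. The same structure identifies $\psi^{ME}$, centered, as the efficient influence function for $\theta_0^{ME}$, which I will reuse for the efficiency claim.

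Next I would collect the nuisance rates. The outcome nuisances $\hat\gamma,\hat\omega$ converge in RKHS norm at the rates governed by $(c,b)$ and $(c_1,b_1)$ from Theorem~\ref{theorem:consistency_mediation}, while $\hat\pi,\hat\rho$ are ordinary kernel ridge regressions whose $\mathbb{L}^2$ rates are governed by $(c_2,b_2)$ and $(c_3,b_3)$ via~\eqref{eq:optimal} under Assumption~\ref{assumption:smooth_propensity}. Assumption~\ref{assumption:bounded_propensity} lets me linearize the denominators and pass from errors in $\pi,\rho$ to errors in the ratios $\1_{D=d'}\rho(d;M,X)/\{\rho(d';M,X)\pi(d;X)\}$ and $\1_{D=d}/\pi(d;X)$ appearing in $\psi^{ME}$, and the bounded outcome (Assumption~\ref{assumption:original}) controls the variance and gives nondegeneracy $\sigma^2(d,d')>0$. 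The crux is the product remainder: I would Taylor-expand $E\{\psi^{ME}(d,d';\hat\gamma,\hat\omega,\hat\pi,\hat\rho;W)\mid I_\ell^c\}-\theta_0^{ME}$, observe that orthogonality annihilates the first-order term, and bound the surviving bilinear terms, each of the form (error in an outcome nuisance)$\times$(error in a propensity nuisance). Bounding by Cauchy--Schwarz and inserting the RKHS rate $n^{-\frac12\frac{c-1}{c+1/b}}$ for $\hat\gamma$ (respectively $\frac{c_1-1}{c_1+1/b_1}$ for $\hat\omega$) against the faster $\mathbb{L}^2$ rate $n^{-\frac12\frac{c_2}{c_2+1/b_2}}$ for $\hat\pi$, the dominant product carries exponent $\tfrac12\{\min(\tfrac{c-1}{c+1/b},\tfrac{c_1-1}{c_1+1/b_1})+\tfrac{c_2}{c_2+1/b_2}\}$; requiring this to exceed $1/2$ is exactly the stated condition, so the remainder is $o_p(n^{-1/2})$. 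The remaining products, including those involving $\rho$, are dominated under the minimal smoothness $(c_3,b_3)$ and do not enter the binding rate, which is the source of the advertised double spectral robustness.

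With orthogonality, rates, and an $o_p(n^{-1/2})$ remainder in hand, the master theorem delivers the asymptotic linear expansion $n^{1/2}\{\hat\theta^{ME}-\theta_0^{ME}\}=n^{-1/2}\sum_i\{\psi^{ME}(\cdot;W_i)-\theta_0^{ME}\}+o_p(1)\leadsto\mathcal N(0,\sigma^2)$, together with consistency $\hat\theta^{ME}=\theta_0^{ME}+o_p(1)$ and $\hat\sigma^2\to\sigma^2$, hence the stated coverage; since $\psi^{ME}$ is the efficient influence function, $\sigma^2$ equals the semiparametric efficiency bound and the estimator attains it. The main obstacle I anticipate is the bookkeeping of the second-order remainder for mediation, which is messier than for the average treatment effect because of the two nested propensities and the ratio $\rho(d;M,X)/\rho(d';M,X)$: I must verify that every bilinear error term genuinely pairs an outcome nuisance with a propensity nuisance (so that orthogonality really kills the first order), and that the asymmetric pairing---RKHS rate for $\gamma,\omega$ but the faster $\mathbb{L}^2$ rate for $\pi$---is what the Cauchy--Schwarz bound actually produces, since this is what makes the displayed condition sharp.
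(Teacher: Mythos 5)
Your proposal is correct and follows essentially the same route as the paper: the paper likewise verifies Neyman orthogonality of the multiply robust moment, feeds the uniform rate for $\hat{\omega}$ (which combines the RKHS-norm rate of $\hat{\gamma}$ and the embedding rate, giving the $\min$ of the two exponents) and the mean-square rates for $\hat{\gamma},\hat{\pi},\hat{\rho}$ into an abstract cross-fitting Gaussian-approximation lemma, and identifies the binding bilinear term as the $\hat{\omega}\times\hat{\pi}$ product, which yields exactly the displayed rate condition while the products involving $\hat{\rho}$ are automatically satisfied. The only cosmetic difference is that the paper invokes a tailored corollary of \cite{singh2021finite} rather than the generic DML master theorem, but the verification steps are the same.
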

See Supplement~\ref{sec:inference_proof} for the proof and a stronger finite sample guarantee. By Proposition~\ref{prop:mediation}, the remaining quantities in Definition~\ref{def:mediation} are $n^{-1/2}$ consistent. In the favorable case that $(c,c_1,c_2)=2$ i.e. $(\gamma_0,\mu_m,\pi_0)$ are very smooth, the conditions simplify: either $(b,b_1)>1$ or $b_2>1$. In other words, we require either the kernels of  $(\gamma_0,\mu_m)$ or the kernel of $\pi_0$ to have a low effective dimension. In this sense, we characterize a double spectral robustness that is similar in spirit to the double sparsity robustness that is familiar for lasso estimation of treatment effects. More generally, there is a trade-off among the smoothness and effective dimension assumptions across nonparametric objects. The spectral decay quantified by $(b,b_1,b_2)$ must be sufficiently fast relative to the smoothness of the various nonparametric objects $(c,c_1,c_2)$. See~\eqref{eq:optimal} to specialize these conditions for Sobolev spaces.

\begin{remark}[Nonparametric inference]\label{remark:nonparametric_inference}
    Nonparametric inference for estimators based on sequential kernel embedding, with a continuous treatment, is an important direction for future work. We outline a few possibilities and their challenges.

    One possible way forward maintains the mediated response curve as the estimand of interest. Combining \cite[Corollary 6.2]{singh2021finite} with our nonasymptotic intermediate results in Supplement~\ref{sec:inference_proof} may translate the semiparametric Gaussian approximation of mediated treatment effects into nonparametric inference on mediated response curves. The exact details are an interesting challenge.
    
    Another possibility is to change the causal estimand. An appropriately defined average derivative of potential outcomes, integrating over dosages, would reduce the inferential problem to a modified mediated treatment effect. Identification becomes an interesting challenge, for which the techniques of Supplement~\ref{sec:id} may be useful.
\end{remark}

\subsection{Time-varying treatment effects}

For time-varying treatment effects, we quote the doubly robust moment function parametrized to avoid density estimation, which is widely used in the targeted machine learning literature.
\begin{lemma}[Doubly robust moment \cite{scharfstein1999adjusting}]\label{theorem:dr_planning}
Suppose the treatment $D_t$ is binary, so that $d_1,d_2\in\{0,1\}$. If Assumption~\ref{assumption:selection_planning} holds then $\theta^{GF}_0(d_1,d_2)=E\{\psi^{GF}$ $(d_1,d_2;\gamma_0,\omega_0,\pi_0,\rho_0;W)\}$ where $W=(Y,D_1,D_2,X_1,X_2)$ and
\begin{align*}
 &\psi^{GF}(d_1,d_2;\gamma,\omega,\pi,\rho;W)
 =
 \omega(d_1,d_2;X_1) 
+ \frac{\1_{D_1=d_1}\1_{D_2=d_2}}{\pi(d_1;X_1)\rho(d_2;d_1,X_1,X_2)}\{Y-\gamma(d_1,d_2,X_1,X_2)\}\\
 &\quad + \frac{\1_{D_1=d_1}}{\pi(d_1;X_1)}\left\{\gamma(d_1,d_2,X_1,X_2)-\omega(d_1,d_2;X_1)\right\}.
\end{align*}
The function $\gamma_0$ is a regression and $\omega_0(d_1,d_2;X_1)=\int \gamma_0(d_1,d_2,X_1,x_2) \mathrm{d}P(x_2|d_1,X_1)$ is its partial mean, while $ \pi_0(d_1;X_1)=\text{\normalfont pr}(d_1|X_1)$ and $\rho_0(d_2;d_1,X_1,X_2)=\text{\normalfont pr}(d_2|d_1,X_1,X_2)$ are propensity scores.
\end{lemma}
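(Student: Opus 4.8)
The plan is to verify the moment equality at the true nuisances by combining Robins' g-formula identification (Lemma~\ref{lemma:id_planning}) with a nested conditioning argument that annihilates the two augmentation terms, and then to reuse the same decomposition to establish the double robustness implicit in the name. First I would record the plug-in identity from Lemma~\ref{lemma:id_planning}: under sequential selection on observables, $\theta_0^{GF}(d_1,d_2)=\int \omega_0(d_1,d_2;x_1)\,\mathrm{d}P(x_1)=E\{\omega_0(d_1,d_2;X_1)\}$. It then suffices to show that, evaluated at $(\gamma_0,\omega_0,\pi_0,\rho_0)$, the two correction terms of $\psi^{GF}$ each have mean zero, so that $E\{\psi^{GF}\}=E\{\omega_0(d_1,d_2;X_1)\}=\theta_0^{GF}(d_1,d_2)$.

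Second, I would dispatch the two correction terms by the tower property, conditioning in reverse time order along the timeline $X_1\to D_1\to X_2\to D_2\to Y$. For the doubly indexed term $\frac{\1_{D_1=d_1}\1_{D_2=d_2}}{\pi_0\rho_0}\{Y-\gamma_0\}$, I condition on $(D_1,D_2,X_1,X_2)$ and use that $\gamma_0(d_1,d_2,X_1,X_2)=E(Y\mid D_1=d_1,D_2=d_2,X_1,X_2)$, so the inner expectation of $Y-\gamma_0$ is zero on the event selected by the indicators. For the singly indexed term $\frac{\1_{D_1=d_1}}{\pi_0}\{\gamma_0-\omega_0\}$, I condition on $(D_1,X_1)$ and use that on the event $D_1=d_1$ the covariate $X_2$ is distributed as $P(x_2\mid d_1,X_1)$, whence $E\{\gamma_0(d_1,d_2,X_1,X_2)\mid D_1=d_1,X_1\}=\omega_0(d_1,d_2;X_1)$ by the very definition of the partial mean; the inner expectation of $\gamma_0-\omega_0$ is therefore zero. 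Each term vanishes before the inverse-propensity weights are ever integrated, so only overlap (positivity), rather than any balance of the weights, is used here.

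Third, to justify the label \emph{doubly robust} I would rerun the decomposition under the two misspecification regimes. With the outcome models $(\gamma_0,\omega_0)$ correct but $(\pi,\rho)$ arbitrary, the two conditional-mean-zero identities above hold verbatim, since they never invoked the true propensities, so both corrections still vanish and $E\{\psi^{GF}\}=E\{\omega_0\}$. With the propensities $(\pi_0,\rho_0)$ correct but $(\gamma,\omega)$ arbitrary, I would integrate the first correction one level at a time: taking $E(Y\mid\cdot)=\gamma_0$ turns it into $\frac{\1_{D_1=d_1}\1_{D_2=d_2}}{\pi_0\rho_0}\{\gamma_0-\gamma\}$, and integrating out $D_2$ replaces $\1_{D_2=d_2}$ by $\rho_0$, canceling the denominator to leave $\frac{\1_{D_1=d_1}}{\pi_0}\{\gamma_0-\gamma\}$; adding the second correction $\frac{\1_{D_1=d_1}}{\pi_0}\{\gamma-\omega\}$ telescopes out the arbitrary $\gamma$, giving $\frac{\1_{D_1=d_1}}{\pi_0}\{\gamma_0-\omega\}$. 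Taking expectations and using $E\{\1_{D_1=d_1}\pi_0^{-1}\gamma_0\mid X_1\}=\omega_0$ together with $E\{\1_{D_1=d_1}\pi_0^{-1}\mid X_1\}=1$ collapses $E\{\psi^{GF}\}$ to $E\{\omega\}+E\{\omega_0\}-E\{\omega\}=E\{\omega_0\}=\theta_0^{GF}$, so consistency survives either misspecification.

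I expect the main obstacle to be the bookkeeping of the nested conditioning, so that each inverse-propensity weight is matched to exactly the indicator and conditioning set it cancels while respecting the treatment-confounder feedback encoded by the timeline; in particular, the step identifying $E\{\gamma_0\mid D_1=d_1,X_1\}$ with the partial mean $\omega_0$ is where the dependence of the law of $X_2$ on $D_1$ must be tracked carefully, and the pairing of the product weight $\pi_0\rho_0$ with $\1_{D_1=d_1}\1_{D_2=d_2}$ versus the single weight $\pi_0$ with $\1_{D_1=d_1}$ is where sign or cancellation errors are most likely. The counterfactual-to-observational bridge is already furnished by Lemma~\ref{lemma:id_planning}, so the remaining work is a purely algebraic verification on the observed-data distribution.
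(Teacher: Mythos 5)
Your argument is correct, but note that the paper does not actually prove this lemma: it is quoted from \cite{scharfstein1999adjusting}, and the only related verification the paper carries out itself is the Neyman-orthogonality computation for the \emph{mediation} moment in the proof of Theorem~\ref{theorem:inference_mediation} (the time-varying analogue is delegated to \cite[Proposition 4.1]{singh2021finite}). What you supply is the standard iterated-expectations verification, and it is sound: the identity $\theta_0^{GF}(d_1,d_2)=E\{\omega_0(d_1,d_2;X_1)\}$ comes from Lemma~\ref{lemma:id_planning}; the term $\frac{\1_{D_1=d_1}\1_{D_2=d_2}}{\pi_0\rho_0}\{Y-\gamma_0\}$ dies upon conditioning on $(D_1,D_2,X_1,X_2)$ because $\gamma_0$ is the true regression; and the term $\frac{\1_{D_1=d_1}}{\pi_0}\{\gamma_0-\omega_0\}$ dies upon conditioning on $(D_1,X_1)$ because $\omega_0(d_1,d_2;X_1)=E\{\gamma_0(d_1,d_2,X_1,X_2)\mid D_1=d_1,X_1\}$ is a purely observational identity requiring no further causal assumption. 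Your telescoping computation for the two misspecification regimes is also correct and matches the double-robustness claim the paper asserts (without proof) in the sentence following the lemma; the only caveat worth stating explicitly is that positivity (and, for the misspecified-propensity case, boundedness away from zero of the plugged-in $\pi,\rho$) is needed for the weighted terms to be integrable. In short, your proof fills a gap the paper leaves to a citation, using exactly the style of tower-property bookkeeping the paper itself employs for the mediated case.
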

Similar to Lemma~\ref{theorem:dr_mediation}, 
the equation $\theta^{GF}_0(d_1,d_2)=E\{\psi^{GF}(d_1,d_2;\gamma_0,\omega_0,\pi_0,\rho_0;W)\}$
is doubly robust with respect to the nonparametric quantities $(\gamma_0,\omega_0,\pi_0,\rho_0)$ in the sense that it continues to hold if $(\gamma_0,\omega_0)$ or $(\pi_0,\rho_0)$ are misspecified. 
Once again, 
the technique of sequential kernel embeddings allows us to estimate $\omega_0$ without explicitly estimating the conditional density $f(x_2|d_1,x_1)$. Altogether, the semiparametric procedure uses our new nonparametric estimator from Algorithm~\ref{algorithm:planning} as a nuisance estimator within the framework of targeted \cite{zheng2011cross} and debiased \cite{chernozhukov2018original} machine learning for time-varying treatment effects.

\begin{algo}[Semiparametric inference \cite{zheng2011cross,chernozhukov2018original}]\label{algorithm:planning_dml}
Partition the sample into the folds $(I_{\ell})$ for $(\ell=1,...,L)$.
\begin{enumerate}
 \item For each fold $\ell$, estimate $(\hat{\gamma}_{\ell},\hat{\omega}_{\ell},\hat{\pi}_{\ell},\hat{\rho}_{\ell})$ from observations in $I_{\ell}^c$, i.e. the complement of $I_{\ell}$.
 \item Estimate $\hat{\theta}^{GF}(d_1,d_2)=n^{-1}\sum_{\ell=1}^L\sum_{i\in I_{\ell}} \psi^{GF}(d_1,d_2;\hat{\gamma}_{\ell},\hat{\omega}_{\ell},\hat{\pi}_{\ell},\hat{\rho}_{\ell};W_i)$.
 \item Estimate its $(1-a) 100$\% confidence interval as $\hat{\theta}^{GF}(d_1,d_2)\pm \varsigma_a\hat{\sigma}(d_1,d_2)n^{-1/2}$ where
 $\hat{\sigma}^2$ $(d_1,d_2)
  $ $=n^{-1}\sum_{\ell=1}^L\sum_{i\in I_{\ell}} \{\psi^{GF}(d_1,d_2;\hat{\gamma}_{\ell},\hat{\omega}_{\ell},\hat{\pi}_{\ell},\hat{\rho}_{\ell};W_i)-\hat{\theta}^{GF}(d_1,d_2)\}^2$ and $\varsigma_a$ is the $1-a/2$ quantile of the standard Gaussian. 
\end{enumerate}
\end{algo}

Unlike \cite[Algorithm 1]{bodory2021evaluating}, our procedure does not require multiple levels of sample splitting. Our algorithmic innovation is the nuisance estimator $\hat{\omega}$ in Algorithm~\ref{algorithm:planning}, based on the idea of sequential kernel embedding. The regularization parameters for $\hat{\omega}$ are $(\lambda,\lambda_4)$, while the regularization parameters for $(\hat{\gamma},\hat{\pi},\hat{\rho})$ are $(\lambda,\lambda_6,\lambda_7)$. 

 Theorem~\ref{theorem:inference_planning} gives theoretical values for regularization parameters that optimally balance bias and variance of the nuisances, while Supplement~\ref{sec:tuning} gives practical tuning procedures with closed form solutions to empirically balance bias and variance of the nuisances. 
As before, optimal regularization for the nuisances may not be optimal for the time-varying treatment effect, yet it delivers the rate  $n^{-1/2}$ for the time-varying treatment effect in our analysis below.



As before, when $(D_1,D_2)$ are binary, we are able to improve the rate from $n^{-1/4}$ to $n^{-1/2}$ by using Algorithm~\ref{algorithm:planning_dml} and imposing additional assumptions. In particular, we place additional assumptions on the propensity scores. 

\begin{assumption}[Bounded propensities and density ratio]\label{assumption:bounded_propensity_planning}
Assume that, almost surely, (i)
propensity scores $\pi_0(d_1;X_1)$ and $\rho_0(d_2;D_1,X_1,X_2)$ are bounded away from zero and one; (ii) their estimators $\hat{\pi}(d_1;X_1)$ and $\hat{\rho}(d_2;D_1,X_1,X_2)$ are bounded away from zero and one;
(iii) the density ratio $f(x_2|d_1,X_1)/f(x_2|d'_1X_1)$ is bounded for any values $d_1,d_1'\in\mathcal{D}$.
\end{assumption}
The first part of Assumption~\ref{assumption:bounded_propensity_planning} is a mild strengthening of the overlap condition in Assumption~\ref{assumption:selection_planning}. The second part can be imposed by trimming the estimators $(\hat{\pi},\hat{\rho})$. The third part simplifies the proof technique and can be relaxed. Finally, as before, we assume the propensity scores are smooth and belong to RKHSs with quantifiable effective dimension in the sense  of~\eqref{eq:prior}.

\begin{assumption}[Smoothness and effective dimension of propensities]\label{assumption:smooth_propensity_planning}
Assume 
\begin{enumerate}
 \item $\pi_0\in \mathcal{H}_{\mathcal{X}}^{c_6}$ with $c_6\in(1,2]$, and $\eta_j(\mathcal{H}_{\mathcal{X}})\leq C j^{-b_6}$ with $b_6\geq 1$; 
 \item $\rho_0\in (\mathcal{H}_{\mathcal{D}}\otimes \mathcal{H}_{\mathcal{X}}\otimes \mathcal{H}_{\mathcal{X}})^{c_7}$ with $c_7\in(1,2]$, and $\eta_j(\mathcal{H}_{\mathcal{D}}\otimes \mathcal{H}_{\mathcal{X}}\otimes \mathcal{H}_{\mathcal{X}})\leq C j^{-b_7}$ with $b_7\geq 1$.
\end{enumerate}
\end{assumption}

Our main result for semiparametric time-varying treatment effects is as follows.

\begin{theorem}[Semiparametric efficiency of time-varying treatment effects]\label{theorem:inference_planning}
Suppose the conditions of Theorem~\ref{theorem:consistency_planning} hold, as well as Assumptions~\ref{assumption:bounded_propensity_planning} and~\ref{assumption:smooth_propensity_planning}. Set $(\lambda,\lambda_4,\lambda_6,\lambda_7)=\big\{n^{-1/(c+1/b)},$ $n^{-1/(c_4+1/b_4)},n^{-1/(c_6+1/b_6)},n^{-1/(c_7+1/b_7)}\big\}$, which is rate optimal regularization for nuisances. Then for any $c,c_4,c_6$ $\in(1,2]$ and $b,b_4,b_6\geq 1$ satisfying
\begin{align*}
 &\min \left(\frac{c-1}{c+1/b},\frac{c_1-1}{c_4+1/b_4}\right)+\frac{c_6}{c_6+1/b_6}>1,
\end{align*}
we have that
$
\hat{\theta}^{GF}(d_1,d_2)=\theta^{GF}_0(d_1,d_2)+o_p(1)$,
$n^{1/2} \sigma(d_1,d_2)^{-1}\{\hat{\theta}^{GF}(d_1,d_2)-\theta^{GF}_0(d_1,d_2)\}\leadsto\mathcal{N}(0,1)$,
and
$
\text{\normalfont pr} \left[\theta^{GF}_0(d_1,d_2) \in  \left\{\hat{\theta}^{GF}(d_1,d_2)\pm \varsigma_a\hat{\sigma}(d_1,d_2) n^{-1/2} \right\}\right]\rightarrow 1-a.
$
Moreover, the estimator is semiparametrically efficient.
\end{theorem}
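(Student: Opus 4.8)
The plan is to follow the standard debiased/targeted machine learning template \cite{chernozhukov2018original,zheng2011cross}, exploiting that the moment $\psi^{GF}$ of Lemma~\ref{theorem:dr_planning} is the efficient influence function for $\theta_0^{GF}$. Writing $\eta=(\gamma,\omega,\pi,\rho)$ with true value $\eta_0$, the first step records two structural facts about $\psi^{GF}$: (a) $\mathbb{E}\{\psi^{GF}(d_1,d_2;\eta_0;W)\}=\theta_0^{GF}(d_1,d_2)$, and (b) Neyman orthogonality, i.e. the G\^{a}teaux derivative of $\eta\mapsto\mathbb{E}\{\psi^{GF}(d_1,d_2;\eta;W)\}$ at $\eta_0$ vanishes in every nuisance direction. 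Both follow from the double robustness noted after Lemma~\ref{theorem:dr_planning}: since the moment is unbiased whenever either $(\gamma,\omega)$ or $(\pi,\rho)$ is correct, its first-order dependence on each nuisance group must cancel against the other.

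Next I would use the cross-fitting of Algorithm~\ref{algorithm:planning_dml} to decompose
\begin{align*}
\sqrt{n}\{\hat{\theta}^{GF}(d_1,d_2)-\theta_0^{GF}(d_1,d_2)\}
&= \underbrace{\frac{1}{\sqrt{n}}\sum_{i=1}^n\{\psi^{GF}(d_1,d_2;\eta_0;W_i)-\theta_0^{GF}\}}_{(a)}\\
&\quad + \underbrace{\sum_{\ell=1}^L\sqrt{\tfrac{|I_\ell|}{n}}\,(\mathbb{G}_{n,\ell})\{\psi^{GF}(\cdot;\hat{\eta}_\ell)-\psi^{GF}(\cdot;\eta_0)\}}_{(b)}\\
&\quad + \underbrace{\sqrt{n}\sum_{\ell=1}^L \frac{|I_\ell|}{n}\,\mathbb{E}\{\psi^{GF}(\cdot;\hat{\eta}_\ell)-\theta_0^{GF}\mid I_\ell^c\}}_{(c)},
\end{align*}
where $\mathbb{G}_{n,\ell}$ is the centered empirical process over fold $\ell$. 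Term $(a)$ is an i.i.d. sum; under Assumptions~\ref{assumption:original} and~\ref{assumption:bounded_propensity_planning} the summands have bounded second moment, so the Lindeberg CLT gives $(a)\leadsto\mathcal{N}(0,\sigma^2)$ with $\sigma^2=\mathrm{Var}\{\psi^{GF}(\cdot;\eta_0;W)\}$. For term $(b)$ I would condition on $I_\ell^c$: because $\hat{\eta}_\ell$ is fit on $I_\ell^c$ while $\mathbb{G}_{n,\ell}$ acts on $I_\ell$, the term is conditionally mean zero with conditional variance controlled by $\|\psi^{GF}(\cdot;\hat{\eta}_\ell)-\psi^{GF}(\cdot;\eta_0)\|_{\mathbb{L}^2}^2$, which vanishes once the nuisances are $\mathbb{L}^2$-consistent; this consistency is supplied by Theorem~\ref{theorem:consistency_planning} and the kernel ridge rates of Assumptions~\ref{assumption:smooth_gamma},~\ref{assumption:smooth_op}, and~\ref{assumption:smooth_propensity_planning}, so $(b)=o_p(1)$.

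The crux is term $(c)$. By Neyman orthogonality the first-order part of $\mathbb{E}\{\psi^{GF}(\cdot;\hat{\eta}_\ell)-\theta_0^{GF}\mid I_\ell^c\}$ vanishes, leaving a quadratic remainder that, using the explicit product form of $\psi^{GF}$ with inverse weights $\1_{D_1=d_1}\1_{D_2=d_2}/\{\pi\rho\}$ and $\1_{D_1=d_1}/\pi$, decomposes into cross-products between the outcome-side errors $(\hat{\gamma}-\gamma_0,\hat{\omega}-\omega_0)$ and the treatment-side errors $(\hat{\pi}-\pi_0,\hat{\rho}-\rho_0)$. I would bound each cross-product by a mixed H\"{o}lder/Cauchy--Schwarz inequality, using boundedness of the propensities and the density ratio (Assumption~\ref{assumption:bounded_propensity_planning}) to control the inverse-weight factors and to change the measure under which $\omega$-errors are evaluated. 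The binding term pairs the outcome-side error, bounded in sup (hence RKHS) norm with exponent $\min\{(c-1)/(c+1/b),(c_4-1)/(c_4+1/b_4)\}/2$ --- the minimum reflecting that $\hat{\omega}$ inherits error from both $\hat{\gamma}$ and the conditional embedding $\hat{\mu}_{x_2}$ --- against the $\pi$-error in $\mathbb{L}^2$ norm with exponent $\{c_6/(c_6+1/b_6)\}/2$; their product is $o_p(n^{-1/2})$ precisely under the displayed rate condition, and the remaining cross-products are of no larger order under the maintained smoothness of $\rho_0$. Hence $(c)=o_p(1)$. Combining $(a)$--$(c)$ with $\hat{\sigma}^2(d_1,d_2)\xrightarrow{p}\sigma^2(d_1,d_2)$ (again from nuisance consistency and bounded moments) yields the Gaussian approximation, and Slutsky gives the stated coverage. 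Efficiency is then immediate: $\psi^{GF}-\theta_0^{GF}$ is the efficient influence function of $\theta_0^{GF}$ in the nonparametric model \cite{scharfstein1999adjusting}, so $\sigma^2$ attains the semiparametric efficiency bound.

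The main obstacle I anticipate is the bookkeeping in term $(c)$: one must verify that the quadratic remainder contains \emph{no} surviving single-nuisance (first-order) contribution --- this is where the precise algebra of the nested indicators and the two additive correction terms of $\psi^{GF}$ matters --- and then match each cross-product to the correct pair of norms so that the two kernel-ridge rate exponents combine to exactly $\min\{\cdots\}+c_6/(c_6+1/b_6)$ over $2$. Getting the norm pairing right (sup/RKHS on the outcome side versus $\mathbb{L}^2$ on the propensity side) and invoking the density-ratio bound to justify the measure change for the $\omega$-component is the delicate step; the remainder is the routine DML machinery of terms $(a)$ and $(b)$.
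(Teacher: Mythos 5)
Your proposal is correct and follows essentially the same route as the paper: the paper verifies the conditions of an abstract cross-fitted Gaussian approximation lemma (Lemma~\ref{lemma:inference}, quoted from prior work) rather than re-deriving the $(a)/(b)/(c)$ decomposition, but the substance is identical --- Neyman orthogonality of $\psi^{GF}$, mean-square rates for $(\hat{\gamma},\hat{\pi},\hat{\rho})$, the uniform rate for $\hat{\omega}$ combining the regression and conditional-embedding exponents into the $\min\{\cdot\}$, and the binding product condition pairing $\hat{\omega}$ with $\hat{\pi}$ to give exactly the displayed inequality, with the $\hat{\gamma}\times(\hat{\pi},\hat{\rho})$ cross-product automatically controlled. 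Your identification of the binding term and of the role of the density-ratio bound matches the paper's argument.
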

See Supplement~\ref{sec:inference_proof} for the proof and a stronger finite sample guarantee. Once again, we obtain double spectral robustness. In the favorable case that $(c,c_4,c_6)=2$ i.e. $(\gamma_0,\mu_{x_2},\pi_0)$ are very smooth, either $(b,b_4)>1$ or $b_6>1$; we require either the kernels of  $(\gamma_0,\mu_{x_2})$ or the kernel of $\pi_0$ to have low effective dimension. More generally, the spectral decay quantified by $(b,b_4,b_6)$ must be sufficiently fast relative to the smoothness of the various nonparametric objects $(c,c_4,c_6)$. See~\eqref{eq:optimal} for the Sobolev special case.

\begin{remark}[Nonparametric inference]
    Nonparametric inference for estimators based on sequential kernel embedding
    is an important direction for future work. See Remark~\ref{remark:nonparametric_inference} for possible ways forward.
\end{remark}

\subsection{Simulations and application}


We implement semiparametric simulation designs that are simplifications of the nonparametric simulation designs. The goal of the semiparametric mediated effect design is to learn $\theta_0^{ME}(d,d')=0.09d+0.55d'+0.15dd'$. A single observation consists of the tuple $(Y,D,M,X)$ where now $D\in\{0,1\}$. The goal of the semiparametric time-varying treatment effect design is to learn $\theta_0^{GF}(d_1,d_2)=1.1d_1+2.2d_2+0.5d_1d_2$. A single observation consists of the tuple $(Y,D_{1:2},X_{1:2})$ where now $D_t\in\{0,1\}$.  

Over 100 simulations, we calculate estimates with 95\% confidence intervals. Tables~\ref{tab:coverage} and~\ref{tab:coverage_planning} present (i) the average estimate, (ii) the standard error of estimates, and (iii) the percentage of confidence intervals containing the true value. Across designs and sample sizes, our confidence intervals achieve nominal coverage, and the confidence intervals shrink as the sample increases. 



\begin{table}[H]
\center

\setlength{\tabcolsep}{5pt}
\begin{tabular}{lcccccccccccc}
 \hline
 & \multicolumn{3}{c}{$\theta_0^{ME}(0,0)=0$} 
 & \multicolumn{3}{c}{$\theta_0^{ME}(1,0)=0.09$}
 & \multicolumn{3}{c}{$\theta_0^{ME}(0,1)=0.55$}
 & \multicolumn{3}{c}{$\theta_0^{ME}(1,1)=0.79$} \\
\cmidrule(lr){2-4}\cmidrule(lr){5-7}\cmidrule(lr){8-10}\cmidrule(lr){11-13}
Sample & Mean & S.E. &  Cov. &  Mean & S.E. &Cov. &  Mean & S.E. &  Cov. &  Mean & S.E. &Cov. \\
\hline
500 & 0.004  &0.008 &94\% & 0.077  & 0.011 & 98\% & 0.575&0.013&  93\% & 0.792  & 0.010  &94\% \\
1000 & 0.000 &0.005 &97\% &  0.085 & 0.007 &  98\% & 0.583  & 0.009&  93\% &0.796&0.007  &96\% \\
5000 &  0.000 &0.002  &99\% & 0.083 & 0.002 & 98\%  & 0.571&0.003 & 87\% & 0.793&0.003 & 99\% \\
\hline
\end{tabular}
\caption{Semiparametric mediated effect coverage simulations}\label{tab:coverage}
\end{table}

\begin{table}[H]
\center

\setlength{\tabcolsep}{5pt}
\begin{tabular}{lcccccccccccc}
 \hline
 & \multicolumn{3}{c}{$\theta_0^{GF}(0,0)=0$} 
 & \multicolumn{3}{c}{$\theta_0^{GF}(1,0)=1.1$}
 & \multicolumn{3}{c}{$\theta_0^{GF}(0,1)=2.2$}
 & \multicolumn{3}{c}{$\theta_0^{GF}(1,1)=3.8$} \\
 \cmidrule(lr){2-4}\cmidrule(lr){5-7}\cmidrule(lr){8-10}\cmidrule(lr){11-13}
Sample & Mean & S.E. &  Cov. &  Mean & S.E. &Cov. &  Mean & S.E. &  Cov. &  Mean & S.E. &Cov. \\
\hline
500 &0.079 & 0.048 &95\% & 1.272  &0.057 &99\% & 2.848  &0.097& 99\% & 4.476 & 0.079 &95\% \\
1000 & 0.034 & 0.029& 96\%&  1.212 & 0.028  &  99\% & 2.480  &0.049& 99\% &4.145 & 0.040 &91\% \\
5000 & 0.013  &0.007  &  98\% & 1.114&  0.007  &  96\%  & 2.249  &0.012& 97\% & 3.853 & 0.008 &93\%\\
\hline
\end{tabular}
\caption{Semiparametric time-varying treatment effect coverage simulations}\label{tab:coverage_planning}
\end{table}


We implement the semiparametric time-varying treatment effect estimator with confidence intervals for the Job Corps. Though class hours  $(D_1,D_2)$ are observed as continuous variables, we discretize them into the bins of less versus more than $10^3$ class hours to transform the nonparametric problem into a semiparametric one. We choose a coarse grid because the simulations show that our procedure requires a sufficiently large sample size for good performance. Table~\ref{tab:JC_semi} summarizes results, arrayed to mirror the quadrants of Figure~\ref{fig:JC3}. The semiparametric results corroborate our nonparametric results: under standard identifying assumptions, most of the gain in counterfactual employment is achieved with few class hours in both years. The results also support our finding that many class hours in the second year are unproductive for employment. Our results appear to be statistically significant.

\begin{table}[H]
\center

\begin{tabular}{lccccc}
\hline
  &   & \multicolumn{4}{c}{Year two class hours} \\[5pt]
  &   &  \multicolumn{2}{c}{$\leq 10^3$}   &  \multicolumn{2}{c}{$>10^3$}   \\
\cmidrule(lr){3-4}\cmidrule(lr){5-6}
  &   &  Est. & S.E.  &  Est. & S.E.  \\
\hline
 Year one & $>10^3$  & 41.5  & 0.75 &44.5  & 0.70 \\
  class hours & $\leq 10^3$  &  53.9 & 1.00 & 56.6  & 0.95\\
\hline
\end{tabular}
\caption{Semiparametric time-varying effect of job training on employment}\label{tab:JC_semi}
\end{table}




\section{Counterfactual distributions}\label{sec:dist}

In this supplement, we extend the algorithms and analyses presented in the main text to counterfactual distributions.

\subsection{Definition and identification}

In the main text, we study target parameters defined as the means or increments of potential outcomes. In fact, our framework for nonparametric estimation extends to target parameters defined as distributions of potential outcomes. Counterfactual distributions can be embedded by a kernel using a new feature map $\phi(y)$ for a new scalar-valued RKHS $\mathcal{H}_{\mathcal{Y}}$. We now allow $\mathcal{Y}$ to be a Polish space, modifying Assumption~\ref{assumption:original}.
\begin{definition}[Counterfactual distributions and embeddings] 
For mediation analysis, $\theta_0^{DME}$ $(d,d')=$ $ P[Y^{\{d',M^{(d)}\}}]
$ is the counterfactual distribution of outcomes in the thought experiment that the treatment is set at a new value $D=d'$ but the mediator $M$ follows the distribution it would have followed if the treatment were set at its old value $D=d$.
For time-varying analysis, (i) $\theta_0^{DGF}(d_{1:T})=P\{Y^{(d_{1:T})}\}$ is the counterfactual distribution of outcomes given the interventions $D_{1:T}=d_{1:T}$ for the entire population; (ii) $ \theta_0^{DDS}(d_{1:T};\tilde{P})=\tilde{P}\{Y^{(d_{1:T})}\}$ is the counterfactual distribution of outcomes given the interventions $D_{1:T}=d_{1:T}$ for an alternative population with the data distribution $\tilde{P}$, elaborated in Assumption~\ref{assumption:covariate}. 
Likewise we define embeddings of the counterfactual distributions, e.g. $
\check{\theta}_0^{DME}(d,d')=E(\phi[Y^{\{d',M^{(d)}\}}])
$.
\end{definition}

The general strategy will be to estimate the embedding of a  counterfactual distribution. At that point, the analyst may use the embedding to (i) estimate moments of the counterfactual distribution \cite{kanagawa2014recovering} or (ii) sample from the counterfactual distribution \cite{welling2009herding}. We focus on the latter in this supplement. The same identification results apply to counterfactual distributions.

\begin{theorem}[Identification of counterfactual distributions and embeddings]
We show the following.
\begin{enumerate}
    \item If Assumption~\ref{assumption:selection_mediation} holds then
 $\{\theta_0^{DME}(d,d')\}(y)=\int P(y|d',m,x) \mathrm{d}P(m|d,x)\mathrm{d}P(x)$.
    \item If Assumption~\ref{assumption:selection_planning} holds then
 $
\{\theta_0^{DGF}(d_{1:T})\}(y)=\int P(y|d_{1:T},x_{1:T}) \mathrm{d}P(x_1) $ \\ $ \prod_{t=2}^{T} \mathrm{d}P\{x_t|d_{1:(t-1)},x_{1:(t-1)}\}
$.
    \item If in addition Assumption~\ref{assumption:covariate} holds then  $
\{\theta_0^{DDS}(d_{1:T};\tilde{P})\}(y)=\int P(y|d_{1:T},x_{1:T}) \mathrm{d}\tilde{P}(x_1)$ \\$ \prod_{t=2}^{T} \mathrm{d}\tilde{P}\{x_t|d_{1:(t-1)},x_{1:(t-1)}\}
$.
\end{enumerate}
Likewise for embeddings of counterfactual distributions. For example, if in addition Assumption~\ref{assumption:RKHS} holds, then $\check{\theta}_0^{DME}(d,d')=\int E\{\phi(Y)|D=d',M=m,X=x\}\mathrm{d}P(m|d,x)\mathrm{d}P(x)
$.
\end{theorem}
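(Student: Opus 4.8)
The plan is to reduce each distributional and embedding identity to the corresponding \emph{mean} identity already established in Lemma~\ref{lemma:id_mediation} (Pearl's mediation formula) and Lemma~\ref{lemma:id_planning} (Robins' g-formula), by replacing the scalar outcome $Y$ with a suitably transformed outcome. The crucial observation is that the identifying assumptions---the sequential selection on observables, consistency, and overlap conditions encoded in Assumptions~\ref{assumption:selection_mediation},~\ref{assumption:selection_planning}, and~\ref{assumption:covariate}---constrain only the joint law of treatments, mediators, covariates, and potential outcomes, and never privilege any particular functional of $Y$. Consequently, the mean-identification arguments apply verbatim to any bounded measurable transformation $g(Y)$ of the outcome, and the whole theorem follows by choosing $g$ appropriately.

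First, for the mediated distribution I would fix a Borel set $A\subseteq\mathcal{Y}$ and set $\tilde Y=\1_A(Y)$. Since $\tilde Y$ is a bounded measurable function of $Y$, applying Lemma~\ref{lemma:id_mediation}(i) with $\tilde Y$ in place of $Y$ gives
\[
E\!\left[\1_A\!\left(Y^{\{d',M^{(d)}\}}\right)\right]
=\int E\!\left[\1_A(Y)\mid D=d',M=m,X=x\right]\mathrm{d}P(m|d,x)\,\mathrm{d}P(x).
\]
The left side equals $P[Y^{\{d',M^{(d)}\}}\in A]$ and the inner conditional expectation equals $P(A\mid d',m,x)$, which is exactly the claimed identity once $\{\theta_0^{DME}(d,d')\}(y)$ is read as the counterfactual law evaluated through $P(y|d',m,x)$. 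Because $A$ is arbitrary, the two laws agree, giving the stated identification. The identical substitution $\tilde Y=\1_A(Y)$ into Lemma~\ref{lemma:id_planning} yields the $\theta_0^{DGF}$ identity under Assumption~\ref{assumption:selection_planning}, and into the distribution-shifted g-formula yields the $\theta_0^{DDS}$ identity, where Assumption~\ref{assumption:covariate} supplies the transportability of the covariate factorization to the target population $\tilde P$.

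Next, for the embeddings I would promote the scalar outcome to the RKHS-valued outcome $\phi(Y)\in\mathcal{H}_{\mathcal{Y}}$. Under Assumption~\ref{assumption:RKHS}, the map $y\mapsto\phi(y)$ is measurable and $\sup_{y}\|\phi(y)\|_{\mathcal{H}_{\mathcal{Y}}}\leq\kappa_y$, so $\phi(Y)$ is Bochner integrable and all the relevant conditional expectations are well-defined elements of $\mathcal{H}_{\mathcal{Y}}$. The cleanest route is by duality: for any fixed $f\in\mathcal{H}_{\mathcal{Y}}$ the scalar $f(Y)=\langle f,\phi(Y)\rangle_{\mathcal{H}_{\mathcal{Y}}}$ is a bounded measurable outcome, with $|f(Y)|\leq\|f\|_{\mathcal{H}_{\mathcal{Y}}}\kappa_y$, so the mean identity applies to $f(Y)$. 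Interchanging $\langle f,\cdot\rangle_{\mathcal{H}_{\mathcal{Y}}}$ with the Bochner integral---justified by integrability and the continuity of the inner product---then gives
\[
\left\langle f,\check\theta_0^{DME}(d,d')\right\rangle_{\mathcal{H}_{\mathcal{Y}}}
=\left\langle f,\int E\{\phi(Y)\mid D=d',M=m,X=x\}\,\mathrm{d}P(m|d,x)\,\mathrm{d}P(x)\right\rangle_{\mathcal{H}_{\mathcal{Y}}}.
\]
Since $f$ is arbitrary, the two RKHS elements coincide; applying the same step to Lemma~\ref{lemma:id_planning} handles $\check\theta_0^{DGF}$ and $\check\theta_0^{DDS}$.

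The main obstacle is bookkeeping rather than a genuinely new difficulty: I must confirm that the mean-identification proofs are insensitive to the choice of outcome functional, so that the substitution $Y\mapsto g(Y)$ is legitimate under the very same assumptions, and in the embedding case I must justify Bochner integrability together with the exchange of the inner product and the integral. Both points are controlled entirely by the boundedness and measurability hypotheses in Assumption~\ref{assumption:RKHS}, so no regularity beyond what is already assumed is required.
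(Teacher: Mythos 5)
Your proposal is correct and is essentially the argument the paper intends: the paper leaves this theorem unproved, remarking only that the results ``resemble those presented in the main text'' once one defines the generalized regression $\gamma_0(d,m,x)=E\{\phi(Y)\mid D=d,M=m,X=x\}$, which is precisely your reduction of the distributional and embedding identities to Lemmas~\ref{lemma:id_mediation} and~\ref{lemma:id_planning} via the substitutions $Y\mapsto \1_A(Y)$ and $Y\mapsto \langle f,\phi(Y)\rangle_{\mathcal{H}_{\mathcal{Y}}}$. Your added care on the two points the paper glosses over---that the identifying assumptions are preserved under measurable transformations of the potential outcomes, and that boundedness and measurability of $\phi$ justify Bochner integrability and the interchange of inner product and integral---is exactly the right bookkeeping.
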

The identification results for embeddings of counterfactual distributions resemble those presented in the main text. Define the generalized regressions
$
\gamma_0(d,m,x)=E\{\phi(Y)|D=d,M=m,X=x\}
$ and $\gamma_0(d_{1:T},x_{1:T})=E\{\phi(Y)|D_{1:T}=d_{1:T},X_{1:T}=x_{1:T}\}$. Then we can express these results in the familiar form, e.g. $\check{\theta}_0^{DME}(d,d')=\int \gamma_0(d',m,x)\mathrm{d}P(m|d,x)\mathrm{d}P(x)$.

\subsection{Sequential kernel embedding}

To estimate counterfactual distributions, we extend the RKHS constructions in Sections~4 and~5. Define an additional scalar-valued RKHS for the outcome $Y$. Because the regression $\gamma_0$ is now a conditional kernel embedding, we present a construction involving a conditional expectation operator. For mediation analysis, define the conditional expectation operator
$
E_8:\mathcal{H}_{\mathcal{Y}}\rightarrow \mathcal{H}_{\mathcal{D}}\otimes \mathcal{H}_{\mathcal{M}}\otimes  \mathcal{H}_{\mathcal{X}},\; f(\cdot)\mapsto E\{f(Y)|D=\cdot,M=\cdot,X=\cdot \}
$. By construction,
$
\gamma_0(d,m,x)=E_8^*\{\phi(d)\otimes \phi(m)\otimes \phi(x)\}
$.  Likewise, for time-varying analysis, define the conditional expectation operator
$
E_9:\mathcal{H}_{\mathcal{Y}}\rightarrow \mathcal{H}_{\mathcal{D}}\otimes \mathcal{H}_{\mathcal{D}}\otimes  \mathcal{H}_{\mathcal{X}}\otimes  \mathcal{H}_{\mathcal{X}},\; f(\cdot)\mapsto E\{f(Y)|D_1=\cdot,D_2=\cdot,X_1=\cdot,X_2=\cdot \}
$.  We place regularity conditions on this RKHS construction in order to prove a generalized decoupling result. 

\begin{theorem}[Decoupling via sequential kernel embeddings]\label{theorem:representation_dist}
We show the following.
\begin{enumerate}
    \item Suppose the conditions of Lemma~\ref{lemma:id_mediation} hold. Further suppose Assumption~\ref{assumption:RKHS} holds and $E_8\in\mathcal{L}_2(\mathcal{H}_{\mathcal{Y}},\mathcal{H}_{\mathcal{D}}\otimes\mathcal{H}_{\mathcal{M}}\otimes \mathcal{H}_{\mathcal{X}})$. Then
$\check{\theta}_0^{DME}(d,d')=E_8^*\left[\phi(d')\otimes \mu_{m,x}(d)\right] $. 
    \item Suppose the conditions of Lemma~\ref{lemma:id_planning} hold. Further suppose Assumption~\ref{assumption:RKHS} holds and $E_9\in\mathcal{L}_2(\mathcal{H}_{\mathcal{Y}},\mathcal{H}_{\mathcal{D}}\otimes \mathcal{H}_{\mathcal{D}}\otimes \mathcal{H}_{\mathcal{X}}\otimes \mathcal{H}_{\mathcal{X}})$. Then
 $\check{\theta}_0^{DGF}(d_1,d_2)=E_9^*\left[\phi(d_1)\otimes\phi(d_2) \otimes \mu_{x_1,x_2}(d_1) \right]$; 
  and $\check{\theta}_0^{DDS}(d_1,d_2;\tilde{P})=E_9^*\left[\phi(d_1)\otimes\phi(d_2) \otimes \nu_{x_1,x_2}(d_1) \right]$. 
\end{enumerate}
\end{theorem}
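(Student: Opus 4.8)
The plan is to mirror the decoupling arguments of Theorems~\ref{theorem:representation_mediation} and~\ref{theorem:representation_planning}, replacing the scalar reproducing property with the action of the adjoint operator $E_8^*$ (respectively $E_9^*$). The only structural change is that the regression $\gamma_0$ is now $\mathcal{H}_{\mathcal{Y}}$-valued rather than scalar-valued, so the bounded linear functional $\langle\gamma_0,\cdot\rangle_{\mathcal{H}}$ is replaced by the bounded linear operator $E_8^*$, and the interchange of integration and evaluation becomes an interchange of a Bochner integral with a bounded operator.

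First I would begin from the identified embedding and substitute the operator representation of the generalized regression,
$$
\check{\theta}_0^{DME}(d,d')
=\int \gamma_0(d',m,x)\,\mathrm{d}P(m|d,x)\mathrm{d}P(x)
=\int E_8^*\{\phi(d')\otimes\phi(m)\otimes\phi(x)\}\,\mathrm{d}P(m|d,x)\mathrm{d}P(x),
$$
using the identity $\gamma_0(d,m,x)=E_8^*\{\phi(d)\otimes\phi(m)\otimes\phi(x)\}$ that holds by construction of $E_8$.

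Next I would move $E_8^*$ outside the integral. This is the crux: $E_8^*$ is bounded (indeed Hilbert--Schmidt, by the hypothesis $E_8\in\mathcal{L}_2(\mathcal{H}_{\mathcal{Y}},\mathcal{H}_{\mathcal{D}}\otimes\mathcal{H}_{\mathcal{M}}\otimes\mathcal{H}_{\mathcal{X}})$), and a bounded operator commutes with a Bochner integral. To invoke this I would verify Bochner integrability of the integrand: measurability follows from Assumption~\ref{assumption:RKHS}(ii), and the integral of the norm is finite since $\|\phi(d')\otimes\phi(m)\otimes\phi(x)\|=\|\phi(d')\|\,\|\phi(m)\|\,\|\phi(x)\|\le\kappa_d\kappa_m\kappa_x<\infty$ by Assumption~\ref{assumption:RKHS}(i). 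Having justified the interchange, I would evaluate the interior integral by factoring out the fixed $\phi(d')$ and integrating sequentially,
$$
\int\{\phi(d')\otimes\phi(m)\otimes\phi(x)\}\,\mathrm{d}P(m|d,x)\mathrm{d}P(x)
=\phi(d')\otimes\int\{\mu_m(d,x)\otimes\phi(x)\}\,\mathrm{d}P(x)
=\phi(d')\otimes\mu_{m,x}(d),
$$
where the first equality uses $\mu_m(d,x)=\int\phi(m)\,\mathrm{d}P(m|d,x)$ and the second uses $\mu_{m,x}(d)=\int\{\mu_m(d,x)\otimes\phi(x)\}\mathrm{d}P(x)$ from Theorem~\ref{theorem:representation_mediation}. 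Combining the displays yields $\check{\theta}_0^{DME}(d,d')=E_8^*[\phi(d')\otimes\mu_{m,x}(d)]$.

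The time-varying claims follow the identical template with $E_9^*$ in place of $E_8^*$: substitute $\gamma_0(d_1,d_2,x_1,x_2)=E_9^*\{\phi(d_1)\otimes\phi(d_2)\otimes\phi(x_1)\otimes\phi(x_2)\}$, pull the bounded operator $E_9^*$ out of the Bochner integral, and recognize the interior integral as $\phi(d_1)\otimes\phi(d_2)\otimes\mu_{x_1,x_2}(d_1)$ (respectively $\phi(d_1)\otimes\phi(d_2)\otimes\nu_{x_1,x_2}(d_1)$ for the shifted population $\tilde{P}$), reusing the sequential embeddings already constructed in Theorem~\ref{theorem:representation_planning}. The main obstacle is not algebraic but the rigorous justification of the operator--integral interchange in the vector-valued setting; once Bochner integrability of the tensor-product feature maps and boundedness of $E_8^*$ (resp.\ $E_9^*$) are established, the remaining manipulations are exactly those of the scalar case.
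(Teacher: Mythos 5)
Your proposal is correct and follows exactly the route the paper takes: the paper's own proof simply notes that the argument is analogous to Theorems~\ref{theorem:representation_mediation} and~\ref{theorem:representation_planning} once one recognizes $\gamma_0(d,m,x)=E_8^*\{\phi(d)\otimes\phi(m)\otimes\phi(x)\}$ and $\gamma_0(d_1,d_2,x_1,x_2)=E_9^*\{\phi(d_1)\otimes\phi(d_2)\otimes\phi(x_1)\otimes\phi(x_2)\}$, which is precisely your substitution. Your additional care in justifying the interchange of the Hilbert--Schmidt adjoint with the Bochner integral via Assumption~\ref{assumption:RKHS} is the same mechanism the paper uses (boundedness of the kernels implying Bochner integrability), just spelled out more explicitly.
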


See Supplement~\ref{sec:alg_deriv} for the proof. The kernel embeddings are the same as in the main text. They encode the reweighting distributions as elements in the RKHS such that the counterfactual distribution embeddings can be expressed as evaluations of $(E_8^*,E_9^*)$. As in the main text, these decoupled representations help to define estimators with closed form solutions that can be easily computed. For example, for $\check{\theta}_0^{GF}(d_1,d_2)$, our estimator will be $\hat{\theta}^{DGF}(d_1,d_2)$, which equals
$$\hat{E}_9^*\left[\phi(d_1)\otimes\phi(d_2) \otimes \hat{\mu}_{x_1,x_2}(d_1) \right]=\hat{E}_9^*\left[\phi(d_1)\otimes\phi(d_2) \otimes \frac{1}{n}\sum_{i=1}^n \{\phi(X_{1i})\otimes \hat{\mu}_{x_2}(d_1,X_{1i})\} \right]\!.$$ The estimators $\hat{E}_9$ and $\hat{\mu}_{x_2}(d_1,x_1)$ are generalized kernel ridge regressions.

\begin{algo}[Nonparametric estimation of counterfactual distribution embeddings]\label{algorithm:dist}
Denote by $\odot$ the elementwise product. For mediation analysis denote the kernel matrices
$
K_{DD},$ $K_{MM}$, $K_{XX}$, $K_{YY}\in\mathbb{R}^{n\times n}
$. Then
\begin{enumerate}
    \item $\{\hat{\theta}^{DME}(d,d')\}(y)=\frac{1}{n}\sum_{i=1}^n K_{yY}(K_{DD}\odot K_{MM}\odot K_{XX}+n\lambda_8 I)^{-1}$ \\ $
    \quad [K_{Dd'}\odot  \{K_{MM}(K_{DD}\odot K_{XX}+n\lambda_1 I)^{-1}(K_{Dd}\odot K_{Xx_i})\}\odot K_{Xx_i}]$.
\end{enumerate}
For time-varying analysis, denote the kernel matrices
$
K_{D_1D_1}$, $K_{D_2D_2}$, $K_{X_1X_1}$, $K_{X_2X_2}\in\mathbb{R}^{n\times n}
$
calculated from the observations drawn from $P$.  Denote the kernel matrices
$
K_{\tilde{D}_1\tilde{D}_1}$, $K_{\tilde{D}_2\tilde{D}_2}$, $K_{\tilde{X}_1\tilde{X}_1}$, $K_{\tilde{X}_2\tilde{X}_2}\in\mathbb{R}^{\tilde{n}\times \tilde{n}}
$
calculated from the observations drawn from $\tilde{P}$. Then 
\begin{enumerate}
    \item[2.]  $\{\hat{\theta}^{DGF}(d_1,d_2)\}(y)=\frac{1}{n}\sum_{i=1}^n K_{yY}(K_{D_1D_1}\odot K_{D_2D_2}\odot K_{X_1X_1}\odot K_{X_2X_2}+n\lambda_9 I)^{-1}$ \\$
        [K_{D_1d_1}\odot K_{D_2d_2}\odot  K_{X_1x_{1i}}\odot \{K_{X_2 X_2}(K_{D_1D_1}\odot K_{X_1X_1}+n\lambda_4 I)^{-1}(K_{D_1d_1}\odot K_{X_1x_{1i}})\}]  $;
    \item[3.]  $\{\hat{\theta}^{DDS}(d_2,d_2;\tilde{P})\}(y)=\frac{1}{\tilde{n}}\sum_{i=1}^{\tilde{n}} K_{yY}(K_{D_1D_1}\odot K_{D_2D_2}\odot K_{X_1X_1}\odot K_{X_2X_2}+n\lambda_9 I)^{-1}$ \\$
        [K_{D_1d_1}\odot K_{D_2d_2}\odot  K_{X_1\tilde{x}_{1i}}\odot \{K_{X_2 \tilde{X}_2}(K_{\tilde{D}_1\tilde{D}_1}\odot K_{\tilde{X}_1\tilde{X}_1}+\tilde{n}\lambda_5 I)^{-1}(K_{\tilde{D}_1d_1}\odot K_{\tilde{X}_1\tilde{x}_{1i}})\}]  $,
\end{enumerate}
where $(\lambda_1,\lambda_4,\lambda_5,\lambda_8,\lambda_9)$ are ridge regression penalty hyperparameters.
\end{algo}
We derive these algorithms in Supplement~\ref{sec:alg_deriv}. We give theoretical values for $(\lambda_1,\lambda_4,\lambda_5,\lambda_8,\lambda_9)$ that balance bias and variance in Theorem~\ref{theorem:consistency_dist} below. Supplement~\ref{sec:tuning} gives practical tuning procedures to empirically balance bias and variance. Note that $\hat{\theta}^{DDS}$ requires observations of the treatments and covariates from the alternative population $\tilde{P}$.

Algorithm~\ref{algorithm:dist} estimates counterfactual distribution embeddings. The ultimate parameters of interest are counterfactual distributions. We present a deterministic procedure that uses the distribution embedding to provide samples $(\tilde{Y_j})$ from the distribution. The procedure is a variant of kernel herding  \cite{welling2009herding,muandet2020counterfactual}.

\begin{algo}[Nonparametric estimation of counterfactual distributions]\label{algorithm:herding}
Recall that $\hat{\theta}^{DME}$ $(d,d')$ is a mapping from $\mathcal{Y}$ to $\mathbb{R}$. 
Given $\hat{\theta}^{DME}(d,d')$, calculate
\begin{enumerate}
    \item $\tilde{Y}_1=\argmax_{y\in\mathcal{Y}} \left[\{\hat{\theta}^{DME}(d,d')\}(y)\right]$,
    \item $\tilde{Y}_{j}=\argmax_{y\in\mathcal{Y}} \left[\{ \hat{\theta}^{DME}(d,d')\}(y)-\frac{1}{j+1}\sum_{\ell=1}^{j-1}k_{\mathcal{Y}}(\tilde{Y}_{\ell},y)\right]$ for $j>1$.
\end{enumerate}
Likewise for the other counterfactual distributions, replacing $\hat{\theta}^{DME}(d,d')$ with the other quantities in Algorithm~\ref{algorithm:dist}.
\end{algo}
By this procedure, samples from counterfactual distributions are straightforward to compute. With such samples, one may visualize a histogram as an estimator of the counterfactual density of potential outcomes. Alternatively, one may test statistical hypotheses using the samples $(\tilde{Y}_j)$.

\subsection{Convergence in distribution}

Towards a guarantee of uniform consistency, we place regularity conditions on the original spaces as in Assumption~\ref{assumption:original}. We relax the condition that $Y\in\mathbb{R}$ and that it is bounded; instead, we assume $Y\in\mathcal{Y}$, which is a Polish space.
We place assumptions on the smoothness of the generalized regression $\gamma_0$ and the effective dimension of its RKHS, expressed in terms of the conditional expectation operators $(E_8,E_9)$. Likewise, we place assumptions on the smoothness of the conditional kernel embeddings and the effective dimensions of their RKHSs. With these assumptions, we arrive at our next theoretical result.
\begin{theorem}[Consistency of counterfactual distribution embeddings]\label{theorem:consistency_dist}
Set $(\lambda_1,\lambda_4,\lambda_5,\lambda_8,\lambda_9)=$ $\left\{n^{-1/(c_1+1/b_1)},n^{-1/(c_4+1/b_4)},\tilde{n}^{-1/(c_5+1/b_5)},n^{-1/(c_8+1/b_8)},n^{-1/(c_9+1/b_9)}\right\}$, which is rate optimal regularization.
For mediation analysis, suppose the conditions of Theorem~\ref{theorem:consistency_mediation} hold as well as Assumption~\ref{assumption:smooth_op} with $\mathcal{A}_8=\mathcal{Y}$ and $\mathcal{B}_8=\mathcal{D}\times \mathcal{M}\times \mathcal{X}$. Then
 $$
    \sup_{d,d'\in\mathcal{D}}\|\hat{\theta}^{DME}(d,d')-\check{\theta}_0^{DME}(d,d')\|_{\mathcal{H}_{\mathcal{Y}}}=O_p\left(n^{-\frac{1}{2}\frac{c_8-1}{c_8+1/b_8}}+n^{-\frac{1}{2}\frac{c_1-1}{c_1+1/c_1}}\right)\!.
$$
For time-varying analysis, suppose the conditions of Theorem~\ref{theorem:consistency_planning} hold as well as Assumption~\ref{assumption:smooth_op} with $\mathcal{A}_9=\mathcal{Y}$ and $\mathcal{B}_9=\mathcal{D}\times \mathcal{D}\times \mathcal{X}\times \mathcal{X}$. 
 Then
    $$
    \sup_{d_1,d_2\in\mathcal{D}}\|\hat{\theta}^{DGF}(d_1,d_2)-\check{\theta}_0^{DGF}(d_1,d_2)\|_{\mathcal{H}_{\mathcal{Y}}}=O_p\left(n^{-\frac{1}{2}\frac{c_9-1}{c_9+1/b_9}}+n^{-\frac{1}{2}\frac{c_4-1}{c_4+1/c_4}}\right)\!.
$$
 If in addition Assumption~\ref{assumption:covariate} holds, then 
     $$
     \sup_{d_1,d_2 \in\mathcal{D}}\|\hat{\theta}^{DDS}(d_1,d_2;\tilde{P})-\check{\theta}_0^{DDS}(d_1,d_2;\tilde{P})\|_{\mathcal{H}_{\mathcal{Y}}}=O_p\left(n^{-\frac{1}{2}\frac{c_9-1}{c_9+1/b_9}}+\tilde{n}^{-\frac{1}{2}\frac{c_5-1}{c_5+1/b_5}}\right)\!.
    $$
\end{theorem}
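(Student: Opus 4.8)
The plan is to mirror the proof of Theorem~\ref{theorem:consistency_mediation}, replacing the scalar regression $\hat\gamma$ by the operator-valued conditional mean embedding $\hat E_8$ (resp.\ $\hat E_9$) and the final inner product by an operator application. By Theorem~\ref{theorem:representation_dist} and Algorithm~\ref{algorithm:dist}, the target and estimator are $\check\theta_0^{DME}(d,d')=E_8^*[\phi(d')\otimes\mu_{m,x}(d)]$ and $\hat\theta^{DME}(d,d')=\hat E_8^*[\phi(d')\otimes\hat\mu_{m,x}(d)]$. First I would telescope the error as
\begin{equation*}
\hat\theta^{DME}(d,d')-\check\theta_0^{DME}(d,d')
=\hat E_8^*\big[\phi(d')\otimes\{\hat\mu_{m,x}(d)-\mu_{m,x}(d)\}\big]
+(\hat E_8^*-E_8^*)\big[\phi(d')\otimes\mu_{m,x}(d)\big],
\end{equation*}
and apply the triangle inequality together with submultiplicativity of the operator norm in $\mathcal{H}_{\mathcal{Y}}$ to obtain two pieces to control, uniformly in $(d,d')$.

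For the first piece I would bound $\|\hat E_8^*\|$ in operator norm by $O_p(1)$, since $\hat E_8\to E_8$ and $E_8$ is Hilbert--Schmidt hence bounded, and bound $\|\phi(d')\|_{\mathcal{H}_{\mathcal{D}}}\le\kappa_d$ by Assumption~\ref{assumption:RKHS}. The remaining factor $\|\hat\mu_{m,x}(d)-\mu_{m,x}(d)\|_{\mathcal{H}_{\mathcal{M}}\otimes\mathcal{H}_{\mathcal{X}}}$ is precisely the uniform sequential-embedding error already controlled in the proof of Theorem~\ref{theorem:consistency_mediation}, which is $O_p(n^{-\frac12(c_1-1)/(c_1+1/b_1)})$ uniformly over $d$ under Assumption~\ref{assumption:smooth_op} with $\mathcal{A}_1=\mathcal{M}$ and $\mathcal{B}_1=\mathcal{D}\times\mathcal{X}$. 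For the second piece, $\|\phi(d')\otimes\mu_{m,x}(d)\|$ is uniformly bounded by products of the kernel bounds $\kappa_d,\kappa_m,\kappa_x$, so it suffices to control $\|\hat E_8-E_8\|_{\mathcal{L}_2(\mathcal{H}_{\mathcal{Y}},\mathcal{H}_{\mathcal{D}}\otimes\mathcal{H}_{\mathcal{M}}\otimes\mathcal{H}_{\mathcal{X}})}$. This is the generalized kernel ridge regression rate for the conditional mean embedding, governed by the spectral parameters $(b_8,c_8)$ of Assumption~\ref{assumption:smooth_op} with $\mathcal{A}_8=\mathcal{Y}$ and $\mathcal{B}_8=\mathcal{D}\times\mathcal{M}\times\mathcal{X}$, which by \cite[Theorem 3]{li2022optimal} is $O_p(n^{-\frac12(c_8-1)/(c_8+1/b_8)})$. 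Combining the two pieces, and noting that neither bound depends on $(d,d')$ except through uniform constants, yields the stated mediation rate.

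The time-varying statements follow the same two-term decomposition with $E_9$ in place of $E_8$, the sequential embedding $\mu_{x_1,x_2}(d_1)$ in place of $\mu_{m,x}(d)$, and spectral parameters $(b_9,c_9)$ and $(b_4,c_4)$; the operator error contributes $n^{-\frac12(c_9-1)/(c_9+1/b_9)}$ and the embedding error contributes $n^{-\frac12(c_4-1)/(c_4+1/b_4)}$. For the distribution-shift case the only change is that the sequential embedding $\hat\nu_{x_1,x_2}(d_1)$ is built from the $\tilde n$ samples drawn from $\tilde P$, so its contribution becomes $\tilde n^{-\frac12(c_5-1)/(c_5+1/b_5)}$, while $\hat E_9$ is still estimated from the $n$ samples drawn from $P$.

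I expect the main obstacle to be the second piece: establishing the Hilbert--Schmidt convergence rate of the operator-valued regression $\hat E_8$ into the infinite-dimensional outcome space $\mathcal{H}_{\mathcal{Y}}$, and verifying that the generalized kernel ridge regression theory of \cite{li2022optimal} applies verbatim when the target is a conditional mean embedding whose effective dimension is that of the input space $\mathcal{H}_{\mathcal{B}_8}$. In the main text the outer regression $\hat\gamma$ was scalar-valued, so this is the one genuinely new estimation-theoretic ingredient; the remaining work is bookkeeping to propagate the bounds through the product decomposition and to confirm uniformity in $(d,d')$ via the bounded-kernel Assumption~\ref{assumption:RKHS}.
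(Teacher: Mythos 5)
Your proposal is correct and takes essentially the same route as the paper: the paper's proof simply states that the argument of Theorems~\ref{theorem:consistency_mediation} and~\ref{theorem:consistency_planning} carries over verbatim upon replacing $\|\gamma_0\|_{\mathcal{H}}$ by $\|E_8\|_{\mathcal{L}_2}$ (or $\|E_9\|_{\mathcal{L}_2}$) and the scalar regression rate $r_\gamma(n,\delta,b,c)$ by the operator rate $r_E(n,\delta,b_8,c_8)$ (or $r_E(n,\delta,b_9,c_9)$), which is exactly your telescoping decomposition once $\hat E_8$ is further split as $(\hat E_8-E_8)+E_8$. The obstacle you flag at the end is not actually new to this theorem: the Hilbert--Schmidt rate for $\hat E_8$ into the infinite-dimensional $\mathcal{H}_{\mathcal{Y}}$ is the same Lemma~\ref{theorem:conditional} already used for $\hat\mu_m$ and $\hat\mu_{x_2}$, merely instantiated with $\mathcal{A}_8=\mathcal{Y}$, $\mathcal{B}_8=\mathcal{D}\times\mathcal{M}\times\mathcal{X}$.
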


Exact finite sample rates are given in Supplement~\ref{sec:consistency_proof}, as well as the explicit specializations of Assumption~\ref{assumption:smooth_op}. Again, these rates are at best $n^{-\frac{1}{4}}$ when $(c_1,c_4,c_5,c_8,c_9)=2$ and $(b_1,b_4,b_5,b_8,b_9)\rightarrow \infty$. Finally, we state an additional regularity condition under which we can prove that the samples $(\tilde{Y}_j)$ calculated from the distribution embeddings weakly converge to the desired distribution.
\begin{assumption}[Additional regularity]\label{assumption:regularity}
Assume
(i) $\mathcal{Y}$ is locally compact;
(ii) $\mathcal{H}_{\mathcal{Y}}\subset\mathcal{C}$, where $\mathcal{C}$ is the space of bounded, continuous, real-valued functions that vanish at infinity.
\end{assumption}
As discussed by \cite{simon2020metrizing}, the assumptions that $\mathcal{Y}$ is Polish and locally compact impose weak restrictions. In particular, if $\mathcal{Y}$ is a Banach space, then to satisfy both conditions it must be finite dimensional. We arrive at our final result of this section.
\begin{corollary}[Weak convergence; Theorem S3 of \cite{singh2020kernel}]\label{theorem:conv_dist}
Suppose the conditions of Theorem~\ref{theorem:consistency_dist} hold, as well as Assumption~\ref{assumption:regularity}. Suppose the samples $(\tilde{Y}_j)$ are calculated for $\theta_0^{DME}(d,d')$ as described in Algorithm~\ref{algorithm:herding}. Then $(\tilde{Y}_j)\leadsto  \theta_0^{DME}(d,d')$. Likewise for the other counterfactual distributions, replacing $\hat{\theta}^{DME}(d,d')$ with the other quantities in Algorithm~\ref{algorithm:dist}.
\end{corollary}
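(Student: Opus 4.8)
The plan is to reduce the claim to Theorem S3 of \cite{singh2020kernel}, whose only substantive inputs are (i) consistency of a counterfactual distribution embedding in RKHS norm and (ii) the metrization of weak convergence by maximum mean discrepancy, both of which are available here. First I would observe that $\check{\theta}_0^{DME}(d,d')$ is precisely the kernel mean embedding of the probability measure $\theta_0^{DME}(d,d')$ on $\mathcal{Y}$: by the identification of embeddings, $\check{\theta}_0^{DME}(d,d')=\int \phi(y)\,\mathrm{d}\{\theta_0^{DME}(d,d')\}(y)$, and $k_{\mathcal{Y}}$ is bounded, continuous, and characteristic by Assumption~\ref{assumption:RKHS}. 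Theorem~\ref{theorem:consistency_dist} then gives $\|\hat{\theta}^{DME}(d,d')-\check{\theta}_0^{DME}(d,d')\|_{\mathcal{H}_{\mathcal{Y}}}=o_p(1)$, so the estimated embedding produced by Algorithm~\ref{algorithm:dist} is RKHS-consistent for the embedding of the target distribution.

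The second ingredient is the behavior of kernel herding. The greedy procedure in Algorithm~\ref{algorithm:herding} is designed so that the empirical embedding $J^{-1}\sum_{j=1}^{J}\phi(\tilde{Y}_j)$ of the first $J$ herded samples tracks the supplied embedding $\hat{\theta}^{DME}(d,d')$; by the standard herding analysis \cite{welling2009herding,muandet2020counterfactual}, this tracking error tends to zero in $\mathcal{H}_{\mathcal{Y}}$ as $J\to\infty$. Combining with the previous paragraph via the triangle inequality, along a sequence $J=J(n)\to\infty$ the empirical embedding of the herded samples converges in RKHS norm to $\check{\theta}_0^{DME}(d,d')$, i.e. the maximum mean discrepancy between the herded empirical measure and $\theta_0^{DME}(d,d')$ vanishes.

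It then remains to upgrade convergence of embeddings to weak convergence of measures. Here I would invoke Assumption~\ref{assumption:regularity}: because $\mathcal{Y}$ is Polish and locally compact and $\mathcal{H}_{\mathcal{Y}}\subset\mathcal{C}$, the results of \cite{simon2020metrizing} show that the maximum mean discrepancy induced by $k_{\mathcal{Y}}$ metrizes weak convergence on the space of probability measures over $\mathcal{Y}$. Consequently, vanishing maximum mean discrepancy to $\theta_0^{DME}(d,d')$ is equivalent to $(\tilde{Y}_j)\leadsto\theta_0^{DME}(d,d')$, which is the claim; the argument for $\theta_0^{DGF}$ and $\theta_0^{DDS}$ is identical, using the corresponding consistency statements from Theorem~\ref{theorem:consistency_dist}.

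The main obstacle I anticipate is the careful interleaving of the two limits — the statistical error as $n\to\infty$ and the herding error as $J\to\infty$ — since the herding guarantee is ordinarily stated for an \emph{exact} embedding, whereas here it is applied to the estimated $\hat{\theta}^{DME}(d,d')$. One must verify that the herding tracking error is controlled uniformly enough, or choose $J=J(n)$ growing slowly enough, that both sources of error vanish simultaneously. This bookkeeping is exactly what is carried out in Theorem S3 of \cite{singh2020kernel}, so the present corollary follows once the two inputs above are in place.
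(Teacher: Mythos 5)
Your proposal is correct and follows essentially the same route as the paper, which proves nothing new here but simply invokes Theorem S3 of \cite{singh2020kernel}: combine the RKHS-norm consistency of the estimated embedding from Theorem~\ref{theorem:consistency_dist} with the herding guarantee and the fact that, under Assumption~\ref{assumption:regularity}, the maximum mean discrepancy of a bounded, continuous, characteristic kernel metrizes weak convergence \cite{simon2020metrizing}. Your closing remark about interleaving the $n\to\infty$ and $J\to\infty$ limits correctly identifies the only nontrivial bookkeeping, which is exactly what is delegated to the cited theorem.
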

\section{Time-varying dose response with longer horizons}\label{sec:higher_T}

In the main text, we focus on the time-varying dose response curve with $T=2$ time periods. In this supplement, we generalize our approach to $T>2$ time periods. In doing so, we also clarify the role of auxiliary Markov assumptions that are common in the literature. In particular, we focus on the time-varying dose response $\theta_0^{GF}(d_{1:T})$ with and without Markov assumptions.

\subsection{A generalized decoupling}

Previously, Theorem~\ref{theorem:representation_planning} gave the RKHS representation for $T=2$. In particular, if $\gamma_0\in \mathcal{H}=\mathcal{H}_{\mathcal{D}}\otimes \mathcal{H}_{\mathcal{D}} \otimes  \mathcal{H}_{\mathcal{X}}\otimes \mathcal{H}_{\mathcal{X}} $ and if both Assumptions~\ref{assumption:RKHS} and~\ref{assumption:selection_planning} hold then
\begin{align*}
    \theta_0^{GF}(d_1,d_2)&=\left\langle \gamma_0,  \phi(d_1)\otimes\phi(d_2) \otimes \int \{\phi(x_1)\otimes \mu_{x_2}(d_1,x_1)\} \mathrm{d}P(x_1) \right\rangle_{\mathcal{H}}\!,\\ \mu_{x_2}(d_1,x_1)&=\int \phi(x_2)\mathrm{d}P(x_2|d_1,x_1).
\end{align*}
Towards a more general result, define the notation
$
\mathcal{H}_T=\{\otimes_{t=1}^T(\mathcal{H}_{\mathcal{D}})\}\otimes \{\otimes_{t=1}^T (\mathcal{H}_{\mathcal{X}})\}
$
where $\otimes_{t=1}^T (\mathcal{H}_{\mathcal{D}})$ means the $T$-times tensor product of $\mathcal{H}_{\mathcal{D}}$. Also define the condensed notation $\phi(d_{1:T})=\phi(d_1)\otimes ...\otimes \phi(d_T)$, so that the feature map of $\mathcal{H}_T$ is $\phi(d_{1:T})\otimes \phi(x_{1:T})$. 

\begin{theorem}[Decoupling via sequential kernel embeddings: $T>2$]\label{theorem:representation_planning_T}
Suppose the conditions of Lemma~\ref{lemma:id_planning} hold. Further suppose Assumption~\ref{assumption:RKHS} holds and $\gamma_0\in\mathcal{H}_T$. Then
\begin{align*}
    \theta_0^{GF}(d_{1:T})&=\left\langle \gamma_0,  \phi(d_{1:T})\otimes \mu_{1:T}\{d_{1:(T-1)}\} \right\rangle_{\mathcal{H}_T}\!,\\
    \mu_{1:T}\{d_{1:(T-1)}\}&=\int \phi(x_{1:T}) \mathrm{d}P(x_1) \prod_{t=2}^{T} \mathrm{d}P\{x_t|d_{1:(t-1)},x_{1:(t-1)}\}.
\end{align*}
\end{theorem}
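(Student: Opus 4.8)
The plan is to mirror the proof of Theorem~\ref{theorem:representation_planning}, reducing the $T$-fold sequential integral of Lemma~\ref{lemma:id_planning} to a single inner product by peeling off one covariate integral at a time, from the innermost coordinate $x_T$ to the outermost $x_1$. Throughout, the boundedness and measurability conditions of Assumption~\ref{assumption:RKHS} are what license the interchange of integration and inner product: they guarantee that $x_{1:T}\mapsto\phi(x_{1:T})$, and each partially integrated feature map, is a bounded measurable Hilbert-space-valued function and hence Bochner integrable against the relevant conditional distribution.

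First I would record the starting point. By Lemma~\ref{lemma:id_planning} and the tensor product hypothesis $\gamma_0\in\mathcal{H}_T$, the reproducing property gives $\gamma_0(d_{1:T},x_{1:T})=\langle \gamma_0,\phi(d_{1:T})\otimes\phi(x_{1:T})\rangle_{\mathcal{H}_T}$, so that
\begin{equation*}
\theta_0^{GF}(d_{1:T})=\int \langle \gamma_0,\phi(d_{1:T})\otimes\phi(x_{1:T})\rangle_{\mathcal{H}_T}\,\mathrm{d}P(x_1)\prod_{t=2}^{T}\mathrm{d}P\{x_t|d_{1:(t-1)},x_{1:(t-1)}\}.
\end{equation*}
Next I would introduce the family of conditional kernel embeddings
\begin{equation*}
\mu_{x_t}\{d_{1:(t-1)},x_{1:(t-1)}\}=\int \phi(x_t)\,\mathrm{d}P\{x_t|d_{1:(t-1)},x_{1:(t-1)}\},\quad t=2,\ldots,T,
\end{equation*}
each a well-defined element of $\mathcal{H}_{\mathcal{X}}$ since $\sup_{x}\|\phi(x)\|_{\mathcal{H}_{\mathcal{X}}}\leq\kappa_x$. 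Integrating out $x_T$ and exchanging the inner product with the Bochner integral replaces the last factor $\phi(x_T)$ by $\mu_{x_T}\{d_{1:(T-1)},x_{1:(T-1)}\}$; integrating out $x_{T-1}$ then integrates the joint tensor $\phi(x_{T-1})\otimes\mu_{x_T}\{d_{1:(T-1)},x_{1:(T-1)}\}$ (both factors depend on $x_{T-1}$) against $\mathrm{d}P\{x_{T-1}|d_{1:(T-2)},x_{1:(T-2)}\}$, producing a new embedding in $\mathcal{H}_{\mathcal{X}}\otimes\mathcal{H}_{\mathcal{X}}$, and so on. I would formalize this as an induction on the number of integrated coordinates, at each stage pulling the bounded linear functional $\langle\gamma_0,\phi(d_{1:T})\otimes\,\cdot\,\rangle$ through the integral; after all $T$ integrals the nested embeddings assemble into the single sequential kernel embedding $\mu_{1:T}\{d_{1:(T-1)}\}\in\otimes_{t=1}^{T}\mathcal{H}_{\mathcal{X}}$ of the statement.

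That $\mu_{1:T}\{d_{1:(T-1)}\}$ is a genuine element of $\otimes_{t=1}^{T}\mathcal{H}_{\mathcal{X}}$, and not a merely formal symbol, follows from Remark~\ref{remark:bounded2}: the functional $F$ is bounded over $\mathcal{H}$, so Riesz representation furnishes the embedding. I do not expect any individual interchange step to be the difficulty, since each is routine given boundedness; the main obstacle is the recursive bookkeeping. Specifically, to close the induction I must verify that after integrating out $x_t$ the resulting object is again a \emph{bounded, measurable} $(\otimes\mathcal{H}_{\mathcal{X}})$-valued map of the remaining arguments $\{d_{1:(t-1)},x_{1:(t-1)}\}$, so that the next interchange is justified. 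This reduces to the fact that conditional kernel embeddings of bounded measurable integrands are themselves bounded and measurable, which propagates the hypotheses of Assumption~\ref{assumption:RKHS} cleanly through all $T$ stages.
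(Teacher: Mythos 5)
Your proposal is correct and takes essentially the same approach as the paper: substitute the reproducing property $\gamma_0(d_{1:T},x_{1:T})=\langle\gamma_0,\phi(d_{1:T})\otimes\phi(x_{1:T})\rangle_{\mathcal{H}_T}$ into Robins' g-formula and interchange the integral with the inner product, with Bochner integrability (from the bounded kernels of Assumption~\ref{assumption:RKHS}) licensing the interchange. The only difference is presentational --- the paper performs the interchange in a single step over the full product measure, whereas you peel off one coordinate at a time by induction, which in fact mirrors the recursive description of $\mu_{1:T}\{d_{1:(T-1)}\}$ that the paper gives immediately after the theorem when deriving the estimator.
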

See Supplement~\ref{sec:alg_deriv} for the proof. Here, $\mu_{1:T}\{d_{1:(T-1)}\}$ is the sequential kernel embedding in $T$ time periods, generalizing $\mu_{1:2}(d_1)=\int \{ \phi(x_1)\otimes \mu_{x_2}(d_1,x_1)\} \mathrm{d}P(x_1)$. What remains is an account of the structure of $\mu_{1:T}\{d_{1:(T-1)}\}$, which then implies an estimation procedure for $\hat{\mu}_{1:T}\{d_{1:(T-1)}\}$ and hence
$
 \hat{\theta}^{GF}(d_{1:T})=\langle \hat{\gamma},  \phi(d_{1:T})\otimes \hat{\mu}_{1:T}\{d_{1:(T-1)}\} \rangle_{\mathcal{H}_T}
$,
naturally generalizing Algorithm~\ref{algorithm:planning}. 

\subsection{A generalized sequential kernel embedding}

Towards this end, we provide a recursive representation of the sequential kernel embedding $\mu_{1:T}$ $\{d_{1:(T-1)}\}$ that suggests a recursive estimator $\hat{\mu}_{1:T}\{d_{1:(T-1)}\}$. Write
\begin{align*}
    &\mu_{1:T}\{d_{1:(T-1)}\} \\
    &=\int \{\phi(x_1)\otimes ...\otimes \phi(x_{T-1}) \otimes \phi(x_{T})\} \mathrm{d}P(x_1) \prod_{t=2}^{T} \mathrm{d}P\{x_t|d_{1:(t-1)},x_{1:(t-1)}\} \\
    &=\int [\phi(x_1)\otimes ...\otimes \phi(x_{T-1}) \otimes \mu_{T}\{d_{1:(T-1)},x_{1:(T-1)}\}] \mathrm{d}P(x_1) \prod_{t=2}^{T-1} \mathrm{d}P\{x_t|d_{1:(t-1)},x_{1:(t-1)}\} \\
    &=\int [\phi(x_1)\otimes ...\otimes \mu_{(T-1):T}\{d_{1:(T-1)},x_{1:(T-2)}\}] \mathrm{d}P(x_1) \prod_{t=2}^{T-2} \mathrm{d}P\{x_t|d_{1:(t-1)},x_{1:(t-1)}\},
\end{align*}
where 
\begin{align*}
    \mu_{T}\{d_{1:(T-1)},x_{1:(T-1)}\}&=\int \phi(x_T)  \mathrm{d}P\{x_T|d_{1:(T-1)},x_{1:(T-1)}\}, \\
    \mu_{(T-1):T}\{d_{1:(T-1)},x_{1:(T-2)}\} &= \int [\phi(x_{T-1}) \otimes \mu_{T}\{d_{1:(T-1)},x_{1:(T-1)}\}]\mathrm{d}P\{x_{T-1}|d_{1:(T-2)},x_{1:(T-2)}\}.
\end{align*}
Note that $\mu_{T}\{d_{1:(T-1)},x_{1:(T-1)}\}$ is a conditional kernel embedding as before, obtained by projecting $\phi(x_T)$ onto $\phi\{d_{1:(T-1)}\}\otimes \phi\{x_{1:(T-1)}\}$. Meanwhile, $\mu_{(T-1):T}\{d_{1:(T-1)},x_{1:(T-2)}\}$ is a sequential kernel embedding obtained by projecting $[\phi(x_{T-1})\otimes \mu_{T}\{d_{1:(T-1)},x_{1:(T-1)}\}]$ onto $\phi \{d_{1:(T-2)}\}\otimes \phi \{x_{1:(T-2)}\}$.

One can continue recursively defining $\mu_{t:T}\{d_{1:(T-1)},x_{1:(t-1)}\}$ all the way up to $\mu_{1:T}\{d_{1:(T-1)}\}$. This recursive construction implies a recursive estimation procedure in which $\hat{\mu}_{t:T}\{d_{1:(T-1)},x_{1:(t-1)}\}$ is estimated all the way up to $\hat{\mu}_{1:T}\{d_{1:(T-1)}\}$. Altogether, estimating the sequential kernel embedding in this way requires $T-1$ kernel ridge regressions.

\begin{remark}[Recursive regressions]
    Our nonparametric estimators of time-varying dose response curves bear some resemblance to the parametric estimators of time-varying treatment effects in \cite{bang2005doubly}. The procedure of \cite{bang2005doubly} involves recursive regressions augmented with ``clever'' covariates. Our procedure involves recursive regressions of kernel mean embeddings. Future work may formalize the connection.
\end{remark}

\subsection{Simplification via Markov assumptions}

Next, we consider the setting where $T>2$ and auxiliary Markov assumptions hold. We focus on the assumption that the covariates $X_t$ have the Markov property.
\begin{assumption}[Markov property]\label{assumption:markov}
Suppose $P\{x_t|d_{1:(t-1)},x_{1:(t-1)}\}=P(x_t|d_{t-1},x_{t-1})$ for $t\in\{2,...,T\}$.
\end{assumption}
In words, the distribution of $X_t$ only depends on the immediate history $(D_{t-1},X_{t-1})$; previous treatments and covariates are conditionally independent. Assumption~\ref{assumption:markov} simplifies the representation of $\theta_0^{GF}$.

\begin{corollary}[Decoupling via sequential kernel embeddings: Markov]\label{cor:representation_planning_Markov}
Suppose the conditions of Lemma~\ref{lemma:id_planning} hold. Further suppose Assumptions~\ref{assumption:RKHS} and~\ref{assumption:markov} hold, and $\gamma_0\in\mathcal{H}_T$. Then
\begin{align*}
    \theta_0^{GF}(d_{1:T})&=\left\langle \gamma_0,  \phi(d_{1:T})\otimes \mu_{1:T}\{d_{1:(T-1)}\} \right\rangle_{\mathcal{H}_T}\!,\\
    \mu_{1:T}\{d_{1:(T-1)}\} &=\int \phi(x_{1:T}) \mathrm{d}P(x_1) \prod_{t=2}^{T} \mathrm{d}P(x_t|d_{t-1},x_{t-1}).
\end{align*}
\end{corollary}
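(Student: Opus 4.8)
The plan is to derive the corollary directly from Theorem~\ref{theorem:representation_planning_T} by specializing the sequential kernel embedding under the Markov property, rather than repeating the decoupling argument from scratch. First I would observe that the conditions of Lemma~\ref{lemma:id_planning} and Assumption~\ref{assumption:RKHS}, together with $\gamma_0\in\mathcal{H}_T$, are exactly the hypotheses of Theorem~\ref{theorem:representation_planning_T}, so that theorem applies verbatim and yields
$$
\theta_0^{GF}(d_{1:T})=\left\langle \gamma_0,\ \phi(d_{1:T})\otimes \mu_{1:T}\{d_{1:(T-1)}\}\right\rangle_{\mathcal{H}_T},
$$
with $\mu_{1:T}\{d_{1:(T-1)}\}=\int \phi(x_{1:T})\,\mathrm{d}P(x_1)\prod_{t=2}^{T}\mathrm{d}P\{x_t|d_{1:(t-1)},x_{1:(t-1)}\}$. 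The inner-product representation does not depend on any Markov structure, so it is inherited unchanged; only the sequential embedding needs to be rewritten.

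The remaining step is a substitution. Under Assumption~\ref{assumption:markov}, each transition kernel simplifies as $P\{x_t|d_{1:(t-1)},x_{1:(t-1)}\}=P(x_t|d_{t-1},x_{t-1})$ for $t\in\{2,\dots,T\}$. Plugging this identity into the product defining $\mu_{1:T}\{d_{1:(T-1)}\}$ replaces the full-history conditionals by their one-step counterparts and gives
$$
\mu_{1:T}\{d_{1:(T-1)}\}=\int \phi(x_{1:T})\,\mathrm{d}P(x_1)\prod_{t=2}^{T}\mathrm{d}P(x_t|d_{t-1},x_{t-1}),
$$
which is the claimed expression. Equivalently, one may carry the substitution through the recursive construction of $\mu_{t:T}$: each conditional kernel embedding $\mu_{t}\{d_{1:(t-1)},x_{1:(t-1)}\}=\int \phi(x_t)\,\mathrm{d}P\{x_t|d_{1:(t-1)},x_{1:(t-1)}\}$ collapses to $\int \phi(x_t)\,\mathrm{d}P(x_t|d_{t-1},x_{t-1})$, so the entire recursion is driven by one-step transitions only.

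The only technical point to check is that the interchange of integration with the $\mathcal{H}_T$ inner product — which defines $\mu_{1:T}$ as a Bochner integral of feature maps — remains valid after the Markov substitution. This is immediate: Assumption~\ref{assumption:RKHS} bounds $\|\phi(x_t)\|_{\mathcal{H}_{\mathcal{X}}}\leq\kappa_x$ uniformly, so $\|\phi(x_{1:T})\|_{\mathcal{H}_T}\leq\kappa_x^{T}$ is integrable against any probability measure, and in particular against the Markov-factored measure. Hence the Bochner integrals exist exactly as in Theorem~\ref{theorem:representation_planning_T}, and no new analytic work is required. I therefore expect no genuine obstacle: the corollary is a clean specialization, with the substitution being the only substantive move and Bochner integrability the only thing to confirm.
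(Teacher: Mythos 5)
Your proposal is correct and matches the paper's approach: the paper's proof simply states that the result is immediate from Theorem~\ref{theorem:representation_planning_T} together with Assumption~\ref{assumption:markov}, which is exactly the specialization-by-substitution you carry out. Your additional check of Bochner integrability is harmless but not needed beyond what Theorem~\ref{theorem:representation_planning_T} already establishes.
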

See Supplement~\ref{sec:alg_deriv} for the proof. Here, $\mu_{1:T}\{d_{1:(T-1)}\}$ is the sequential kernel embedding in $T$ time periods, with limited dependence across time periods. Finally, we revisit the structure of $\mu_{1:T}\{d_{1:(T-1)}\}$. Write
\begin{align*}
    \mu_{1:T}\{d_{1:(T-1)}\} &=\int \{\phi(x_1)\otimes ...\otimes \phi(x_{T-1}) \otimes \phi(x_{T})\} \mathrm{d}P(x_1) \prod_{t=2}^{T} \mathrm{d}P(x_t|d_{t-1},x_{t-1}) \\
    &=\int \{\phi(x_1)\otimes ...\otimes \phi(x_{T-1}) \otimes \mu_{T}(d_{T-1},x_{T-1})\} \mathrm{d}P(x_1) \prod_{t=2}^{T-1} \mathrm{d}P(x_t|d_{t-1},x_{t-1}) \\
    &=\int [\phi(x_1)\otimes ...\otimes \mu_{(T-1):T}\{d_{(T-2):(T-1)},x_{T-2}\} ]\mathrm{d}P(x_1) \prod_{t=2}^{T-2} \mathrm{d}P(x_t|d_{t-1},x_{t-1}),
\end{align*}
where 
\begin{align*}
    \mu_{T}(d_{T-1},x_{T-1})&=\int \phi(x_T)  \mathrm{d}P(x_T|d_{T-1},x_{T-1}), \\
    \mu_{(T-1):T}\{d_{(T-2):(T-1)},x_{T-2}\} &= \int \{\phi(x_{T-1}) \otimes \mu_{T}(d_{T-1},x_{T-1})\}\mathrm{d}P(x_{T-1}|d_{T-2},x_{T-2}).
\end{align*}
Note that $\mu_{T}(d_{T-1},x_{T-1})$ is a conditional kernel embedding as before, obtained by projecting $\phi(x_T)$ onto $\phi(d_{T-1})\otimes \phi(x_{T-1})$. Meanwhile, $\mu_{(T-1):T}\{d_{(T-2):(T-1)},x_{T-2}\}$ is a sequential kernel embedding obtained by projecting $[\phi(x_{T-1})\otimes \mu_{T}(d_{T-1},x_{T-1})]$ onto $\phi (d_{T-2})\otimes \phi (x_{T-2})$.

One can continue recursively defining $\mu_{t:T}\{d_{(t-1):(T-1)},x_{t-1}\}$ all the way up to $\mu_{1:T}\{d_{1:(T-1)}\}$. This recursive construction implies a recursive estimation procedure as before. Altogether, estimating the sequential kernel embedding in this way requires $T-1$ kernel ridge regressions. Compared to the setting without Markov assumptions, each kernel ridge regression involves fewer regressors.

\section{Identification of incremental responses}\label{sec:id}

In this supplement, we prove identification of incremental response curves under standard identifying assumptions. We also discuss the no-effect scenario.

\subsection{Mediated incremental responses}

\begin{assumption}[Selection on observables for mediation]\label{assumption:selection_mediation}
Assume
\begin{enumerate}
    \item no interference: if $D=d$ then $M=M^{(d)}$; if $D=d$ and $M=m$ then $Y=Y^{(d,m)}$;
    \item conditional exchangeability: $\{Y^{(d,m)}\}\indep D |X$, $\{M^{(d)}\}\indep D |X$, and $\{Y^{(d,m)}\}\indep M|D,X$;
    \item cross world exchangeability: $\{Y^{(d',m)}\} \indep \{M^{(d)}\} |X$;
    \item positivity: if $f(d,x)>0$ then $f(m|d,x)>0$; if $f(x)>0$ then $f(d|x)>0$, where $f(d,x)$, $f(m|d,x)$ $f(x)$, and $f(d|x)$ are densities.
\end{enumerate}
\end{assumption}

No interference is also called the stable unit treatment value assumption when there are no hidden versions of treatments. It rules out network effects, also called spillovers. Conditional exchangeability states that conditional on the covariates, the treatment assignment is as good as random. Moreover, conditional on the treatment and covariates, the mediator assignment is as good as random. Cross world exchangeability states that, conditional on covariates, the potential outcomes under one counterfactual treatment value are independent of the potential mediators under a different counterfactual treatment value. Positivity ensures that there is no covariate stratum $X=x$ such that the treatment has a restricted support, and there is no treatment-covariate stratum $(D,X)=(d,x)$ such that the mediator has a restricted support.

\begin{remark}[Cross world exchangeability]
   The cross world exchangeability assumption is the subject of ongoing debate for reasons discussed by e.g.
 \cite{robins2010alternative,richardson2013single,robins2022interventionist}. The alternative paradigm of interventional counterfactuals described in Remark~\ref{remark:controversy} avoids this assumption. Future work may identify incremental response curves in the interventional paradigm.
\end{remark}

\begin{proof}[Proof of Lemma~\ref{lemma:id_mediation}]
We extend the standard argument from $\theta_0^{ME}$ to $\theta_0^{ME,\nabla}$. For the proof, we adopt the nonseparable model notation
$
Y^{(d',m)}=Y(d',m,\eta)$ and $M^{(d)}=M(d,\eta)
$,
where $\eta$ is unobserved heterogeneity. Our innovation is to restructure the argument so that taking the derivative with respect to $d'$ does not lead to additional factors, as it otherwise would by the chain rule. We proceed in steps.

\begin{enumerate}
    \item Regression. By definition,
    $$
    \gamma_0(d',m,x)=E(Y|D=d',M=m,X=x)=\int Y(d',m,\eta)\mathrm{d}P(\eta|d',m,x).
    $$
    By the assumed independences, 
    $
    P(\eta|d',m,x)=P(\eta|d',x)=P(\eta|x).
    $
    In summary,
    $$
      \gamma_0(d',m,x)=\int Y(d',m,\eta)\mathrm{d}P(\eta|x),\quad 
    \nabla_{d'}\gamma_0(d',m,x)=\int \nabla_{d'}Y(d',m,\eta)\mathrm{d}P(\eta|x).
    $$
    \item Target expression. Beginning with the desired expression,
   \begin{align*}
        RHS&=\int \nabla_{d'}\gamma_0(d',m,x)\mathrm{d}P(m|d,x)\mathrm{d}P(x) \\
        &=\int \nabla_{d'}Y(d',m,\eta)\mathrm{d}P(\eta|x)\mathrm{d}P(m|d,x)\mathrm{d}P(x).
   \end{align*}
    Note that
    $
    P(m|d,x)=P\{M(d,\eta)|d,x\}=P\{M(d,\eta)|x\}.
    $
    Since $\{M^{(d)},Y^{(d',m)}\}\indep D|X$ implies $(Y^{d',m})\indep D|\{M^{(d)}\},X$
    $$
    P(\eta|x)=P(\eta|d,x)=P\{\eta|d,M(d,\eta),x\}=P\{\eta|M(d,\eta),x\}.
    $$
    In summary,
    $$
    RHS=\int \nabla_{d'}Y\{d',M(d,\eta),\eta\}\mathrm{d}P\{\eta|M(d,\eta),x\}\mathrm{d}P\{M(d,\eta)|x\}\mathrm{d}P(x).
    $$
    Conveniently,
    \begin{align*}
         &\int P\{\eta|M(d,\eta),x\}\mathrm{d}P\{M(d,\eta)|x\}=\int  P\{\eta, M(d,\eta)|x\}\mathrm{d}\{M(d,\eta)\}= P(\eta|x);\\
    &P(\eta|x)P(x)=P(\eta,x).
    \end{align*}
    Therefore
    $$
    RHS=\int \nabla_{d'}Y\{d',M(d,\eta),\eta\}\mathrm{d}P(\eta,x)=\int \nabla_{d'}Y\{d',M(d,\eta),\eta\}\mathrm{d}P(\eta)=LHS.
    $$
\end{enumerate}
\end{proof}

\subsection{Time-varying incremental responses}

\begin{assumption}[Sequential selection on observables]\label{assumption:selection_planning}
Assume
\begin{enumerate}
    \item no interference: if $D_{1:T}=d_{1:T}$ then $Y=Y^{(d_{1:T})}$;
    \item conditional exchangeability: $\{Y^{(d_{1:T})}\} \indep D_t|D_{1:(t-1)},X_{1:t}$,  for all $ 1\leq t\leq T$;
    \item positivity: if $f\{d_{1:(t-1)},x_{1:t}\}>0$ then $f\{d_t|d_{1:(t-1)},x_{1:t}\}>0$, where $f\{d_{1:(t-1)},x_{1:t}\}$ and $f\{d_t|d_{1:(t-1)},x_{1:t}\}$ are densities.
\end{enumerate}
\end{assumption}

Assumption~\ref{assumption:selection_planning} is a sequential generalization of the standard selection on observables \cite{rosenbaum1983central}. To handle $\theta_0^{DS}$, we use a standard assumption in transfer learning \cite{pearl2014external}.

\begin{assumption}[Distribution shift]\label{assumption:covariate}
Assume
\begin{enumerate}
    \item $\tilde{P}(Y,D_{1:T},X_{1:T})=P(Y|D_{1:T},X_{1:T})\tilde{P}(D_{1:T},X_{1:T})$;
    \item $\tilde{P}(D_{1:T},X_{1:T})$ is absolutely continuous with respect to $P(D_{1:T},X_{1:T})$.
\end{enumerate}
\end{assumption}
The difference between $P$ and $\tilde{P}$ is only in the distribution of the treatments and covariates. Moreover, the support of $P$ contains the support of $\tilde{P}$.  An immediate consequence is that the regression function $\gamma_0(d_{1:T},x_{1:T})=E(Y|D_{1:T}=d_{1:T},X_{1:T}=x_{1:T})$ remains the same across the different populations $P$ and $\tilde{P}$.

\begin{proof}[Proof of Lemma~\ref{lemma:id_planning}]
We extend the standard argument from $\theta_0^{GF}$ to $\theta_0^{GF,\nabla}$. For clarity, we focus on the case with $T=2$ and we adopt the nonseparable model notation
$
Y^{(d_1,d_2)}=Y(d_1,d_2,\eta)
$,
where $\eta$ is unobserved heterogeneity. As before, our innovation is to restructure the argument so that taking the derivative with respect to $d_2$ does not lead to additional factors, as it otherwise would by chain rule. The argument for $\theta_0^{DS}$ is identical. We proceed in steps.

\begin{enumerate}
    \item Regression. 
    By definition,
    \begin{align*}
         \gamma_0(d_1,d_2,x_1,x_2)
         &=E(Y|D_1=d_1,D_2=d_2,X_1=x_1,X_2=x_2) \\
         &=\int Y(d_1,d_2,\eta)\mathrm{d}P(\eta|d_1,d_2,x_1,x_2).
    \end{align*}
    By the assumed independences, 
    $
    P(\eta|d_1,d_2,x_1,x_2)=P(\eta|d_1,x_1,x_2).
    $
    In summary,
    \begin{align*}
     \gamma_0(d_1,d_2,x_1,x_2)
     &=\int Y(d_1,d_2,\eta)\mathrm{d}P(\eta|d_1,x_1,x_2),\\
    \nabla_{d_2}\gamma_0(d_1,d_2,x_1,x_2)&=\int \nabla_{d_2}Y(d_1,d_2,\eta)\mathrm{d}P(\eta|d_1,x_1,x_2).
     \end{align*}
    \item Target expression. 
    Beginning with the desired expression,
 \begin{align*}
     RHS&=\int \nabla_{d_2}\gamma_0(d_1,d_2,x_1,x_2)\mathrm{d}P(x_1)\mathrm{d}P(x_2|d_1,x_1) \\
        &=\int \nabla_{d_2}Y(d_1,d_2,\eta)\mathrm{d}P(\eta|d_1,x_1,x_2)\mathrm{d}P(x_1)\mathrm{d}P(x_2|d_1,x_1).
     \end{align*}
    Conveniently,
    \begin{align*}
    &\int P(\eta|d_1,x_1,x_2)\mathrm{d}P(x_2|d_1,x_1)= \int P(x_2,\eta|d_1,x_1)\mathrm{d}x_2=P(\eta|x_1); \\
   &P(\eta|x_1)P(x_1)=P(\eta,x_1).
     \end{align*}
    Therefore
    $$
    RHS=\int \nabla_{d_2}Y(d_1,d_2,\eta)\mathrm{d}P(\eta,x_1)=\int \nabla_{d_2}Y(d_1,d_2,\eta)\mathrm{d}P(\eta)=LHS.
    $$
\end{enumerate}
\end{proof}

\subsection{No effect of second dose}

Finally, we provide a formal result to support Remark~\ref{remark:no}

\begin{lemma}[No effect of second dose]
    Suppose Assumption~\ref{assumption:selection_planning} holds and that the second dose has no effect, i.e. 
    $E\{Y^{(d_1,d_2)}|D_1=d_1,X_1=x_1,X_2=x_2\}=E\{Y^{(d_1)}|D_1=d_1,X_1=x_1,X_2=x_2\}$. Then $\gamma_0(d_1,d_2,x_1,x_2)=E(Y|D_1=d_1,X_1=x_1,X_2=x_2)$ and $\theta_0^{GF}(d_1,d_2)=\int E(Y|D=d_1,X=x_1)\mathrm{d}P(x_1)$.
\end{lemma}

\begin{proof}
  By definition of $\gamma_0$, no interference, conditional exchangeability, no effect, and no interference,
\begin{align*}
    \gamma_0(d_1,d_2,x_1,x_2)
    &=E(Y|D_1=d_1,D_2=d_2,X_1=x_1,X_2=x_2) \\
    &=E\{Y^{(d_1,d_2)}|D_1=d_1,D_2=d_2,X_1=x_1,X_2=x_2\} \\
    &=E\{Y^{(d_1,d_2)}|D_1=d_1,X_1=x_1,X_2=x_2\} \\
     &=E\{Y^{(d_1)}|D_1=d_1,X_1=x_1,X_2=x_2\} \\
     &=E(Y|D_1=d_1,X_1=x_1,X_2=x_2).
\end{align*}
Appealing to Lemma~\ref{lemma:id_planning}, this result, and the law of iterated expectations,
\begin{align*}
    \theta_0(d_1,d_2)
    &=\int \gamma_0(d_1,d_2,x_1,x_2)\mathrm{d}P(x_2|d_1,x_1)\mathrm{d}P(x_1) \\
    &=\int E(Y|D_1=d_1,X_1=x_1,X_2=x_2)\mathrm{d}P(x_2|d_1,x_1)\mathrm{d}P(x_1) \\
    &=\int E(Y|D=d_1,X=x_1)\mathrm{d}P(x_1).
\end{align*}
\end{proof}
\section{Algorithm derivation}\label{sec:alg_deriv}

In this supplement, we derive the closed form solutions for (i) mediated responses, (ii) time-varying dose responses, and (iii) counterfactual distributions. We also provide alternative closed form expressions that run faster in statistical software.

\subsection{Mediated response curves}

\begin{proof}[Proof of Theorem~\ref{theorem:representation_mediation}]
In Assumption~\ref{assumption:RKHS}, we impose that the scalar kernels are bounded. This assumption has several implications. First, the feature maps are Bochner integrable \cite[Definition A.5.20]{steinwart2008support}. Bochner integrability permits us to interchange the expectation and inner product. Second, the kernel embeddings exist. Third, the product kernel is also bounded and hence the tensor product RKHS inherits these favorable properties. Therefore
\begin{align*}
    \theta_0^{ME}(d,d')&= \int \gamma_0(d',m,x) \mathrm{d}P(m|d,x)\mathrm{d}P(x)\\
    &=\int \langle \gamma_0, \phi(d')\otimes \phi(m)\otimes \phi(x)\rangle_{\mathcal{H}}  \mathrm{d}P(m|d,x)\mathrm{d}P(x) \\
    &= \int \left\langle \gamma_0, \phi(d')\otimes \int \phi(m)\mathrm{d}P(m|d,x) \otimes \phi(x)\right\rangle_{\mathcal{H}}  \mathrm{d}P(x) \\
    &=\int\left\langle \gamma_0, \phi(d')\otimes \mu_{m}(d,x) \otimes  \phi(x) \right\rangle_{\mathcal{H}}\mathrm{d}P(x)  \\
    &=\left\langle \gamma_0, \phi(d')\otimes \int \{\mu_m(d,x) \otimes \phi(x)\}\mathrm{d}P(x) \right\rangle_{\mathcal{H}}\!.
\end{align*}
By \cite[Lemma 4.34]{steinwart2008support}, the derivative feature map $\nabla_d\phi(d)$ exists, is continuous, and is Bochner integrable since $\kappa_d'<\infty$. Therefore the derivation remains valid for incremental responses.
\end{proof}

\begin{proof}[Derivation of Algorithm~\ref{algorithm:mediation}]
By standard arguments \cite{kimeldorf1971some}
\begin{align*}
\hat{\gamma}(d,m,x)&=
\langle \hat{\gamma}, \phi(d)\otimes \phi(m)\otimes \phi(x) \rangle_{\mathcal{H}}\\
&=Y^{\top}(K_{DD}\odot K_{MM}\odot K_{XX}+n\lambda I)^{-1}(K_{Dd}\odot K_{Mm}\odot K_{Xx}).  
\end{align*}
By \cite[Algorithm 1]{singh2019kernel}, write the conditional kernel embedding as 
$$
    \hat{\mu}_{m}(d,x)=K_{\cdot M}(K_{DD}\odot K_{XX}+n\lambda_1 I)^{-1}(K_{Dd}\odot K_{Xx}).
    $$
    Therefore
    $$
    \frac{1}{n}\sum_{i=1}^n[\hat{\mu}_{m}(d,X_i)\otimes \phi(X_i)]=\frac{1}{n}\sum_{i=1}^n[\{K_{\cdot M}(K_{DD}\odot K_{XX}+n\lambda_1 I)^{-1}(K_{Dd}\odot K_{Xx_i})\}\otimes \phi(X_i)],
    $$
    and
    \begin{align*}
        \hat{\theta}^{ME}(d,d')
        &=\langle \hat{\gamma}, \phi(d')\otimes \frac{1}{n}\sum_{i=1}^n[\hat{\mu}_m(d,X_i)\otimes \phi(X_i)]  \rangle_{\mathcal{H}} \\
        &=\frac{1}{n}\sum_{i=1}^n Y^{\top}(K_{DD}\odot K_{MM}\odot K_{XX}+n\lambda I)^{-1} \\
       &\quad [K_{Dd'}\odot  \{K_{MM}(K_{DD}\odot K_{XX}+n\lambda_1 I)^{-1}(K_{Dd}\odot K_{Xx_i})\}\odot K_{Xx_i}].   
    \end{align*}
    For incremental responses, simply replace $\hat{\gamma}(d,m,x)$ with
\begin{align*}
\nabla_d\hat{\gamma}(d,m,x)&=
\langle \hat{\gamma}, \nabla_d \phi(d)\otimes \phi(m) \otimes \phi(x) \rangle_{\mathcal{H}} \\
&=Y^{\top}(K_{DD}\odot K_{MM} \odot K_{XX}+n\lambda I)^{-1}(\nabla_d K_{Dd}\odot K_{Mm}\odot K_{Xx}).
\end{align*}
\end{proof}

\begin{algo}[Details for Algorithm~\ref{algorithm:mediation_dml}]
For simplicity, we abstract from sample splitting. Denote the kernel matrices by
$
K_{DD}, K_{MM}, K_{XX}\in\mathbb{R}^{n\times n}
$
. Then
\begin{enumerate}
    \item $\hat{\gamma}(d,m,x)=Y^{\top}(K_{DD}\odot K_{MM}\odot K_{XX}+n\lambda I)^{-1}(K_{Dd}\odot K_{Mm}\odot K_{Xx}) $,
    \item $\hat{\pi}(x)=D^{\top}(K_{XX}+n\lambda_2 I)^{-1}K_{Xx}$,
    \item $\hat{\rho}(m,x)=D^{\top}(K_{MM}\odot K_{XX}+n\lambda_3 I)^{-1}(K_{Mm}\odot K_{Xx}) $.
\end{enumerate}
\end{algo}

\subsection{Time-varying dose response curves}

\begin{proof}[Proof of Theorem~\ref{theorem:representation_planning}]
Assumption~\ref{assumption:RKHS} implies Bochner integrability, which permits us to interchange the expectation and inner product. Therefore
\begin{align*}
    \theta_0^{GF}(d_1,d_2)
    &= \int \gamma_0(d_1,d_2,x_1,x_2) \mathrm{d}P(x_2|d_1,x_1)\mathrm{d}P(x_1)\\
    &=\int \langle \gamma_0, \phi(d_1)\otimes\phi(d_2) \otimes \phi(x_1)\otimes \phi(x_2)\rangle_{\mathcal{H}}  \mathrm{d}P(x_2|d_1,x_1)\mathrm{d}P(x_1) \\
    &= \int \left\langle \gamma_0,  \phi(d_1)\otimes\phi(d_2) \otimes \phi(x_1)\otimes \int \phi(x_2) \mathrm{d}P(x_2|d_1,x_1)\right\rangle_{\mathcal{H}}  \mathrm{d}P(x_1)\\
    &=\int\left\langle \gamma_0,  \phi(d_1)\otimes\phi(d_2) \otimes \phi(x_1)\otimes \mu_{x_2}(d_1,x_1) \right\rangle_{\mathcal{H}}\mathrm{d}P(x_1)\\
    &=\left\langle \gamma_0,  \phi(d_1)\otimes\phi(d_2) \otimes \int \{\phi(x_1)\otimes \mu_{x_2}(d_1,x_1)\} \mathrm{d}P(x_1) \right\rangle_{\mathcal{H}}\!.
\end{align*}
The argument for $\theta_0^{DS}$ is identical.
\end{proof}

\begin{proof}[Proof of Theorem~\ref{theorem:representation_planning_T}]
Assumption~\ref{assumption:RKHS} implies Bochner integrability, which permits us to interchange the expectation and inner product. Therefore
\begin{align*}
    \theta_0^{GF}(d_{1:T})
    &= \int \gamma_0(d_{1:T},x_{1:T}) \mathrm{d}P(x_1) \prod_{t=2}^{T} \mathrm{d}P\{x_t|d_{1:(t-1)},x_{1:(t-1)}\}\\
    &=\int \langle \gamma_0, \phi(d_{1:T}) \otimes \phi(x_{1:T})\rangle_{\mathcal{H}_T}  \mathrm{d}P(x_1) \prod_{t=2}^{T} \mathrm{d}P\{x_t|d_{1:(t-1)},x_{1:(t-1)}\} \\
    &= \left\langle \gamma_0,  \phi(d_{1:T})  \otimes \int \phi(x_{1:T}) \mathrm{d}P(x_1) \prod_{t=2}^{T} \mathrm{d}P\{x_t|d_{1:(t-1)},x_{1:(t-1)}\} \right\rangle_{\mathcal{H}_T} \\
    &=\langle \gamma_0,  \phi(d_{1:T}) \otimes \mu_{1:T}\{d_{1:(T-1)}\}  \rangle_{\mathcal{H}_T}.
\end{align*}
\end{proof}

\begin{proof}[Proof of Corollary~\ref{cor:representation_planning_Markov}]
The result is immediate from Theorem~\ref{theorem:representation_planning_T} and Assumption~\ref{assumption:markov}.
\end{proof}

\begin{proof}[Derivation of Algorithm~\ref{algorithm:planning}]
By standard arguments \cite{kimeldorf1971some}
\begin{align*}
&\hat{\gamma}(d_1,d_2,x_1,x_2) \\
&=
\langle \hat{\gamma}, \phi(d_1)\otimes \phi(d_2)\otimes \phi(x_1)\otimes \phi(x_2) \rangle_{\mathcal{H}}\\
&=Y^{\top}(K_{D_1D_1}\odot K_{D_2D_2}\odot K_{X_1X_1}\odot K_{X_2X_2}+n\lambda I)^{-1} (K_{D_1d_1}\odot K_{D_2d_2}\odot  K_{X_1x_1}\odot K_{X_2x_2}).    \end{align*}
By \cite[Algorithm 1]{singh2019kernel}, write the conditional kernel embedding as 
$$
    \hat{\mu}_{x_2}(d_1,x_1)=K_{\cdot X_2}(K_{D_1D_1}\odot K_{X_1X_1}+n\lambda_4 I)^{-1}(K_{D_1d_1}\odot K_{X_1x_1}).
    $$
    Therefore
    \begin{align*}
        &\frac{1}{n}\sum_{i=1}^n[\phi(X_{1i}) \otimes \hat{\mu}_{x_2}(d_1,X_{1i}) ]\\
        &=\frac{1}{n}\sum_{i=1}^n[\phi(X_{1i})\otimes\{K_{\cdot X_2}(K_{D_1D_1}\odot K_{X_1X_1}+n\lambda_4 I)^{-1}(K_{D_1d_1}\odot K_{X_1x_{1i}})\} ]
    \end{align*}
    and
    \begin{align*}
        &\hat{\theta}^{GF}(d_1,d_2)\\
        &=\langle \hat{\gamma}, \phi(d_1)\otimes\phi(d_2)\otimes \frac{1}{n}\sum_{i=1}^n[\phi(X_{1i})\otimes \hat{\mu}_{x_2}(d_1,X_{1i})]  \rangle_{\mathcal{H}} \\
        &=\frac{1}{n}\sum_{i=1}^n Y^{\top}(K_{D_1D_1}\odot K_{D_2D_2}\odot K_{X_1X_1}\odot K_{X_2X_2}+n\lambda I)^{-1} \\
        &\quad [K_{D_1d_1}\odot K_{D_2d_2}\odot  K_{X_1x_{1i}}\odot \{K_{X_2 X_2}(K_{D_1D_1}\odot K_{X_1X_1}+n\lambda_4 I)^{-1} (K_{D_1d_1}\odot K_{X_1x_{1i}})\}].
    \end{align*}
     The argument for $\theta_0^{DS}$ is identical.
\end{proof}

\begin{algo}[Details for Algorithm~\ref{algorithm:planning_dml}]
For simplicity, we abstract from sample splitting. Denote the kernel matrices by
$
K_{D_1D_1},K_{D_2D_2},K_{X_1X_1}, K_{X_2X_2}\in\mathbb{R}^{n\times n}
$
calculated from observations drawn from the population $P$. Then
\begin{enumerate}
    \item $\hat{\gamma}(d_1,d_2,x_1,x_2)=Y^{\top}(K_{D_1D_1}\odot K_{D_2D_2}\odot K_{X_1X_1}\odot K_{X_2X_2}+n\lambda I)^{-1}(K_{D_1d_1}\odot K_{D_2d_2}\odot K_{X_1x_1}\odot K_{X_2x_2}) $,
    \item $\hat{\pi}(x_1)=D_1^{\top}(K_{X_1X_1}+n\lambda_6 I)^{-1}K_{X_1x_1}$,
    \item $\hat{\rho}(d_1,x_1,x_2)=D_2^{\top}(K_{D_1D_1}\odot K_{X_1X_1}\odot K_{X_2X_2}+n\lambda_7 I)^{-1}(K_{D_1d_1}\odot K_{X_1x_1}\odot K_{X_2x_2}) $.
\end{enumerate}
\end{algo}

\subsection{Counterfactual distributions}

\begin{proof}[Proof of Theorem~\ref{theorem:representation_dist}]
The argument is analogous to those in Theorems~\ref{theorem:representation_mediation} and~\ref{theorem:representation_planning}, recognizing that in the distributional setting
\begin{align*}
    \gamma_0(d,m,x)&= E_8^*\{\phi(d)\otimes \phi(m) \otimes \phi(x)\}, \\
     \gamma_0(d_1,d_2,x_1,x_2)&= E_9^*\{\phi(d_1)\otimes \phi(d_2) \otimes \phi(x_1)\otimes \phi(x_2)\}.
\end{align*}
\end{proof}

\begin{proof}[Derivation of Algorithm~\ref{algorithm:dist}]
The argument is analogous to those in Algorithms~\ref{algorithm:mediation} and~\ref{algorithm:planning} appealing to \cite[Algorithm 1]{singh2019kernel}, which gives
\begin{align*}
    \hat{\gamma}(d,m,x)
    &=\hat{E}_8^*[\phi(d)\otimes \phi(m)\otimes  \phi(x)] \\
    &=K_{\cdot Y}(K_{DD}\odot K_{MM}\odot K_{XX}+n\lambda_8 I)^{-1}(K_{Dd}\odot K_{Mm}\odot K_{Xx}),\\
    \hat{\gamma}(d_1,d_2,x_1,x_2)&=\hat{E}_9^*\{\phi(d_1)\otimes \phi(d_2) \otimes \phi(x_1) \otimes \phi(x_2)\}\\
       &=K_{\cdot Y}(K_{D_1D_1}\odot K_{D_2D_2} \odot  K_{X_1X_1}\odot K_{X_2X_2} +n\lambda_9 I)^{-1}\\
       &\quad (K_{D_1d_1}\odot K_{D_2d_2}\odot K_{X_1x_1}\odot K_{X_2x_2}).
\end{align*}
\end{proof}

\subsection{Alternative closed form}

While the closed form expressions above are intuitive, alternative closed form expressions may run faster in statistical software. The expressions for $\hat{\theta}^{ME}$ in Algorithm~\ref{algorithm:mediation} and for $\hat{\theta}^{GF}$ and $\hat{\theta}^{DS}$ in Algorithm~\ref{algorithm:planning} require the summation over $n$ terms, each of which is a vector in $\mathbb{R}^n$. This step is a computational bottleneck since looping is much slower than matrix multiplication in many modern programming languages such as \verb|MATLAB| and \verb|Python|. Here, we present alternative closed form expressions which replace the summations with matrix multiplications. In experiments, these alternative expressions run about $10^2$ times faster. Although both the original and alternative expressions involve $O(n^3)$ computations, the alternative expressions run faster in numerical packages due to the efficient implementation of matrix multiplication.
 
We present a lemma to derive the alternative expressions. Let $\1_n \in \mathbb{R}^n$ be vector of ones.
\begin{lemma}[Matrix and elementwise products] \label{lem:matrix_form}
If $A \in \mathbb{R}^{n\times n}$ and $a,b \in \mathbb{R}^n$ then
$$
A(a \odot b) = \{A \odot (\1_n a^\top)\}b,\quad (Aa) \odot b = \{A \odot (b \1_n^\top)\}a.
$$
\end{lemma}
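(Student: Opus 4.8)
The plan is to verify both identities entrywise: each side is a vector in $\mathbb{R}^n$, so it suffices to check that the $i$th component of the left-hand side equals the $i$th component of the right-hand side for every $i$. The one conceptual point to fix first is how a Hadamard product against a rank-one ``broadcasting'' matrix rescales the columns or rows of $A$. Concretely, $\1_n a^\top$ is the matrix whose $(i,j)$ entry equals $a_j$, i.e. it is constant down each column, whereas $b\,\1_n^\top$ is the matrix whose $(i,j)$ entry equals $b_i$, i.e. it is constant across each row. Reading off these two entry patterns correctly is the only step that needs attention.

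For the first identity I would compute the left side directly as $[A(a\odot b)]_i = \sum_{j} A_{ij}\,a_j\,b_j$. On the right side, setting $B = A\odot(\1_n a^\top)$ gives $B_{ij}=A_{ij}\,a_j$ by the column-broadcast observation, and hence $[Bb]_i = \sum_j B_{ij}\,b_j = \sum_j A_{ij}\,a_j\,b_j$, which matches. For the second identity I would similarly write $[(Aa)\odot b]_i = b_i\sum_j A_{ij}\,a_j$. Setting $C = A\odot(b\,\1_n^\top)$ gives $C_{ij}=A_{ij}\,b_i$ by the row-broadcast observation, so $[Ca]_i = \sum_j C_{ij}\,a_j = b_i\sum_j A_{ij}\,a_j$, again matching. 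Both verifications are immediate once the broadcasting entries are identified.

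There is essentially no obstacle here: the result is a routine rearrangement of a double sum, and the proof is a few lines of index bookkeeping. The only place one could slip is confusing which index is broadcast in $\1_n a^\top$ versus $b\,\1_n^\top$, so I would state the two entry patterns explicitly before the computation to keep the argument unambiguous. No deeper structure (positive definiteness, boundedness, or any RKHS property) is needed, since the lemma is a purely linear-algebraic identity used only to reorganize the closed-form estimators of Algorithms~\ref{algorithm:mediation} and~\ref{algorithm:planning} into a form that replaces the explicit sum over $n$ terms with a single matrix multiplication.
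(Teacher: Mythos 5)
Your proof is correct and takes the same approach as the paper, which simply states that the identities are immediate from the definitions of the matrix and elementwise products; you have just made the entrywise verification explicit. The index bookkeeping is accurate, including the correct identification of $(\1_n a^\top)_{ij}=a_j$ versus $(b\,\1_n^\top)_{ij}=b_i$.
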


\begin{proof}
The result is immediate from the definitions of the matrix and elementwise products.
\end{proof}

Equipped with Lemma~\ref{lem:matrix_form}, we provide alternative expressions for $\hat{\theta}^{ME}$, $\hat{\theta}^{GF}$, and $\hat{\theta}^{DS}$.

\begin{algo}[Alternative estimation of mediated response curves]\label{algorithm:mediation2}
Denote the kernel matrices by
$
K_{DD}, K_{MM}, K_{XX}\in\mathbb{R}^{n\times n}
$. Mediated response estimators have closed forms based on
\begin{align*}
    \hat{\theta}^{ME}(d,d')
    &=Y^{\top}(K_{DD}\odot K_{MM}\odot K_{XX}+n\lambda I)^{-1} \\
    &\quad \left[K_{Dd'}\odot  \left\{K_{MM}(K_{DD}\odot K_{XX}+n\lambda_1 I)^{-1} \odot \frac1n K^2_{XX} \right\} K_{Dd}\right]\!, 
\end{align*}
where $(\lambda,\lambda_1)$ are ridge regression penalty parameters.
\end{algo}
\begin{proof}
Denote
\begin{align*}
    R_1 &= Y^{\top}(K_{DD}\odot K_{MM}\odot K_{XX}+n\lambda I)^{-1},\quad R_2 = K_{MM}(K_{DD}\odot K_{XX}+n\lambda_1 I )^{-1}.
\end{align*}
Then by Algorithm~\ref{algorithm:mediation} and Lemma~\ref{lem:matrix_form}, we have
\begin{align*}
\hat{\theta}^{ME}(d,d')
    &= \frac1n \sum_{i=1}^n R_1 \{K_{Dd'}\odot  R_2(K_{Xx_i} \odot K_{Dd}) \odot K_{Xx_i}\}\\
    &= \frac1n \sum_{i=1}^n R_1 [K_{Dd'}\odot  \{(R_2 \odot \1_n K^\top_{Xx_i}) K_{Dd}\} \odot K_{Xx_i}]\\
    &= \frac1n \sum_{i=1}^n R_1 \{K_{Dd'}\odot  (R_2 \odot \1_n K^\top_{Xx_i} \odot K_{Xx_i} \1^\top_n) K_{Dd}\}\\
    &= R_1 \left\{K_{Dd'}\odot  \left(R_2 \odot \frac1n \sum_{i=1}^n K_{Xx_i} K^\top_{Xx_i}\right) K_{Dd}\right\} \\
    &= R_1 \left\{K_{Dd'}\odot  \left(R_2 \odot \frac1n K_{XX}^2\right) K_{Dd}\right\}\!.
\end{align*}
Note that we use the identity
$
    \1_n K^\top_{Xx_i} \odot K_{Xx_i} \1^\top_n =  K_{Xx_i} K_{Xx_i}^\top.
$
\end{proof}

\begin{algo}[Alternative estimation of time-varying response curves]\label{algorithm:planning2}
Denote the kernel matrices by
$
K_{D_1D_1}$, $K_{D_2D_2}$, $K_{X_1X_1}$, $K_{X_2X_2}\in\mathbb{R}^{n\times n}
$
calculated from observations drawn from $P$.  Denote the kernel matrices
$
K_{\tilde{D}_1\tilde{D}_1}$, $K_{\tilde{D}_2\tilde{D}_2}$, $K_{\tilde{X}_1\tilde{X}_1}$, $K_{\tilde{X}_2\tilde{X}_2}\in\mathbb{R}^{n\times n}
$
calculated from observations drawn from $\tilde{P}$.  Time-varying response curve estimators have the closed form solutions
\begin{enumerate}
    \item $\hat{\theta}^{GF}= Y^{\top}(K_{D_1D_1}\odot K_{D_2D_2}\odot K_{X_1X_1}\odot K_{X_2X_2}+n\lambda I)^{-1}$ \\ $
        [K_{D_1d_1}\odot K_{D_2d_2}\odot \{K_{X_2 X_2}(K_{D_1D_1}\odot K_{X_1X_1}+n\lambda_4 I)^{-1} \odot \frac1n K^2_{X_1X_1}\}K_{D_1d_1}]  $,
    \item $\hat{\theta}^{DS}= Y^{\top}(K_{D_1D_1}\odot K_{D_2D_2}\odot K_{X_1X_1}\odot K_{X_2X_2}+n\lambda I)^{-1}$ \\$
        [K_{D_1d_1}\odot K_{D_2d_2}\odot  \{K_{X_2 \tilde{X}_2}(K_{\tilde{D}_1\tilde{D}_1}\odot K_{\tilde{X}_1\tilde{X}_1}+n\lambda_5 I)^{-1} \odot \frac1n (K_{X_1 \tilde X_1}K_{\tilde X_1 \tilde X_1})\}K_{\tilde{D}_1d_1}] $,
\end{enumerate}
where $(\lambda,\lambda_4,\lambda_5)$ are ridge regression penalty parameters.
\end{algo}

\begin{proof}
The argument is analogous to the derivation of Algorithm~\ref{algorithm:mediation2}.
\end{proof}
\section{Nonparametric consistency proofs}\label{sec:consistency_proof}

In this supplement, we (i) present an equivalent definition of smoothness and specialize the smoothness condition in various settings; (ii) present technical lemmas for regression, unconditional kernel embeddings, and conditional kernel embeddings; (iii) appeal to these lemmas to prove uniform consistency for mediation analysis and time-varying dose response curves as well as convergence in distribution for counterfactual distributions.

\subsection{Representations of smoothness}

\textbf{Alternative representations.}
Consider the notation of Section~3. 

\begin{lemma}[Smoothness; Remark 2 of \cite{caponnetto2007optimal}]
If the input measure and Mercer measure are the same then there are equivalent formalisms for the smoothness conditions in Assumptions~\ref{assumption:smooth_gamma} and~\ref{assumption:smooth_op}.
\begin{enumerate}
    \item Smoothness in Assumption~\ref{assumption:smooth_gamma} holds if and only if the regression $\gamma_0$ is a particularly smooth element of $\mathcal{H}$. Formally, define the covariance operator $T$ for $\mathcal{H}$.
    We assume there exists $ g\in \mathcal{H}$ such that $\gamma_0=T^{\frac{c-1}{2}}g$, $c\in(1,2]$, and $\|g\|^2_{\mathcal{H}}\leq\zeta$.
    \item Smoothness in Assumption~\ref{assumption:smooth_op} holds if and only if the conditional expectation operator $E_{\ell}$ is a particularly smooth element of $\mathcal{L}_2(\mathcal{H}_{\mathcal{A}_{\ell}},\mathcal{H}_{\mathcal{B}_{\ell}})$. Formally, define the covariance operator $T_{\ell}=E\{\phi(B_{\ell})\otimes \phi(B_{\ell})\}$ for $\mathcal{L}_2(\mathcal{H}_{\mathcal{A}_{\ell}},\mathcal{H}_{\mathcal{B}_{\ell}})$.
    We assume there exists $ G_{\ell}\in \mathcal{L}_2(\mathcal{H}_{\mathcal{A}_{\ell}},\mathcal{H}_{\mathcal{B}_{\ell}})$ such that $E_{\ell}=(T_{\ell})^{\frac{c_{\ell}-1}{2}}\circ G_{\ell}$, $c_{\ell}\in(1,2]$, and $\|G_{\ell}\|^2_{\mathcal{L}_2(\mathcal{H}_{\mathcal{A}_{\ell}},\mathcal{H}_{\mathcal{B}_{\ell}})}\leq\zeta_{\ell}$.
\end{enumerate}
\end{lemma}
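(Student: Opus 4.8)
The plan is to reduce both claims to one spectral diagonalization and then match coordinates. First I would recall the spectral data from Section~\ref{sec:rkhs_main}: the integral operator $L$ on $\mathbb{L}^2_{\nu}(\mathcal{W})$ has eigenpairs $(\eta_j,\varphi_j)$ with $(\varphi_j)$ orthonormal in $\mathbb{L}^2_{\nu}(\mathcal{W})$, and by Mercer $k(w,w')=\sum_j \eta_j\varphi_j(w)\varphi_j(w')$. The crucial preliminary step is to relate the covariance operator $T=\int \phi(w)\otimes\phi(w)\,\mathrm{d}\nu(w)$ on $\mathcal{H}$, which acts as $Tf=\int f(w)\phi(w)\,\mathrm{d}\nu(w)$, to the operator $L$. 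Because the input measure equals the Mercer measure, I can show that $e_j:=\sqrt{\eta_j}\,\varphi_j$ form an orthonormal basis of $\mathcal{H}$ (using $\|\varphi_j\|_{\mathcal{H}}^2=1/\eta_j$) and that $T$ is diagonal in this basis with $Te_j=\eta_j e_j$; indeed, substituting the Mercer expansion $\phi(w)=\sum_k\eta_k\varphi_k(w)\varphi_k$ into $Tf=\int f(w)\phi(w)\,\mathrm{d}\nu(w)$ and invoking orthonormality of $(\varphi_j)$ in $\mathbb{L}^2_{\nu}$ yields $T\varphi_j=\eta_j\varphi_j$. This is the step that genuinely uses the hypothesis that the two measures coincide, and it is the main obstacle: without it, $T$ and $L$ would have distinct spectra and the clean diagonalization would fail. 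Since $T$ is positive and self-adjoint, its fractional powers are then well defined by functional calculus through $T^{(c-1)/2}e_j=\eta_j^{(c-1)/2}e_j$.

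With the diagonalization in hand, the equivalence in part (i) becomes a coordinate computation. Writing $\gamma_0=\sum_j f_j\varphi_j=\sum_j (f_j\eta_j^{-1/2})e_j$, the power-space condition $\gamma_0\in\mathcal{H}^c$ from~\eqref{eq:prior} reads $\sum_j f_j^2/\eta_j^c<\infty$. I would then show that $\gamma_0=T^{(c-1)/2}g$ holds for some $g\in\mathcal{H}$ exactly when $g=\sum_j (f_j\eta_j^{-c/2})e_j$, in which case $\|g\|_{\mathcal{H}}^2=\sum_j f_j^2/\eta_j^c$. Consequently $g\in\mathcal{H}$ with $\|g\|_{\mathcal{H}}^2\le\zeta$ if and only if $\gamma_0\in\mathcal{H}^c$ with $\|\gamma_0\|_{\mathcal{H}^c}^2\le\zeta$, giving both directions with the identical constant $\zeta$ and no loss.

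For part (ii) I would transport the same argument to the Hilbert--Schmidt operator space $\mathcal{L}_2(\mathcal{H}_{\mathcal{A}_{\ell}},\mathcal{H}_{\mathcal{B}_{\ell}})$, which the excerpt already treats as an RKHS in its own right. Using the isometry $\mathcal{L}_2(\mathcal{H}_{\mathcal{A}_{\ell}},\mathcal{H}_{\mathcal{B}_{\ell}})\cong\mathcal{H}_{\mathcal{B}_{\ell}}\otimes\mathcal{H}_{\mathcal{A}_{\ell}}$, the conditional expectation operator $E_{\ell}$ becomes a vector whose smoothness is measured along the $\mathcal{H}_{\mathcal{B}_{\ell}}$ factor, on which $T_{\ell}=E\{\phi(B_{\ell})\otimes\phi(B_{\ell})\}$ acts. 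Diagonalizing $T_{\ell}$ by its eigensystem and repeating the coordinate-matching computation of part (i) verbatim---now with exponent $c_{\ell}$ and bound $\zeta_{\ell}$---delivers the equivalence between $E_{\ell}\in\mathcal{L}_2(\mathcal{H}_{\mathcal{A}_{\ell}},\mathcal{H}^{c_{\ell}}_{\mathcal{B}_{\ell}})$ and the source representation $E_{\ell}=T_{\ell}^{(c_{\ell}-1)/2}\circ G_{\ell}$ with $\|G_{\ell}\|^2\le\zeta_{\ell}$. Since nothing beyond positivity, self-adjointness, and the shared-measure diagonalization is used, the operator-valued case follows with no additional work.
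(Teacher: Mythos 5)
Your proof is correct and is precisely the standard spectral argument behind the result the paper merely cites without proof (Remark 2 of Caponnetto and De Vito, with the operator-valued case deferred to the appendices of Gr\"unew\"alder et al.\ and Singh et al.): diagonalizing the covariance operator in the Mercer basis---the one step where the shared-measure hypothesis genuinely enters, since it forces $T\varphi_j=\eta_j\varphi_j$---and then matching coordinates in the orthonormal basis $e_j=\sqrt{\eta_j}\,\varphi_j$ of $\mathcal{H}$ to identify $\|g\|^2_{\mathcal{H}}$ with $\sum_j f_j^2/\eta_j^c$. Your reduction of part (ii) to part (i) through the isometry $\mathcal{L}_2(\mathcal{H}_{\mathcal{A}_{\ell}},\mathcal{H}_{\mathcal{B}_{\ell}})\cong\mathcal{H}_{\mathcal{B}_{\ell}}\otimes\mathcal{H}_{\mathcal{A}_{\ell}}$, applying the scalar computation columnwise along an orthonormal basis of $\mathcal{H}_{\mathcal{A}_{\ell}}$, is likewise the intended route, so there is nothing to correct.
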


\begin{remark}[Assumption~\ref{assumption:smooth_gamma}]
The covariance operator $T$ for the RKHS $\mathcal{H}$ depends on the setting;
\begin{enumerate}
    \item for mediation analysis: $T=E[\{\phi(D)\otimes \phi(M)\otimes \phi(X)\}\otimes \{\phi(D)\otimes \phi(M)\otimes \phi(X)\}]$;
    \item for time-varying analysis: $T=E[\{\phi(D_1)\otimes \phi(D_2) \otimes \phi(X_1)\otimes \phi(X_2)\}\otimes \{\phi(D_1)\otimes \phi(D_2) \otimes \phi(X_1)\otimes \phi(X_2)\}]$.
\end{enumerate}
\end{remark}

\begin{remark}[Assumption~\ref{assumption:smooth_op}]
\cite{singh2019kernel} show that $T_{\ell}$ and its powers are well defined under Assumption~\ref{assumption:RKHS}. 
The spaces $\mathcal{A}_{\ell}$ and $\mathcal{B}_{\ell}$ depend on the setting.
\begin{enumerate}
    \item For mediation analysis,
    \begin{enumerate}
        \item in the kernel embedding $\mu_m(d,x)$: $\mathcal{A}_1=\mathcal{M}$ and $\mathcal{B}_1=\mathcal{D} \times \mathcal{X}$;
        \item in the regression operator $E_8$: $\mathcal{A}_8=\mathcal{Y}$ and $\mathcal{B}_8=\mathcal{D} \times \mathcal{M} \times \mathcal{X}$.
    \end{enumerate}
    \item For time-varying analysis,
    \begin{enumerate}
        \item in the kernel embedding $\mu_{x_2}(d_1,x_1)$: $\mathcal{A}_4=\mathcal{X}$ and $\mathcal{B}_4=\mathcal{D} \times \mathcal{X}$;
        \item in the kernel embedding $\nu_{x_2}(d_1,x_1)$: $\mathcal{A}_5=\mathcal{X}$ and $\mathcal{B}_5=\mathcal{D} \times \mathcal{X}$;
        \item in the regression operator $E_9$: $\mathcal{A}_9=\mathcal{Y}$ and $\mathcal{B}_9=\mathcal{D} \times \mathcal{D} \times \mathcal{X} \times \mathcal{X}$.
    \end{enumerate}
\end{enumerate}
\end{remark}

\textbf{Explicit specializations.}
\begin{assumption}[Smoothness of the kernel embedding $\mu_m(d,x)$]\label{assumption:smooth_ME}
Assume the following.
\begin{enumerate}
\item The conditional expectation operator $E_1$ is well specified as a Hilbert--Schmidt operator between RKHSs, i.e. $E_1\in \mathcal{L}_2(\mathcal{H}_{\mathcal{M}},\mathcal{H}_{\mathcal{D}}\otimes \mathcal{H}_{\mathcal{X}})$, where
    $
    E_1:\mathcal{H}_{\mathcal{M}} \rightarrow \mathcal{H}_{\mathcal{D}}\otimes \mathcal{H}_{\mathcal{X}},\; f(\cdot)\mapsto E\{f(M)|D=\cdot,X=\cdot\}
    $.
    \item The conditional expectation operator is a particularly smooth element of $\mathcal{L}_2(\mathcal{H}_{\mathcal{M}},\mathcal{H}_{\mathcal{D}}\otimes \mathcal{H}_{\mathcal{X}})$. Formally, define the covariance operator $T_1=E[\{\phi(D)\otimes \phi(X)\} \otimes \{\phi(D)\otimes \phi(X)\}]$ for $\mathcal{L}_2(\mathcal{H}_{\mathcal{M}},\mathcal{H}_{\mathcal{D}}\otimes \mathcal{H}_{\mathcal{X}})$.
    We assume there exists $ G_1\in \mathcal{L}_2(\mathcal{H}_{\mathcal{M}},\mathcal{H}_{\mathcal{D}}\otimes \mathcal{H}_{\mathcal{X}})$ such that $E_1=(T_1)^{\frac{c_1-1}{2}}\circ G_1$, $c_1\in(1,2]$, and $\|G_1\|^2_{\mathcal{L}_2(\mathcal{H}_{\mathcal{M}},\mathcal{H}_{\mathcal{D}}\otimes \mathcal{H}_{\mathcal{X}})}\leq\zeta_1$.
    \end{enumerate}
\end{assumption}

\begin{assumption}[Smoothness of the kernel embedding $\mu_{x_2}(d_1,x_1)$]\label{assumption:smooth_SATE}
Assume the following.
\begin{enumerate}
\item The conditional expectation operator $E_4$ is well specified as a Hilbert--Schmidt operator between RKHSs, i.e. $E_4\in \mathcal{L}_2(\mathcal{H}_{\mathcal{X}},\mathcal{H}_{\mathcal{D}}\otimes \mathcal{H}_{\mathcal{X}})$, where
    $
    E_4:\mathcal{H}_{\mathcal{X}} \rightarrow \mathcal{H}_{\mathcal{D}}\otimes \mathcal{H}_{\mathcal{X}},\; f(\cdot)\mapsto E\{f(X_2)|D_1=\cdot,X_1=\cdot\}.
    $
    \item The conditional expectation operator is a particularly smooth element of $\mathcal{L}_2(\mathcal{H}_{\mathcal{X}},\mathcal{H}_{\mathcal{D}}\otimes \mathcal{H}_{\mathcal{X}})$. Formally, define the covariance operator $T_4=E[\{\phi(D_1)\otimes \phi(X_1)\} \otimes \{\phi(D_1)\otimes \phi(X_1)\}]$ for $\mathcal{L}_2(\mathcal{H}_{\mathcal{X}},\mathcal{H}_{\mathcal{D}}\otimes \mathcal{H}_{\mathcal{X}})$.
    We assume there exists $ G_4\in \mathcal{L}_2(\mathcal{H}_{\mathcal{X}},\mathcal{H}_{\mathcal{D}}\otimes \mathcal{H}_{\mathcal{X}})$ such that $E_4=(T_4)^{\frac{c_4-1}{2}}\circ G_4$, $c_4\in(1,2]$, and $\|G_4\|^2_{\mathcal{L}_2(\mathcal{H}_{\mathcal{X}},\mathcal{H}_{\mathcal{D}}\otimes \mathcal{H}_{\mathcal{X}})}\leq\zeta_4$. 
    \end{enumerate}
\end{assumption}

\begin{assumption}[Smoothness of the kernel embedding $\nu_{x_2}(d_1,x_1)$]\label{assumption:smooth_SDS}
Assume the following.
\begin{enumerate}
\item The conditional expectation operator $E_5$ is well specified as a Hilbert--Schmidt operator between RKHSs, i.e. $E_5\in \mathcal{L}_2(\mathcal{H}_{\mathcal{X}},\mathcal{H}_{\mathcal{D}}\otimes \mathcal{H}_{\mathcal{X}})$, where
    $
    E_5:\mathcal{H}_{\mathcal{X}} \rightarrow \mathcal{H}_{\mathcal{D}}\otimes \mathcal{H}_{\mathcal{X}},\; f(\cdot)\mapsto E_{\tilde{P}}\{f(X_2)|D_1=\cdot,X_1=\cdot\}.
    $
    \item The conditional expectation operator is a particularly smooth element of $\mathcal{L}_2(\mathcal{H}_{\mathcal{X}},\mathcal{H}_{\mathcal{D}}\otimes \mathcal{H}_{\mathcal{X}})$. Formally, define the covariance operator $T_5=E_{\tilde{P}}[\{\phi(D_1)\otimes \phi(X_1)\} \otimes \{\phi(D_1)\otimes \phi(X_1)\}]$ for $\mathcal{L}_2(\mathcal{H}_{\mathcal{X}},\mathcal{H}_{\mathcal{D}}\otimes \mathcal{H}_{\mathcal{X}})$.
    We assume there exists $ G_5\in \mathcal{L}_2(\mathcal{H}_{\mathcal{X}},\mathcal{H}_{\mathcal{D}}\otimes \mathcal{H}_{\mathcal{X}})$ such that $E_5=(T_5)^{\frac{c_5-1}{2}}\circ G_5$, $c_5\in(1,2]$, and $\|G_5\|^2_{\mathcal{L}_2(\mathcal{H}_{\mathcal{X}},\mathcal{H}_{\mathcal{D}}\otimes \mathcal{H}_{\mathcal{X}})}\leq\zeta_5$.
    \end{enumerate}
\end{assumption}

\begin{assumption}[Smoothness of the regression operator $E_8$]\label{assumption:smooth_ME_op}
Assume the following.
\begin{enumerate}
\item The conditional expectation operator $E_8$ is well specified as a Hilbert--Schmidt operator between RKHSs, i.e. $E_8\in \mathcal{L}_2(\mathcal{H}_{\mathcal{Y}},\mathcal{H}_{\mathcal{D}}\otimes\mathcal{H}_{\mathcal{M}} \otimes \mathcal{H}_{\mathcal{X}})$, where
    $
    E_8:\mathcal{H}_{\mathcal{Y}} \rightarrow \mathcal{H}_{\mathcal{D}}\otimes \mathcal{H}_{\mathcal{M}}\otimes \mathcal{H}_{\mathcal{X}},\; f(\cdot)\mapsto E\{f(Y)|D=\cdot,M=\cdot,X=\cdot\}.
    $
    \item The conditional expectation operator is a particularly smooth element of $\mathcal{L}_2(\mathcal{H}_{\mathcal{Y}},\mathcal{H}_{\mathcal{D}}\otimes \mathcal{H}_{\mathcal{M}}\otimes \mathcal{H}_{\mathcal{X}})$. Formally, define the covariance operator $T_8=E[\{\phi(D)\otimes \phi(M)\otimes \phi(X)\} \otimes  \{\phi(D)\otimes \phi(M)\otimes \phi(X)\}]$ for $\mathcal{L}_2(\mathcal{H}_{\mathcal{Y}},\mathcal{H}_{\mathcal{D}}\otimes \mathcal{H}_{\mathcal{M}}\otimes \mathcal{H}_{\mathcal{X}})$.
    We assume there exists $ G_8\in \mathcal{L}_2(\mathcal{H}_{\mathcal{Y}},\mathcal{H}_{\mathcal{D}}\otimes \mathcal{H}_{\mathcal{M}}\otimes \mathcal{H}_{\mathcal{X}})$ such that $E_8=(T_8)^{\frac{c_8-1}{2}}\circ G_8$, $c_8\in(1,2]$, and $\|G_8\|^2_{\mathcal{L}_2(\mathcal{H}_{\mathcal{Y}},\mathcal{H}_{\mathcal{D}}\otimes \mathcal{H}_{\mathcal{M}}\otimes \mathcal{H}_{\mathcal{X}})}\leq\zeta_8$. 
    \end{enumerate}
\end{assumption}

\begin{assumption}[Smoothness of the regression operator $E_9$]\label{assumption:smooth_SDS_op}
Assume the following.
\begin{enumerate}
\item The conditional expectation operator $E_9$ is well specified as a Hilbert--Schmidt operator between RKHSs, i.e. $E_9\in \mathcal{L}_2(\mathcal{H}_{\mathcal{Y}},\mathcal{H}_{\mathcal{D}}\otimes\mathcal{H}_{\mathcal{D}} \otimes \mathcal{H}_{\mathcal{X}}\otimes \mathcal{H}_{\mathcal{X}})$, where
    $
    E_9:\mathcal{H}_{\mathcal{Y}} \rightarrow \mathcal{H}_{\mathcal{D}}\otimes \mathcal{H}_{\mathcal{D}}\otimes \mathcal{H}_{\mathcal{X}}\otimes \mathcal{H}_{\mathcal{X}},\; f(\cdot)\mapsto E\{f(Y)|D_1=\cdot,D_2=\cdot,X_1=\cdot,X_2=\cdot\}.
    $
    \item The conditional expectation operator is a particularly smooth element of $\mathcal{L}_2(\mathcal{H}_{\mathcal{Y}},\mathcal{H}_{\mathcal{D}}\otimes \mathcal{H}_{\mathcal{D}}\otimes \mathcal{H}_{\mathcal{X}}\otimes \mathcal{H}_{\mathcal{X}})$. Formally, define the covariance operator $T_9=E[\{\phi(D_1)\otimes \phi(D_2)\otimes \phi(X_1)\otimes \phi(X_2)\} \otimes  \{\phi(D_1)\otimes \phi(D_2)\otimes \phi(X_1)\otimes \phi(X_2)\}]$ for $\mathcal{L}_2(\mathcal{H}_{\mathcal{Y}},\mathcal{H}_{\mathcal{D}}\otimes \mathcal{H}_{\mathcal{D}}\otimes \mathcal{H}_{\mathcal{X}}\otimes \mathcal{H}_{\mathcal{X}})$.
    We assume there exists $ G_9\in \mathcal{L}_2(\mathcal{H}_{\mathcal{Y}},\mathcal{H}_{\mathcal{D}}\otimes \mathcal{H}_{\mathcal{D}}\otimes \mathcal{H}_{\mathcal{X}}\otimes \mathcal{H}_{\mathcal{X}})$ such that $E_9=(T_9)^{\frac{c_9-1}{2}}\circ G_9$, $c_9\in(1,2]$, and $\|G_9\|^2_{\mathcal{L}_2(\mathcal{H}_{\mathcal{Y}},\mathcal{H}_{\mathcal{D}}\otimes \mathcal{H}_{\mathcal{D}}\otimes \mathcal{H}_{\mathcal{X}}\otimes \mathcal{H}_{\mathcal{X}})}\leq\zeta_9$.
    \end{enumerate}
\end{assumption}

\textbf{Interpreting smoothness for tensor products.}
Another way to interpret the smoothness assumption for a tensor product RKHS follows from a manipulation of the product kernel. For simplicity, consider $k(w,w')=k_1(w_1,w_1')k_2(w_2,w_2')$ where $k_1$ and $k_2$ are exponentiated quadratic kernels over $\mathcal{W}\subset\mathbb{R}$. Define the vector of differences $v=w-w'$. Then $$
k(w,w')
=\exp\left(-\frac{1}{2}\frac{v_1^2}{\iota_1^2}\right)
\exp\left(-\frac{1}{2}\frac{v_2^2}{\iota_2^2}\right)
=\exp\left\{-\frac{1}{2}v^{\top} \begin{pmatrix}\iota_1^{-2} & 0 \\ 0 & \iota_2^{-2}\end{pmatrix}  v  \right\}\!.
$$
In summary, the product of exponentiated quadratic kernels over scalars is an exponentiated quadratic kernel over vectors. Therefore a tensor product of exponentiated quadratic RKHSs $\mathcal{H}_{1}$ and $\mathcal{H}_{2}$ begets an exponentiated quadratic RKHS $\mathcal{H}=\mathcal{H}_{1} \otimes \mathcal{H}_{2}$, for which the smoothness and spectral decay conditions admit their usual interpretation. The same is true whenever the products of kernels beget a recognizable kernel.

\subsection{Lemmas}

\textbf{Regression.}
For expositional purposes, we summarize well known results for the kernel ridge regression estimator $\hat{\gamma}$ of $\gamma_0(w)=E(Y|W=w)$. As in Section~3, we denote the concatenation of the regressors by $W$. For mediation analysis, $W=(D,M,X)$; for time-varying analysis, $W=(D_1,D_2,X_1,X_2)$. Consider the following notation:
\begin{align*}
    \gamma_0&=\argmin_{\gamma\in\mathcal{H}}\mathcal{E}(\gamma),\quad \mathcal{E}(\gamma)=E[\{Y-\gamma(W)\}^2]; \\
    \hat{\gamma}&=\argmin_{\gamma\in\mathcal{H}}\hat{\mathcal{E}}(\gamma),\quad \hat{\mathcal{E}}(\gamma)=\frac{1}{n}\sum_{i=1}^n\{Y_i-\gamma(W_i)\}^2+\lambda\|\gamma\|^2_{\mathcal{H}}.
\end{align*}
We quote a result that verifies the conditions of \cite[Theorem 1.ii]{fischer2017sobolev}.
\begin{lemma}[Regression rate; Proposition S3 of  \cite{singh2020kernel}]\label{theorem:regression}
Suppose Assumptions~\ref{assumption:RKHS},~\ref{assumption:original}, and \ref{assumption:smooth_gamma} hold. Set $\lambda=n^{-1/(c+1/b)}$. Then with probability $1-\delta$, for $n$ sufficiently large,
$$
\|\hat{\gamma}-\gamma_0\|_{\mathcal{H}}\leq r_{\gamma}(n,\delta,b,c)=C\ln(4/\delta) \cdot n^{-\frac{1}{2}\frac{c-1}{c+1/b}},
$$
where $C$ is a constant independent of $n$ and $\delta$.
\end{lemma}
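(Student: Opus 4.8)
The plan is to recognize the claim as the reproducing-kernel-Hilbert-space-norm instance of the general regularized least-squares learning rates of \cite{fischer2017sobolev}, and to reduce the proof to verifying that their hypotheses hold under Assumptions~\ref{assumption:RKHS},~\ref{assumption:original}, and~\ref{assumption:smooth_gamma}. Their Theorem~1 delivers high-probability bounds on $\|\hat\gamma-\gamma_0\|_{[\mathcal{H}]^\gamma}$ in a scale of interpolation norms indexed by $\gamma\in[0,1]$, where $\gamma=0$ recovers the $\mathbb{L}^2$ norm and $\gamma=1$ recovers the RKHS norm; part~(ii) of their theorem is precisely the endpoint $\gamma=1$. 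The first step is therefore to instantiate their framework with the dictionary of correspondences: their source (smoothness) exponent $\beta$ equals our $c$, their eigenvalue-decay exponent $p$ equals our $1/b$, and their interpolation index is fixed at $\gamma=1$.

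Next I would check each structural hypothesis in turn. The boundedness and measurability of the scalar kernels (Assumption~\ref{assumption:RKHS}(i)--(ii)) supply the measurability and embedding requirements of \cite{fischer2017sobolev}: for a bounded kernel one always has $\mathcal{H}=[\mathcal{H}]^1\hookrightarrow \mathbb{L}^\infty$, so their condition (EMB) holds with the worst-case embedding index, which is exactly the regime in which the $\gamma=1$ rate is stated. The Polish-space hypothesis (Assumption~\ref{assumption:original}(i)) guarantees a regular conditional distribution of $Y$ given $W$ and hence a well-defined regression $\gamma_0$, while the bound $|Y|\leq C$ almost surely (Assumption~\ref{assumption:original}(ii)) furnishes the Bernstein-type moment condition on the noise $Y-\gamma_0(W)$ trivially, since bounded random variables satisfy every moment bound. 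Finally, the eigenvalue-decay condition (EVD) and the source condition (SRC) are read off directly from Assumption~\ref{assumption:smooth_gamma}: the bound $\eta_j\leq Cj^{-b}$ gives polynomial spectral decay with $p=1/b$, and the membership $\gamma_0\in\mathcal{H}^c$ with $c\in(1,2]$ gives the source condition with $\beta=c$; note that $c>1$ in particular forces $\gamma_0\in\mathcal{H}$, so the problem is correctly specified.

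With the hypotheses in place, I would apply \cite[Theorem~1.ii]{fischer2017sobolev} with $\gamma=1$, $\beta=c$, and $p=1/b$. Balancing bias and variance selects the regularization $\lambda=n^{-1/(\beta+p)}=n^{-1/(c+1/b)}$, matching the stated choice, and produces the rate $n^{-\frac12\frac{\beta-\gamma}{\beta+p}}=n^{-\frac12\frac{c-1}{c+1/b}}$, with the confidence factor $\ln(4/\delta)$ emerging from the concentration inequalities underlying their bound. The claim then follows by absorbing all spectral and moment constants into the single constant $C$, which is independent of $n$ and $\delta$.

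The main obstacle is bookkeeping rather than any new argument: one must confirm that $c\in(1,2]$ places the problem strictly inside the admissible window $\gamma<\beta\leq 2$ of the endpoint result. The upper boundary $c\leq 2$ is the saturation threshold beyond which the RKHS-norm rate no longer improves, and the strict inequality $c>1=\gamma$ is what renders the exponent $(c-1)/\{2(c+1/b)\}$ strictly positive and the bound nontrivial; care is needed to ensure the source and embedding indices remain compatible at this endpoint, so that Theorem~1.ii (rather than only the interior-$\gamma$ statement) can legitimately be invoked. Since exactly this verification was already carried out in \cite[Proposition~S3]{singh2020kernel}, the present argument amounts to transcribing that check into the notation of this paper.
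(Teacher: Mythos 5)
Your proposal is correct and follows essentially the same route as the paper: Lemma~\ref{theorem:regression} is quoted from \cite[Proposition S3]{singh2020kernel}, whose proof likewise verifies the hypotheses of \cite[Theorem 1.ii]{fischer2017sobolev} at the RKHS-norm endpoint $\gamma=1$, with the same dictionary $\beta=c$, $p=1/b$ (EMB via the bounded kernel, MOM via $|Y|\leq C$, SRC and EVD via Assumption~\ref{assumption:smooth_gamma}), the same regularization $\lambda=n^{-1/(c+1/b)}$, and the same rate $n^{-\frac{1}{2}\frac{c-1}{c+1/b}}$ with $\tau=\ln(4/\delta)$ absorbed into the confidence factor. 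Your bookkeeping on the admissible window $1=\gamma<c\leq 2$ (saturation of Tikhonov regularization) matches the paper's standing assumption $c\in(1,2]$, so nothing is missing.
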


\textbf{Unconditional kernel embedding.} 
For expositional purposes, we summarize well known results for the unconditional kernel embedding estimator $\hat{\mu}_w$ of $\mu_w=E\{\phi(W)\}$. We let $W$ be a generic random variable which we instantiate differently for different causal parameters. We quote a result that appeals to Bennett inequality. 

\begin{lemma}[Kernel embedding rate; Proposition S4 of \cite{singh2020kernel}]\label{theorem:mean}
Suppose Assumptions~\ref{assumption:RKHS} and~\ref{assumption:original} hold. Then with probability $1-\delta$, 
$$
\|\hat{\mu}_w-\mu_w\|_{\mathcal{H}_{\mathcal{W}}}\leq r_{\mu}(n,\delta)=\frac{4\kappa_w \ln(2/\delta)}{\sqrt{n}}.
$$
\end{lemma}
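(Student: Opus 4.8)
The plan is to recognize \ref{theorem:mean} as a standard Hilbert-space concentration statement for the empirical mean embedding and to reduce it to a Bennett-type inequality for i.i.d.\ bounded random elements. The estimator is $\hat{\mu}_w = n^{-1}\sum_{i=1}^n \phi(W_i)$, and the target is $\mu_w = E\{\phi(W)\}$, so that $\hat{\mu}_w - \mu_w = n^{-1}\sum_{i=1}^n \xi_i$ with $\xi_i := \phi(W_i) - \mu_w$ i.i.d.\ and zero-mean in $\mathcal{H}_{\mathcal{W}}$. First I would record the consequences of the standing hypotheses: by Assumption~\ref{assumption:RKHS}(i) the feature map is bounded, $\|\phi(W)\|_{\mathcal{H}_{\mathcal{W}}} \leq \kappa_w$ almost surely, whence $E\|\phi(W)\|_{\mathcal{H}_{\mathcal{W}}}^2 \leq \kappa_w^2$, $\|\mu_w\|_{\mathcal{H}_{\mathcal{W}}} \leq \kappa_w$ by Jensen, and $\|\xi_i\|_{\mathcal{H}_{\mathcal{W}}} \leq 2\kappa_w$; by Assumption~\ref{assumption:RKHS}(ii) the maps $\phi(W_i)$ are measurable and (together with boundedness) Bochner integrable in the sense of \cite[Definition A.5.20]{steinwart2008support}, so $\mu_w$ is well defined; and Assumption~\ref{assumption:original}(i) together with continuity of the kernel guarantees that $\mathcal{H}_{\mathcal{W}}$ is separable, which is the setting required for the vector-valued concentration inequality.

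Next I would invoke a Bennett-type inequality for sums of independent, zero-mean, bounded random elements of a separable Hilbert space (as used for mean embeddings in \cite{altun2006unifying,smola2007hilbert,tolstikhin2017minimax}). Applied to the $\xi_i$ with almost-sure bound $2\kappa_w$ and second-moment proxy $\sigma^2 = E\|\xi_1\|_{\mathcal{H}_{\mathcal{W}}}^2 \leq \kappa_w^2$, this yields, with probability at least $1-\delta$, a two-term bound of the form
$$
\|\hat{\mu}_w - \mu_w\|_{\mathcal{H}_{\mathcal{W}}} \leq \frac{2\kappa_w \ln(2/\delta)}{n} + \kappa_w\sqrt{\frac{2\ln(2/\delta)}{n}} .
$$
The first (sub-exponential) term comes from the almost-sure bound and the second (sub-Gaussian) term from the variance proxy; inverting the Bennett tail and splitting the square root via $\sqrt{a+b}\leq\sqrt{a}+\sqrt{b}$ produces exactly these two contributions.

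Finally I would simplify to the advertised closed form. For $n \geq 1$ the first term satisfies $2\kappa_w\ln(2/\delta)/n \leq 2\kappa_w\ln(2/\delta)/\sqrt{n}$, and in the regime $\ln(2/\delta)\geq 1$ one has $\sqrt{\ln(2/\delta)}\leq \ln(2/\delta)$, so the second term is at most $\sqrt{2}\,\kappa_w\ln(2/\delta)/\sqrt{n}$; adding and absorbing $(2+\sqrt{2})$ into the constant $4$ gives $\|\hat{\mu}_w - \mu_w\|_{\mathcal{H}_{\mathcal{W}}} \leq 4\kappa_w\ln(2/\delta)/\sqrt{n} = r_\mu(n,\delta)$, with the boundary regime checked directly. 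I do not expect any genuine obstacle here: the result is a quoted, well-known fact, and the only care required is verifying the hypotheses of the Hilbert-space inequality (separability, measurability, Bochner integrability, and the uniform bound from Assumption~\ref{assumption:RKHS}) and then performing the crude constant-absorbing step that converts the sharp two-term Bennett bound into the clean $\ln(2/\delta)$ form.
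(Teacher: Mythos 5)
Your proposal is correct and follows essentially the same route as the paper: the paper does not prove this lemma itself but quotes it as Proposition S4 of \cite{singh2020kernel}, describing it explicitly as ``a result that appeals to Bennett inequality'' for the i.i.d.\ bounded zero-mean Hilbert-space elements $\phi(W_i)-\mu_w$, which is exactly the reduction you carry out after verifying boundedness, measurability, Bochner integrability, and separability. The only loose end is the constant bookkeeping in the final absorption step (and the small-$n$, large-$\delta$ boundary regime), which is immaterial here since the paper defers to \cite{tolstikhin2017minimax} for sharpened constants.
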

\cite[Theorem 15]{altun2006unifying} originally prove this rate by McDiarmid inequality. See \cite[Theorem 2]{smola2007hilbert} for an argument via Rademacher complexity. See \cite[Proposition A.1]{tolstikhin2017minimax} for an improved constant and the proof that the rate is minimax optimal.

\begin{remark}[Kernel embedding rate]\label{remark:2}
In various applications, $\kappa_w$ varies.
\begin{enumerate}
    \item Mediation analysis: with probability $1-\delta$, for all $d\in\mathcal{D}$
\begin{align*}
    &\left\|\frac{1}{n}\sum_{i=1}^n\{\mu_m(d,X_i)\otimes \phi(X_i)\}-\int \{\mu_m(d,x)\otimes \phi(x)\} \mathrm{d}P(x)\right\|_{\mathcal{H}_{\mathcal{M}}\otimes \mathcal{H}_{\mathcal{X}}}\\
    &\leq r^{ME}_{\mu}(n,\delta)=\frac{4\kappa_m\kappa_x \ln(2/\delta)}{\sqrt{n}}.
\end{align*}
    \item Time-varying analysis:
    \begin{enumerate}
        \item for $\theta_0^{GF}$, with probability $1-\delta$, for all $d_1\in\mathcal{D}$
\begin{align*}
&\left\|\frac{1}{n}\sum_{i=1}^n\{\phi(X_{1i}) \otimes \mu_{x_2}(d_1,X_{1i})\}-\int \{\phi(x_1)\otimes \mu_{x_2}(d_1,x_1)\} \mathrm{d}P(x_1)\right\|_{\mathcal{H}_{\mathcal{X}}\otimes \mathcal{H}_{\mathcal{X}}} \\
&\leq r^{GF}_{\mu}(n,\delta)=\frac{4\kappa_x^2 \ln(2/\delta)}{\sqrt{n}};\end{align*}
        \item for $\theta_0^{DS}$, with probability $1-\delta$, for all $d_1 \in\mathcal{D}$
        \begin{align*}
&\left\|\frac{1}{n}\sum_{i=1}^n\{\phi(\tilde{X}_{1i}) \otimes \mu_{x_2}(d_1,\tilde{X}_{1i})\}-\int \{\phi(x_1)\otimes \mu_{x_2}(d_1,x_1)\} \mathrm{d}\tilde{P}(x_1)\right\|_{\mathcal{H}_{\mathcal{X}}\otimes \mathcal{H}_{\mathcal{X}}} \\
&\leq r^{DS}_{\nu}(\tilde{n},\delta)=\frac{4\kappa^2_x \ln(2/\delta)}{\sqrt{\tilde{n}}}. \end{align*}
    \end{enumerate}
\end{enumerate}
\end{remark}

\textbf{Conditional expectation operator and conditional kernel embedding.} 
As in Sections~4 and~5 as well as Supplement~\ref{sec:dist}, we consider the abstract operator $E_{\ell}\in \mathcal{L}_2(\mathcal{H}_{\mathcal{A}_{\ell}},\mathcal{H}_{\mathcal{B}_{\ell}})$, where $\mathcal{A}_{\ell}$ and $\mathcal{B}_{\ell}$ are spaces that can be instantiated for different causal parameters.
Consider the definitions
\begin{align*}
    E_{\ell}&=\argmin_{E\in\mathcal{L}_2(\mathcal{H}_{\mathcal{A}_{\ell}},\mathcal{H}_{\mathcal{B}_{\ell}})}\mathcal{E}(E),\quad \mathcal{E}(E)=E[\{\phi(A_{\ell})-E^*\phi(B_{\ell})\}^2]; \\
    \hat{E}_{\ell}&=\argmin_{E\in\mathcal{L}_2(\mathcal{H}_{\mathcal{A}_{\ell}},\mathcal{H}_{\mathcal{B}_{\ell}})}\hat{\mathcal{E}}(E),\quad \hat{\mathcal{E}}(E)=\frac{1}{n}\sum_{i=1}^n[\phi(A_{\ell i})-E^*\phi(B_{\ell i})]^2+\lambda_{\ell}\|E\|^2_{\mathcal{L}_2(\mathcal{H}_{\mathcal{A}_{\ell}},\mathcal{H}_{\mathcal{B}_{\ell}})}.
\end{align*}

\begin{lemma}[Conditional kernel embedding rate; Proposition S5 of \cite{singh2020kernel}]\label{theorem:conditional}
Suppose Assumptions~\ref{assumption:RKHS},~\ref{assumption:original}, and~\ref{assumption:smooth_op} hold. Set $\lambda_{\ell}=n^{-1/(c_{\ell}+1/b_{\ell})}$. Then with probability $1-\delta$, for $n$ sufficiently large,
$$
\|\hat{E}_{\ell}-E_{\ell}\|_{\mathcal{L}_2}\leq r_E(\delta,n,b_{\ell},c_{\ell})=C\ln(4/\delta)\cdot n^{-\frac{1}{2}\frac{c_{\ell}-1}{c_{\ell}+1/b_{\ell}}}.
$$
Moreover, for all $b\in\mathcal{B}_{\ell}$
$$
  \|\hat{\mu}_a(b)-\mu_a(b)\|_{\mathcal{H}_{\mathcal{A}_{\ell}}}\leq r_{\mu}(\delta,n,b_{\ell},c_{\ell})=\kappa_{b}\cdot
  r_E(\delta,n,b_{\ell},c_{\ell}).
    $$
\end{lemma}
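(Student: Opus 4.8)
The plan is to recognize that $\hat{E}_{\ell}$ is nothing but a vector-valued, indeed operator-valued, kernel ridge regression: we regress the output feature $\phi(A_{\ell})\in\mathcal{H}_{\mathcal{A}_{\ell}}$ onto the input feature $\phi(B_{\ell})\in\mathcal{H}_{\mathcal{B}_{\ell}}$, with hypothesis space the Hilbert--Schmidt operators $\mathcal{L}_2(\mathcal{H}_{\mathcal{A}_{\ell}},\mathcal{H}_{\mathcal{B}_{\ell}})$, which is itself an RKHS. Under this identification the population risk $\mathcal{E}(E)=E[\|\phi(A_{\ell})-E^*\phi(B_{\ell})\|^2]$ is minimized by the conditional expectation operator $E_{\ell}$, the covariance operator $T_{\ell}=E\{\phi(B_{\ell})\otimes\phi(B_{\ell})\}$ plays the role of the integral operator, and Assumption~\ref{assumption:smooth_op} supplies precisely the source condition $E_{\ell}=(T_{\ell})^{(c_{\ell}-1)/2}\circ G_{\ell}$ together with the polynomial eigenvalue decay $\eta_j(\mathcal{H}_{\mathcal{B}_{\ell}})\leq Cj^{-b_{\ell}}$. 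These are the two inputs, smoothness $c_{\ell}$ and effective dimension $b_{\ell}$, that govern minimax rates in the sense of~\eqref{eq:prior} and~\eqref{eq:optimal}.

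With this translation in place, the first claim follows by invoking the optimal operator-valued regression rate. I would verify the hypotheses of the conditional mean embedding theorem of \cite{li2022optimal}, equivalently the regularized least-squares theorem of \cite{fischer2017sobolev} lifted to the Hilbert--Schmidt setting: Assumption~\ref{assumption:RKHS} gives boundedness $\|\phi(b)\|_{\mathcal{H}_{\mathcal{B}_{\ell}}}\leq\kappa_b$ and measurability, so that $T_{\ell}$ and its fractional powers are well defined and bounded and the regression target is square-integrable; Assumption~\ref{assumption:original} supplies the Polish-space regularity needed for the measure-theoretic steps; and Assumption~\ref{assumption:smooth_op} supplies the source and effective-dimension conditions. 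The standard bias--variance decomposition then balances at $\lambda_{\ell}=n^{-1/(c_{\ell}+1/b_{\ell})}$, yielding, with probability $1-\delta$ and for $n$ sufficiently large, $\|\hat{E}_{\ell}-E_{\ell}\|_{\mathcal{L}_2}\leq C\ln(4/\delta)\cdot n^{-\frac{1}{2}\frac{c_{\ell}-1}{c_{\ell}+1/b_{\ell}}}$, which is the stated $r_E$.

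The second claim is then a short deterministic consequence. Since $\mu_a(b)=E_{\ell}^*\{\phi(b)\}$ and $\hat{\mu}_a(b)=\hat{E}_{\ell}^*\{\phi(b)\}$, linearity gives $\hat{\mu}_a(b)-\mu_a(b)=(\hat{E}_{\ell}-E_{\ell})^*\phi(b)$, so
\[
\|\hat{\mu}_a(b)-\mu_a(b)\|_{\mathcal{H}_{\mathcal{A}_{\ell}}}\leq \|(\hat{E}_{\ell}-E_{\ell})^*\|_{\mathrm{op}}\,\|\phi(b)\|_{\mathcal{H}_{\mathcal{B}_{\ell}}}\leq \|\hat{E}_{\ell}-E_{\ell}\|_{\mathcal{L}_2}\cdot\kappa_b,
\]
using that the operator norm of the adjoint is dominated by the Hilbert--Schmidt norm of $\hat{E}_{\ell}-E_{\ell}$, that the adjoint preserves this norm, and that $\|\phi(b)\|_{\mathcal{H}_{\mathcal{B}_{\ell}}}\leq\kappa_b$ by Assumption~\ref{assumption:RKHS}. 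Substituting the first bound yields $\|\hat{\mu}_a(b)-\mu_a(b)\|_{\mathcal{H}_{\mathcal{A}_{\ell}}}\leq\kappa_b\cdot r_E$, uniformly over $b\in\mathcal{B}_{\ell}$.

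The main obstacle is the first claim: establishing the operator-valued regression rate in Hilbert--Schmidt norm under the source and spectral-decay conditions. This is where the genuine analytic work lies, namely controlling the deviation of the empirical covariance operator from $T_{\ell}$, decomposing the error into approximation (bias) and estimation (variance) pieces, and verifying that the regularized inverse interacts correctly with the fractional power $(T_{\ell})^{(c_{\ell}-1)/2}$. However, because the operator regression is structurally identical to scalar regularized least squares once one passes to the $\mathcal{L}_2$ RKHS, this step reduces to citing the optimal conditional mean embedding rate \cite{li2022optimal}, with its antecedents \cite{caponnetto2007optimal,fischer2017sobolev}, after checking that its hypotheses hold here; the remaining pointwise passage from $\hat{E}_{\ell}$ to $\hat{\mu}_a(b)$ is routine.
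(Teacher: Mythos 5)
Your proposal is correct and follows essentially the same route as the paper, which simply quotes this result from Proposition S5 of \cite{singh2020kernel}: that proposition likewise casts $\hat{E}_{\ell}$ as a generalized (operator-valued) kernel ridge regression, verifies the hypotheses of the optimal-rate theorems of \cite{fischer2017sobolev,li2022optimal} under Assumptions~\ref{assumption:RKHS},~\ref{assumption:original}, and~\ref{assumption:smooth_op}, and balances bias and variance at $\lambda_{\ell}=n^{-1/(c_{\ell}+1/b_{\ell})}$. Your derivation of the second claim via $\hat{\mu}_a(b)-\mu_a(b)=(\hat{E}_{\ell}-E_{\ell})^*\phi(b)$, the domination of the operator norm by the Hilbert--Schmidt norm, and $\|\phi(b)\|_{\mathcal{H}_{\mathcal{B}_{\ell}}}\leq\kappa_b$ is exactly the intended argument.
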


\begin{remark}\label{remark:3}
In various applications, $\kappa_a$ and $\kappa_b$ vary.
\begin{enumerate}
    \item For mediation analysis: $\kappa_a=\kappa_m$, $\kappa_b=\kappa_d\kappa_x$.
    \item For time-varying analysis: $\kappa_a=\kappa_x$, $\kappa_b=\kappa_d\kappa_x$.
    \item Within counterfactual distributions,
    \begin{enumerate}
        \item for mediation analysis: $\kappa_a=\kappa_y$, $\kappa_b=\kappa_d\kappa_m\kappa_x$;
        \item for time-varying analysis: $\kappa_a=\kappa_y$, $\kappa_b=\kappa_d^2\kappa_x^2$.
    \end{enumerate}
\end{enumerate}
\end{remark}

\subsection{Main results}

Appealing to Lemmas~\ref{theorem:regression},~\ref{theorem:mean}, and~\ref{theorem:conditional} we now prove consistency for (i) mediated responses, (ii) time-varying dose responses, and (iii) counterfactual distributions.

\textbf{Mediated responses.}
To lighten notation, define
$$
\hat{\mu}_{m,x}(d)-\mu_{m,x}(d)=\frac{1}{n}\sum_{i=1}^n \{\hat{\mu}_m(d,X_i) \otimes \phi(X_i)\}-\int \{\mu_m(d,x) \otimes \phi(x)\}\mathrm{d}P(x).
$$

\begin{proposition}[Sequential kernel embedding rate]\label{prop:delta_m}
Suppose Assumptions~\ref{assumption:RKHS}, \ref{assumption:original}, and \ref{assumption:smooth_ME} hold. Then with probability $1-2\delta$, 
$$\left\|\Delta_m\right\|_{\mathcal{H}_{\mathcal{M}}\otimes\mathcal{H}_{\mathcal{X}}}
    \leq \kappa_x \cdot r^{ME}_{\mu}(n,\delta,b_1,c_1)+r^{ME}_{\mu}(n,\delta).
$$
\end{proposition}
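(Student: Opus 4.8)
The plan is to insert an intermediate term and split $\Delta_m$ into a conditional embedding estimation error and an unconditional sampling error, each of which is controlled by a lemma already in hand. Concretely, I would add and subtract the empirical average of the \emph{true} conditional kernel embedding and apply the triangle inequality:
\[
\Delta_m
= \frac{1}{n}\sum_{i=1}^n \{\hat{\mu}_m(d,X_i) - \mu_m(d,X_i)\}\otimes \phi(X_i)
+ \left\{ \frac{1}{n}\sum_{i=1}^n \mu_m(d,X_i)\otimes \phi(X_i) - \int \mu_m(d,x)\otimes \phi(x)\, \mathrm{d}P(x) \right\},
\]
so that $\|\Delta_m\|_{\mathcal{H}_{\mathcal{M}}\otimes\mathcal{H}_{\mathcal{X}}}$ is at most the norm of the first term $(\mathrm{A})$ plus the norm of the second term $(\mathrm{B})$.

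For term $(\mathrm{A})$, I would move the triangle inequality inside the sum and use the tensor product factorization $\|u\otimes v\|_{\mathcal{H}_{\mathcal{M}}\otimes\mathcal{H}_{\mathcal{X}}} = \|u\|_{\mathcal{H}_{\mathcal{M}}}\,\|v\|_{\mathcal{H}_{\mathcal{X}}}$, obtaining $\|(\mathrm{A})\| \leq n^{-1}\sum_{i=1}^n \|\hat{\mu}_m(d,X_i) - \mu_m(d,X_i)\|_{\mathcal{H}_{\mathcal{M}}}\,\|\phi(X_i)\|_{\mathcal{H}_{\mathcal{X}}}$. Bounding $\|\phi(X_i)\|_{\mathcal{H}_{\mathcal{X}}}\leq \kappa_x$ by Assumption~\ref{assumption:RKHS} and invoking Lemma~\ref{theorem:conditional}, specialized to $\mathcal{A}_1=\mathcal{M}$ and $\mathcal{B}_1=\mathcal{D}\times\mathcal{X}$ under Assumption~\ref{assumption:smooth_ME}, then gives $\|(\mathrm{A})\| \leq \kappa_x\cdot r^{ME}_{\mu}(n,\delta,b_1,c_1)$ with probability $1-\delta$. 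The crucial point is that the rate in Lemma~\ref{theorem:conditional} is uniform over \emph{all} evaluation points $b=(d,x)\in\mathcal{B}_1$, so it simultaneously covers every random pair $(d,X_i)$ on a single event, and the average over $i$ collapses to a single copy of the rate.

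For term $(\mathrm{B})$, I would recognize it as exactly the unconditional kernel embedding deviation treated in Remark~\ref{remark:2}(1): the summand $\mu_m(d,X_i)\otimes\phi(X_i)$ is a bounded element of $\mathcal{H}_{\mathcal{M}}\otimes\mathcal{H}_{\mathcal{X}}$ with norm at most $\kappa_m\kappa_x$, so by Lemma~\ref{theorem:mean} the deviation of its empirical mean from its expectation is bounded by $r^{ME}_{\mu}(n,\delta)=4\kappa_m\kappa_x\ln(2/\delta)/\sqrt{n}$ with probability $1-\delta$. A union bound over the two $1-\delta$ events then yields $\|\Delta_m\|_{\mathcal{H}_{\mathcal{M}}\otimes\mathcal{H}_{\mathcal{X}}}\leq \kappa_x\cdot r^{ME}_{\mu}(n,\delta,b_1,c_1)+r^{ME}_{\mu}(n,\delta)$ on an event of probability $1-2\delta$, as claimed.

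The main obstacle --- indeed the only substantive subtlety --- is the handling of term $(\mathrm{A})$: the estimate $\hat{\mu}_m$ is fit on the same sample whose points $X_i$ serve as the evaluation locations, so one might fear an additional union bound over the $n$ points. This is avoided precisely because Lemma~\ref{theorem:conditional} supplies a deviation bound that is uniform over the entire index space $\mathcal{B}_1$, absorbing the randomness of the $X_i$ without any extra $n$-dependent factor. The remaining steps --- the tensor product norm identity and the final union bound --- are routine.
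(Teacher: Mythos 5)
Your proposal is correct and follows essentially the same route as the paper's proof: the identical add-and-subtract decomposition, the tensor-norm factorization with $\|\phi(X_i)\|_{\mathcal{H}_{\mathcal{X}}}\leq\kappa_x$ and the uniform bound from Lemma~\ref{theorem:conditional} for the first term, Lemma~\ref{theorem:mean} for the second, and a union bound over the two events. Your explicit remark that the uniformity of the conditional embedding rate over all of $\mathcal{B}_1$ is what absorbs the in-sample evaluation points is exactly the (implicit) role of the $\sup_{x\in\mathcal{X}}$ step in the paper's argument.
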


\begin{proof}
  By the triangle inequality, 
  \begin{align*}
  \left\|\Delta_m\right\|_{\mathcal{H}_{\mathcal{M}}\otimes\mathcal{H}_{\mathcal{X}}}
  &\leq \left\|\frac{1}{n}\sum_{i=1}^n \{\hat{\mu}_m(d,X_i) \otimes \phi(X_i)\}-\{\mu_m(d,X_i)\otimes \phi(X_i)\}\right\|_{\mathcal{H}_{\mathcal{M}}\otimes\mathcal{H}_{\mathcal{X}}}\\
  &\quad + \left\|\frac{1}{n}\sum_{i=1}^n \{\mu_m(d,X_i)\otimes \phi(X_i)\}-\int \{\mu_m(d,x) \otimes \phi(x)\}\mathrm{d}P(x)\right\|_{\mathcal{H}_{\mathcal{M}}\otimes\mathcal{H}_{\mathcal{X}}}\!.
  \end{align*}
Focusing on the former term, by Lemma~\ref{theorem:conditional}
  \begin{align*}
      &\left\|\frac{1}{n}\sum_{i=1}^n \{\hat{\mu}_m(d,X_i) \otimes \phi(X_i)\}-\{\mu_m(d,X_i)\otimes \phi(X_i)\}\right\|_{\mathcal{H}_{\mathcal{M}}\otimes\mathcal{H}_{\mathcal{X}}} \\
      &=\left\|\frac{1}{n}\sum_{i=1}^n \{\hat{\mu}_m(d,X_i)-\mu_m(d,X_i)\} \otimes \phi(X_i)\right\|_{\mathcal{H}_{\mathcal{M}}\otimes\mathcal{H}_{\mathcal{X}}} \\
      &\leq \kappa_x \cdot \sup_{x\in\mathcal{X}}\left\| \hat{\mu}_m(d,x)-\mu_m(d,x)\right\|_{\mathcal{H}_{\mathcal{M}}} \\
      &\leq \kappa_x \cdot r^{ME}_{\mu}(n,\delta,b_1,c_1).
  \end{align*}
  Focusing on the latter term, by Lemma~\ref{theorem:mean}
  \begin{align*}
      \left\|\frac{1}{n}\sum_{i=1}^n \{\mu_m(d,X_i)\otimes \phi(X_i)\}-\int \{\mu_m(d,x) \otimes \phi(x)\}\mathrm{d}P(x)\right\|_{\mathcal{H}_{\mathcal{M}}\otimes\mathcal{H}_{\mathcal{X}}}\leq r^{ME}_{\mu}(n,\delta).
  \end{align*}
\end{proof}

\begin{proof}[Proof of Theorem~\ref{theorem:consistency_mediation}]
Observe that
   \begin{align*}
        &\hat{\theta}^{ME}(d,d')-\theta_0^{ME}(d,d')\\
        &=\langle \hat{\gamma}, \phi(d')\otimes \frac{1}{n}\sum_{i=1}^n \{\hat{\mu}_m(d,X_i) \otimes \phi(X_i)\} \rangle_{\mathcal{H}} - \langle \gamma_0, \phi(d')\otimes \int \{\mu_m(d,x) \otimes \phi(x)\}\mathrm{d}P(x) \rangle_{\mathcal{H}} \\
        &=\langle \hat{\gamma}, \phi(d')\otimes \Delta_m \rangle_{\mathcal{H}} 
        +\langle(\hat{\gamma}-\gamma_0), \phi(d')\otimes \int \{\mu_m(d,x) \otimes \phi(x)\}\mathrm{d}P(x) \rangle_{\mathcal{H}} \\
        &=\langle (\hat{\gamma}-\gamma_0), \phi(d')\otimes \Delta_m \rangle_{\mathcal{H}} \\
        &\quad + \langle \gamma_0, \phi(d')\otimes \Delta_m \rangle_{\mathcal{H}}+\langle(\hat{\gamma}-\gamma_0), \phi(d')\otimes \int \{\mu_m(d,x) \otimes \phi(x)\}\mathrm{d}P(x) \rangle_{\mathcal{H}}.
    \end{align*}
    Therefore by Lemmas~\ref{theorem:regression},~\ref{theorem:mean}, and~\ref{theorem:conditional} as well as~Proposition~\ref{prop:delta_m}, with probability $1-3\delta$
    \begin{align*}
        &|\hat{\theta}^{ME}(d,d')-\theta^{ME}_0(d,d')|\\
        &\leq \|\hat{\gamma}-\gamma_0\|_{\mathcal{H}}\|\phi(d')\|_{\mathcal{H}_{\mathcal{D}}} \left\|\Delta_m\right\|_{\mathcal{H}_{\mathcal{M}}\otimes\mathcal{H}_{\mathcal{X}}}
       \\
       &\quad+
       \|\gamma_0\|_{\mathcal{H}}\|\phi(d')\|_{\mathcal{H}_{\mathcal{D}}}\left\|\Delta_m\right\|_{\mathcal{H}_{\mathcal{M}}\otimes\mathcal{H}_{\mathcal{X}}}
       \\
       &\quad+
       \|\hat{\gamma}-\gamma_0\|_{\mathcal{H}}\|\phi(d')\|_{\mathcal{H}_{\mathcal{D}}} \left\|\int \{\mu_m(d,x) \otimes \phi(x)\}\mathrm{d}P(x)\right\|_{\mathcal{H}_{\mathcal{M}}\otimes\mathcal{H}_{\mathcal{X}}}
      \\
      &\leq \kappa_d \cdot r_{\gamma}(n,\delta,b,c) \cdot \{\kappa_x \cdot r^{ME}_{\mu}(n,\delta,b_1,c_1)+r^{ME}_{\mu}(n,\delta)\}\\
      &\quad +\kappa_d\cdot\|\gamma_0\|_{\mathcal{H}} \cdot \{\kappa_x \cdot r^{ME}_{\mu}(n,\delta,b_1,c_1)+r^{ME}_{\mu}(n,\delta)\}
      \\
       &\quad+\kappa_m\kappa_d\kappa_x \cdot r_{\gamma}(n,\delta,b,c)
      \\
      &=O\left(n^{-\frac{1}{2}\frac{c-1}{c+1/b}}+n^{-\frac{1}{2}\frac{c_1-1}{c_1+1/b_1}}\right)\!.
    \end{align*}
Likewise for the incremental responses.
\end{proof}

\textbf{Time-varying responses.}
To lighten notation, define
\begin{align*}
    \Delta_p&=\hat{\mu}_{x_1,x_2}(d_1)-\mu_{x_1,x_2}(d_1)=\frac{1}{n}\sum_{i=1}^n \{\phi(X_{1i})\otimes \hat{\mu}_{x_2}(d_1,X_{1i}) \}-\int \{\phi(x_1)\otimes \mu_{x_2}(d_1,x_1) \}\mathrm{d}P(x_1), \\
    \Delta_q&=\hat{\nu}_{x_1,x_2}(d_1)-\nu_{x_1,x_2}(d_1)=\frac{1}{n}\sum_{i=1}^n \{\phi(X_{1i})\otimes \hat{\nu}_{x_2}(d_1,X_{1i}) \}-\int \{\phi(x_1)\otimes \nu_{x_2}(d_1,x_1) \}\mathrm{d}\tilde{P}(x_1).
\end{align*}

\begin{proposition}[Sequential kernel embedding rates]\label{prop:delta_p}
Suppose Assumptions~\ref{assumption:RKHS} and~\ref{assumption:original} hold.
\begin{enumerate}
    \item If in addition Assumption~\ref{assumption:smooth_SATE} holds then with probability $1-2\delta$
    \begin{align*}
        &\left\|\Delta_p\right\|_{\mathcal{H}_{\mathcal{X}}\otimes\mathcal{H}_{\mathcal{X}}}\leq \kappa_x \cdot r^{GF}_{\mu}(n,\delta,b_4,c_4)+r^{GF}_{\mu}(n,\delta).
\end{align*}
    \item If in addition Assumption~\ref{assumption:smooth_SDS} holds then with probability $1-2\delta$
    \begin{align*}
        &\left\|\Delta_q\right\|_{\mathcal{H}_{\mathcal{X}}\otimes\mathcal{H}_{\mathcal{X}}}\leq \kappa_x \cdot r^{DS}_{\nu}(\tilde{n},\delta,c_5)+r^{DS}_{\nu}(\tilde{n},\delta).
\end{align*}
\end{enumerate}
\end{proposition}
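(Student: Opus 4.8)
The plan is to prove both bounds by mirroring the proof of Proposition~\ref{prop:delta_m}, since $\Delta_p$ and $\Delta_q$ have the same structure as $\Delta_m$, with the mediator and covariate embeddings replaced by the second- and first-period covariate embeddings. For part (i), I would first insert the intermediate quantity $\frac{1}{n}\sum_{i=1}^n\{\phi(X_{1i})\otimes\mu_{x_2}(d_1,X_{1i})\}$ and apply the triangle inequality to split $\|\Delta_p\|_{\mathcal{H}_{\mathcal{X}}\otimes\mathcal{H}_{\mathcal{X}}}$ into an \emph{estimation error} term (replacing the fitted conditional embedding $\hat{\mu}_{x_2}$ by the population embedding $\mu_{x_2}$) and a \emph{sampling error} term (replacing the empirical average over $P$ by the population integral over $P$).

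For the estimation error term I would apply the triangle inequality inside the average and then exploit the multiplicativity of the tensor product norm, $\|\phi(X_{1i})\otimes v\|_{\mathcal{H}_{\mathcal{X}}\otimes\mathcal{H}_{\mathcal{X}}}=\|\phi(X_{1i})\|_{\mathcal{H}_{\mathcal{X}}}\|v\|_{\mathcal{H}_{\mathcal{X}}}$, bounding $\|\phi(X_{1i})\|_{\mathcal{H}_{\mathcal{X}}}\leq\kappa_x$ by Assumption~\ref{assumption:RKHS} and the conditional-embedding deviation $\|\hat{\mu}_{x_2}(d_1,X_{1i})-\mu_{x_2}(d_1,X_{1i})\|_{\mathcal{H}_{\mathcal{X}}}$ by the \emph{uniform} rate of Lemma~\ref{theorem:conditional} specialized to $\mathcal{A}_4=\mathcal{X}$ and $\mathcal{B}_4=\mathcal{D}\times\mathcal{X}$. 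This produces the term $\kappa_x\cdot r^{GF}_{\mu}(n,\delta,b_4,c_4)$, valid with probability $1-\delta$. For the sampling error term I would invoke Lemma~\ref{theorem:mean} in the specialized form recorded in Remark~\ref{remark:2}, which gives $r^{GF}_{\mu}(n,\delta)$ with probability $1-\delta$. A union bound over these two events yields the stated inequality with probability $1-2\delta$.

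Part (ii) is identical after substituting the distribution-shift objects: the empirical average runs over $\tilde{X}_{1i}$ drawn from $\tilde{P}$ (of size $\tilde{n}$), the fitted embedding $\hat{\nu}_{x_2}$ corresponds to the operator $E_5$ controlled under Assumption~\ref{assumption:smooth_SDS}, and the two resulting rates become $\kappa_x\cdot r^{DS}_{\nu}(\tilde{n},\delta,c_5)$ from Lemma~\ref{theorem:conditional} and $r^{DS}_{\nu}(\tilde{n},\delta)$ from Remark~\ref{remark:2}, again combined by a union bound.

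The argument is essentially routine given the lemmas already established; the only point requiring care is that the uniform-over-$\mathcal{B}_{\ell}$ guarantee in Lemma~\ref{theorem:conditional} is what lets the supremum over $x_1$ (with $d_1$ held fixed) inside the average be absorbed into a single deviation rate, so that no fresh concentration inequality is needed and the bounded feature map $\phi(X_{1i})$ contributes only the constant $\kappa_x$.
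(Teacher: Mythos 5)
Your proposal matches the paper's proof essentially step for step: the same insertion of the intermediate quantity $\frac{1}{n}\sum_{i=1}^n\{\phi(X_{1i})\otimes\mu_{x_2}(d_1,X_{1i})\}$, the same triangle-inequality split into an estimation term bounded by $\kappa_x$ times the uniform conditional-embedding rate from Lemma~\ref{theorem:conditional} and a sampling term bounded via Lemma~\ref{theorem:mean}/Remark~\ref{remark:2}, and the same observation that the distribution-shift case is verbatim identical. No gaps; the union bound giving $1-2\delta$ is handled exactly as you describe.
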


\begin{proof}
We prove the result for $\theta_0^{GF}$. The argument for $\theta_0^{DS}$ is identical.
  By the triangle inequality, 
\begin{align*}
    \left\|\Delta_p\right\|_{\mathcal{H}_{\mathcal{X}}\otimes\mathcal{H}_{\mathcal{X}}}
    &\leq \left\|\frac{1}{n}\sum_{i=1}^n \{\phi(X_{1i})\otimes \hat{\mu}_{x_2}(d_1,X_{1i}) \}-\{\phi(X_{1i})\otimes \mu_{x_2}(d_1,X_{1i}) \}\right\|_{\mathcal{H}_{\mathcal{X}}\otimes\mathcal{H}_{\mathcal{X}}} \\
    &\quad +  \left\|\frac{1}{n}\sum_{i=1}^n\{\phi(X_{1i}) \otimes \mu_{x_2}(d_1,X_{1i})\}-\int \{\phi(x_1)\otimes \mu_{x_2}(d_1,x_1)\} \mathrm{d}P(x_1)\right\|_{\mathcal{H}_{\mathcal{X}}\otimes \mathcal{H}_{\mathcal{X}}}\!.
\end{align*}
  Focusing on the former term, by Lemma~\ref{theorem:conditional}
  \begin{align*}
      &\left\|\frac{1}{n}\sum_{i=1}^n \{\phi(X_{1i})\otimes \hat{\mu}_{x_2}(d_1,X_{1i}) \}-\{\phi(X_{1i})\otimes \mu_{x_2}(d_1,X_{1i}) \}\right\|_{\mathcal{H}_{\mathcal{X}}\otimes\mathcal{H}_{\mathcal{X}}} \\
      &=\left\|\frac{1}{n}\sum_{i=1}^n  \phi(X_{1i}) \otimes \{\hat{\mu}_{x_2}(d_1,X_{1i})-\mu_{x_2}(d_1,X_{1i})\} \right\|_{\mathcal{H}_{\mathcal{X}}\otimes\mathcal{H}_{\mathcal{X}}} \\
      &\leq \kappa_x \cdot \sup_{x_1\in\mathcal{X}}\left\| \hat{\mu}_{x_2}(d_1,x_1)-\mu_{x_2}(d_1,x_1)\right\|_{\mathcal{H}_{\mathcal{X}}} \\
      &\leq \kappa_x \cdot r^{GF}_{\mu}(n,\delta,b_4,c_4).
  \end{align*}
  Focusing on the latter term, by Lemma~\ref{theorem:mean}
  \begin{align*}
      \left\|\frac{1}{n}\sum_{i=1}^n\{\phi(X_{1i}) \otimes \mu_{x_2}(d_1,X_{1i})\}-\int \{\phi(x_1)\otimes \mu_{x_2}(d_1,x_1)\} \mathrm{d}P(x_1)\right\|_{\mathcal{H}_{\mathcal{X}}\otimes \mathcal{H}_{\mathcal{X}}}\leq r^{GF}_{\mu}(n,\delta).
  \end{align*}
\end{proof}

\begin{proof}[Proof of Theorem~\ref{theorem:consistency_planning}]
We consider each time-varying response parameter.
\begin{enumerate}
    \item For $\theta_0^{GF}$, observe that
    \begin{align*}
        &\hat{\theta}^{GF}(d_1,d_2)-\theta_0^{GF}(d_1,d_2)\\
        &=\langle \hat{\gamma}, \phi(d_1)\otimes\phi(d_2)\otimes \frac{1}{n}\sum_{i=1}^n\{\phi(X_{1i})\otimes \hat{\mu}_{x_2}(d_1,X_{1i})\}  \rangle_{\mathcal{H}}  \\
        &\quad - \langle \gamma_0,  \phi(d_1)\otimes\phi(d_2) \otimes \int \phi(x_1)\otimes \mu_{x_2}(d_1,x_1) \mathrm{d}P(x_1) \rangle_{\mathcal{H}} \\
        &=\langle \hat{\gamma}, \phi(d_1)\otimes\phi(d_2)\otimes 
        \Delta_p \rangle_{\mathcal{H}} \\
        &\quad +\langle (\hat{\gamma}-\gamma_0),  \phi(d_1)\otimes\phi(d_2) \otimes \int \phi(x_1)\otimes \mu_{x_2}(d_1,x_1) \mathrm{d}P(x_1) \rangle_{\mathcal{H}}  \\
        &=\langle (\hat{\gamma}-\gamma_0), \phi(d_1)\otimes\phi(d_2)\otimes
        \Delta_p \rangle_{\mathcal{H}} 
        \\
       &\quad+\langle \gamma_0, \phi(d_1)\otimes\phi(d_2)\otimes
        \Delta_p \rangle_{\mathcal{H}} \\
        &\quad +\langle (\hat{\gamma}-\gamma_0),  \phi(d_1)\otimes\phi(d_2) \otimes \int \phi(x_1)\otimes \mu_{x_2}(d_1,x_1) \mathrm{d}P(x_1) \rangle_{\mathcal{H}}.
    \end{align*}
    Therefore by Lemmas~\ref{theorem:regression},~\ref{theorem:mean}, and~\ref{theorem:conditional} as well as~Proposition~\ref{prop:delta_p}, with probability $1-3\delta$
    \begin{align*}
        &|\hat{\theta}^{GF}(d_1,d_2)-\theta^{GF}_0(d_1,d_2)|\\
        &\leq \|\hat{\gamma}-\gamma_0\|_{\mathcal{H}}\|\phi(d_1)\|_{\mathcal{H}_{\mathcal{D}}}\|\phi(d_2)\|_{\mathcal{H}_{\mathcal{D}}} 
        \left\|\Delta_p\right\|_{\mathcal{H}_{\mathcal{X}}\otimes\mathcal{H}_{\mathcal{X}}}
       \\
       &\quad +
       \|\gamma_0\|_{\mathcal{H}}\|\phi(d_1)\|_{\mathcal{H}_{\mathcal{D}}}\|\phi(d_2)\|_{\mathcal{H}_{\mathcal{D}}}
       \left\|\Delta_p\right\|_{\mathcal{H}_{\mathcal{X}}\otimes\mathcal{H}_{\mathcal{X}}}
       \\
       &\quad+
       \|\hat{\gamma}-\gamma_0\|_{\mathcal{H}}\|\phi(d_1)\|_{\mathcal{H}_{\mathcal{D}}}\|\phi(d_2)\|_{\mathcal{H}_{\mathcal{D}}}\times \left\|\int \{\phi(x_1)\otimes \mu_{x_2}(d_1,x_1) \}\mathrm{d}P(x_1)\right\|_{\mathcal{H}_{\mathcal{X}}\otimes\mathcal{H}_{\mathcal{X}}}
      \\
      &\leq \kappa^2_d \cdot r_{\gamma}(n,\delta,b,c) \cdot \{\kappa_x \cdot r^{GF}_{\mu}(n,\delta,b_4,c_4)+r^{GF}_{\mu}(n,\delta)\}\\
      &\quad +\kappa^2_d\cdot\|\gamma_0\|_{\mathcal{H}} \cdot \{\kappa_x \cdot r^{GF}_{\mu}(n,\delta,b_4,c_4)+r^{GF}_{\mu}(n,\delta)\}\\
       &\quad+\kappa^2_d\kappa^2_x \cdot r_{\gamma}(n,\delta,b,c)
      \\
      &=O\left(n^{-\frac{1}{2}\frac{c-1}{c+1/b}}+n^{-\frac{1}{2}\frac{c_4-1}{c_4+1/b_4}}\right).
    \end{align*}
    
    \item For $\theta_0^{DS}$, by the same argument
  \begin{align*}
        &|\hat{\theta}^{DS}(d_1,d_2)-\theta^{DS}_0(d_1,d_2)|\\
      &\leq \kappa^2_d \cdot r_{\gamma}(n,\delta,b,c) \cdot \{\kappa_x \cdot r^{DS}_{\nu}(\tilde{n},\delta,c_5)+r^{DS}_{\nu}(\tilde{n},\delta)\}\\
      &\quad +\kappa^2_d\cdot\|\gamma_0\|_{\mathcal{H}} \cdot \{\kappa_x \cdot r^{DS}_{\nu}(\tilde{n},\delta,c_5)+r^{DS}_{\nu}(\tilde{n},\delta)\}\\
      &\quad +\kappa^2_d\kappa^2_x \cdot r_{\gamma}(n,\delta,b,c)
      \\
      &=O\left(n^{-\frac{1}{2}\frac{c-1}{c+1/b}}+\tilde{n}^{-\frac{1}{2}\frac{c_5-1}{c_5+1/b_5}}\right)\!.
    \end{align*}
\end{enumerate}
   Likewise for the incremental responses.
\end{proof}

\textbf{Counterfactual distributions.}
\begin{proof}[Proof of Theorem~\ref{theorem:consistency_dist}]
   The argument is analogous to Theorems~\ref{theorem:consistency_mediation} and~\ref{theorem:consistency_planning}, replacing $\|\gamma_0\|_{\mathcal{H}}$ with $\|E_8\|_{\mathcal{L}_2}$ or $\|E_9\|_{\mathcal{L}_2}$  and replacing $r_{\gamma}(n,\delta,b,c)$ with $r_E(n,\delta,b_8,c_8)$ or $r_E(n,\delta,b_9,c_9)$.
\end{proof}
\section{Semiparametric inference proofs}\label{sec:inference_proof}

In this supplement, we (i) present technical lemmas for regression; (ii) appeal to these lemmas to prove $n^{-1/2}$ consistency, finite sample Gaussian approximation, and semiparametric efficiency of mediated and time-varying treatment effects.

\subsection{Lemmas}

Recall the various nonparametric objects required for inference. For mediated effects,
\begin{align*}
    \gamma_0(d,m,x)&=E(Y|D=d,M=m,X=x),\\
    \pi_0(d;x)&=\text{\normalfont pr}(D=d|X=x),\\
    \rho_0(d;m,x)&=\text{\normalfont pr}(D=d|M=m,X=x),\\
   \omega_0(d,d';x) &=\int \gamma_0(d',m,x)\mathrm{d}P(m|d,x).
\end{align*}
For time-varying treatment effects,
\begin{align*}
\gamma_0(d_1,d_2,x_1,x_2)&=E(Y|D_1=d_1,D_2=d_2,X_1=x_1,X_2=x_2),\\
\pi_0(d_1;x_1)&=\text{\normalfont pr}(D_1=d_1|X_1=x_1),\\
\rho_0(d_2;d_1,x_1,x_2)&=\text{\normalfont pr}(D_2=d_2|D_1=d_1,X_1=x_1,X_2=x_2),\\
\omega_0(d_1,d_2;x_1)&=\int \gamma_0(d_1,d_2,x_1,x_2) \mathrm{d}P(x_2|d_1,x_1).
\end{align*}
We begin by summarizing rates for these various quantities.

\textbf{Response curve rate.}
We provide a uniform rate for $\omega_0$, using nonparametric techniques developed in Supplement~\ref{sec:consistency_proof}.

\begin{proposition}[Uniform $\omega$ rate]\label{prop:unif_omega}
Suppose Assumptions~\ref{assumption:RKHS},~\ref{assumption:original}, and~\ref{assumption:smooth_gamma} hold.
\begin{enumerate}
    \item If in addition Assumption~\ref{assumption:smooth_ME} holds then for mediated efffects, with probability $1-2\delta$,
\begin{align*}
\|\hat{\omega}-\omega_0\|_{\infty}
&\leq r^{ME}_{\omega}(n,\delta,b,c,b_1,c_1)\\
&=\kappa_d\kappa_x \cdot r_{\gamma}(n,\delta,b,c) \cdot r^{ME}_{\mu}(n,\delta,b_1,c_1)+\kappa_d\kappa_x \cdot \|\gamma_0\|_{\mathcal{H}}\cdot  r^{ME}_{\mu}(n,\delta,b_1,c_1)\\
&\quad +\kappa_d\kappa_m\kappa_x\cdot  r_{\gamma}(n,\delta,b,c)\\
&=O\left(n^{-\frac{1}{2}\frac{c-1}{c+1/b}}+n^{-\frac{1}{2}\frac{c_1-1}{c_1+1/b_1}}\right)\!.
\end{align*}
    \item If in addition Assumption~\ref{assumption:smooth_SATE} holds then for time-varying treatment effects, with probability $1-2\delta$
\begin{align*}
\|\hat{\omega}-\omega_0\|_{\infty}
&\leq r^{GF}_{\omega}(n,\delta,b,c,b_4,c_4)\\
&=\kappa^2_d\kappa_x \cdot r_{\gamma}(n,\delta,b,c) \cdot r^{GF}_{\mu}(n,\delta,b_4,c_4)+\kappa^2_d\kappa_x \cdot \|\gamma_0\|_{\mathcal{H}}\cdot  r^{GF}_{\mu}(n,\delta,b_4,c_4)\\
&\quad +\kappa^2_d\kappa^2_x\cdot  r_{\gamma}(n,\delta,b,c)\\
&=O\left(n^{-\frac{1}{2}\frac{c-1}{c+1/b}}+n^{-\frac{1}{2}\frac{c_4-1}{c_4+1/b_4}}\right)\!.
\end{align*}
\end{enumerate}
\end{proposition}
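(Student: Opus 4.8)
The plan is to mirror the proof of Theorem~\ref{theorem:consistency_mediation}, but to retain $x$ as a free argument (rather than averaging it) and to certify that every bound is uniform in $(d,d',x)$. The starting point is the decoupled representation of Theorem~\ref{theorem:representation_mediation}(i), namely $\omega_0(d,d';x)=\langle \gamma_0,\phi(d')\otimes\mu_m(d,x)\otimes\phi(x)\rangle_{\mathcal{H}}$, together with the matching estimator $\hat{\omega}(d,d';x)=\langle\hat{\gamma},\phi(d')\otimes\hat{\mu}_m(d,x)\otimes\phi(x)\rangle_{\mathcal{H}}$ read off from Algorithm~\ref{algorithm:mediation}. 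For the time-varying claim I would use Theorem~\ref{theorem:representation_planning}(i) in the same role, with the feature map $\phi(d_1)\otimes\phi(d_2)$ in place of $\phi(d')$.

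First I would write the error as a telescoping sum of three inner products by adding and subtracting:
\begin{align*}
\hat{\omega}(d,d';x)-\omega_0(d,d';x)
&=\langle\hat{\gamma}-\gamma_0,\ \phi(d')\otimes\{\hat{\mu}_m(d,x)-\mu_m(d,x)\}\otimes\phi(x)\rangle_{\mathcal{H}}\\
&\quad+\langle\gamma_0,\ \phi(d')\otimes\{\hat{\mu}_m(d,x)-\mu_m(d,x)\}\otimes\phi(x)\rangle_{\mathcal{H}}\\
&\quad+\langle\hat{\gamma}-\gamma_0,\ \phi(d')\otimes\mu_m(d,x)\otimes\phi(x)\rangle_{\mathcal{H}}.
\end{align*}
Next I would apply Cauchy--Schwarz to each term, using the tensor-product norm factorization from Assumption~\ref{assumption:RKHS} so that $\|\phi(d')\otimes v\otimes\phi(x)\|_{\mathcal{H}}=\|\phi(d')\|_{\mathcal{H}_{\mathcal{D}}}\|v\|_{\mathcal{H}_{\mathcal{M}}}\|\phi(x)\|_{\mathcal{H}_{\mathcal{X}}}\le\kappa_d\kappa_x\|v\|_{\mathcal{H}_{\mathcal{M}}}$, along with the bound $\|\mu_m(d,x)\|_{\mathcal{H}_{\mathcal{M}}}\le\kappa_m$ from Jensen's inequality and the boundedness of $\phi(m)$.

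The three resulting factors are then controlled by results already in hand: $\|\hat{\gamma}-\gamma_0\|_{\mathcal{H}}\le r_{\gamma}(n,\delta,b,c)$ from Lemma~\ref{theorem:regression}, and the crucial ingredient $\sup_{d,x}\|\hat{\mu}_m(d,x)-\mu_m(d,x)\|_{\mathcal{H}_{\mathcal{M}}}\le r^{ME}_{\mu}(n,\delta,b_1,c_1)$, which holds uniformly in the conditioning argument by Lemma~\ref{theorem:conditional}. Collecting the three terms under a union bound (probability $1-2\delta$) reproduces the stated expression, and the announced rate follows because the product term $r_{\gamma}\cdot r^{ME}_{\mu}$ is of strictly higher order than the two standalone terms while $\|\gamma_0\|_{\mathcal{H}}$ is a finite constant. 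The time-varying claim is identical after replacing $\kappa_d$ by $\kappa_d^2$ and $r^{ME}_{\mu}$ by $r^{GF}_{\mu}$, and invoking Assumption~\ref{assumption:smooth_SATE}.

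The main thing to get right—more a point of care than a genuine obstacle—is the uniformity. Unlike Theorem~\ref{theorem:consistency_mediation}, there is no empirical average $n^{-1}\sum_i$ over the covariates here, so the unconditional embedding rate $r^{ME}_{\mu}(n,\delta)$ of Lemma~\ref{theorem:mean} never enters (which is also why the probability is $1-2\delta$ rather than $1-3\delta$). The supremum over $(d,d',x)$ is instead absorbed entirely by the RKHS-norm rate for $\hat{\gamma}$, which through Cauchy--Schwarz is automatically pointwise-uniform, and by the already-uniform conditional embedding rate. I would double-check that Lemma~\ref{theorem:conditional} indeed delivers its bound for every $b\in\mathcal{B}_{\ell}$, since this is precisely what licenses passing to the supremum without any additional covering argument.
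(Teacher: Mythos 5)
Your proposal is correct and follows essentially the same route as the paper's own proof: the identical three-term telescoping decomposition of $\hat{\omega}-\omega_0$, Cauchy--Schwarz with the tensor-product norm factorization, the bound $\|\mu_m(d,x)\|_{\mathcal{H}_{\mathcal{M}}}\le\kappa_m$, and Lemmas~\ref{theorem:regression} and~\ref{theorem:conditional} under a union bound giving probability $1-2\delta$. Your observation that the unconditional embedding rate of Lemma~\ref{theorem:mean} never enters (since no empirical average over covariates is taken) is exactly the right reading of why this proposition differs from Theorem~\ref{theorem:consistency_mediation}.
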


\begin{proof}
We prove each result.
\begin{enumerate}
    \item For the mediated effect, fix $(d,d',x)$. Then
    \begin{align*}
        &\hat{\omega}(d,d';x)-\omega_0(d,d';x)\\
        &=\langle\hat{\gamma},\phi(d')\otimes \hat{\mu}_m(d,x)\otimes \phi(x)\rangle_{\mathcal{H}}-\langle \gamma_0,\phi(d')\otimes \mu_m(d,x)\otimes \phi(x) \rangle_{\mathcal{H}}  \\
        &=\langle\hat{\gamma},\phi(d')\otimes \{\hat{\mu}_m(d,x)-\mu_m(d,x)\}\otimes \phi(x)\rangle_{\mathcal{H}}+\langle (\hat{\gamma}-\gamma_0),\phi(d')\otimes \mu_m(d,x)\otimes \phi(x) \rangle_{\mathcal{H}} \\
        &=\langle(\hat{\gamma}-\gamma_0),\phi(d')\otimes \{\hat{\mu}_m(d,x)-\mu_m(d,x)\}\otimes \phi(x)\rangle_{\mathcal{H}}\\
        &\quad +\langle \gamma_0,\phi(d')\otimes \{\hat{\mu}_m(d,x)-\mu_m(d,x)\}\otimes \phi(x)\rangle_{\mathcal{H}}\\
        &\quad +\langle (\hat{\gamma}-\gamma_0),\phi(d')\otimes \mu_m(d,x)\otimes \phi(x) \rangle_{\mathcal{H}}.
    \end{align*}
    Therefore by Lemmas~\ref{theorem:regression} and~\ref{theorem:conditional}, with probability $1-2\delta$
    \begin{align*}
        &|\hat{\omega}(d,d';x)-\omega_0(d,d';x)| \\
        &\leq \|\hat{\gamma}-\gamma_0\|_{\mathcal{H}}\|\phi(d')\|_{\mathcal{H}_{\mathcal{D}}} \|\hat{\mu}_m(d,x)-\mu_m(d,x)\|_{\mathcal{H}_{\mathcal{M}}} \|\phi(x)\|_{\mathcal{H}_{\mathcal{X}}} \\
        &\quad + \|\gamma_0\|_{\mathcal{H}} \|\phi(d')\|_{\mathcal{H}_{\mathcal{D}}} \|\hat{\mu}_m(d,x)-\mu_m(d,x)\|_{\mathcal{H}_{\mathcal{M}}} \|\phi(x)\|_{\mathcal{H}_{\mathcal{X}}} \\
        &\quad + \|\hat{\gamma}-\gamma_0\|_{\mathcal{H}}\|\phi(d')\|_{\mathcal{H}_{\mathcal{D}}} \|\mu_m(d,x)\|_{\mathcal{H}_{\mathcal{M}}} \|\phi(x)\|_{\mathcal{H}_{\mathcal{X}}} \\
        &\leq \kappa_d\kappa_x \cdot r_{\gamma}(n,\delta,b,c) \cdot r^{ME}_{\mu}(n,\delta,b_1,c_1)+\kappa_d\kappa_x \cdot \|\gamma_0\|_{\mathcal{H}}\cdot  r^{ME}_{\mu}(n,\delta,b_1,c_1)\\
        &\quad +\kappa_d\kappa_m\kappa_x\cdot  r_{\gamma}(n,\delta,b,c) \\
        &=O\left(n^{-\frac{1}{2}\frac{c-1}{c+1/b}}+n^{-\frac{1}{2}\frac{c_1-1}{c_1+1/b_1}}\right)\!.
    \end{align*}
    \item For the time-varying treatment effect, fix $(d_1,d_2,x_1)$. Then
    \begin{align*}
        &\hat{\omega}(d_1,d_2;x_1)-\omega_0(d,d';x_1)\\
        &=\langle\hat{\gamma},\phi(d_1)\otimes \phi(d_2) \otimes \phi(x_1)\otimes \hat{\mu}_{x_2}(d_1,x_1)\rangle_{\mathcal{H}}\\
        &\quad -\langle \gamma_0,\phi(d_1)\otimes \phi(d_2) \otimes \phi(x_1)\otimes \mu_{x_2}(d_1,x_1) \rangle_{\mathcal{H}}  \\
        &=\langle\hat{\gamma},\phi(d_1)\otimes \phi(d_2) \otimes \phi(x_1)\otimes \{\hat{\mu}_{x_2}(d_1,x_1)-\mu_{x_2}(d_1,x_1)\}\rangle_{\mathcal{H}}
        \\
        &\quad +\langle (\hat{\gamma}-\gamma_0),\phi(d_1)\otimes \phi(d_2) \otimes \phi(x_1)\otimes \mu_{x_2}(d_1,x_1) \rangle_{\mathcal{H}} \\
        &=\langle(\hat{\gamma}-\gamma_0),\phi(d_1)\otimes \phi(d_2) \otimes \phi(x_1)\otimes \{\hat{\mu}_{x_2}(d_1,x_1)-\mu_{x_2}(d_1,x_1)\}\rangle_{\mathcal{H}}\\
        &\quad +\langle \gamma_0,\phi(d_1)\otimes \phi(d_2) \otimes \phi(x_1)\otimes \{\hat{\mu}_{x_2}(d_1,x_1)-\mu_{x_2}(d_1,x_1)\}\rangle_{\mathcal{H}}\\
        &\quad +\langle (\hat{\gamma}-\gamma_0),\phi(d_1)\otimes \phi(d_2) \otimes \phi(x_1)\otimes \mu_{x_2}(d_1,x_1) \rangle_{\mathcal{H}}.
    \end{align*}
    Therefore by Lemmas~\ref{theorem:regression} and~\ref{theorem:conditional}, with probability $1-2\delta$
    \begin{align*}
        &|\hat{\omega}(d,d';x)-\omega_0(d,d';x)| \\
        &\leq \|\hat{\gamma}-\gamma_0\|_{\mathcal{H}}\|\phi(d_1)\|_{\mathcal{H}_{\mathcal{D}}}\|\phi(d_2)\|_{\mathcal{H}_{\mathcal{D}}} \|\phi(x_1)\|_{\mathcal{H}_{\mathcal{X}}} \|\hat{\mu}_{x_2}(d_1,x_1)-\mu_{x_1}(d_1,x_1)\|_{\mathcal{H}_{\mathcal{X}}} \\
        &\quad + \|\gamma_0\|_{\mathcal{H}} \|\phi(d_1)\|_{\mathcal{H}_{\mathcal{D}}}\|\phi(d_2)\|_{\mathcal{H}_{\mathcal{D}}}\|\phi(x_1)\|_{\mathcal{H}_{\mathcal{X}}} \|\hat{\mu}_{x_2}(d_1,x_1)-\mu_{x_1}(d_1,x_1)\|_{\mathcal{H}_{\mathcal{X}}}  \\
        &\quad + \|\hat{\gamma}-\gamma_0\|_{\mathcal{H}}\|\phi(d_1)\|_{\mathcal{H}_{\mathcal{D}}}\|\phi(d_2)\|_{\mathcal{H}_{\mathcal{D}}} \|\phi(x_1)\|_{\mathcal{H}_{\mathcal{X}}} \|\mu_{x_2}(d_1,x_1)\|_{\mathcal{H}_{\mathcal{X}}} \\
        &\leq \kappa^2_d\kappa_x \cdot r_{\gamma}(n,\delta,b,c) \cdot r^{GF}_{\mu}(n,\delta,b_4,c_4)+\kappa^2_d\kappa_x \cdot \|\gamma_0\|_{\mathcal{H}}\cdot  r^{GF}_{\mu}(n,\delta,b_4,c_4)\\
        &\quad +\kappa^2_d\kappa^2_x\cdot  r_{\gamma}(n,\delta,b,c) \\
        &=O\left(n^{-\frac{1}{2}\frac{c-1}{c+1/b}}+n^{-\frac{1}{2}\frac{c_4-1}{c_4+1/b_4}}\right)\!.
    \end{align*}
\end{enumerate}
\end{proof}

\textbf{Mean square rate.}
Observe that $(\gamma_0,\pi_0,\rho_0)$ can be estimated by nonparametric regressions. We write an abstract result for kernel ridge regression then specialize it for these various nonparametric regressions. For the abstract result, write
\begin{align*}
    \gamma_0&=\argmin_{\gamma\in\mathcal{H}}\mathcal{E}(\gamma),\quad \mathcal{E}(\gamma)=E[\{Y-\gamma(W)\}^2], \\
    \hat{\gamma}&=\argmin_{\gamma\in\mathcal{H}}\hat{\mathcal{E}}(\gamma),\quad \hat{\mathcal{E}}(\gamma)=\frac{1}{n}\sum_{i=1}^n\{Y_i-\gamma(W_i)\}^2+\lambda\|\gamma\|^2_{\mathcal{H}}.
\end{align*}
Define the mean square learning rate $\mathcal{R}(\hat{\gamma}_{\ell})$ of $\hat{\gamma}_{\ell}$ trained on observations indexed by $I^c_{\ell}$ as
$$
    \mathcal{R}(\hat{\gamma}_{\ell})=E[\{\hat{\gamma}_{\ell}(W)-\gamma_0(W)\}^2\mid I^c_{\ell}].
$$

\begin{lemma}[Regression mean square rate]\label{theorem:mse}
Suppose Assumptions~\ref{assumption:RKHS},~\ref{assumption:original}, and \ref{assumption:smooth_gamma} hold. Set $\lambda=n^{-1/(c+1/b)}$. Then with probability $1-\delta$, for $n$ sufficiently large,
$$
\{\mathcal{R}(\hat{\gamma}_{\ell})\}^{1/2}\leq s_{\gamma}(n,\delta,b,c)=C\ln(4/\delta) \cdot n^{-\frac{1}{2}\frac{c}{c+1/b}}.
$$
where $C$ is a constant independent of $n$ and $\delta$.
\end{lemma}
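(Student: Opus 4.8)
The plan is to recognize that $\mathcal{R}(\hat\gamma_\ell)$ is exactly the squared $\mathbb{L}^2(P_W)$ prediction error of the kernel ridge regression, conditional on the training fold $I_\ell^c$, and that this quantity is governed by the same learning-theory machinery already invoked for the RKHS-norm rate in Lemma \ref{theorem:regression}. Writing $\rho$ for the marginal law of the regressor $W$, I would first note that
$$
\mathcal{R}(\hat\gamma_\ell) = E[\{\hat\gamma_\ell(W) - \gamma_0(W)\}^2 \mid I_\ell^c] = \|\hat\gamma_\ell - \gamma_0\|^2_{\mathbb{L}^2_{\rho}},
$$
so that $\{\mathcal{R}(\hat\gamma_\ell)\}^{1/2}$ is the $\mathbb{L}^2_{\rho}$ distance between the estimator and the true regression function. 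Because the excess least-squares risk equals the squared $\mathbb{L}^2_{\rho}$ distance to the conditional mean, this is precisely the $\alpha = 0$ endpoint of the interpolation-norm learning rates of \cite{fischer2017sobolev}, whereas Lemma \ref{theorem:regression} reports the $\alpha = 1$ (full RKHS) endpoint.

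Next I would verify that the hypotheses of the cited theorem hold under Assumptions \ref{assumption:RKHS}, \ref{assumption:original}, and \ref{assumption:smooth_gamma}. These are the identical conditions already checked in Proposition S3 of \cite{singh2020kernel} en route to Lemma \ref{theorem:regression}: boundedness and measurability of the feature map (Assumption \ref{assumption:RKHS}), the Polish-space and bounded-outcome conditions (Assumption \ref{assumption:original}) that control the noise moments, and the source condition $\gamma_0 \in \mathcal{H}^c$ together with the eigenvalue decay $\eta_j \leq C j^{-b}$ (Assumption \ref{assumption:smooth_gamma}) that fix the exponents $(b,c)$. Since the $\alpha = 0$ norm is weaker than the $\alpha = 1$ norm, no new assumption is needed; the admissible interpolation range includes the endpoint $\alpha = 0$ because $c > 1 \geq \alpha$.

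With the conditions in place, I would apply the learning rate of \cite{fischer2017sobolev} at interpolation parameter $\alpha = 0$ and with the rate-optimal choice $\lambda = n^{-1/(c+1/b)}$. The theorem then yields, with probability $1 - \delta$ and for $n$ sufficiently large,
$$
\|\hat\gamma_\ell - \gamma_0\|_{\mathbb{L}^2_{\rho}} \leq C\ln(4/\delta)\cdot n^{-\frac{1}{2}\frac{c - \alpha}{c + 1/b}}\Big|_{\alpha = 0} = C\ln(4/\delta)\cdot n^{-\frac{1}{2}\frac{c}{c+1/b}},
$$
which is the claimed bound $s_{\gamma}(n,\delta,b,c)$; squaring recovers the minimax $\mathbb{L}^2$ rate $n^{-c/(c+1/b)}$ consistent with \eqref{eq:optimal}. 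The only genuine care required is bookkeeping: confirming that the $(b,c)$ parametrization of \eqref{eq:prior} maps correctly onto the eigenvalue-decay and source-condition exponents of \cite{fischer2017sobolev}, and that the numerator exponent shifts from $c-1$ to $c$ precisely because the error is now measured in the $\mathbb{L}^2$ rather than the RKHS norm. The main (and only mild) obstacle is therefore this translation of the interpolation endpoint, since the probabilistic and spectral heavy lifting is inherited wholesale from the proof of Lemma \ref{theorem:regression}.
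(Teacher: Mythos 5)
Your proposal is correct and matches the paper's own argument: the paper proves this lemma by noting it is identical to the proof of the RKHS-norm rate (Proposition S3 of \cite{singh2020kernel}, verifying the conditions of \cite[Theorem 1.ii]{fischer2017sobolev}) with the Hilbert scale changed from one to zero, which is precisely your shift of the interpolation endpoint from $\alpha=1$ to $\alpha=0$ and the corresponding change of the numerator exponent from $c-1$ to $c$. No gaps.
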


\begin{proof}
The proof is identical to the proof of \cite[Proposition S3]{singh2020kernel}, changing the Hilbert scale from one to zero. In particular, the proof verifies the conditions of \cite[Theorem 1.ii]{fischer2017sobolev}.
\end{proof}

\begin{remark}\label{remark:mse}
Note that in various applications, $(b,c)$ vary.
\begin{enumerate}
    \item For mediated effects:
    \begin{enumerate}
        \item $\{\mathcal{R}(\hat{\gamma}_{\ell})\}^{1/2}\leq s^{ME}_{\gamma}(n,b,c)$,
        \item $\{\mathcal{R}(\hat{\pi}_{\ell})\}^{1/2}\leq s^{ME}_{\pi}(n,b_2,c_2)$,
        \item $\{\mathcal{R}(\hat{\rho}_{\ell})\}^{1/2}\leq s^{ME}_{\rho}(n,b_3,c_3)$.
    \end{enumerate}
    \item For time-varying treatment effects:
        \begin{enumerate}
         \item $\{\mathcal{R}(\hat{\gamma}_{\ell})\}^{1/2}\leq s^{GF}_{\gamma}(n,b,c)$,
        \item $\{\mathcal{R}(\hat{\pi}_{\ell})\}^{1/2}\leq s^{GF}_{\pi}(n,b_6,c_6)$,
        \item $\{\mathcal{R}(\hat{\rho}_{\ell})\}^{1/2}\leq s^{GF}_{\rho}(n,b_7,c_7)$.
    \end{enumerate}
\end{enumerate}
\end{remark}

\subsection{Gaussian approximation}

We quote an abstract result for semiparametric inference in longitudinal settings. The result concerns a causal parameter $
\theta_0=E[\psi(\delta_0,\nu_0,\alpha_0,\eta_0;W)]
$  whose multiply robust moment function is of the form
$$
\psi(\delta,\nu,\alpha,\eta;W)=\nu(W)+\alpha(W)\{Y-\delta(W)\}+\eta(W)\{\delta(W)-\nu(W)\}.
$$
To lighten notation, we write
$
\psi_0(W)=\psi(\delta_0,\nu_0,\alpha_0,\eta_0;W)
$
and define the oracle moments
$$
0=E\{\psi_0(W)-\theta_0\},\; \sigma^2=E[\{\psi_0(W)-\theta_0\}^2],\; \xi^3=E\{|\psi_0(W)-\theta_0|^3\},\; \chi^4=E[\{\psi_0(W)-\theta_0\}^4].
$$

\begin{lemma}[Semiparametric inference; Corollary 6.1 of \cite{singh2021finite}]\label{lemma:inference}
Suppose the following conditions hold:
\begin{enumerate}
\item Neyman orthogonal moment function $\psi$;
    \item bounded residual variances: $E[\{Y-\delta_0(W)\}^2 \mid W ]\leq \bar{\sigma}_1^2$ and
     $E[\{\delta_0(W)-\nu_0(W_1)\}^2 \mid W_1 ]\leq \bar{\sigma}_2^2$, where $W_1\subset W$ is the argument of $\nu_0$;
    \item bounded balancing weights: $\|\alpha_0\|_{\infty}\leq\bar{\alpha}$ and
     $\|\eta_0\|_{\infty}\leq\bar{\eta}$;
    \item censored balancing weight estimators: $\|\hat{\alpha}_{\ell}\|_{\infty}\leq\bar{\alpha}'$ and
      $\|\hat{\eta}_{\ell}\|_{\infty}\leq\bar{\eta}'$.
\end{enumerate}
Next assume the regularity condition on moments
$
\left\{\left(\xi/\sigma\right)^3+\chi^2\right\}n^{-1/2}\rightarrow0.
$
Finally assume the learning rate conditions
\begin{enumerate}
    \item $\left(1+\bar{\eta}/\sigma+\bar{\eta}'/\sigma\right)\{\mathcal{R}(\hat{\nu}_{\ell})\}^{1/2}\rightarrow 0$;
    \item $\left(\bar{\alpha}/\sigma+\bar{\alpha}'+\bar{\eta}/\sigma+\bar{\eta}'\right)\{\mathcal{R}(\hat{\delta}_{\ell})\}^{1/2}\rightarrow 0$;
       \item $(\bar{\eta}'+\bar{\sigma}_1)\{\mathcal{R}(\hat{\alpha}_{\ell})\}^{1/2}\rightarrow 0$;
    \item $\bar{\sigma}_2\{\mathcal{R}(\hat{\eta}_{\ell})\}^{1/2}\rightarrow 0$;
    \item $[\{n \mathcal{R}(\hat{\nu}_{\ell}) \mathcal{R}(\hat{\eta}_{\ell})\}^{1/2}]/\sigma \rightarrow0$;
     \item $[\{n \mathcal{R}(\hat{\delta}_{\ell}) \mathcal{R}(\hat{\alpha}_{\ell})\}^{1/2}]/\sigma \rightarrow0$;
      \item $[\{n \mathcal{R}(\hat{\delta}_{\ell}) \mathcal{R}(\hat{\eta}_{\ell})\}^{1/2}]/\sigma \rightarrow0$.
\end{enumerate}
Then
$$
\hat{\theta}=\theta_0=o_p(1),\quad \frac{\sqrt{n}}{\sigma}(\hat{\theta}-\theta_0)\leadsto \mathcal{N}(0,1),\quad \text{\normalfont pr} \left\{\theta_0 \in  \left(\hat{\theta}\pm \varsigma_a\hat{\sigma} n^{-1/2} \right)\right\}\rightarrow 1-a.
$$
Moreover, the finite sample rate of Gaussian approximation can expressed in terms of the learning rates above.
\end{lemma}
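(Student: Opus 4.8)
The plan is to run the standard cross-fitting argument for debiased estimators: isolate an infeasible oracle average whose fluctuations are asymptotically Gaussian, and show that the estimation remainder is negligible after scaling by $\sqrt{n}/\sigma$. Writing the cross-fitted estimator as $\hat{\theta}=n^{-1}\sum_{\ell=1}^L\sum_{i\in I_{\ell}}\psi(\hat{\delta}_{\ell},\hat{\nu}_{\ell},\hat{\alpha}_{\ell},\hat{\eta}_{\ell};W_i)$, I would first decompose $\hat{\theta}-\theta_0=S_n+R_n$, where $S_n=n^{-1}\sum_{i=1}^n\{\psi_0(W_i)-\theta_0\}$ is the oracle average and $R_n$ collects the estimation error. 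I would then split $R_n$ fold by fold into a conditional-bias term $R_n^{b}=n^{-1}\sum_{\ell}\sum_{i\in I_{\ell}}E_{\ell}\{\psi(\hat{\delta}_{\ell},\hat{\nu}_{\ell},\hat{\alpha}_{\ell},\hat{\eta}_{\ell};W)-\psi_0(W)\}$ and a centered empirical-process term $R_n^{c}=R_n-R_n^{b}$, where $E_{\ell}$ is expectation over a fresh $W$ with the fold-$\ell$ nuisance estimates held fixed.

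The core of the argument is the bias term. The plan is to expand $\psi$ at the estimated nuisances, substitute $\hat{\delta}_{\ell}=\delta_0+\Delta_{\delta}$ and likewise for the other three nuisances, and apply $E_{\ell}$. The terms linear in the errors should cancel: $E[\Delta_{\alpha}\{Y-\delta_0(W)\}]=0$ because $\delta_0(W)=E(Y\mid W)$; $E[\Delta_{\eta}\{\delta_0-\nu_0\}]=0$ because $\nu_0$ is the partial mean of $\delta_0$ while $\hat{\eta}_{\ell}$ depends only on the sub-vector $W_1$; and the leftover linear combination $E[\Delta_{\nu}]-E[\alpha_0\Delta_{\delta}]+E[\eta_0(\Delta_{\delta}-\Delta_{\nu})]$ is precisely the Neyman-orthogonality derivative, which vanishes by hypothesis~1. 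What remains is second order,
\begin{equation*}
E_{\ell}\{\psi(\hat{\delta}_{\ell},\hat{\nu}_{\ell},\hat{\alpha}_{\ell},\hat{\eta}_{\ell};W)-\psi_0(W)\}=-E[\Delta_{\alpha}\Delta_{\delta}]+E[\Delta_{\eta}(\Delta_{\delta}-\Delta_{\nu})],
\end{equation*}
and Cauchy--Schwarz would bound these products by $\{\mathcal{R}(\hat{\alpha}_{\ell})\mathcal{R}(\hat{\delta}_{\ell})\}^{1/2}$, $\{\mathcal{R}(\hat{\eta}_{\ell})\mathcal{R}(\hat{\delta}_{\ell})\}^{1/2}$, and $\{\mathcal{R}(\hat{\eta}_{\ell})\mathcal{R}(\hat{\nu}_{\ell})\}^{1/2}$. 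Scaling by $\sqrt{n}/\sigma$ and invoking learning-rate conditions~5--7 would then give $\sqrt{n}R_n^{b}/\sigma=o_p(1)$.

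For the centered term I would condition on the training fold $I_{\ell}^c$, under which the summands are i.i.d.\ with conditional mean zero, so by conditional Chebyshev it suffices to control their conditional second moment. Expanding $\psi(\hat{\delta}_{\ell},\hat{\nu}_{\ell},\hat{\alpha}_{\ell},\hat{\eta}_{\ell};W)-\psi_0(W)$ into its linear pieces and using the bounded balancing weights of hypotheses~3--4 together with the bounded residual variances $\bar{\sigma}_1^2,\bar{\sigma}_2^2$ of hypothesis~2, the $\mathbb{L}^2$ norm of this difference is governed by a sum of the individual root mean square rates with exactly the prefactors in learning-rate conditions~1--4; hence $\sqrt{n}R_n^{c}/\sigma=o_p(1)$. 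Cross-fitting is what makes this step elementary, since the nuisances are independent of the evaluation fold and no Donsker or entropy condition is required. Finally $S_n$ is an average of i.i.d.\ centered variables, so a Berry--Esseen bound yields $\sup_t|\mathrm{pr}(\sqrt{n}S_n/\sigma\le t)-\Phi(t)|\lesssim(\xi/\sigma)^3 n^{-1/2}$, which vanishes under the moment regularity $\{(\xi/\sigma)^3+\chi^2\}n^{-1/2}\to0$; combining this with the two remainder bounds and Slutsky delivers the weak convergence. I would establish $\hat{\sigma}^2\to_p\sigma^2$ from the same $\mathbb{L}^2$ expansions with the fourth-moment control $\chi^4$, after which the coverage claim is immediate, and assemble the Berry--Esseen term with the scaled nuisance-product bounds to read off the finite-sample rate.

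The hard part will be the bias computation. Verifying that every first-order term genuinely cancels hinges simultaneously on the two conditional-mean-zero identities for $Y-\delta_0$ and $\delta_0-\nu_0$ and on the Riesz-representer characterization of $\alpha_0,\eta_0$ implicit in Neyman orthogonality; keeping track of which error multiplies which, and carrying the precise $\sigma$-normalization so that conditions~1--7 emerge as tight rather than merely sufficient, is where the delicacy lies. By contrast, the empirical-process control and the central limit step are comparatively routine once cross-fitting removes the need for uniform entropy bounds.
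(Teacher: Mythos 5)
The paper never proves this lemma: it is imported verbatim as Corollary 6.1 of \cite{singh2021finite}, so there is no in-paper proof to compare against; the only related in-paper work is the verification, in the proofs of Theorems~\ref{theorem:inference_mediation} and~\ref{theorem:inference_planning}, of the lemma's hypotheses (Neyman orthogonality via four mixed-bias equalities, bounded weights, and the rate conditions). Your reconstruction is the standard cross-fitted DML argument and is sound in outline, and it matches the structure of the cited source: because $\psi$ is affine in each nuisance, the second-order expansion is exact with no Taylor remainder, orthogonality annihilates the linear terms, Cauchy--Schwarz on $-E[\Delta_{\alpha}\Delta_{\delta}]+E[\Delta_{\eta}(\Delta_{\delta}-\Delta_{\nu})]$ produces exactly product conditions~5--7, conditional Chebyshev with the boundedness hypotheses~2--4 produces the prefactors in conditions~1--4, and Berry--Esseen plus $\chi^4$-based consistency of $\hat{\sigma}^2$ explains the moment regularity $\{(\xi/\sigma)^3+\chi^2\}n^{-1/2}\to 0$. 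One inaccuracy worth fixing: your justification that $E[\Delta_{\eta}\{\delta_0(W)-\nu_0(W_1)\}]=0$ ``because $\hat{\eta}_{\ell}$ depends only on the sub-vector $W_1$'' is false as stated --- in both instantiations $\eta_0(W)=\1_{D=d}/\pi_0(d;X)$ and its estimator depend on the treatment, not only on $W_1$. The identity instead follows from the mixed-bias structure, $E[\eta(W)\{\delta_0(W)-\nu_0(W_1)\}]=0$ for the relevant class of $\eta$ by iterated expectations conditioning on $(D,W_1)$ (this is one of the four equalities the paper checks via Proposition A.2 of \cite{singh2021finite}); alternatively, at the level of the abstract lemma you may simply absorb all linear terms into the assumed Neyman orthogonality, in which case the separate conditional-mean-zero bookkeeping is unnecessary.
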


We will match symbols to appeal to this result

\subsection{Main results}

Appealing to Proposition~\ref{prop:unif_omega}, Lemma~\ref{theorem:mse}, and Lemma~\ref{lemma:inference}, we now prove inference for (i) mediated effects and (ii) time-varying treatment effects.

\textbf{Mediated effects.}
Recall
\begin{align*}
    \gamma_0(d,m,x)&=E(Y|D=d,M=m,X=x),\\
    \omega_0(d,d';x) &=\int \gamma_0(d',m,x)\mathrm{d}P(m|d,x), \\
    \pi_0(d;x)&=\text{\normalfont pr}(D=d|X=x),\\
    \rho_0(d;m,x)&=\text{\normalfont pr}(D=d|M=m,X=x).\\
\end{align*}
Fix $(d,d')$. Matching symbols with the abstract Gaussian approximation, 
\begin{align*}
\delta_0(W)&=\gamma_0(d',M,X), \\ 
    \nu_0(W)&=\int \gamma_0(d',m,X)\mathrm{d}P(m|d,X)=\omega_0(d,d';X), \\
    \alpha_0(W)&=\frac{1_{D=d'}}{\rho_0(d';M,X) } \frac{\rho_0(d;M,X)}{\pi_0(d;X)},  \\
    \eta_0(W)&=\frac{1_{D=d}}{\pi_0(d;X)}. 
\end{align*}

\begin{proof}[Proof of Theorem~\ref{theorem:inference_mediation}]
We proceed in steps, verifying the conditions of Lemma~\ref{lemma:inference}.
\begin{enumerate}
    \item Neyman orthogonality. By \cite[Proposition A.2]{singh2021finite}, it suffices to show the following four equalities:
\begin{align*}
&E[\nu(W)\{1-\eta_0(W)\}]=0, \\
 &E[\delta(W)\{\eta_0(W)-\alpha_0(W)\}]=0, \\
    &E[\alpha(W)\{Y-\delta_0(W)\}]=0,\\
   &E[\eta(W)\{\delta_0(W)-\nu_0(W)\}]=0.
\end{align*}
We verify each one.
\begin{enumerate}
    \item First, write
    \begin{align*}
        E[\nu(W)\{1-\eta_0(W)\}]
        &=E\left[\omega_0(d,d';X)\left\{1-\frac{1_{D=d}}{\pi_0(d;X)} \right\}\right] \\
        &=E\left[\omega_0(d,d';X)\left\{1-\frac{E(1_{D=d}|X)}{\pi_0(d;X)} \right\}\right] \\
        &=E\left[\omega_0(d,d';X)\left\{1-\frac{\pi_0(d;X)}{\pi_0(d;X)} \right\}\right]  \\
        &=0.
    \end{align*}
    \item Second, write
    \begin{align*}
        &E[\delta(W)\{\eta_0(W)-\alpha_0(W)\}]\\
        &=E\left[\gamma(d',M,X)\left\{\frac{1_{D=d}}{\pi_0(d;X)} -\frac{1_{D=d'}}{\rho_0(d';M,X) } \frac{\rho_0(d;M,X)}{\pi_0(d;X)}\right\}\right] \\
        &=E\left[\gamma(d',M,X)\left\{\frac{E(1_{D=d}|M,X)}{\pi_0(d;X)} -\frac{E(1_{D=d'}|M,X)}{\rho_0(d';M,X) } \frac{\rho_0(d;M,X)}{\pi_0(d;X)}\right\}\right]\\
        &=E\left[\gamma(d',M,X)\left\{\frac{\rho_0(d;M,X)}{\pi_0(d;X)} -\frac{\rho_0(d';M,X)}{\rho_0(d';M,X) } \frac{\rho_0(d;M,X)}{\pi_0(d;X)}\right\}\right] \\
        &=0.
    \end{align*}
    \item Third, write
    \begin{align*}
        &E[\alpha(W)\{Y-\delta_0(W)\}] \\
        &=E\left[\frac{1_{D=d'}}{\rho(d';M,X) } \frac{\rho(d;M,X)}{\pi(d;X)}\left\{Y-\gamma_0(d',M,X)\right\}\right] \\
          &=E\left[\frac{1_{D=d'}}{\rho(d';M,X) } \frac{\rho(d;M,X)}{\pi(d;X)}\left\{\gamma_0(D,M,X)-\gamma_0(d',M,X)\right\}\right] \\
        &=E\left(\frac{1}{\rho(d';M,X) } \frac{\rho(d;M,X)}{\pi(d;X)}E[1_{D=d'}\left\{\gamma_0(D,M,X)-\gamma_0(d',M,X)\right\}|M,X]\right)\!.
    \end{align*}
    Focusing on the inner expectation,
    \begin{align*}
        &E[1_{D=d'}\left\{\gamma_0(D,M,X)-\gamma_0(d',M,X)\right\}|M,X]\\
        &= E[1_{D=d'}\left\{\gamma_0(D,M,X)-\gamma_0(d',M,X)\right\}|D=d',M,X]\text{\normalfont pr}(D=d'|M,X) \\
        &=E[\left\{\gamma_0(d',M,X)-\gamma_0(d',M,X)\right\}|D=d',M,X]\text{\normalfont pr}(D=d'|M,X) \\
        &=0.
    \end{align*}
    \item Finally, write
    \begin{align*}
        &E[\eta(W)\{\delta_0(W)-\nu_0(W)\}]\\
        &=E\left[\frac{1_{D=d}}{\pi(d;X)} \{\gamma_0(d',M,X)-\omega_0(d,d';X)\}\right] \\
        &=E\left(\frac{1}{\pi(d;X)} E[1_{D=d}\{\gamma_0(d',M,X)-\omega_0(d,d';X)\}|X]\right)\!.
    \end{align*}
    Focusing on the inner expectation,
    \begin{align*}
        &E[1_{D=d}\{\gamma_0(d',M,X)-\omega_0(d,d';X)\}|X]\\
        &=E[1_{D=d}\{\gamma_0(d',M,X)-\omega_0(d,d';X)\}|D=d,X]\text{\normalfont pr}(d|X) \\
        &=E[\{\gamma_0(d',M,X)-\omega_0(d,d';X)\}|D=d,X]\text{\normalfont pr}(d|X)\\
        &=0.
    \end{align*}
\end{enumerate}
    \item Residual variances are bounded since $Y$ is bounded.
    \item Balancing weights are bounded from the assumption of bounded propensity scores.
    \item Balancing weight estimators are censored from the assumption of censored propensity score estimators.
    \item The regularity on moments holds by hypothesis.
    \item Individual rate conditions. After bounding various quantities by constants, the rate conditions simplify as follows.
    \begin{enumerate}
        \item For $\{\mathcal{R}(\hat{\nu}_{\ell})\}^{1/2}\rightarrow 0$, by Proposition~\ref{prop:unif_omega}, $\{\mathcal{R}(\hat{\nu}_{\ell})\}^{1/2} \leq r^{ME}_{\omega}(n,b,c,b_1,c_1)$.
    \item For $\{\mathcal{R}(\hat{\delta}_{\ell})\}^{1/2}\rightarrow 0$, write
    \begin{align*}
        \mathcal{R}(\delta)
        &=E[\{\delta(W)-\delta_0(W)\}^2] \\
        &=E[\{\gamma(d',M,X)-\gamma_0(d',M,X)\}^2] \\
        &=E\left[\frac{1_{D=d'}}{\rho_0(d';M,X)}\{\gamma(D,M,X)-\gamma_0(D,M,X)\}^2\right] \\
        &\leq C E[\{\gamma(D,M,X)-\gamma_0(D,M,X)\}^2] \\
        &=C \mathcal{R}(\gamma).
    \end{align*}
    Using this result and Lemma~\ref{theorem:mse}, $\{\mathcal{R}(\hat{\delta}_{\ell})\}^{1/2} \leq C \{\mathcal{R}(\hat{\gamma}_{\ell})\}^{1/2} \leq s^{ME}_{\gamma}(n,b,c)$.
       \item For $\{\mathcal{R}(\hat{\alpha}_{\ell})\}^{1/2}\rightarrow 0$,
       write
       \begin{align*}
           &\mathcal{R}(\alpha) \\
           &=E[\{\alpha(W)-\alpha_0(W)\}^2] \\
           &=E\left[\left\{\frac{1_{D=d'}}{\rho(d';M,X) } \frac{\rho(d;M,X)}{\pi(d;X)} 
           -\frac{1_{D=d'}}{\rho_0(d';M,X) } \frac{\rho_0(d;M,X)}{\pi_0(d;X)} \right\}^2\right] \\
           &\leq C E\left[\left\{ 
           \rho(d;M,X)\rho_0(d';M,X){\pi_0(d;X)} 
           -\rho_0(d;M,X)\rho(d';M,X)\pi(d;X)
           \right\}^2\right]\!.
       \end{align*}
       Focusing on the RHS, and using natural abbreviations
       \begin{align*}
            \rho\rho_0'\pi_0
           -\rho_0\rho'\pi 
           &=\rho\rho_0'\pi_0
           \pm \rho_0\rho_0'\pi_0
           \pm \rho_0\rho'\pi_0
           -\rho_0\rho'\pi\\
           &=(\rho-\rho_0)\rho_0'\pi_0
           +\rho_0(\rho_0'-\rho')\pi_0 
           +\rho_0\rho'(\pi_0-\pi).
       \end{align*}
       Therefore by the triangle inequality,
 \begin{align*}
     \mathcal{R}(\alpha)&\leq C E[\{\rho(d;M,X)-\rho_0(d;M,X)\}^2] + C E[\{\rho(d';M,X)-\rho_0(d';M,X)\}^2]\\
     &\quad + C E[\{\pi(d;X)-\pi_0(d;X)\}^2] \\
     &= C\{\mathcal{R}(\rho)+\mathcal{R}(\pi)\}.
 \end{align*}
       Using this result and Lemma~\ref{theorem:mse},
       $\{\mathcal{R}(\hat{\alpha}_{\ell})\}^{1/2} \leq C\{ \mathcal{R}(\hat{\pi}_{\ell})+ \mathcal{R}(\hat{\rho}_{\ell})\}^{1/2} \leq s^{ME}_{\pi}(n,b_2,c_2)+s^{ME}_{\rho}(n,b_3,c_3)$.
    \item For $\{\mathcal{R}(\hat{\eta}_{\ell})\}^{1/2}\rightarrow 0$, write 
    \begin{align*}
         \mathcal{R}(\eta)
           &=E[\{\eta(W)-\eta_0(W)\}^2] \\
           &=E\left[\left\{\frac{1_{D=d}}{\pi(d;X)}-\frac{1_{D=d}}{\pi_0(d;X)}\right\}^2\right] \\
           &\leq C E\left[\left\{\pi_0(d;X)-\pi(d;X)\right\}^2\right] \\
           &= C \mathcal{R}(\pi).
    \end{align*}
       Using this result and Lemma~\ref{theorem:mse},
    $\{\mathcal{R}(\hat{\eta}_{\ell})\}^{1/2} \leq C \{\mathcal{R}(\hat{\pi}_{\ell})\}^{1/2} \leq s^{ME}_{\pi}(n,b_2,c_2) $.
    \end{enumerate}
    \item Product rate conditions. After bounding various quantities by constants, the rate conditions simplify as follows.
    \begin{enumerate}
        \item For $\{n \mathcal{R}(\hat{\nu}_{\ell}) \mathcal{R}(\hat{\eta}_{\ell})\}^{1/2} \rightarrow0$, as in the argument for the individual rate conditions,
        \begin{align*}
            \{n \mathcal{R}(\hat{\nu}_{\ell}) \mathcal{R}(\hat{\eta}_{\ell})\}^{1/2} &\leq n^{1/2}  r^{ME}_{\omega}(n,b,c,b_1,c_1) s^{ME}_{\pi}(n,b_2,c_2)\\
            &=n^{1/2}  \left(n^{-\frac{1}{2}\frac{c-1}{c+1/b}}+n^{-\frac{1}{2}\frac{c_1-1}{c_1+1/b_1}}\right)  n^{-\frac{1}{2}\frac{c_2}{c_2+1/b_2}}.
        \end{align*}
        A sufficient condition is that
        $$
        1-\frac{c-1}{c+1/b}-\frac{c_2}{c_2+1/b_2}>0,\quad 1-\frac{c_1-1}{c_1+1/b}-\frac{c_2}{c_2+1/b_2}>0,
        $$
        i.e.
        $$
        \min \left(\frac{c-1}{c+1/b},\frac{c_1-1}{c_1+1/b_1}\right)+\frac{c_2}{c_2+1/b_2}>1.
        $$
     \item For $\{n \mathcal{R}(\hat{\delta}_{\ell}) \mathcal{R}(\hat{\alpha}_{\ell})\}^{1/2} \rightarrow0$, as in the argument for the individual rate conditions,
     \begin{align*}
          \{n \mathcal{R}(\hat{\delta}_{\ell}) \mathcal{R}(\hat{\alpha}_{\ell})\}^{1/2}  &\leq n^{1/2}  s^{ME}_{\gamma}(n,b,c) \{s^{ME}_{\pi}(n,b_2,c_2)+s^{ME}_{\rho}(n,b_3,c_3)\} \\
          &=n^{1/2}n^{-\frac{1}{2}\frac{c}{c+1/b}} \left(n^{-\frac{1}{2}\frac{c_2}{c_2+1/b_2}}+n^{-\frac{1}{2}\frac{c_3}{c_3+1/b_3}}\right)\!. 
     \end{align*}
      A sufficient condition is that
        $$
        1-\frac{c}{c+1/b}-\frac{c_2}{c_2+1/b_2}>0,\quad 1-\frac{c}{c+1/b}-\frac{c_3}{c_3+1/b_3}>0,
        $$
        i.e.
        $$
        \min \left(\frac{c_2}{c_2+1/b_2},\frac{c_3}{c_3+1/b_3}\right)+\frac{c}{c+1/b}>1.
        $$
               Since we have assumed bounded kernels, $(b,b_2,b_3)\geq 1$. By correct specification, $(c,c_2,c_3)\geq 1$. Finally, we have already assumed $c>1$. These conditions imply the desired inequality.
      \item For $\{n \mathcal{R}(\hat{\delta}_{\ell}) \mathcal{R}(\hat{\eta}_{\ell})\}^{1/2} \rightarrow0$, as in the argument for the individual rate conditions,
      $$
      \{n \mathcal{R}(\hat{\delta}_{\ell}) \mathcal{R}(\hat{\eta}_{\ell})\}^{1/2} \leq  n^{1/2} s^{ME}_{\gamma}(n,b,c) s^{ME}_{\pi}(n,b_2,c_2),
      $$
      which is dominated by the previous product rate condition.
    \end{enumerate}
    \end{enumerate}
\end{proof}

\textbf{Time-varying treatment effects.}
Recall
\begin{align*}
\gamma_0(d_1,d_2,x_1,x_2)&=E(Y|D_1=d_1,D_2=d_2,X_1=x_1,X_2=x_2),\\
\omega_0(d_1,d_2;x_1)&=\int \gamma_0(d_1,d_2,x_1,x_2)\mathrm{d}P(x_2|d_1,x_1), \\
\pi_0(d_1;x_1)&=\text{\normalfont pr}(D_1=d_1|X_1=x_1),\\
\rho_0(d_2;d_1,x_1,x_2)&=\text{\normalfont pr}(D_2=d_2|D_1=d_1,X_1=x_1,X_2=x_2)
 \mathrm{d}P(x_2|d_1,x_1).
\end{align*}
Fix $(d_1,d_2)$. Matching symbols with the abstract Gaussian approximation, 
\begin{align*}
\delta_0(W)&=\gamma_0(d_1,d_2,X_1,X_2), \\ 
    \nu_0(W)&=\int \gamma_0(d_1,d_2,X_1,x_2)\mathrm{d}P(x_2|d_1,X_1)=\omega_0(d_1,d_2;X_1), \\
    \alpha_0(W)&=\frac{1_{D_1=d_1}}{\pi_0(d_1;X_1)} \frac{1_{D_2=d_2}}{\rho_0(d_2;d_1,X_1,X_2)},  \\
    \eta_0(W)&=\frac{1_{D_1=d_1}}{\pi_0(d_1;X_1)} 
\end{align*}

\begin{proof}[Proof of Theorem~\ref{theorem:inference_planning}]
We proceed in steps, verifying the conditions of Lemma~\ref{lemma:inference}.
\begin{enumerate}
    \item Neyman orthogonality follows from \cite[Proposition 4.1]{singh2021finite}.
    \item Residual variances are bounded since $Y$ is bounded.
    \item Balancing weights are bounded from the assumption of bounded propensity scores.
    \item Balancing weight estimators are censored from the assumption of censored propensity score estimators.
    \item The regularity on moments holds by hypothesis.
    \item Individual rate conditions. After bounding various quantities by constants, the rate conditions simplify as follows.
    \begin{enumerate}
        \item For $\{\mathcal{R}(\hat{\nu}_{\ell})\}^{1/2}\rightarrow 0$, by Proposition~\ref{prop:unif_omega}, $\{\mathcal{R}(\hat{\nu}_{\ell})\}^{1/2} \leq r^{GF}_{\omega}(n,b,c,b_4,c_4)$.
    \item For $\{\mathcal{R}(\hat{\delta}_{\ell})\}^{1/2}\rightarrow 0$, by \cite[Proposition 6.2]{singh2021finite} and Lemma~\ref{theorem:mse}, $\{\mathcal{R}(\hat{\delta}_{\ell})\}^{1/2} \leq C \{\mathcal{R}(\hat{\gamma}_{\ell})\}^{1/2} \leq s^{GF}_{\gamma}(n,b,c)$.
       \item For $\{\mathcal{R}(\hat{\alpha}_{\ell})\}^{1/2}\rightarrow 0$, by \cite[Proposition 6.2]{singh2021finite} and Lemma~\ref{theorem:mse},
       $\{\mathcal{R}(\hat{\alpha}_{\ell})\}^{1/2} \leq C\{ \mathcal{R}(\hat{\pi}_{\ell})+ \mathcal{R}(\hat{\rho}_{\ell})\}^{1/2} \leq s^{GF}_{\pi}(n,b_6,c_6)+s^{GF}_{\rho}(n,b_7,c_7)$.
    \item For $\{\mathcal{R}(\hat{\eta}_{\ell})\}^{1/2}\rightarrow 0$, by \cite[Proposition 6.2]{singh2021finite} and Lemma~\ref{theorem:mse},
    $\{\mathcal{R}(\hat{\eta}_{\ell})\}^{1/2} \leq C \{\mathcal{R}(\hat{\pi}_{\ell})\}^{1/2} \leq s^{GF}_{\pi}(n,b_6,c_6) $.
    \end{enumerate}
    \item Product rate conditions. After bounding various quantities by constants, the rate conditions simplify as follows.
    \begin{enumerate}
        \item For $\{n \mathcal{R}(\hat{\nu}_{\ell}) \mathcal{R}(\hat{\eta}_{\ell})\}^{1/2} \rightarrow0$, as in the argument for the individual rate conditions,
        \begin{align*}
            \{n \mathcal{R}(\hat{\nu}_{\ell}) \mathcal{R}(\hat{\eta}_{\ell})\}^{1/2} &\leq n^{1/2}  r^{GF}_{\omega}(n,b,c,b_4,c_4) s^{GF}_{\pi}(n,b_6,c_6)\\
            &=n^{1/2}  \left(n^{-\frac{1}{2}\frac{c-1}{c+1/b}}+n^{-\frac{1}{2}\frac{c_4-1}{c_4+1/b_4}}\right)  n^{-\frac{1}{2}\frac{c_6}{c_6+1/b_6}}.
        \end{align*}
        A sufficient condition is that
        $$
        1-\frac{c-1}{c+1/b}-\frac{c_6}{c_6+1/b_6}>0,\quad 1-\frac{c_4-1}{c_4+1/b}-\frac{c_6}{c_6+1/b_6}>0,
        $$
        i.e.
        $$
        \min \left(\frac{c-1}{c+1/b},\frac{c_4-1}{c_4+1/b_4}\right)+\frac{c_6}{c_6+1/b_6}>1.
        $$
     \item For $\{n \mathcal{R}(\hat{\delta}_{\ell}) \mathcal{R}(\hat{\alpha}_{\ell})\}^{1/2} \rightarrow0$, as in the argument for the individual rate conditions,
     \begin{align*}
          \{n \mathcal{R}(\hat{\delta}_{\ell}) \mathcal{R}(\hat{\alpha}_{\ell})\}^{1/2}  &\leq n^{1/2}  s^{GF}_{\gamma}(n,b,c) \{s^{GF}_{\pi}(n,b_6,c_6)+s^{GF}_{\rho}(n,b_7,c_7)\} \\
          &=n^{1/2}n^{-\frac{1}{2}\frac{c}{c+1/b}} \left(n^{-\frac{1}{2}\frac{c_6}{c_6+1/b_6}}+n^{-\frac{1}{2}\frac{c_7}{c_7+1/b_7}}\right)\!. 
     \end{align*}
      A sufficient condition is that
        $$
        1-\frac{c}{c+1/b}-\frac{c_6}{c_6+1/b_6}>0,\quad 1-\frac{c}{c+1/b}-\frac{c_7}{c_7+1/b_7}>0,
        $$
        i.e.
        $$
        \min \left(\frac{c_6}{c_6+1/b_6},\frac{c_7}{c_7+1/b_7}\right)+\frac{c}{c+1/b}>1.
        $$
          Since we have assumed bounded kernels, $(b,b_6,b_7)\geq 1$. By correct specification, $(c,c_6,c_7)\geq 1$. Finally, we have already assumed $c>1$. These conditions imply the desired inequality.
      \item For $\{n \mathcal{R}(\hat{\delta}_{\ell}) \mathcal{R}(\hat{\eta}_{\ell})\}^{1/2} \rightarrow0$, as in the argument for the individual rate conditions,
      $$
      \{n \mathcal{R}(\hat{\delta}_{\ell}) \mathcal{R}(\hat{\eta}_{\ell})\}^{1/2} \leq  n^{1/2} s^{GF}_{\gamma}(n,b,c) s^{GF}_{\pi}(n,b_6,c_6),
      $$
      which is dominated by the previous product rate condition.
    \end{enumerate}
    \end{enumerate}
\end{proof}
\section{Tuning}\label{sec:tuning}




The same two kinds of hyperparameters that arise in kernel ridge regressions arise in our estimators: ridge regression penalties and kernel hyperparameters. In this supplement, we describe practical tuning procedures. To simplify the discussion, we focus on the regression of $Y$ on $W$.

\subsection{Ridge penalty}

It is convenient to tune $\lambda$ by leave-one-out cross validation (LOOCV) or generalized cross validation (GCV), since the validation losses have closed form solutions. The latter is asymptotically optimal in the $\mathbb{L}^2$ sense \cite{craven1978smoothing,li1986asymptotic}. In our main results, we require theoretical regularization that is optimal for the $\mathbb{L}^2$ norm, which is also optimal in RKHS norm \cite{fischer2017sobolev}. As such, GCV should lead to the theoretically required regularization for both nonparametric consistency and semiparametric inference. In practice, LOOCV and GCV lead to nearly identical tuning.

\begin{algo}[Ridge penalty tuning by LOOCV; Algorithm S4 of \cite{singh2020kernel}]
Construct the matrices
$$
H_{\lambda}=I-K_{WW}(K_{WW}+n\lambda I)^{-1}\in\mathbb{R}^{n\times n},\quad \tilde{H}_{\lambda}=diag(H_{\lambda})\in\mathbb{R}^{n\times n},
$$
where $\tilde{H}_{\lambda}$ has the same diagonal entries as $H_{\lambda}$ and off diagonal entries of zero. Then set
$$
\lambda^*=\argmin_{\lambda \in\Lambda} \frac{1}{n}\|\tilde{H}_{\lambda}^{-1}H_{\lambda} Y\|_2^2,\quad \Lambda\subset\mathbb{R}.
$$
\end{algo}

\begin{algo}[Ridge penalty tuning by GCV; Algorithm S5 of \cite{singh2020kernel}]
Set
$$
\lambda^*=\argmin_{\lambda \in\Lambda} \frac{1}{n}\|\{\text{\normalfont tr}(H_{\lambda})\}^{-1}\cdot H_{\lambda} Y\|_2^2,\quad \Lambda\subset\mathbb{R}.
$$
\end{algo}

\subsection{Kernel}

The exponentiated quadratic kernel is widely used among machine learning practitioners:
$$
k(w,w')=\exp\left\{-\frac{1}{2}\frac{(w-w')^2_{\mathcal{W}}}{\iota^2}\right\}\!.
$$
Importantly, it satisfies the required properties; it is continuous, bounded, and characteristic.
%
%
Its hyperparameter is called the lengthscale $\iota$. A convenient heuristic is to set the lengthscale equal to the median interpoint distance of $(W_i)$, where the interpoint distance between the observations $i$ and $j$ is $\|W_i-W_j\|_{\mathcal{W}}$. When the input $W$ is multidimensional, we use the kernel obtained as the product of scalar kernels for each input dimension. For example, if $\mathcal{W}\subset \mathbb{R}^d$ then
$$
k(w,w')=\prod_{j=1}^d \exp\left\{-\frac{1}{2}\frac{(w_j-w_j')^2}{\iota_j^2}\right\}\!.
$$
Each lengthscale $\iota_j$ is set according to the median interpoint distance for that input dimension.

\section{Simulation details}\label{sec:simulation_details}

In this supplement, we provide additional details for various simulations: mediated responses, mediated effects, time-varying responses, and time-varying effects.

\subsection{Nonparametric mediated response}

A single observation is a tuple $(Y,D,M,X)$ for the outcome, treatment, mediator, and covariates where $Y,D,M,X\in\mathbb{R}$. A single observation is generated is as follows. Draw unobserved noise as $u,v,w \overset{i.i.d.}{\sim}\mathcal{U}(-2,2)$. Draw the covariate as $X\sim \mathcal{U}(-1.5,1.5)$. Then set
\begin{align*}
    D&=0.3X+w,\\
    M&=0.3D+0.3X+v, \\
    Y&=0.3D+0.3M+0.5 DM +0.3X+0.25 D^3+u.
\end{align*}
\cite{huber2018direct} also present a simpler version of this design.

We implement our nonparametric estimator  $\hat{\theta}^{ME}(d,d')$ (\texttt{RKHS}, white) described in Section~4, with the tuning procedure described in Supplement~\ref{sec:tuning}. Specifically, we use ridge penalties determined by leave-one-out cross validation, and the product exponentiated quadratic kernel with lengthscales set by the median heuristic. We implement \cite{huber2018direct} (\texttt{IPW}, checkered gray) using the default settings of the command \texttt{medweightcont} in the \texttt{R} package \texttt{causalweight}.

\subsection{Semiparametric mediated effect}

A single observation is a tuple $(Y,D,M,X)$ for the outcome, treatment, mediator, and covariates where $Y,M,X\in\mathbb{R}$ and $D\in\{0,1\}$. A single observation is generated is as follows. Draw unobserved noise as $u,v,w \overset{i.i.d.}{\sim}\mathcal{U}(-2,2)$. Draw the covariate as $X\sim \mathcal{U}(-1.5,1.5)$. Then set
\begin{align*}
    D&=\1(0.3X+w>0),\\
    M&=0.3D+0.3X+v, \\
    Y&=0.3D+0.3M+0.5 DM +0.3X+0.25 D^3+u.
\end{align*}
Note that
\begin{align*}
    E\{M^{(d)}\}&=0.3d, \quad
    E[Y^{\{d',M^{(d)}\}}]=0.55d'+0.09d+0.15d'd.
\end{align*}
Hence 
$
\{\theta_0^{ME}(0,0),\theta_0^{ME}(1,0),\theta_0^{ME}(0,1),\theta_0^{ME}(1,1)\}=(0,0.09,0.55,0.79).
$

We implement our semiparametric estimator  $\hat{\theta}^{ME}(d,d')$ (\texttt{RKHS}) described in Supplement~\ref{sec:semi}, with $L=5$ folds and the tuning procedure described in Supplement~\ref{sec:tuning}. Specifically, we use ridge penalties determined by leave-one-out cross validation. For continuous variables, we use the product exponentiated quadratic kernel with lengthscales set by the median heuristic. For discrete variables, we use the indicator kernel.

\subsection{Nonparametric time-varying response}

This design extends the time-fixed dose response design of \cite{colangelo2020double} to a time-varying setting. Set the number of periods to be $T=2$. A single observation is a tuple $(Y,D_{1:2},X_{1:2})$ for the outcome, treatments, and covariates, where $Y,D_t\in\mathbb{R}$ and $X_t\in\mathbb{R}^{p}$. A single observation is generated is as follows. Draw unobserved noise as $\epsilon_t \overset{i.i.d.}{\sim}\mathcal{N}(0,1)$. Define the vector $\beta \in\mathbb{R}^{p}$ by $\beta_{j}= j^{-2}$. Define the matrix $\Sigma\in\mathbb{R}^{p\times p}$ such that $\Sigma_{ii}=1$ and $\Sigma_{ij}=\frac{1}{2}\cdot 1\{|i-j|=1\}$ for $i\neq j$. Then draw $W_t\overset{i.i.d.}{\sim}\mathcal{N}(0,\Sigma)$ and set
\begin{align*}
X_1&=W_1, \\
    D_1&=\Lambda(X_1^{\top}\beta )+0.75\nu_1,\\ 
    X_2&=0.5 (1-D_1) X_1+0.5  W_2, \\
    D_2&=\Lambda(0.5X_1^{\top}\beta + X_2^{\top}\beta - 0.2D_1)+0.75\nu_2,\\ 
    Y&=0.5\{1.2D_1+1.2X_1^{\top}\beta+D_1^2+D_1 (X_1)_1\}+\{1.2D_2+1.2X_2^{\top}\beta+D_2^2+D_2(X_2)_1\}+\epsilon,
\end{align*}
where we use the truncated logistic link function
$$
\Lambda(t)=(0.9-0.1)\frac{\exp(t)}{1+\exp(t)}+0.1.
$$
It follows that $\theta_0^{GF}(d_1,d_2)=0.6d_1+0.5d_1^2+1.2d_2+d_2^2$.

Clinical intuition guides simulation design. At $t=1$, the patient experiences the correlated symptoms $X_1$. Based on the symptom index $X_1^\top \beta$ as well as unobserved considerations $\nu_1$, the doctor administers the first dose $D_1$. At $t=2$, the patient experiences new symptoms $X_2$ which are the average of attenuated original symptoms $(1-D_1)X_1$ and new developments $W_2$. The first dose $D_1$ has an expectation between zero and one, so a higher dose tends to attenuate the original symptoms $X_1$ more. The doctor administers a second dose $D_2$ based on the initial symptom index $X_1^\top \beta$, the new symptom index $X_2^\top \beta$, the initial dose $D_1$, and unobserved considerations $\nu_2$. The patient's health outcome $Y$ depends on the history of symptoms $X_{1:2}$ and treatments $D_{1:2}$ as well as chance $\epsilon$. Specifically, the health outcome depends on the treatment levels quadratically, the symptom indices linearly, and an interaction between treatment level and the primary symptom linearly. More weight is given to the recent history.

\begin{figure}[ht]
\begin{centering}
     \begin{subfigure}[b]{0.3\textwidth}
         \centering
         \includegraphics[width=\textwidth]{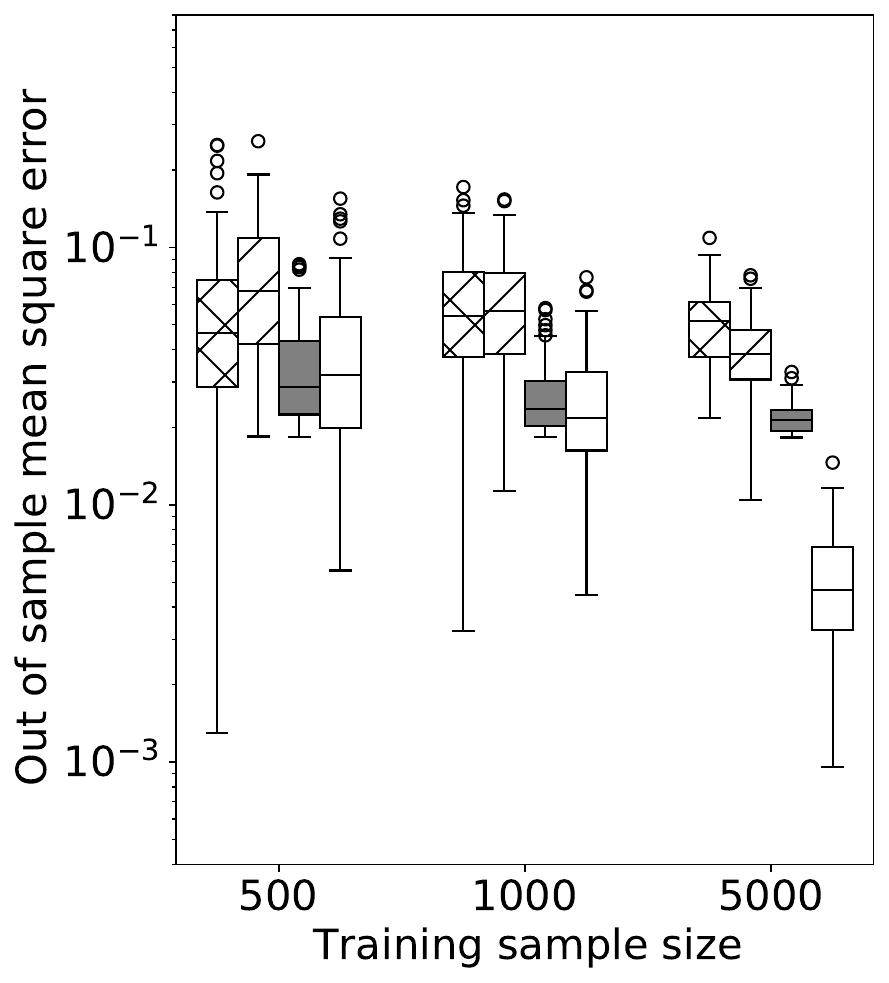}
         \caption{$X_t\in\mathbb{R}$.}
     \end{subfigure}
     \hfill
     \begin{subfigure}[b]{0.3\textwidth}
         \centering
         \includegraphics[width=\textwidth]{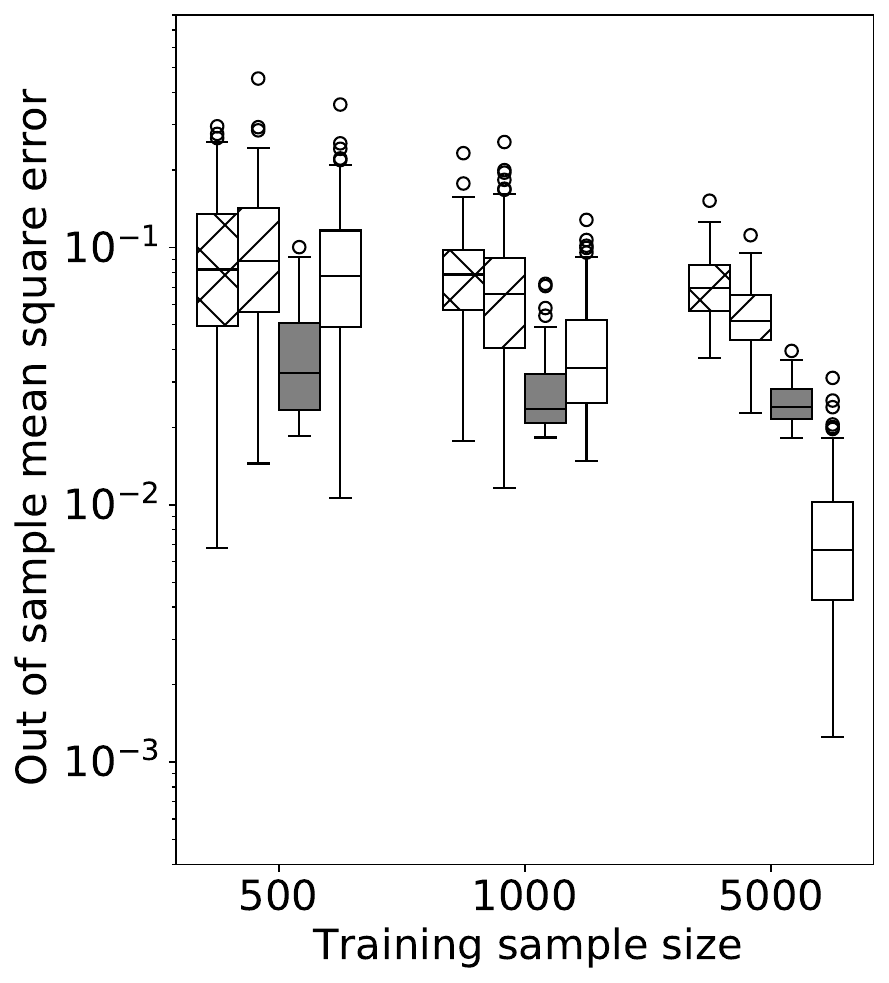}
         \caption{$X_t\in\mathbb{R}^{10}$.}
     \end{subfigure}
     \hfill 
     \begin{subfigure}[b]{0.3\textwidth}
         \centering
         \includegraphics[width=\textwidth]{img/SATE_synthetic_backdoor_dim_100_rep100_bma.pdf}
         \caption{$X_t\in\mathbb{R}^{100}$.}
     \end{subfigure}
\par
\caption{\label{fig:dynamic_sim}
Nonparametric time-varying dose response simulations. We implement four estimators. 
         From left to right, these are 
         \cite{singh2020kernel} \{\texttt{RKHS(ATE)}, checkered white\}, 
         \cite{singh2020kernel} \{\texttt{RKHS(CATE)}, lined white\},
         \cite{lewis2020double} (\texttt{SNMM}, gray), 
         and our own \{\texttt{RKHS(GF)}, white\}.
}
\end{centering}
\end{figure}

We implement our nonparametric estimator $\hat{\theta}^{GF}(d_1,d_2)$ \{\texttt{RKHS(GF)}, white\} described in Section~5, with the tuning procedure described in Supplement~\ref{sec:tuning}. Specifically, we use ridge penalties determined by leave-one-out cross validation, and the product exponentiated quadratic kernel with lengthscales set by the median heuristic. We use the same tuning procedures when implementing the time-fixed dose response \{\texttt{RKHS(ATE)}, checkered white\} and heterogeneous response \{\texttt{RKHS(CATE)}, lined white\} estimators of \cite{singh2020kernel}. Finally, we implement the estimator of \cite{lewis2020double} (\texttt{SNMM}, gray) using the flexible \texttt{heterogeneous} setting in \texttt{Python} code shared by the authors, though their algorithm is designed for linear Markov models without effect modification.

Figure~\ref{fig:dynamic_sim} presents results for different choices of $dim(X_t)\in \{1,10,100\}$ for low, moderate, and high dimensional settings. Across choices of $dim(X_t)$, \texttt{RKHS(GF)} significantly outperforms the alternatives at the sample size 5000. In particular, \texttt{RKHS(GF)} outperforms \texttt{SNMM} with a p value less than $10^{-3}$ by the Wilcoxon rank sum test for $dim(X_t)\in\{1,10,100\}$.

\subsection{Semiparametric time-varying treatment effect}

A single observation is a tuple $(Y,D_{1:2},X_{1:2})$ for the outcome, treatments, and covariates, where $Y\in\mathbb{R}$, $D_t\in\{0,1\}$, and $X_t\in\mathbb{R}^{p}$. A single observation is generated is as follows. Draw unobserved noise as $\epsilon_t \overset{i.i.d.}{\sim}\mathcal{N}(0,1)$. Define the vector $\beta \in\mathbb{R}^{p}$ as the matrix $\Sigma\in\mathbb{R}^{p\times p}$  as before. Then draw $W_t\overset{i.i.d.}{\sim}\mathcal{N}(0,\Sigma)$ and set
\begin{align*}
X_1&=W_1, \\
    D_1&~\sim Bernoulli\{\Lambda(X_1^{\top}\beta)\}, \\ 
    X_2&=0.5 (1-D_1) X_1+0.5  W_2, \\
    D_2&\sim Bernoulli\{\Lambda(0.5X_1^{\top}\beta+ X_2^{\top}\beta-0.2D_1)\}, \\ 
    Y&=0.5\{1.2D_1+1.2X_1^{\top}\beta+D_1^2+D_1 (X_1)_1\}+\{1.2D_2+1.2X_2^{\top}\beta+D_2^2+D_2(X_2)_1\}\\
    &\quad +0.5D_1D_2+\epsilon,
\end{align*}
where we use the truncated logistic link function $\Lambda(t)$ as before.
Note that
$$
E\{Y^{(d_1,d_2)}\}=0.5(1.2d_1+d_1)+(1.2d_2+d_2)+0.5d_1d_2=1.1d_1+2.2d_2+0.5d_1d_2.
$$
Hence
$
\{\theta_0^{GF}(0,0),\theta_0^{GF}(1,0),\theta_0^{GF}(0,1),\theta_0^{GF}(1,1)\}=(0,1.1,2.2,3.8).
$

We implement our semiparametric estimator $\hat{\theta}^{GF}(d_1,d_2)$ \{\texttt{RKHS(GF)}\} described in Supplement~\ref{sec:semi}, with $L=5$ folds and the tuning procedure described in Supplement~\ref{sec:tuning}. Specifically, we use ridge penalties determined by leave-one-out cross validation. For continuous variables, we use the product exponentiated quadratic kernel with lengthscales set by the median heuristic. For discrete variables, we use the indicator kernel.

\section{Application details}\label{sec:application_details}

We provide implementation details and robustness checks for the program evaluation.

\subsection{Mediation analysis}

We implement our nonparametric estimator $\hat{\theta}^{ME}(d,d')$ described in Section~4. We use the tuning procedure described in Supplement~\ref{sec:tuning}. Specifically, we use ridge penalties determined by leave-one-out cross validation. For continuous variables, we use the product exponentiated quadratic kernel with lengthscales set by the median heuristic. For discrete variables, we use the indicator kernel. We use the covariates $X\in\mathbb{R}^{40}$ of \cite{colangelo2020double}.

In the main text, we focus on the $n=2,913$ observations for which $D \geq 40$ and $M>0$, i.e. individuals who completed at least one week of training and who found employment. This choice follows \cite{colangelo2020double,singh2020kernel}. We now verify that our results are robust to the choice of sample. Specifically, we consider the $n=3,906$ observations with $D\geq 40$.

For each sample, we visualize class hours $D$ with a histogram in Figure~\ref{fig:hist}. The class hour distribution in the sample with $D\geq 40$ is similar to the class hour distribution in the sample with $D\geq 40$ and $M>0$ that we use in the main text.

\begin{figure}[ht]
\begin{centering}
     \begin{subfigure}[b]{0.45\textwidth}
         \centering
         \includegraphics[width=\textwidth]{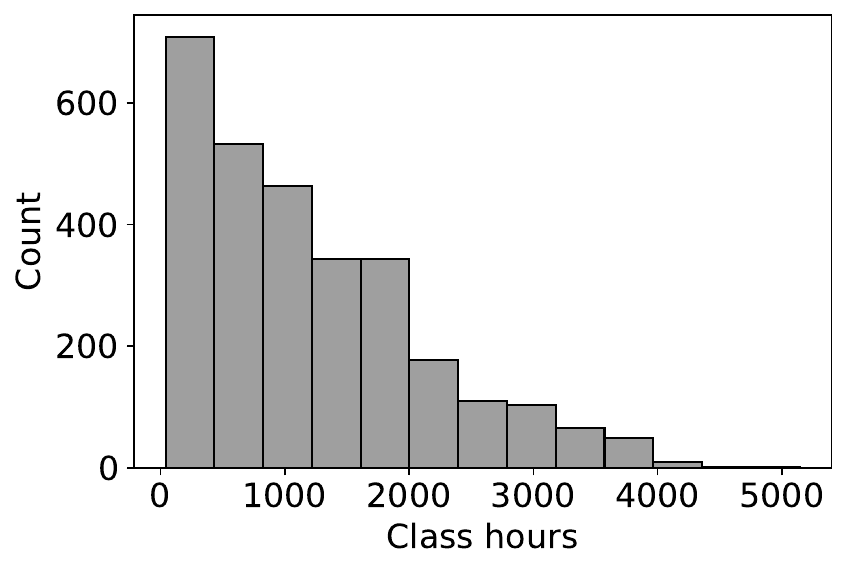}
         \caption{$D\geq 40$ and $M>0$.}
     \end{subfigure}
     \hfill
     \begin{subfigure}[b]{0.45\textwidth}
         \centering
         \includegraphics[width=\textwidth]{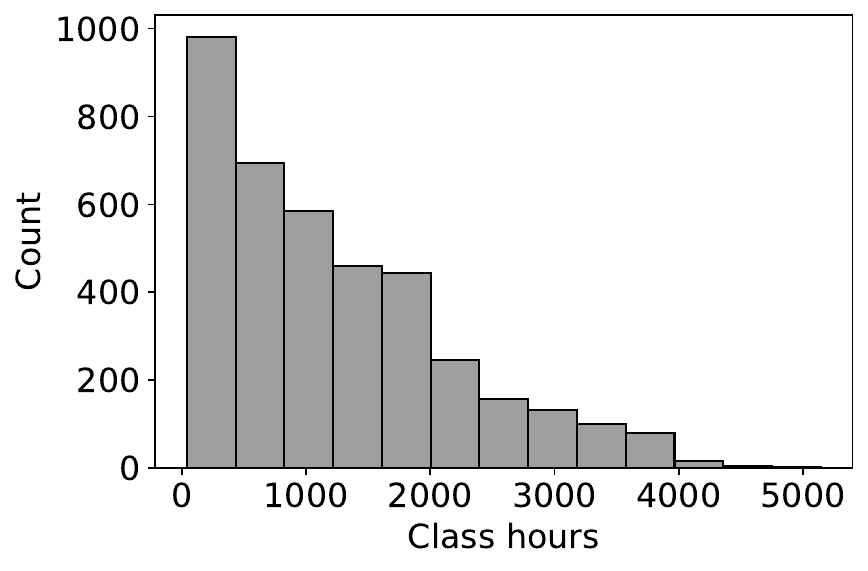}
         \caption{$D\geq 40$.}
     \end{subfigure}
\par
\caption{\label{fig:hist}
Class hours for different samples.}
\end{centering}
\end{figure}

Next, we estimate total, direct, and indirect responses for the new sample choice. Figure~\ref{fig:JC2_d} 
visualizes results. For the sample with $D\geq 40$, the mediated responses are essentially identical to the mediated responses for the sample with $D\geq 40$ and $M>0$ presented in the main text. These results confirm the robustness of the results we present in the main text.

\begin{figure}[ht]
\begin{centering}
     \begin{subfigure}[b]{0.3\textwidth}
         \centering
         \includegraphics[width=\textwidth]{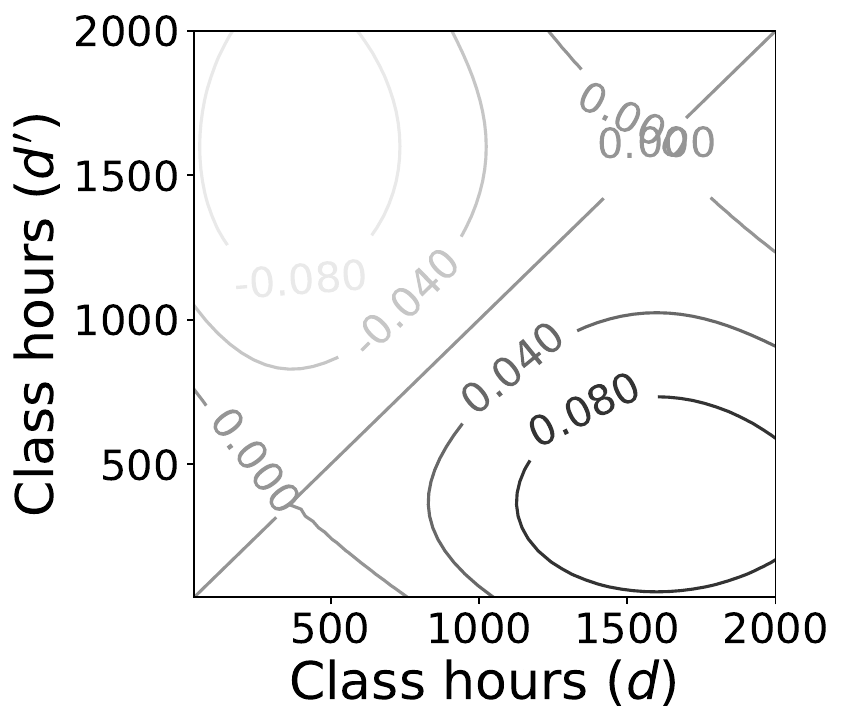}
         \caption{Total response.}
     \end{subfigure}
     \hfill
     \begin{subfigure}[b]{0.3\textwidth}
         \centering
         \includegraphics[width=\textwidth]{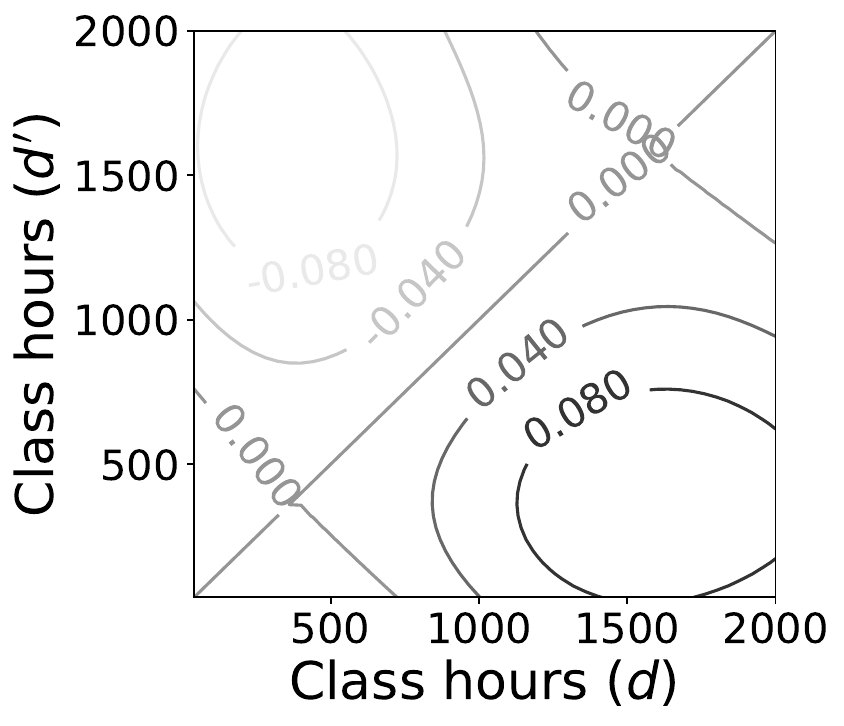}
         \caption{Direct response.}
     \end{subfigure}
     \hfill 
     \begin{subfigure}[b]{0.3\textwidth}
         \centering
         \includegraphics[width=\textwidth]{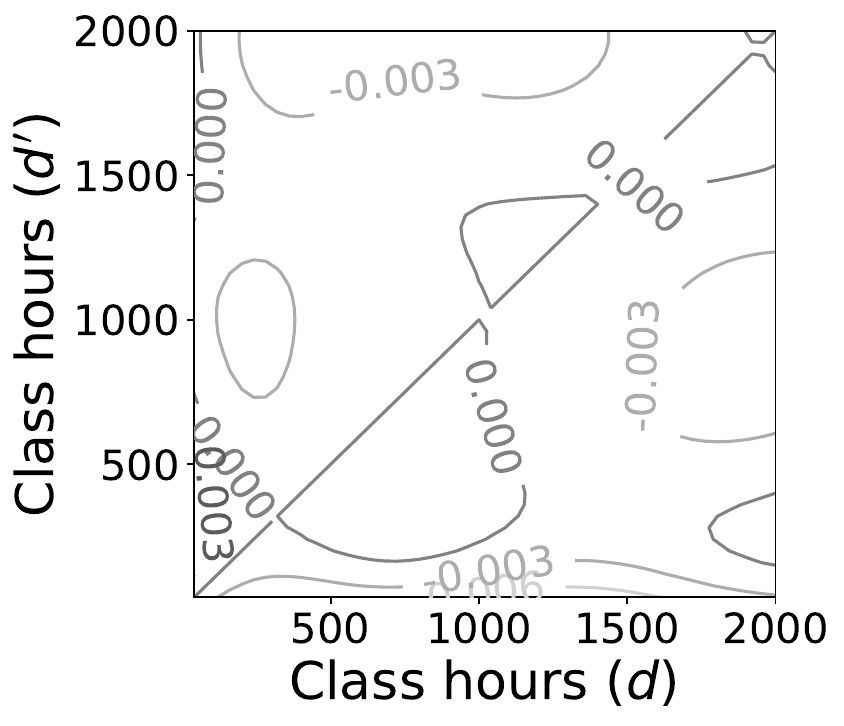}
         \caption{Indirect response.}
     \end{subfigure}
\par
\caption{\label{fig:JC2_d}
Effect of job training on arrests: $D\geq 40$. We implement our estimators for total, direct, and indirect response curves (\texttt{RKHS}, solid).
}
\end{centering}
\end{figure}

\subsection{Time-varying responses}

We implement our nonparametric estimators $\hat{\theta}^{GF}(d_1,d_2)$ and $\theta_0^{GF,\nabla}(d_1,d_2)$ described in Section~5 as well as our semiparametric estimator $\hat{\theta}^{GF}(d_1,d_2)$ described in Supplement~\ref{sec:semi}. We use the tuning procedure described in Supplement~\ref{sec:tuning}. Specifically, we use ridge penalties determined by leave-one-out cross validation. For continuous variables, we use the product exponentiated quadratic kernel with lengthscales set by the median heuristic. For discrete variables, we use the indicator kernel. For the semiparametric case, we truncate extreme propensity scores to lie in the interval $[0.05,0.95]$ consistent with Assumption~\ref{assumption:bounded_propensity_planning}.

We extract the covariates $X_1\in\mathbb{R}^{65}$ from the baseline survey and the covariates $X_2\in\mathbb{R}^{30}$ from the one year follow up interview raw files provided by \cite[Section III.A]{schochet2008does}. The covariates $X_1$ are a superset of $X\in\mathbb{R}^{40}$ chosen by \cite{colangelo2020double}, and they include age, gender, ethnicity, language competency, education, marital status, household size, household income, previous receipt of social aid, family background, health, and health related behavior. We enlarge the set of covariates to include additional variables that may vary from one year to the next, e.g. extra variables concerning health and health related behavior. When a variable never varies over time, e.g. race, we include it in $X_1$ but not $X_2$ so as not to clash with the characteristic property in our RKHS approach. The follow up interview was also less extensive than the baseline interview. These two reasons explain why $X_1\in\mathbb{R}^{65}$ yet $X_2\in\mathbb{R}^{30}$.

In the main text, we focus on the $n=3,141$ observations for which $D_1+D_2 \geq 40$ and $Y>0$, i.e. individuals who completed at least one week of training and who found employment. This choice generalizes our previous rule. We now verify that our results are robust to the choice of sample. Specifically, we consider the $n=3,710$ observations with $D_1+D_2 \geq 40$.

For each sample, we visualize the total class hours $D_1+D_2$ with a histogram in Figure~\ref{fig:hist2}. The class hour distribution in the sample with $D_1+D_2 \geq 40$ is similar to the class hour distribution in the sample with $D_1+D_2\geq 40$ and $Y>0$ that we use in the main text.

\begin{figure}[ht]
\begin{centering}
     \begin{subfigure}[b]{0.45\textwidth}
         \centering
         \includegraphics[width=\textwidth]{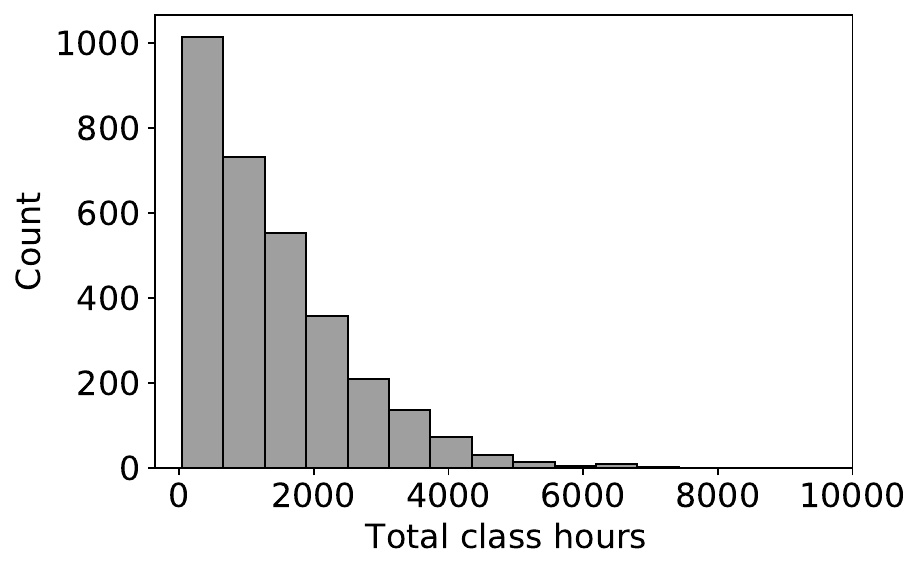}
         \caption{$D_1+D_2\geq 40$ and $Y>0$.}
     \end{subfigure}
     \hfill
     \begin{subfigure}[b]{0.45\textwidth}
         \centering
         \includegraphics[width=\textwidth]{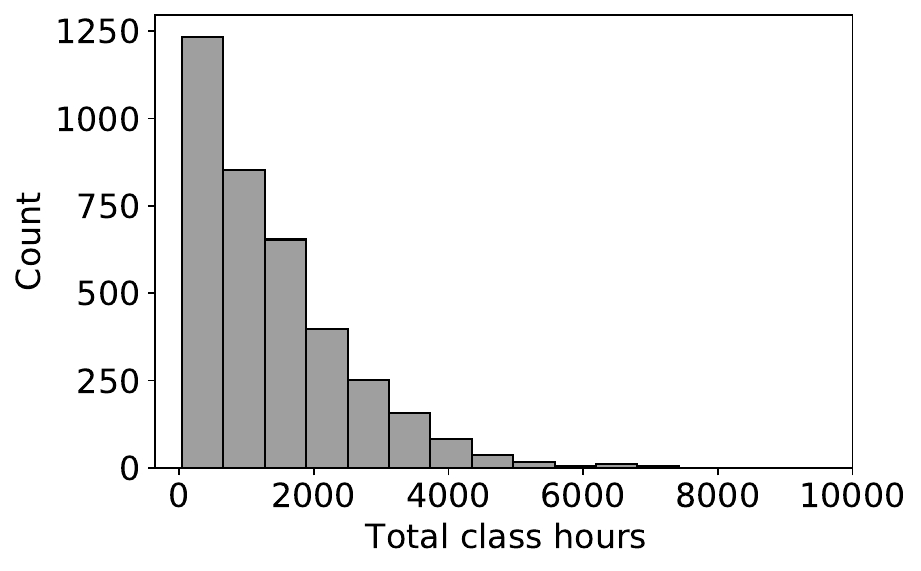}
         \caption{$D_1+D_2\geq 40$.}
     \end{subfigure}
\par
\caption{\label{fig:hist2}
Class hours for different samples}
\end{centering}
\end{figure}

Next, we estimate time-varying responses for the new sample choice. Figure~\ref{fig:JC3_d} 
visualizes results. For the sample with $D_1+D_2\geq 40$, the time-varying responses are similar to the time-varying responses of the sample with $D_1+D_2\geq 40$ and $Y>0$ presented in the main text, albeit with a lower plateau and overall lower levels of counterfactual employment. These results confirm the robustness of the results we present in the main text.

\begin{figure}[ht]
\begin{centering}
     \begin{subfigure}[b]{0.45\textwidth}
         \centering
         \includegraphics[width=\textwidth]{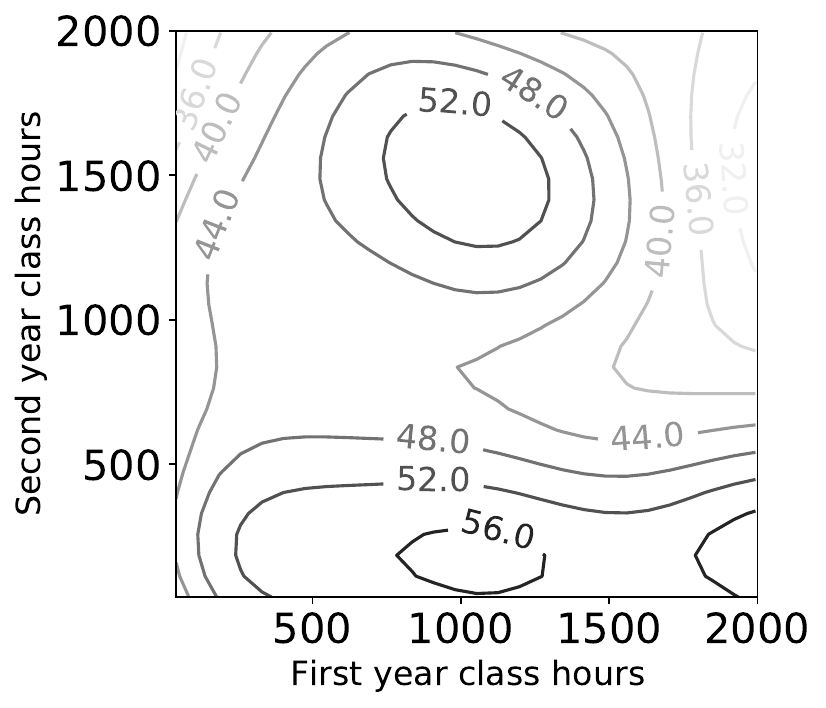}
         \caption{Time-varying dose response.}
     \end{subfigure}
     \hfill
     \begin{subfigure}[b]{0.45\textwidth}
         \centering
         \includegraphics[width=\textwidth]{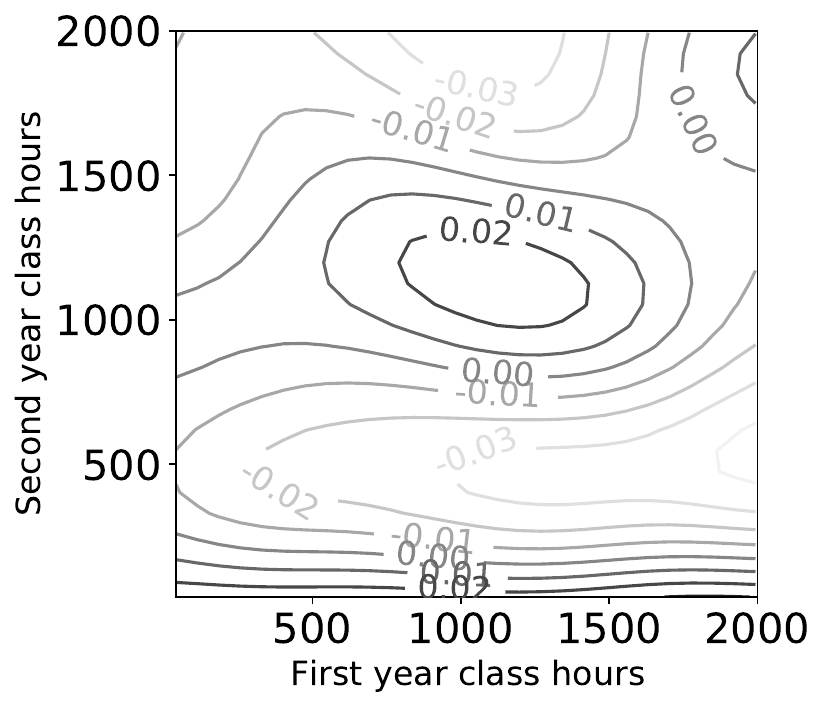}
         \caption{Time-varying incremental response.}
     \end{subfigure}
\par
\caption{
Effect of job training on employment: $D_1+D_2\geq 40$. We implement our estimators for time-varying dose and incremental response curves \{\texttt{RKHS(GF)}\!, solid\}.}
\label{fig:JC3_d}
\end{centering}
\end{figure}

\end{document}